\newcommand{\includeTikZ}[1]{\includegraphics{#1}}
\theoremstyle{definition}
\newtheorem{notation}[thm]{Notation}
\newcounter{claimcnt}[thm]
\newtheorem{sclaim}[claimcnt]{Claim}
\newenvironment{proofofclaim}
  {\begin{trivlist}\item\textit{Proof.}}
  {\end{trivlist}}
\newcommand{\varqedsymbol}{\ensuremath{\lrcorner}}
\newcommand{\varqed}{\hfill\varqedsymbol}
\newcommand{\varqedeq}{\tag*{\varqedsymbol}}
\newcommand{\proofpart}[1]{\addvspace{1.15\topsep}
  \noindent\textit{{#1}:}}
\newcommand{\proofpartnl}[1]{\addvspace{1.15\topsep}
  \noindent\textit{#1}\\[\partopsep]}
\newcommand{\onlyifdir}{\proofpart{``Only if'' direction}}
\newcommand{\ifdir}{\proofpart{``If'' direction}}
\newcommand{\ad}[1]{\proofpart{Ad~{#1}}}
\newcommand{\fromto}[2]{\proofpart{{#1}~$\Longrightarrow$~{#2}}}
\newcommand{\step}[2]{\proofpartnl{Step~{#1}:~{#2}}}
\newcommand{\isdef}{\mathrel{\mathop:}=}
\renewcommand{\phi}{\varphi}
\newcommand{\CA}{\mathcal{A}}
\newcommand{\CI}{\mathcal{I}}
\newcommand{\CJ}{\mathcal{J}}
\newcommand{\CS}{\mathcal{S}}
\newcommand{\CT}{\mathcal{T}}
\newcommand{\CX}{\mathcal{X}}
\newcommand{\set}[1]{\{{#1}\}}
\newcommand{\Set}[1]{\left\{#1\right\}}
\newcommand{\card}[1]{\lvert{#1}\rvert}
\newcommand{\union}{\cup}
\newcommand{\bigunion}{\bigcup}
\newcommand{\intersection}{\cap}
\newcommand{\bigintersection}{\bigcap}
\newcommand{\compl}[1]{\overline{#1}}
\newcommand{\tup}[1]{\bar{#1}}
\newcommand{\length}[1]{\lvert{#1}\rvert}
\newcommand{\compose}[2]{{#2}\circ{#1}}
\newcommand{\isomorphic}{\cong}
\newcommand{\core}{\operatorname{Core}}
\newlength{\tmplenx}
\newlength{\tmpleny}
\newcommand{\problem}[3]{%
  \begin{center}
    \fbox{
      \parbox[t]{0.95\linewidth}{%
        \noindent%
        \ifthenelse{\equal{#1}{}}{}{{#1}\\[1ex]}%
        \settowidth{\tmplenx}{\textit{Question:}}%
        \setlength{\tmpleny}{\linewidth}%
        \addtolength{\tmpleny}{-1.15\tmplenx}%
        \parbox[t]{\tmplenx}{\textit{Input:}}\hfill\parbox[t]{\tmpleny}{#2}
        \\[1ex]
        \parbox[t]{\tmplenx}{\textit{Question:}}\hfill\parbox[t]{\tmpleny}{#3}
      }%
    }
  \end{center}}
\newcommand{\ProblemName}[1]{\textsc{#1}}
\newcommand{\CLIQUE}{\ProblemName{Clique}}
\newcommand{\complexityclassname}[1]{\textup{#1}}
\newcommand{\PTIME}{\complexityclassname{PTIME}}
\newcommand{\NP}{\complexityclassname{NP}}
\newcommand{\co}[1]{\textup{co-}\!{#1}}
\newcommand{\arity}{\operatorname{ar}}
\newcommand{\limplies}{\rightarrow}
\newcommand{\bigland}{\bigwedge}
\newcommand{\biglor}{\bigvee}
\newcommand{\FO}{\text{FO}}
\newcommand{\LInf}{L_{\infty\omega}}
\newcommand{\CQ}{\text{CQ}}
\newcommand{\CQneg}{\ensuremath{\CQ^{\lnot}}}
\newcommand{\DatalogConstNeq}{\ensuremath{\text{Datalog}^{\textbf{C}(\neq)}}}
\newcommand{\dom}{\operatorname{dom}}
\newcommand{\Dom}{\textit{Dom}}
\newcommand{\const}{\operatorname{const}}
\newcommand{\Const}{\textit{Const}}
\newcommand{\nulls}{\operatorname{nulls}}
\newcommand{\Nulls}{\textit{Null}}
\newcommand{\legal}{\operatorname{\textit{legal}}}
\newcommand{\rep}{\operatorname{\textit{poss}}}
\newcommand{\val}{\operatorname{\textit{val}}}
\newcommand{\minrep}{\operatorname{\textit{min}}}
\newcommand{\minval}{\operatorname{\textit{minval}}}
\newcommand{\DDB}{D}
\newcommand{\bs}{\textit{bs}}
\newcommand{\XGCWA}{\text{GCWA\nobreak\!$^*$}}
\newcommand{\cert}{\operatorname{\textit{cert}}}
\newcommand{\tgd}{\forall \tup{x} \forall \tup{y} (
    \phi(\tup{x},\tup{y}) \limplies \exists \tup{z}\, \psi(\tup{x},\tup{z})
  )}
\newcommand{\egd}{\forall \tup{x} \bigl(\phi(\tup{x}) \limplies x_i=x_j\bigr)}
\newcommand{\Des}{(\sigma,\tau,\Sigma)}
\newcommand{\cansol}{\operatorname{CanSol}}
\newcommand{\fcert}[1]{\cert_\text{#1}}
\newcommand{\fcertOWA}{\fcert{OWA}}
\newcommand{\fcertCWA}{\fcert{CWA}}
\newcommand{\fcertRCWA}{\fcert{RCWA}}
\newcommand{\fcertGCWA}{\fcert{GCWA}}
\newcommand{\fcertEGCWA}{\fcert{EGCWA}}
\newcommand{\fcertPWS}{\fcert{PWS}}
\newcommand{\fcertXGCWA}{\fcert{GCWA$^*$}}
\newcommand{\certOWA}[3]{\fcertOWA({#3},{#1},{#2})}
\newcommand{\certCWA}[3]{\fcertCWA({#3},{#1},{#2})}
\newcommand{\certRCWA}[3]{\fcertRCWA({#3},{#1},{#2})}
\newcommand{\certGCWA}[3]{\fcertGCWA({#3},{#1},{#2})}
\newcommand{\certEGCWA}[3]{\fcertEGCWA({#3},{#1},{#2})}
\newcommand{\certPWS}[3]{\fcertPWS({#3},{#1},{#2})}
\newcommand{\certXGCWA}[3]{\fcertXGCWA({#3},{#1},{#2})}
\newcommand{\Eval}{\ProblemName{Eval}_{\XGCWA}}
\newcommand{\CoreEval}{\ProblemName{CoreEval}}
\def\doi{7 (3:09) 2011}
\begin{document}

\title[Answering Non-Monotonic Queries in Relational Data Exchange]
  {Answering Non-Monotonic Queries\\in Relational Data Exchange\rsuper*}
\author[A.~Hernich]{Andr{\'e} Hernich}
\address{Institut für Informatik, Humboldt-Universtität, Berlin, Germany}
\email{hernich@informatik.hu-berlin.de}
\keywords{certain answers, core, closed world assumption, deductive database}
\subjclass{H.2.5, H.2.4, H.2.8}
\titlecomment{{\lsuper*}A preliminary version of this article, \cite{Hernich:ICDT10},
  appeared in the proceeding of the 13th international conference
  on database theory (ICDT 2010).}

\begin{abstract}
  Relational data exchange is the problem of translating
  relational data from a source schema into a target schema,
  according to a specification of the relationship
  between the source data and the target data.
  One of the basic issues is how to answer queries
  that are posed against target data.
  While consensus has been reached on the definitive semantics
  for \emph{monotonic queries},
  this issue turned out to be considerably more difficult
  for \emph{non-monotonic queries}.
  Several semantics for non-monotonic queries have been proposed
  in the past few years.

  This article proposes a new semantics for non-monotonic queries,
  called the \emph{\XGCWA-semantics}.
  It is inspired by semantics from the area of deductive databases.
  We show that the \XGCWA-semantics
  coincides with the standard open world semantics on monotonic queries,
  and we further explore the (data) complexity of evaluating non-monotonic
  queries under the \XGCWA-semantics.
  In particular, we introduce a class of schema mappings
  for which universal queries can be evaluated under the \XGCWA-semantics
  in polynomial time (data complexity)
  on the core of the universal solutions.
\end{abstract}

\maketitle

\section{Introduction}
\label{sec:intro}

Data exchange is the problem of translating databases
from a source schema into a target schema,
whereby providing access to the source database
through a materialized database over the target schema.
It is a special case of data integration \cite{Lenzerini:PODS02}
and arises in tasks like data restructuring,
updating data warehouses using ETL processes,
or in exchanging data between different, possibly independently created,
applications (see, e.g., \cite{Clio05,FKMP:TCS336-1}).
Tools for dealing with data exchange are available for quite a while
\cite{SHTGL77,Clio05,MPR+09}.
Fundamental concepts and algorithmic issues in data exchange
have been studied recently by Fagin, Kolaitis, Miller, and Popa
in their seminal paper \cite{FKMP:TCS336-1}.
For a comprehensive overview on data exchange,
the reader is referred to \cite{FKMP:TCS336-1}
or any of the surveys \cite{Kolaitis:PODS05,Barcelo:SR38-1,HS:LOFT08,ABL+10}.

This article deals with \emph{relational} data exchange,
which received a lot of attention in the data exchange community
(see, e.g., the survey articles cited above).
In this setting,
the mapping from source data to target data is described
by a \emph{schema mapping} $M = (\sigma,\tau,\Sigma)$
which consists of relational database schemas $\sigma$ and $\tau$
(finite sets of relation names with associated arities),
called \emph{source schema} and \emph{target schema}, respectively,
and a finite set $\Sigma$ of constraints
(typically, sentences in some fragment of first-order logic)
which can refer to the relation names in $\sigma$ and $\tau$.
Typical constraints are tuple generating dependencies (tgds),
which come in two flavors -- st-tgds and t-tgds --,
and equality generating dependencies (egds).
For example, st-tgds are first-order sentences of the form
\(
  \forall \tup{x}, \tup{y}\,
  \bigl(
    \phi(\tup{x},\tup{y})
    \limplies
    \exists \tup{z}\,
    \psi(\tup{x},\tup{z})
  \bigr),
\)
where $\phi$ is a conjunction of relation atoms over $\sigma$,
and $\psi$ is a conjunction of relation atoms over $\tau$.
Their precise definitions are deferred
to Section~\ref{sec:basics/data-exchange}.
Given a relational database instance $S$ over $\sigma$
(called \emph{source instance} for $M$),
a \emph{solution} for $S$ under $M$
is a relational database instance $T$ over $\tau$
such that the instance $S \union T$ over $\sigma \union \tau$
that consists of the relations in $S$ and $T$
satisfies all the constraints in $\Sigma$.

An important task in relational data exchange
is to answer queries that are posed against the target schema
of a schema mapping.
The answer to a query should be semantically consistent
with the source data and the schema mapping,
that is, it should reflect the information in the source instance
and the schema mapping as good as possible.
Since a source instance usually has more than one solution,
a fundamental question is:
What is the semantics of a query,
that is, which tuples constitute the set of answers to a query
over the target schema of a schema mapping and a given source instance?
Furthermore, in data exchange the goal is to answer queries
using a materialized solution,
without access to the source instance.%
\footnote{A common assumption is that the source instance is not available
  after the data exchange has been performed \cite{FKMP:TCS336-1}.}
This brings us to a second fundamental question:
Given a source instance,
which solution should we compute in order to be able to answer queries?

Concerning the first question,
the \emph{certain answers semantics}, introduced in \cite{FKMP:TCS336-1},
has proved to be adequate for answering a wide range of queries such as
unions of conjunctive queries
(a.k.a.\ existential positive first-order queries).
Under the certain answers semantics,
a query $q$ is answered by the set of all tuples that are answers to $q$
no matter which solution $q$ is evaluated on.
More precisely, the certain answers consist of all those tuples $\tup{a}$
such that $q(\tup{a})$ is true in all solutions.
Concerning the second question,
the \emph{universal solutions} proposed in \cite{FKMP:TCS336-1}
have proved to be very useful.
Universal solutions can be regarded as most general solutions
in the sense that they contain sound and complete information.
In a number of settings, they can be computed efficiently
\cite{FKMP:TCS336-1,FKP:TODS30-1,GN:JACM08,HS:PODS07,DNR:PODS08,%
  Marnette:PODS09,GM:ICDT10}.
It was shown that the certain answers to unions of conjunctive queries
can be computed by evaluating such a query on an arbitrary universal solution,
followed by a simple post-processing step \cite{FKMP:TCS336-1}.
Similar results hold for other monotonic queries,
like unions of conjunctive queries with inequalities
\cite{FKMP:TCS336-1,DNR:PODS08,ABR:ICDT09}.

For many non-monotonic queries,
the certain answers semantics yields results that intuitively
do not seem to be accurate \cite{FKMP:TCS336-1,ABFL:PODS04,Libkin:PODS06}.%
\footnote{It was also pointed out in \cite{ABFL:PODS04,Libkin:PODS06}
  that similar problems arise for the universal solution-based semantics
  from \cite{FKP:TODS30-1}.}
The following example illustrates the basic problem:

\begin{exa}
  \label{exa:copying}
  Consider a schema mapping $M = (\set{R},\set{R'},\Sigma)$,
  where $R,R'$ are binary relation symbols
  and $\Sigma$ contains the single st-tgd
  \[
    \theta\, \isdef\, \forall x,y\, \bigl(R(x,y) \limplies R'(x,y)\bigr).
  \]
  Let $S$ be a source instance for $M$
  where $R$ is interpreted by $R^S \isdef \set{(a,b)}$.
  Since schema mappings describe translations from source to target,
  it seems natural to assume that $M$ and $S$ together
  give a complete description of the solutions for $S$ under $M$,
  namely that such a solution contains the tuple $(a,b)$ in $R'$
  (as implied by $\theta$ and the tuple $(a,b)$ in $R^S$),
  but no other tuple
  (since this is not implied by $M$ and $S$).
  In particular, it seems natural to assume that the instance $T$
  which interprets $R'$ by the relation $\set{(a,b)}$
  is the only solution for $S$ under $M$,
  and that the answer to the query
  \[
    q(x,y)\, \isdef\,
    R'(x,y) \land \forall z\, \bigl(R'(x,z) \limplies z = y\bigr)
  \]
  with respect to $M$ and $S$ is $\set{(a,b)}$.
  However, the certain answers to $q$ with respect to $M$ and $S$ are empty.
\end{exa}

The assumption that a schema mapping $M$ and a source instance $S$ for $M$
give a complete description of the solutions for $S$ under $M$
corresponds to the \emph{closed world assumption (CWA)} \cite{Reiter:LD78},
as opposed to the \emph{open world assumption (OWA)}
underlying the certain answers semantics.
To remedy the problems mentioned above,
Libkin \cite{Libkin:PODS06} proposed semantics based on the CWA,
which were later extended to a more general setting \cite{HS:PODS07}
(a combined version of \cite{Libkin:PODS06} and \cite{HS:PODS07}
 appeared in \cite{HLS11}).
While the CWA-semantics work well in a number of situations
(e.g., if a unique inclusion-minimal solution exists),
they still lead to counter-intuitive answers in certain other situations
\cite{LS:PODS08,AK:PODS08}.
To this end,
Libkin and Sirangelo \cite{LS:PODS08} proposed a combination of the CWA
and the OWA,
whereas Afrati and Kolaitis \cite{AK:PODS08} studied a restricted version
of the CWA-semantics,
and showed it to be useful for answering aggregate queries.
Henceforth, we use the term \emph{non-monotonic semantics}
to refer to the semantics from \cite{HLS11,LS:PODS08,AK:PODS08}.
In contrast,
we call the certain answers semantics \emph{OWA-semantics}.

A drawback of the non-monotonic semantics is that most of them
are not invariant under logically equivalent schema mappings.
That is, they do not necessarily lead to the same answers
with respect to schema mappings
specified by logically equivalent sets of constraints
(see Section~\ref{sec:problems}).
Since logically equivalent schema mappings intuitively
specify the same translation of source data to the target schema,
it seems natural, though, that the answer to a query
is the same on logically equivalent schema mappings.
Furthermore, it can be observed that the non-monotonic semantics
do not necessarily reflect the standard semantics
of first-order quantifiers (see Section~\ref{sec:problems}).
For example, consider a schema mapping $M = (\set{P},\set{Q},\set{\theta})$
with
\(
  \theta = \forall x\, (P(x) \limplies \exists y\, Q(x,y)),
\)
and let $S$ be a source instance for $M$ with a single element $a$ in $P$.
Under almost all of the non-monotonic semantics,
the answer to the query
$q = \text{``Is there exactly one $y$ with $Q(a,y)$?''}$
is \emph{true}.
However, existential quantification $\exists y\, Q(x,y)$
is typically interpreted as:
there is one $y$ with $Q(x,y)$, or there are two $y$ with $Q(x,y)$,
or there are three $y$ with $Q(x,y)$, and so on.
To be consistent with this interpretation,
the answer to $q$ should be \emph{false},
as otherwise the possibility of having two or more $y$ with $Q(x,y)$
is excluded.
Another reason for why it is natural to answer $q$ by \emph{false} is that
$\theta$ can be expressed equivalently as
\(
  \theta' = \forall x (P(x) \limplies \biglor_c\, Q(x,c)),
\)
where $c$ ranges over all possible values.
Since $M$ and $M' = (\set{P},\set{Q},\set{\theta'})$ are logically equivalent,
the answer to $q$ should either be \emph{true} or \emph{false}
with respect to both $M$ and $M'$.
Letting the answer be \emph{true} would not reflect the intended
meaning of the disjunction in $\theta'$,
unless we wish to interpret disjunctions exclusively.

This article introduces a new semantics
for answering non-monotonic queries,
called \emph{\XGCWA-semantics},
that is invariant under logically equivalent schema mappings,
and intuitively reflects the standard semantics of first-order quantifiers.
The starting point for the development of the \XGCWA-semantics
is the observation that query answering with respect to schema mappings
is very similar to query answering on \emph{deductive databases} \cite{GMN:CS84}
(see Section~\ref{sec:deductive}),
and that non-monotonic query answering on deductive databases
is a well-studied topic
(see, e.g.,
 \cite{Reiter:LD78,Minker:CADE82,YH:JAR1-2,Chan:TLDE5-2,GMN:CS84,DFN:AR2}).
Many of the query answering semantics proposed in this area
can be applied with minor modifications
to answer queries in relational data exchange.
Therefore, it seems obvious to study these semantics
in the context of data exchange.
This is done in Section~\ref{sec:deductive}.
More precisely,
we consider the semantics based on Reiter's \emph{CWA} \cite{Reiter:LD78},
the \emph{generalized CWA (GCWA)} \cite{Minker:CADE82},
the \emph{extended GCWA (EGCWA)} \cite{YH:JAR1-2},
and the \emph{possible worlds semantics (PWS)} \cite{Chan:TLDE5-2}.
It turns out that the semantics based on Reiter's CWA and the EGCWA
are too strong,
the GCWA-based semantics is too weak,
and the PWS is not invariant under logically equivalent schema mappings.
On the other hand,
the GCWA-based semantics seems to be a good starting point
for developing the \XGCWA-semantics.

In contrast to the other non-monotonic semantics,
the \XGCWA-semantics is defined with respect to all possible schema mappings.
It is based on the new concept of \emph{\XGCWA-solutions},
in the sense that, under the \XGCWA-semantics,
the set of answers to a query $q(\tup{x})$
with respect to a schema mapping $M$ and a source instance $S$
consists of all tuples $\tup{a}$
such that $q(\tup{a})$ holds in all \XGCWA-solutions for $S$ under $M$.
\XGCWA-solutions have a very simple definition
in many of the settings considered in the data exchange literature
(e.g., with respect to schema mappings specified by st-tgds and egds):
in these settings they are basically unions of inclusion-minimal solutions.

The major part of this article deals with the \emph{data complexity}
of evaluating queries under the \XGCWA-semantics.
Data complexity here means that the schema mapping and the query
are fixed (i.e., they are not part of the input).
We show that the \XGCWA-semantics and the OWA-semantics
coincide for monotonic queries
(Proposition~\ref{prop:certXGCWA-for-monotonic-queries}),
so that all results on evaluating monotonic queries under the OWA-semantics
carry over to the \XGCWA-semantics.
On the other hand,
there are simple schema mappings
(e.g., schema mappings specified by LAV tgds),
and simple non-monotonic Boolean first-order queries
for which query evaluation under the \XGCWA-semantics is \co\NP-hard
or even undecidable
(Propositions~\ref{prop:CQ-with-neg-coNP-complete}
 and \ref{prop:EX*ALL-undecidable}).

The main result (Theorem~\ref{thm:universal-queries})
shows that \emph{universal queries}
(first-order queries of the form $\forall \tup{x}\, \phi$
 with $\phi$ quantifier-free)
can be evaluated in polynomial time under the \XGCWA-semantics,
provided the schema mapping is specified by \emph{packed} st-tgds,
which we introduce in this article.
Packed st-tgds are st-tgds of the form
\(
  \forall \tup{x} \forall \tup{y} (
    \phi(\tup{x},\tup{y}) \limplies \exists \tup{z}\, \psi(\tup{x},\tup{z})
  ),
\)
where every two distinct atomic formulas in $\psi$
share a variable from $\tup{z}$.
This is a rather strong restriction,
but still allows for non-trivial use of existential quantifiers
in st-tgds.
Surprisingly,
the undecidability result mentioned above involves a schema mapping
defined by packed st-tgds,
and a first-order query starting with a block of existential quantifiers
and containing just one universal quantifier.
The main result does not only state
that universal queries can be evaluated in polynomial time
under the \XGCWA-semantics and schema mappings defined by packed st-tgds,
but it also shows that the answers can be computed
from the \emph{core of the universal solutions}
(\emph{core solution}, for short),
without access to the source instance.
The core solution is the smallest universal solution
and has been extensively studied in the literature
(see, e.g., \cite{FKP:TODS30-1,GN:JACM08,HLS11,DNR:PODS08,%
  Marnette:PODS09,GM:ICDT10}).
Furthermore, since the core solution can be used to evaluate
unions of conjunctive queries under the OWA-semantics,
we need only one solution, the core solution,
to answer both types of queries, unions of conjunctive queries
and universal queries.

The article is organized as follows.
In Section~\ref{sec:basics}, we fix basic definitions and mention basic results
that are used throughout this article.
Section~\ref{sec:problems} shows that the previously proposed
non-monotonic semantics are not necessarily invariant
under logically equivalent schema mappings,
and that they do not necessarily reflect
the standard semantics of first-order quantifiers.
In Section~\ref{sec:deductive},
we then study several of the query answering semantics for deductive databases
in the context of data exchange.
The new \XGCWA-semantics is introduced and illustrated
in Section~\ref{sec:XGCWA},
and the data complexity of answering queries under the \XGCWA-semantics
is explored in Section~\ref{sec:complexity}.

\section{Preliminaries}
\label{sec:basics}

We use standard terminology from database theory,
but slightly different notation.
See, e.g., \cite{AHV95} for a comprehensive introduction to database theory.

\subsection{Databases}

A \emph{schema} is a finite set $\sigma$ of relation symbols,
where each $R \in \sigma$ has a fixed arity $\arity(R) \geq 1$.
An \emph{instance} $I$ over $\sigma$ assigns to each $R \in \sigma$
a finite relation $R^I$ of arity $\arity(R)$.
The \emph{active domain of $I$} (the set of all values that occur in $I$)
is denoted by $\dom(I)$.
As usual in data exchange, we assume that $\dom(I) \subseteq \Dom$,
where $\Dom$ is the union of two fixed disjoint infinite sets --
the set $\Const$ of all \emph{constants},
and the set $\Nulls$ of all \emph{(labeled) nulls}.
Constants are denoted by letters $a,b,c,\dotsc$ and variants like $a',a_1$.
Nulls serve as placeholders, or variables, for unknown constants;
we will denote them by $\bot$ and variants like $\bot',\bot_1$.
Let $\const(I) \isdef \dom(I) \intersection \Const$ and
$\nulls(I) \isdef \dom(I) \intersection \Nulls$.
An instance is called \emph{ground} if it contains no nulls.

An \emph{atom} is an expression of the form $R(\tup{t})$,
where $R$ is a relation symbol, and $\tup{t} \in \Dom^{\arity(R)}$.
We often identify an instance $I$ with the set of all atoms $R(\tup{t})$
with $\tup{t} \in R^I$,
that is, we often view $I$ as the set
$\set{R(\tup{t}) \mid R \in \sigma,\, \tup{t} \in R^I}$.
An atom $R(\tup{t})$ is called \emph{ground} if $\tup{t}$ contains no nulls.

We extend mappings $f\colon X \to Y$, where $X$ and $Y$ are arbitrary sets,
to tuples, atoms, and instances as follows.
For a tuple $\tup{t} = (t_1,\dotsc,t_n) \in X^n$,
we let $f(\tup{t}) \isdef (f(t_1),\dotsc,f(t_n))$;
for an atom $A = R(\tup{t})$, we let $f(A) \isdef R(f(\tup{t}))$;
and for an instance $I$, we let $f(I) \isdef \set{f(A) \mid A \in I}$.
A mapping $f\colon X \to Y$ is called \emph{legal} for an instance $I$
if $\dom(I) \subseteq X$, and $f(c) = c$ for all $c \in \const(I)$.
The set of all mappings that are legal for $I$ is denoted by $\legal(I)$.
For a tuple $\tup{t} = (t_1,\dotsc,t_k)$,
we sloppily write $f\colon \tup{t} \to Y$ for a mapping $f\colon X \to Y$
with $X = \set{t_1,\dotsc,t_k}$.
Given $\tup{t} = (t_1,\dotsc,t_k)$, $\tup{u} = (u_1,\dotsc,u_l)$,
and an element $v$,
we also write $v \in \tup{t}$ if $v \in \set{t_1,\dotsc,t_k}$,
and we let
\(
  \tup{t} \intersection \tup{u}
  \isdef
  \set{t_1,\dotsc,t_k} \intersection \set{u_1,\dotsc,u_l}.
\)

Let $I$ and $J$ be instances.
A \emph{homomorphism} from $I$ to $J$ is a mapping
$h\colon \dom(I) \to \dom(J)$ such that $h \in \legal(I)$
and $h(I) \subseteq J$.
If $h(I) = J$, then $J$ is called a \emph{homomorphic image} of $I$.
$I$ and $J$ are \emph{homomorphically equivalent}
if there is a homomorphism from $I$ to $J$, and a homomorphism from $J$ to $I$.
An \emph{isomorphism} from $I$ to $J$ is a homomorphism $h$ from $I$ to $J$
such that $h$ is bijective, and $h^{-1}$ is a homomorphism from $J$ to $I$.
If there is an isomorphism from $I$ to $J$,
we call $I$ and $J$ \emph{isomorphic}, and denote this by $I \isomorphic J$.
A \emph{core} is an instance $K$
such that there is no homomorphism from $K$ to a proper subinstance of $K$.
A \emph{core} of $I$ is a core $K \subseteq I$
such that there is a homomorphism from $I$ to $K$.
It is known \cite{HN:DM109} that if $I$ and $J$ are homomorphically equivalent,
$K$ is a core of $I$, and $K'$ is a core of $J$, then $K \isomorphic K'$.
In particular, any two cores of $I$ are isomorphic,
so that we can speak of \emph{the} core of $I$.

Given some property $P$ of instances,
a \emph{minimal} instance with property $P$
is an instance $I$ with property $P$
such that there is no instance $J \subsetneq I$ with property $P$.

\subsection{Queries}

As usual \cite{AHV95},
a $k$-ary query over a schema $\sigma$ is a mapping from instances
over $\sigma$ to $\Dom^k$
that is $C$-generic for some finite set $C \subseteq \Const$
(i.e., the query is invariant under renamings of values in $\Dom \setminus C$).
In the context of queries defined by logical formulas,
we will often use the words \emph{formula} and \emph{query} as synonyms.
Whenever we speak of a first-order formula (\FO-formula) over a schema $\sigma$,
we mean a FO formula over the vocabulary that consists of
all relation symbols in $\sigma$, and all constants in $\Const$.

Let $\phi$ be a \FO-formula over $\sigma$,
and let $\dom(\phi)$ be the set of all constants in $\phi$.
An assignment for $\phi$ in an instance $I$
is a mapping from the free variables of $\phi$ to $\dom(I) \union \dom(\phi)$,
which we extend to $\Const$ via $\alpha(c) \isdef c$ for all $c \in \Const$.
We write $I \models \phi(\alpha)$ to indicate that
$\phi$ is satisfied in $I$ under $\alpha$ in the naive sense.%
\footnote{Nulls that may occur in $I$ are treated as if they were constants.
  In general, this may lead to counter-intuitive semantics
  \cite{Lipski:TODS4-3,IL:JACM31-4},
  since distinct nulls may represent the same constant.}
The relation $\models$ is defined as usual,
the only difference being that constants in $\phi$
are interpreted by themselves,
and quantifiers range over $\dom(I) \union \dom(\phi)$.
That is, we apply the \emph{active domain semantics} \cite{AHV95}.
For example, we have $I \models R(u_1,\dotsc,u_{\arity(R)})(\alpha)$
precisely if $(\alpha(u_1),\dotsc,\alpha(u_{\arity(R)})) \in R^I$;
$I \models (u_1 = u_2)(\alpha)$ precisely if $\alpha(u_1) = \alpha(u_2)$;
and $I \models (\exists x\, \phi)(\alpha)$
precisely if there is an $v \in \dom(I) \union \dom(\phi)$
with $I \models \phi(\alpha[v/x])$,
where $\alpha[v/x]$ is the assignment defined like $\alpha$,
except that $x$ is mapped to $v$.
For an \FO-formula $\phi(x_1,\dotsc,x_k)$
and a tuple $\tup{u} = (u_1,\dotsc,u_k) \in (\dom(I) \union \dom(\phi))^k$,
we often write $I \models \phi(\tup{u})$ instead of $I \models \phi(\alpha)$,
where $\alpha(x_i) = u_i$ for each $i \in \set{1,\dotsc,k}$.

A query $q(\tup{x})$ over $\sigma$ is \emph{monotonic}
if $q(I) \subseteq q(J)$ for all instances $I,J$ over $\sigma$ with
$I \subseteq J$.
It is easy to see that all queries preserved under homomorphisms are monotonic.
Here, a query $q(\tup{x})$ over $\sigma$
is \emph{preserved under homomorphisms} if and only if for all instances $I,J$
over $\sigma$, all homomorphisms $h$ from $I$ to $J$,
and all tuples $\tup{t} \in q(I)$, we have $h(\tup{t}) \in q(J)$.
For example, conjunctive queries, unions of conjunctive queries,
Datalog queries, and the $\DatalogConstNeq$ queries of \cite{ABR:ICDT09}
are preserved under homomorphisms.
\emph{Unions of conjunctive queries with inequalities}
(see, e.g., \cite{FKMP:TCS336-1} for a definition)
are an example of monotonic queries that are not preserved under homomorphisms.

At various places in sections~\ref{sec:problems}--\ref{sec:XGCWA},
we will need formulas of the infinitary logic $\LInf$.
A $\LInf$ formula over a schema $\sigma$ is built
from atomic FO formulas over $\sigma$
using negation, existential quantification,
universal quantification, \emph{infinitary disjunctions} $\biglor \Phi$,
where $\Phi$ is an arbitrary set of $\LInf$ formulas over $\sigma$,
and \emph{infinitary conjunctions} $\bigland \Phi$,
where $\Phi$ is an arbitrary set of $\LInf$ formulas over $\sigma$.
The semantics of infinitary disjunctions and infinitary conjunctions
is the obvious one: for an assignment $\alpha$ of the variables that occur
in the formulas in $\Phi$ we have
$I \models \biglor \Phi(\alpha)$ if and only if there is some $\phi \in \Phi$
with $I \models \phi(\alpha)$,
and $I \models \bigland \Phi(\alpha)$ if and only if for all $\phi \in \Phi$,
$I \models \phi(\alpha)$.

\subsection{Data Exchange}
\label{sec:basics/data-exchange}

A \emph{schema mapping} $M = \Des$
consists of disjoint schemas $\sigma$ and $\tau$,
called \emph{source schema} and \emph{target schema}, respectively,
and a finite set $\Sigma$ of constraints in some logical formalism
over $\sigma \union \tau$ \cite{FKMP:TCS336-1}.
To introduce and to study query answering semantics in a general setting,
we assume that for all schema mappings $M = \Des$ considered in this article,
$\Sigma$ consists of $\LInf$ sentences over $\sigma \union \tau$
(that are $C$-generic for some finite $C \subseteq \Const$).
For \emph{algorithmic results}, however,
we restrict attention to schema mappings $M = \Des$,
where $\Sigma$ consists of
\emph{source-to-target tuple generating dependencies (st-tgds)} and
\emph{equality generating dependencies (egds)},
which have been prominently considered in data exchange.
Here, an \emph{st-tgd} is a FO sentence of the form
\begin{align*}
  \tgd,
\end{align*}
where $\phi$ is a conjunction of relational atomic FO formulas over $\sigma$
with free variables $\tup{x}\tup{y}$,
and $\psi$ is a conjunction of relational atomic FO formulas over $\tau$
with free variables $\tup{x}\tup{z}$.
A \emph{full st-tgd} is a st-tgd without existentially quantified
variables $\tup{z}$,
and a \emph{LAV tgd} is a st-tgd with a single atomic formula in $\phi$.
An \emph{egd} is a FO sentence of the form
\begin{align*}
  \egd,
\end{align*}
where $\phi$ is a conjunction of relational atomic FO formulas over $\tau$
with free variables $\tup{x}$,
and $x_i,x_j$ are variables in $\tup{x}$.

Let $M = \Des$ be a schema mapping.
A \emph{source instance} for $M$ is a \emph{ground} instance over $\sigma$,
and a \emph{target instance} for $M$ is an instance over $\tau$.
Given a source instance $S$ for $M$,
a \emph{solution} for $S$ under $M$ is a target instance $T$ for $M$
such that $S \union T \models \Sigma$,
that is, the instance $S \union T$ over $\sigma \union \tau$
satisfies all the constraints in $\Sigma$.

A \emph{universal solution} for $S$ under $M$ is a solution $T$ for $S$
under $M$ such that for all solutions $T'$ for $S$ under $M$
there is a homomorphism from $T$ to $T'$.
Note that all universal solutions for $S$ under $M$
are homomorphically equivalent,
which implies that their cores are isomorphic.
Hence, up to isomorphism there is a unique target instance,
denoted by $\core(M,S)$,
that is isomorphic to the cores of all universal solutions for $S$ under $M$.
For many schema mappings $M$,
$\core(M,S)$ is a solution for $S$ under $M$.
For example, if $\Sigma$ contains only st-tgds,
then $\core(M,S)$ is a solution for $S$ under $M$ \cite{FKP:TODS30-1},
which can be computed in polynomial time:

\begin{thm}[\cite{FKP:TODS30-1}]
  \label{thm:core-algorithm}
  Let $M = \Des$ be a schema mapping, where $\Sigma$ consists of st-tgds.
  Then there is a polynomial time algorithm that,
  given a source instance $S$ for $M$,
  outputs $\core(M,S)$.
\end{thm}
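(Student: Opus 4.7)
My plan is to recall the classical two-step approach from Fagin, Kolaitis, and Popa: first produce \emph{some} universal solution in polynomial time via the chase, and then compute its core in polynomial time by iteratively removing redundant parts.

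For the first step, I would run the standard naive chase of $S$ with the st-tgds in $\Sigma$. Starting from the empty target instance, for every st-tgd $\forall\tup{x}\forall\tup{y}(\phi(\tup{x},\tup{y})\limplies\exists\tup{z}\,\psi(\tup{x},\tup{z}))$ in $\Sigma$ and every assignment $\alpha$ with $S\models\phi(\alpha)$, introduce fresh labeled nulls for the existentially quantified variables $\tup{z}$ and add the corresponding $\psi$-atoms to the current target instance $T^*$. Because st-tgds have only source atoms in their premises, no firing can trigger further firings, so the chase halts after a single sweep. The number of triggers is polynomial in $\card{S}$ (the st-tgds, being fixed, contribute only a constant), and the resulting $T^*$ has polynomial size. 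It is a routine verification that $T^*$ is a universal solution for $S$ under $M$.

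The second and harder step is to compute the core of $T^*$. In general, deciding whether an instance equals its core is \NP-hard, so one cannot afford to search over arbitrary endomorphisms. The key observation exploited by FKP is that $T^*$ has a very restricted structure: its nulls are grouped into \emph{blocks}, where the block of a null $\bot$ consists of those nulls reachable from $\bot$ by paths through atoms that contain at least one null. Each block is introduced by a single firing of an st-tgd, so its size is bounded by a constant depending only on $M$. The plan is therefore to iterate the following shrinking step: search for a ``useless'' block $B$, i.e.\ one such that there exists an endomorphism $h$ of $T^*$ which is the identity outside $B$ and maps the atoms involving $B$ into atoms already present in $T^* \setminus h(B)$; if such a $B$ exists, replace $T^*$ by $h(T^*)$. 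Because $B$ has bounded size, an appropriate $h$ can be guessed and verified in polynomial time by enumerating candidate images of the nulls in $B$. When no useless block remains, the resulting instance is the core.

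The main obstacle is, as expected, the correctness of the block-based shrinking procedure: one must argue (i) that as long as $T^*$ is not a core, some block is useless in the above sense, so the loop keeps making progress, and (ii) that a single local endomorphism witnessing uselessness can indeed be extended to a full endomorphism of $T^*$, so the shrunken instance is homomorphically equivalent to $T^*$ and hence still universal. Both of these are established in \cite{FKP:TODS30-1} using the bounded block-size property; once they are in hand, polynomial runtime follows since each iteration strictly decreases $\card{T^*}$ and each iteration runs in polynomial time.
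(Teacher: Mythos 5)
Your proposal is correct and follows essentially the same route as the cited source and as the machinery this paper itself relies on: the first step is exactly the construction of $\cansol(M,S)$ from Section~\ref{sec:basics/data-exchange}, and the second step is the blocks algorithm of \cite{FKP:TODS30-1}, which reappears here as Lemma~\ref{lem:block-core} together with the bounded-block-size property of Lemma~\ref{lemma:bounded-blocksize}. The paper states Theorem~\ref{thm:core-algorithm} without proof as an imported result, so there is nothing further to compare; your identification of the two correctness obligations for the shrinking loop is accurate and appropriately deferred to \cite{FKP:TODS30-1}.
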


Besides $\core(M,S)$,
the \emph{canonical universal solution} for $S$ under $M$,
which is denoted by $\cansol(M,S)$,
plays an important role in data exchange.
In the following, we give the definition of $\cansol(M,S)$
from \cite{ABFL:PODS04} for the case that $\Sigma$ contains only st-tgds.
Let $\CJ$ be the set of all triples $(\theta,\tup{a},\tup{b})$ such that
$\theta$ is a st-tgd in $\Sigma$ of the form $\tgd$,
and $S \models \phi(\tup{a},\tup{b})$.
Starting from an empty target instance for $M$,
$\cansol(M,S)$ is created by adding atoms for each element in $\CJ$ as follows.
For each $j = (\theta,\tup{a},\tup{b}) \in \CJ$, where $\theta = \tgd$,
let $\tup{\bot}_j$ be a $\length{\tup{z}}$-tuple of pairwise distinct nulls
such that for all $j' \in \CJ$ with $j' \neq j$,
the set of nulls in $\tup{\bot}_j$ is disjoint from the set of nulls in
$\tup{\bot}_{j'}$,
and add the atoms in $\psi(\tup{a},\tup{\bot}_j)$ to the target instance.

\subsection{The Certain Answers}
\label{sec:basics/ca}

Given a query $q(\tup{x})$ over $\tau$,
and a set $\CT$ of instances over $\tau$,
we define the \emph{certain answers to $q(\tup{x})$ on $\CT$} by
\begin{align*}
  \cert(q,\CT)\, \isdef\, \bigintersection\, \set{q(T) \mid T \in \CT}.
\end{align*}
The set of the certain answers to $q(\tup{x})$ on $M$ and $S$
under the \emph{OWA-semantics}, as defined in \cite{FKMP:TCS336-1},
is then defined as
\begin{align*}
  \certOWA{M}{S}{q}\, \isdef\,
  \cert(q,\set{T \mid \text{$T$ is a solution for $S$ under $M$}}).
\end{align*}
If $q$ is preserved under homomorphisms,
$\certOWA{M}{S}{q}$ can be computed from a single universal solution:

\begin{prop}[\cite{FKMP:TCS336-1}]\label{prop:OWA-UCQ}
  Let $M = \Des$ be a schema mapping,
  let $S$ be a source instance for $M$,
  let $T$ be a universal solution for $S$ under $M$,
  and let $q(\tup{x})$ be a query that is preserved under homomorphisms.
  Then
  \[
    \certOWA{M}{S}{q}\, =\,
    \set{\tup{a} \in q(T) \mid \text{$\tup{a}$ contains only constants}}.
  \]
\end{prop}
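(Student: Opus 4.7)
The plan is to prove the two inclusions separately, the $(\supseteq)$ inclusion using the defining property of universal solutions together with preservation under homomorphisms, and the $(\subseteq)$ inclusion by combining the obvious observation that $T$ is itself a solution with a genericity argument showing that answer tuples in the certain answers cannot contain nulls.

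For the $(\supseteq)$ direction, I would fix a tuple $\tup{a} \in q(T)$ consisting only of constants, and an arbitrary solution $T'$ for $S$ under $M$. Since $T$ is universal, there is a homomorphism $h\colon T \to T'$, and by the definition of homomorphism $h$ is legal, so $h(c) = c$ for every constant $c$. In particular $h(\tup{a}) = \tup{a}$. Applying the hypothesis that $q$ is preserved under homomorphisms yields $h(\tup{a}) \in q(T')$, i.e.\ $\tup{a} \in q(T')$. Intersecting over all solutions gives $\tup{a} \in \certOWA{M}{S}{q}$.

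For the $(\subseteq)$ direction, let $\tup{a} \in \certOWA{M}{S}{q}$. Since $T$ itself is a solution, $\tup{a} \in q(T)$ is immediate. It remains to rule out the possibility that $\tup{a}$ contains a null. The key observation is that both $\Sigma$ and $q$ are $C$-generic for some finite $C \subseteq \Const$, so both are invariant under permutations of $\Dom \setminus C$ that fix constants. Suppose, toward contradiction, that some null $\bot$ occurs in $\tup{a}$. Pick a fresh null $\bot'$ not appearing in $T$, $S$, $\Sigma$, or $q$, and let $\pi$ be the permutation of $\Dom$ swapping $\bot$ and $\bot'$ and fixing everything else. Setting $T' \isdef \pi(T)$, genericity of $\Sigma$ guarantees that $S \union T' \models \Sigma$, so $T'$ is a solution. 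But genericity of $q$ gives $q(T') = \pi(q(T))$, so the only tuples in $q(T')$ that agree with $\tup{a}$ on the positions not involving $\bot$ contain $\bot'$ instead of $\bot$ there. Hence $\tup{a} \notin q(T')$, contradicting $\tup{a} \in \certOWA{M}{S}{q}$.

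The main obstacle is the second direction: one must carefully justify that a ``renamed'' copy of $T$ is still a solution and that the query's answer behaves predictably under the renaming. This is where the $C$-genericity assumption on both $\Sigma$ and $q$ (together with the fact that homomorphisms are legal, so constants are preserved) does all the work. Once this is in place, the argument is essentially bookkeeping; no further machinery beyond universality and homomorphism-preservation is needed.
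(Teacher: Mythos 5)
The paper offers no proof of this proposition---it is imported verbatim from \cite{FKMP:TCS336-1} with only a citation---so there is no in-paper argument to compare against; judged on its own, your proof is correct and is the standard argument (universality plus homomorphism-preservation for one inclusion, a null-renaming/genericity argument for the other). The one step that deserves an explicit justification is the final claim that $\tup{a} \notin q(T')$: this holds because genericity (equivalently, the active-domain semantics the paper adopts for queries) confines $q(T')$ to tuples over $\dom(T') \union C$, and the swapped-out null $\bot$ occurs neither in $\dom(T')$ nor in $C$; your phrasing gestures at this but does not quite say it.
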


\section{Review of Non-Monotonic Semantics
  in Relational Data Exchange}
\label{sec:problems}

As mentioned in the introduction,
for many non-monotonic queries,
the OWA-semantics is counter-intuitive.
In \cite{FKP:TODS30-1},
Fagin, Kolaitis, and Popa propose an alternative semantics,
where the set of answers to a query $q(\tup{x})$
on a schema mapping $M$ and a source instance $S$ is defined by
\(
  \cert(q,\set{T \mid \text{$T$ is a universal solution for $S$ under $M$}}).
\)
However, their semantics has similar problems as the OWA-semantics
\cite{ABFL:PODS04,HLS11}.
For example, note that Example~\ref{exa:copying} goes through unchanged
if the universal solution-semantics is used instead of the OWA-semantics.

Libkin \cite{Libkin:PODS06} realized that the counter-intuitive behavior
of the OWA-semantics and the universal solution-semantics
can be remedied by adopting the CWA.
He then introduced semantics based on the CWA.
These semantics were designed for schema mappings defined by st-tgds,%
\footnote{Strictly speaking, Libkin considered sentences of the form
  \(
    \forall \tup{x}\, \bigl(
      \phi(\tup{x}) \limplies \exists \tup{z}\, \psi(\tup{x},\tup{y})
    \bigr),
  \)
  where $\phi$ is a first-order formula over the source schema,
  and $\psi$ is a conjunction of relational atoms over the target schema.}
but were later extended to schema mappings defined by st-tgds,
t-tgds (see, e.g., \cite{FKMP:TCS336-1} for a definition of t-tgds),
and egds.

To define the CWA-semantics,
we need to introduce \emph{CWA-solutions}.
For simplicity, we give their definition only for schema mappings
defined by st-tgds.
Let $M = \Des$ be a schema mapping where $\Sigma$ consists of st-tgds,
and let $S$ be a source instance for $M$.
Libkin identified the following requirements
that should be satisfied by any CWA-solution $T$ for $S$ under $M$:
\begin{enumerate}[(1)]
\item
  Every atom in $T$ is justified by $M$ and $S$.
\item
  Justifications are applied at most once.
\item
  Every Boolean conjunctive query true in $T$
  is a logical consequence of $S$ and $\Sigma$.
\end{enumerate}
Here, a justification for an atom
consists of an st-tgd $\tgd$ in $\Sigma$
and assignments $\tup{a},\tup{b}$ for $\tup{x},\tup{y}$
such that $S \models \phi(\tup{a},\tup{b})$.
Such a justification can be applied with an assignment $\tup{u}$ for $\tup{z}$,
and would then justify the atoms in $\psi(\tup{a},\tup{u})$.
We require that all atoms that are justified belong to $T$.
Once all requirements are properly formalized,
it turns out that a \emph{CWA-solution} for $S$ under $M$
is a universal solution for $S$ under $M$
that is a homomorphic image of $\cansol(M,S)$ \cite{HLS11}.

We now recall the simplest of the four CWA-semantics
introduced in \cite{HLS11},
which we henceforth call \emph{CWA-semantics}.
Recall that solutions in data exchange may contain nulls.
Nulls represent unknown constants,
in particular, two distinct nulls may represent the same constant.
Treating nulls as constants may thus lead to counter-intuitive answers
\cite{IL:JACM31-4}.
A standard way to answer queries
while taking into account the semantics of nulls
is to return the certain answers.
To this end,
one views an instance $T$ over $\tau$ as a set $\rep(T)$ of ground instances
obtained by substituting concrete constants for each null.
Formally,
we let
\[
  \rep(T) \,\isdef\, \set{v(T) \mid \text{$v$ is a valuation of $T$}},
\]
where a \emph{valuation} of $T$ is a mapping
$v\colon \dom(T) \to \Const$ with $v \in \legal(T)$.
Then the certain answers to a query $q(\tup{x})$ on $T$ are
\[
  \cert(q,T)\, \isdef\, \cert(q,\rep(T)).
\]
Now, Libkin's idea was to define the
\emph{CWA-answers to a query $q(\tup{x})$ on $M$ and $S$}
by
\begin{align*}
  \certCWA{M}{S}{q}
  \,\isdef\,
    \bigintersection\,
    \set{\cert(q,T) \mid \text{$T$ is a CWA-solution for $S$ under $M$}}.
\end{align*}
That is, $\certCWA{M}{S}{q}$ is the set of the certain answers to $q(\tup{x})$
on the CWA-solutions for $S$ under $M$,
but instead of answering $q(\tup{x})$ on an individual CWA-solution $T$
by $q(T)$,
the certain answers are used.
As shown in \cite{HLS11},
$\certCWA{M}{S}{q}$ can be computed from the canonical solution:
$\certCWA{M}{S}{q} = \cert(q,\cansol(M,S))$.

While the CWA-semantics works well in a number of situations
(e.g., for schema mappings defined by full st-tgds, full t-tgds and egds),
and in particular resolves the problems observed in \cite{ABFL:PODS04,HLS11}
and Example~\ref{exa:copying},
one of the drawbacks of the CWA-semantics is
that it is not invariant under logically equivalent schema mappings.
That is,
there are schema mappings $M_1 = (\sigma,\tau,\Sigma_1)$ and
$M_2 = (\sigma,\tau,\Sigma_2)$,
where $\Sigma_1$ is logically equivalent to $\Sigma_2$,
and a query $q(\tup{x})$
such that the answers to $q(\tup{x})$ with respect to $M_1$
differ from the answers to $q(\tup{x})$ with respect to $M_2$:

\begin{exa}
  \label{ex:logical-equivalence}
  Let $M_1 = (\sigma,\tau,\Sigma_1)$ and $M_2 = (\sigma,\tau,\Sigma_2)$
  be schema mappings, where $\sigma$ contains a unary relation symbol $P$,
  $\tau$ contains a binary relation symbol $E$, and
  \begin{align*}
    \Sigma_1 & \,\isdef\, \bigl\{
      \forall x\, \bigl(P(x) \limplies E(x,x)\bigr)
    \bigr\}, \\
    \Sigma_2 & \,\isdef\, \Sigma_1 \union \bigl\{
      \forall x\, \bigl(P(x) \limplies \exists z\, E(x,z)\bigr)
    \bigr\}.
  \end{align*}
  Then $M_1$ and $M_2$ are logically equivalent.

  Let $S$ be an instance over $\sigma$ with $P^S = \set{a}$.
  Furthermore, let $T_1$ and $T_2$ be instances over $\tau$
  with $E^{T_1} = \set{(a,a)}$ and $E^{T_2} = \set{(a,a),(a,\bot)}$.
  Note that $T_i = \cansol(M_i,S)$ for each $i \in \set{1,2}$.
  Even more, $T_1$ is the unique CWA-solution for $S$ under $M_1$,
  whereas $T_2$ is a CWA-solution for $S$ under $M_2$.
  Thus, for the query
  \begin{align*}
    q(x) \,\isdef\, \exists z\, \bigl(
      E(x,z) \land \forall z' (E(x,z') \limplies z' = z)
    \bigr),
  \end{align*}
  we obtain different answers $\certCWA{M_1}{S}{q} = \set{a}$ and
  $\certCWA{M_2}{S}{q} = \emptyset$
  to the same query $q$ on logically equivalent schema mappings $M_1$ and $M_2$.
\end{exa}

\begin{rem}
  If we replace $q$ by the query $q'$
  which asks for all $x$ such that there are at least two $z$ with $E(x,z)$,
  then $q'$ is answered differently on $M_1$ and $M_2$
  with respect to the maybe answers-semantics from \cite{HLS11}.
  The other two semantics in \cite{HLS11} are invariant
  under logically equivalent schema mappings, though.
\end{rem}

Intuitively, logically equivalent schema mappings specify
the same translation of source data to the target schema,
so it seems natural that the answer to a query
is the same on logically equivalent schema mappings.
Furthermore, invariance under logically equivalent schema mappings
seems to be a good way to achieve a ``syntax-independent'' semantics,
that is, a semantics that does not depend on the concrete presentation
of the sentences of the schema mapping like the CWA-semantics.
Gottlob, Pichler, and Savenkov \cite{GPS:VLDB09}
suggest a different approach for achieving syntax-independence
that is based on first normalizing a schema mapping,
and then ``applying'' the semantics.
Let me mention that weaker notions of equivalence between schema mappings
have been considered in the literature \cite{FKNP:PODS08}.
Instead of requiring invariance under logical equivalence,
one could use any of these weaker notions.

There is another drawback of the CWA-semantics, though:
it does not necessarily reflect the standard semantics of FO quantifiers.
Typically, existential quantification $\exists x\, \phi$ is interpreted as:
there is one $x$ satisfying $\phi$, or there are two $x$ satisfying $\phi$,
or there are three $x$ satisfying $\phi$, and so on.
But this is not necessarily reflected by the CWA-semantics:

\begin{exa}
  \label{ex:Libkin-CWA-failure}
  Let $M = (\sigma,\tau,\Sigma)$ be the schema mapping,
  where $\sigma$ contains a unary relation symbol $P$,
  $\tau$ contains a binary relation symbol $E$,
  and $\Sigma$ consists of the st-tgd
  \begin{align*}
    \theta \,\isdef\, \forall x\, \bigl(
      P(x) \limplies \exists z\, E(x,z)
    \bigr).
  \end{align*}
  Let $S$ be the source instance for $M$ with $P^S = \set{a}$,
  and let $T \isdef \cansol(M,S)$.
  Then up to renaming of nulls,
  $T$ is the unique CWA-solution for $S$ under $M$,
  and $E^T = \set{(a,\bot)}$.
  Therefore, for the query
  \[
    q(x) \,\isdef\, \exists z\, \bigl(
      E(x,z) \land \forall z'\, (E(x,z') \limplies z' = z)
    \bigr),
  \]
  we have $\certCWA{M}{S}{q} = \set{a}$.
  In other words, $\certCWA{M}{S}{q}$ excludes the possibility
  that there is more than one value $z$ with $E(a,z)$.
  However, this seems to be too strict
  since if we interpret $\exists z\, E(x,z)$ in the standard first-order way,
  $S$ and $\theta$ tell us that there is one $z$ satisfying $E(a,z)$,
  or there are two $z$ satisfying $E(a,z)$,
  or there are three $z$ satisfying $E(a,z)$, and so on.
  In particular, they explicitly state that it is possible
  that there is more than one $z$ satisfying $E(a,z)$.
\end{exa}

\begin{rem}
  The example also shows that the potential certain answers semantics
  from \cite{HLS11}
  does not necessarily reflect the standard semantics of FO quantifiers.
  For the remaining two semantics in \cite{HLS11},
  we can replace $q$ by the query $q'$
  asking for all $x$ for which there are at least two $z$ with $E(x,z)$.
  Then the set of answers to $q'$ under those semantics will be empty.
  In other words, they tell us that it is not possible
  that there are more than two $z$ that satisfy $E(x,z)$,
  which is not desired.
\end{rem}

In \cite{LS:PODS08},
Libkin and Sirangelo proposed a generalization of the CWA-semantics
based on a combination of the CWA and the OWA,
using which we can resolve the problem described
in Example~\ref{ex:Libkin-CWA-failure}.
The idea is to annotate each position (occurrence of a variable)
in the head of an st-tgd as \emph{closed} or \emph{open},
where open positions correspond to those positions
where more than one value may be created.
For example, recall the schema mapping $M$ and the source instance $S$
from Example~\ref{ex:Libkin-CWA-failure},
and let $\alpha$ be the following annotation of $\theta$:
\begin{align*}
  \theta_\alpha \,\isdef\, \forall x\, \bigl(
    P(x) \limplies \exists z\, E(x^\textit{closed},z^\textit{open})
  \bigr).
\end{align*}
Then the valid solutions for $S$ under $M$ and $\alpha$
are all solutions for $S$ under $M$
of the form $\set{E(a,b) \mid b \in X}$, where $X$ is a finite set of constants.
That is,
the first position
-- which corresponds to the closed position in the head of $\theta_\alpha$ --
is restricted to the value $a$ assigned to $x$ when applying $\theta$ to $S$,
while the second position
-- which corresponds to the open position in the head of $\theta_\alpha$ --
is unrestricted in the sense that an arbitrary finite number
of values may be created at this position.
Let us write $\operatorname{sol}_\alpha(M,S)$
for the set of all these solutions.
The set of answers to a query $q(\tup{x})$ on $M$, $\alpha$ and $S$
is then
\[
  \cert_\alpha(q,M,S)\ \isdef\ \cert(q,\operatorname{sol}_\alpha(M,S)).
\]
In particular, it is easy to see that for the query $q$
in Example~\ref{ex:Libkin-CWA-failure}
we have $\cert_\alpha(q,M,S) = \emptyset$, as desired.
However, the ``mixed world'' semantics may still be counter-intuitive:

\begin{exa}
  \label{exa:3}
  Let $M = (\sigma,\tau,\Sigma)$ be defined by $\sigma = \set{R}$,
  $\tau = \set{E,F}$, and $\Sigma = \set{\theta}$,
  where
  \begin{align}
    \label{eq:3-tgd}
    \theta \,\isdef\,
    \forall x,y\, \Bigl(R(x,y) \limplies
      \exists z\, \bigl(E(x,z) \land F(z,y)\bigr)\Bigr).
  \end{align}
  Intuitively, $\theta$ states that ``if $R(x,y)$,
  then there is at least one $z$ such that $E(x,z)$ and $F(z,y)$ hold.''
  There could be exactly one such $z$,
  but there could also be more than one such $z$.
  The possibility that there are precisely two such $z$,
  or precisely three such $z$ et cetera is perfectly consistent with $\theta$,
  and should not be denied when answering queries.
  So, it seems that the only solutions for $S = \set{R(a,b)}$ under $M$
  should be those solutions $T$
  for which there is a finite set $X \subseteq \Const$
  such that $T = \set{E(a,x) \mid x \in X} \union \set{F(x,b) \mid x \in X}$.
  In particular, we should expect that the answer to
  \[
    q(x,y) \,\isdef\, \exists^{= 1} z\, \bigl(E(x,z) \land F(z,y)\bigr)
  \]
  on $M$ and $S$ is empty,
  and that the answer to
  \[
    q'(x) \,\isdef\, \forall z\, \bigl(E(x,z) \limplies \exists y\, F(z,y)\bigr)
  \]
  on $M$ and $S$ is $\set{a}$.

  Now consider an annotation $\alpha$ for $\theta$
  where the occurrence of $z$ in $E(x,z)$ is \emph{closed}.
  According to the definition in \cite{LS:PODS08},
  for all valid solutions $T$ for $S$ under $M$ and $\alpha$,
  and all tuples $(c,d),(c',d') \in E^T$ we have $d = d'$,
  so that $\cert_\alpha(q,M,S) = \set{(a,b)}$.
  Hence, $\cert_\alpha(q,M,S)$ excludes the possibility
  that there is more than one $z$ satisfying $E(x,z)$ and $F(z,y)$,
  although $\theta$ explicitly states that it is possible
  that more than one such $z$ exists.
  The same is true if the occurrence of $z$ in $F(z,y)$ is \emph{closed}.

  Let us finally consider an annotation $\alpha'$ where both occurrences of $z$
  in the head of $\theta$ are \emph{open}.
  According to the definition in \cite{LS:PODS08},
  we have $\cert_{\alpha'}(q',M,S) = \emptyset$,
  since
  \[
    T^* \,\isdef\, \set{E(a,c),E(a,c'),F(c,b)}
  \]
  would be a valid solution for $S$ under $M$ and $\alpha'$
  according to this definition.
  Intuitively, the ``mixed world'' semantics is ``too open''
  in that it allows $E(a,c')$ to occur in $T^*$
  without enforcing that the corresponding atom $F(c',b)$ is present in $T^*$.
\end{exa}

\begin{rem}
  \label{rem:inclusive}
  The existential quantifier in \eqref{eq:3-tgd}
  can be expressed via an infinite disjunction
  over all possible choices of values for $z$
  (recall that nulls are just place-holders for unknown constants,
   so we do not have to consider nulls here),
  as in
  \begin{align}
    \label{eq:3-tgd/d}
    \theta' \,\isdef\,
     \forall x,y \left(R(x,y) \limplies
      \biglor_{c \in \Const} \bigl(E(x,c) \land F(c,y)\bigr)\right).
  \end{align}
  The interpretation of the existential quantifier
  discussed in Example~\ref{exa:3}
  then corresponds to an inclusive interpretation of the disjunction
  in \eqref{eq:3-tgd/d}.
  Intuitively, the desired set of solutions for $S = \set{R(a,b)}$ under $M$ --
  the set of solutions of the form
  $T = \set{E(a,x) \mid x \in X} \union \set{F(x,b) \mid x \in X}$
  for some finite set $X \subseteq \Const$ --
  is the smallest set of solutions that reflects an inclusive interpretation
  of the disjunction in \eqref{eq:3-tgd/d}.
\end{rem}

\begin{rem}
  Afrati and Kolaitis \cite{AK:PODS08}
  showed that a restriction of Libkin's CWA is useful
  for answering \emph{aggregate queries}.
  Their semantics is defined with respect to schema mappings
  specified by st-tgds.
  In principle, we could use it to answer non-aggregate queries
  as follows.
  Let $M = \Des$ be a schema mapping where $\Sigma$ is a set of st-tgds,
  let $S$ be a source instance for $M$,
  and let $q(\tup{x})$ be a query over $\tau$.
  Instead of answering $q(\tup{x})$ by the certain answers to $q(\tup{x})$
  on $\rep(\core(M,S))$,
  we answer $q(\tup{x})$ by the certain answers to $q(\tup{x})$
  on the set of all \emph{endomorphic images} of $\cansol(M,S)$.
  Here, an endomorphic image of $\cansol(M,S)$
  is an instance $T$
  such that $T = h(\cansol(M,S))$ for some homomorphism $h$
  from $\cansol(M,S)$ to $\cansol(M,S)$.
  However, the endomorphic images-semantics seems to be too strong
  -- it is stronger than the CWA-semantics.
  In particular, Example~\ref{ex:logical-equivalence}
  shows that the endomorphic images-semantics is not invariant
  under logically equivalent schema mappings,
  and Example~\ref{ex:Libkin-CWA-failure}
  shows that it does not necessarily reflect the standard semantics
  of FO quantifiers.
\end{rem}

Our goal is to develop a semantics for answering non-monotonic queries
that is invariant under logically equivalent schema mappings,
and is just ``open enough'' to interpret existential quantifiers
(when viewed as an infinite disjunction,
 as suggested in Remark~\ref{rem:inclusive}) inclusively.
We start by studying semantics for answering non-monotonic queries
on deductive databases.

\begin{rem}
  Note the following side-effect
  of invariance under logical equivalent schema mappings.
  Consider the schema mappings $M_1$, $M_2$ and source instance $S$
  from Example~\ref{ex:logical-equivalence}.
  Then under a reasonable closed world semantics,
  $T = \set{E(a,a)}$ should be the only ``valid solution'' for $S$ under $M_1$.
  Invariance under logical equivalence thus enforces
  that $T$ is also the only ``valid solution'' for $S$ under $M_2$.
  This seems to be counter-intuitive,
  but only as long as one considers the two st-tgds in $M_2$ isolated.
\end{rem}

\section{Deductive Databases and Relational Data Exchange}
\label{sec:deductive}

Query answering on deductive databases
and query answering in relational data exchange
are very similar problems.
Both require answering a query on a ground database
that is equipped with a set of constraints.
On the other hand,
answering non-monotonic queries on deductive databases
is a well-studied topic.
In this section, we translate some of the semantics
that were proposed to answer non-monotonic queries on deductive databases
into the context of relational data exchange.

A \emph{deductive database} \cite{GMN:CS84} over a schema $\sigma$ is a set
of FO sentences, called \emph{clauses}, of the form
\begin{align}
  \label{eq:db}
  \forall \tup{x} \bigl(
    \lnot R_1(\tup{y}_1) \lor \dotsb \lor \lnot R_m(\tup{y}_m) \lor
    R'_1(\tup{z}_1) \lor \dotsb \lor R'_n(\tup{z}_n)
  \bigr),
\end{align}
where $m$ and $n$ are nonnegative integers with $m + n \geq 1$,
$R_1,\dotsc,R_m,R'_1,\dotsc,R'_n$ are relation symbols in $\sigma$,
and $\tup{y}_1,\dotsc,\tup{y}_m,\tup{z}_1,\dotsc,\tup{z}_n$ are tuples
containing elements of $\Const$ and $\tup{x}$.
A \emph{model} of a deductive database $\DDB$ over $\sigma$
is a \emph{ground} instance $I$ over $\sigma$ with $I \models \DDB$
(i.e., $I$ satisfies all clauses in $\DDB$),
and a query $q(\tup{x})$ is usually answered by $\cert(q,\CI)$,
where $\CI$ is a set of models of $\DDB$ that depends on the particular
semantics.

Several semantics for answering non-monotonic queries
on deductive databases were proposed
(see, e.g., \cite{Reiter:LD78,Minker:CADE82,YH:JAR1-2,Chan:TLDE5-2},
 and the survey articles \cite{GMN:CS84,DFN:AR2}).
Often, these semantics can be applied with some minor modifications
to more general sets of logical sentences such as
\begin{align*}
  \DDB_{M,S} \,\isdef\, \Sigma
  \union
  \set{R(\tup{t}) \mid R \in \sigma,\, \tup{t} \in R^S}
  \union
  \set{\lnot R(\tup{t}) \mid
    R \in \sigma,\, \tup{t} \in \Const^{\arity(R)} \setminus R^S},
\end{align*}
where $M = \Des$ is a schema mapping, and $S$ is a source instance for $M$.
Note that, if $\Sigma$ consists only of full st-tgds,
then $\DDB_{M,S}$ is logically equivalent to a deductive database,
since any full st-tgd of the form
\(
  \forall \tup{x} (
    R_1(\tup{y}_1) \land \dotsb \land R_m(\tup{y}_m) \limplies
    R'(\tup{z})
  )
\)
is logically equivalent to the clause
\(
  \forall \tup{x} (
    \lnot R_1(\tup{y}_1) \lor \dotsb \lor \lnot R_m(\tup{y}_m) \lor R'(\tup{z})
  ).
\)

In the following,
we concentrate on Reiter's \emph{CWA} \cite{Reiter:LD78},
since this is the basic assumption underlying all previous approaches
for non-monotonic query answering in relational data exchange.
However, we also consider variants of Reiter's CWA  such as
semantics based on Minker's \emph{generalized CWA (GCWA)} \cite{Minker:CADE82},
Yahya's and Henschen's \emph{extended GCWA (EGCWA)} \cite{YH:JAR1-2}
as well as Chan's \emph{possible worlds semantics (PWS)} \cite{Chan:TLDE5-2}.

\subsection{Reiter's Closed World Assumption (RCWA)}
\label{sec:deductive/CWA}

Reiter's \emph{closed world assumption (RCWA)},%
\footnote{We write RCWA instead of CWA
  to avoid confusion with Libkin's formalization of the CWA.}
formalized by Reiter in \cite{Reiter:LD78},
assumes that every ground atom that is not implied by a database is false.
This is a common assumption for relational databases.

Reiter formalized the RCWA
as follows.
For a deductive database $\DDB$ and a formula $\phi$, we write
$\DDB \models \phi$ if and only if for all instances $I$ with $I \models \DDB$,
we have $I \models \phi$.
Given a deductive database $\DDB$ over a schema $\sigma$,
let
\begin{align*}
  \overline{\DDB} \,\isdef\, \set{
    \lnot R(\tup{t})
    \mid
    R \in \sigma,\, \tup{t} \in \Const^{\arity(R)},\, \DDB \not\models R(\tup{t})
  },
\end{align*}
which contains negations of all \emph{ground} atoms $R(\tup{t})$
(i.e., $\tup{t}$ is a tuple of constants)
that are assumed to be false under the RCWA.
The models of $\DDB \union \overline{\DDB}$
are called \emph{RCWA-models} of $\DDB$.
Under the RCWA,
a query $q(\tup{x})$ over $\sigma$ is answered by $\cert(q,\CI)$,
where $\CI$ is the set of all RCWA-models of $\DDB$.

Translated into the relational data exchange framework, we obtain:

\begin{defi}[RCWA-solution, RCWA-answers]
  \label{def:RCWA-solutions}
  Let $M = \Des$ be a schema mapping, let $S$ be a source instance for $M$,
  and let $q(\tup{x})$ be a query over $\tau$.
  \begin{enumerate}[(1)]
  \item
    An \emph{RCWA-solution} for $S$ under $M$
    is a ground target instance $T$ for $M$
    such that $S \union T$ is a RCWA-model of $\DDB_{M,S}$
    (recall the definition of $\DDB_{M,S}$ from \eqref{eq:db}),
  \item
    We call $\certRCWA{M}{S}{q} \isdef \cert(q,\CI)$,
    where $\CI$ is the set of all RCWA-solutions for $S$ under $M$,
    the \emph{RCWA-answers to $q(\tup{x})$ on $M$ and $S$}.
  \end{enumerate}
\end{defi}

\noindent Note that RCWA-solutions are \emph{ground} (i.e., contain no nulls),
in contrast to other notions of solutions such as plain solutions,
universal solutions, or CWA-solutions presented in previous sections.

The RCWA is a very strong assumption.
For example,
if an RCWA-solution for $S$ under $M$ exists,
it is the unique minimal (ground) solution for $S$ under $M$:

\begin{prop}
  \label{prop:RCWA-solutions}
  Let $M$ be a schema mapping, and let $S$ be a source instance for $M$.
  Then a solution $T$ for $S$ under $M$
  is an RCWA-solution for $S$ under $M$
  if and only if $T$ is contained in every ground solution for $S$ under $M$.
\end{prop}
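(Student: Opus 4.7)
The plan is to first pin down the semantic content of $\DDB_{M,S}$ and $\overline{\DDB_{M,S}}$ in terms of ground solutions for $S$ under $M$, and then read both directions of the equivalence off these characterizations. The first step is the observation that the ground models of $\DDB_{M,S}$ are exactly the instances of the form $S \union T'$ with $T'$ a ground solution for $S$ under $M$. Indeed, the positive and negative ground atoms over $\sigma$ in $\DDB_{M,S}$ pin $R^I$ down to equal $R^S$ for every $R \in \sigma$ (using that a ground $I$ contains only constants), and the remaining constraint $I \models \Sigma$ then says precisely that $T' \isdef I|_\tau$ is a solution for $S$ under $M$.

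The second step is to translate $\overline{\DDB_{M,S}}$ accordingly. On source atoms the set $\overline{\DDB_{M,S}}$ gives no extra information: for $R \in \sigma$ and $\tup{t} \in \Const^{\arity(R)}$ one has $\DDB_{M,S} \models R(\tup{t})$ iff $\tup{t} \in R^S$, which is automatic for any target instance $T$ paired with $S$. On target atoms, the first step yields that $\DDB_{M,S} \models R(\tup{t})$ (for $R \in \tau$, $\tup{t} \in \Const^{\arity(R)}$) is equivalent to $R(\tup{t})$ lying in every ground solution for $S$ under $M$. Consequently, for a ground target instance $T$, the condition $S \union T \models \overline{\DDB_{M,S}}$ is equivalent to $T$ being contained in $\bigintersection \set{T' \mid T' \text{ is a ground solution for } S \text{ under } M}$.

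With these in hand, both directions become routine. For the ``only if'' direction, if $T$ is an RCWA-solution then $T$ is a ground target instance with $S \union T \models \DDB_{M,S} \union \overline{\DDB_{M,S}}$, so by the second step $T \subseteq T'$ for every ground solution $T'$. For the ``if'' direction, suppose $T$ is a solution for $S$ under $M$ with $T \subseteq T'$ for every ground solution $T'$. In the only nontrivial case, namely when at least one ground solution exists (which is also the only case in which any RCWA-solution can exist), $T$ is itself ground by the inclusion $T \subseteq T'$, the solution property gives $S \union T \models \DDB_{M,S}$, and the inclusion in every ground solution, read through the second step, gives $S \union T \models \overline{\DDB_{M,S}}$.

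The only subtle point I expect is the bookkeeping distinction between source and target atoms throughout $\DDB_{M,S}$ and $\overline{\DDB_{M,S}}$, together with the observation that the RCWA-condition is trivial on source atoms (since $S$ is ground and already pinned down) and thus concentrates entirely on the target part. Beyond that, the proposition amounts to unpacking Reiter's definition of CWA-model in the data-exchange vocabulary.
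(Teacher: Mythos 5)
Your proof is correct and follows essentially the same route as the paper's: both arguments reduce the RCWA-condition to the observation that, for a target atom $R(\tup{t})$, $\DDB_{M,S} \models R(\tup{t})$ holds exactly when $R(\tup{t})$ belongs to every ground solution, and then read off the two directions. Your version merely factors this into explicit preliminary characterizations (of the models of $\DDB_{M,S}$ and of $\overline{\DDB_{M,S}}$) and is somewhat more careful about the source/target bookkeeping and the groundness of $T$, which the paper's terser proof leaves implicit.
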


\begin{proof}
  Suppose $T$ is an RCWA-solution for $S$ under $M$,
  and let $T'$ be a ground solution for $S$ under $M$.
  If there is an atom $R(\tup{t}) \in T \setminus T'$,
  then $D_{M,S} \not\models R(\tup{t})$,
  so that $\lnot R(\tup{t}) \in \overline{D_{M,S}}$,
  and hence $T$ is no RCWA-solution for $S$ under $M$.
  Therefore, $T \subseteq T'$.
  To prove the other direction,
  suppose that $T$ is contained in every ground solution for $S$ under $M$.
  Then, for all atoms $R(\tup{t}) \in T$ we have $D_{M,S} \models R(\tup{t})$,
  so that $T$ is an RCWA-solution for $S$ under $M$.
\end{proof}

It is also not hard to see that $\fcertRCWA$ coincides with $\fcertCWA$
on schema mappings defined by full st-tgds.

\begin{prop}
  \label{prop:RCWA-vs-CWA-for-full-tgds}
  Let $M = \Des$ be a schema mapping, where $\Sigma$ consists of full st-tgds,
  let $S$ be a source instance for $M$,
  and let $q(\tup{x})$ be a query over $\tau$.
  Then, $\certRCWA{M}{S}{q} = \certCWA{M}{S}{q}$.
\end{prop}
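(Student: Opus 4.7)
The plan is to show that, for schema mappings whose constraints are full st-tgds, the canonical solution $\cansol(M,S)$ is simultaneously the unique RCWA-solution and the only object needed to compute the CWA-answers, so that both $\certRCWA{M}{S}{q}$ and $\certCWA{M}{S}{q}$ collapse to $q(\cansol(M,S))$.

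First, I would observe that when $\Sigma$ consists only of full st-tgds, the construction of $\cansol(M,S)$ recalled in Section~\ref{sec:basics/data-exchange} introduces no nulls: each triple in $\CJ$ comes from a full st-tgd, so the existential tuple $\tup{z}$ is empty and no fresh nulls are generated. Hence $\cansol(M,S)$ is itself a ground target instance, and in particular $\rep(\cansol(M,S)) = \set{\cansol(M,S)}$.

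Next, I would argue that $\cansol(M,S)$ is contained in every ground solution $T$ for $S$ under $M$. Indeed, for any triple $(\theta,\tup{a},\tup{b}) \in \CJ$ with $\theta$ of the form $\forall\tup{x}\forall\tup{y}(\phi(\tup{x},\tup{y}) \limplies \psi(\tup{x}))$, the premise $S \models \phi(\tup{a},\tup{b})$ and $S \union T \models \theta$ force $T$ to contain all atoms of $\psi(\tup{a})$, which are exactly the atoms added to $\cansol(M,S)$ for this triple. By Proposition~\ref{prop:RCWA-solutions}, this minimality property pins $\cansol(M,S)$ down as the unique RCWA-solution for $S$ under $M$, so $\certRCWA{M}{S}{q} = q(\cansol(M,S))$.

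For the CWA side, I would invoke the identity $\certCWA{M}{S}{q} = \cert(q,\cansol(M,S))$ from \cite{HLS11} recalled in Section~\ref{sec:problems}, together with the fact established above that $\rep(\cansol(M,S)) = \set{\cansol(M,S)}$. This gives $\certCWA{M}{S}{q} = q(\cansol(M,S))$ as well, and the two quantities agree. I do not expect any real obstacle: the proposition is essentially a bookkeeping consequence of two very small observations, namely that the canonical solution for a full st-tgd schema mapping is ground and is the least ground solution. The only point worth checking carefully is the case in which no other ground solution exists, but even then $\cansol(M,S)$ itself witnesses one, so the argument goes through unchanged.
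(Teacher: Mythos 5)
Your proposal is correct and follows essentially the same route as the paper: both arguments rest on the observation that for full st-tgds there is a unique minimal ground solution (which you concretely identify as $\cansol(M,S)$) that is simultaneously the unique CWA-solution and, via Proposition~\ref{prop:RCWA-solutions}, the unique RCWA-solution, so both answer sets collapse to $q$ evaluated on that single ground instance. The paper's proof is just a terser statement of the same bookkeeping.
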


\begin{proof}
  Since $\Sigma$ consists of full st-tgds,
  there is a unique minimal ground solution $T_0$ for $S$ under $M$,
  which is also the unique CWA-solution for $S$ under $M$,
  and, by Proposition~\ref{prop:RCWA-solutions},
  the unique RCWA-solution for $S$ under $M$.
  Consequently, $\certRCWA{M}{S}{q} = \certCWA{M}{S}{q} = q(T_0)$.
\end{proof}

However, for schema mappings that contain non-full st-tgds,
$\fcertRCWA$ may lead to answers that are inconsistent with $M$ and $S$.
This is illustrated by the following example,
which is based on Example 8 in \cite{Reiter:LD78}.

\begin{exa}
  \label{ex:CWA-failure}
  Let $M = (\set{P},\set{E},\Sigma)$,
  where $\Sigma \isdef \set{\forall x (P(x) \to \exists z\, E(x,z))}$,
  and let $S$ be the source instance for $M$ with $P^S = \set{a}$.
  Since there is no unique minimal ground solution,
  Proposition~\ref{prop:RCWA-solutions} implies
  that there is no RCWA-solution for $S$ under $M$.
  Consequently, the RCWA-answers to
  \(
    q(x) \isdef \exists z\, E(x,z)
  \)
  on $M$ and $S$ are empty.
  In other words,
  $\certRCWA{M}{S}{q}$ tells us that there is no value $z$ satisfying $E(a,z)$.
  This is clearly inconsistent with $M$ and $S$,
  which tell us that there is a value $z$ satisfying $E(a,z)$,
  and hence that the set of answers should be $\set{a}$.
\end{exa}

\subsection{The Generalized Closed World Assumption (GCWA)}
\label{sec:deductive/GCWA}

Minker \cite{Minker:CADE82} extended Reiter's CWA to the
\emph{generalized closed world assumption (GCWA)} as follows.
Recall the definition of \emph{minimal instance possessing some property}
from Section~\ref{sec:basics}.
Let $\DDB$ be a deductive database over a schema $\sigma$.
A \emph{minimal model} of $D$ is a minimal instance
with the property of being a model of $D$.
Let
\begin{align*}
  \overline{\overline{\DDB}} \,\isdef\, \set{
    \lnot R(\tup{t}) \mid
    \text{$R \in \sigma$, $\tup{t} \in \Const^{\arity(R)}$,
      $\tup{t} \notin R^I$ for all minimal models $I$ of $\DDB$}
  },
\end{align*}
which, analogous to $\overline{\DDB}$ for the case of the RCWA,
contains negations of all ground atoms
that are assumed to be false under the GCWA.
The models of $\DDB \union \overline{\overline{\DDB}}$ are called
\emph{GCWA-models} of $\DDB$,
and a query $q(\tup{x})$ over $\sigma$ is answered by $\cert(q,\CI)$,
where $\CI$ is the set of all GCWA-models of $\DDB$.

The intuition behind the above definitions is that each ground atom in some
minimal model of $\DDB$ is in some sense an atom that $\DDB$ ``speaks'' about.
For ground atoms that do not occur in any minimal model of $\DDB$,
this means that they are merely ``invented'',
and can therefore safely be assumed to be false.

Translated into the relational data exchange framework, we obtain:

\begin{defi}[GCWA-solution, GCWA-answers]
  \label{def:GCWA-solutions}
  Let $M = \Des$ be a schema mapping, let $S$ be a source instance for $M$,
  and let $q(\tup{x})$ be a query over $\tau$.
  \begin{enumerate}[(1)]
  \item
    A \emph{GCWA-solution} for $S$ under $M$
    is a ground target instance $T$ for $M$
    such that $S \union T$ is a GCWA-model of $\DDB_{M,S}$.
  \item
    We call $\certGCWA{M}{S}{q} \isdef \cert(q,\CI)$,
    where $\CI$ is the set of the GCWA-solutions for $S$ under $M$,
    the \emph{GCWA-answers to $q(\tup{x})$ on $M$ and $S$}.
  \end{enumerate}
\end{defi}

\noindent We have the following characterization of GCWA-solutions:

\begin{prop}
  \label{prop:GCWA-solutions}
  Let $M$ be a schema mapping, and let $S$ be a source instance for $M$.
  Then a solution $T$ for $S$ under $M$
  is an GCWA-solution for $S$ under $M$
  if and only if there is a set $\CT$ of minimal ground solutions
  for $S$ under $M$
  such that $T \subseteq \bigunion \CT$.
\end{prop}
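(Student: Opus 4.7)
My plan is to unfold the definition of GCWA-solutions and identify the minimal models of $\DDB_{M,S}$ with instances of the form $S \cup T'$, where $T'$ is a minimal ground solution for $S$ under $M$.

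First, I would establish the following identification: the minimal models of $\DDB_{M,S}$ are precisely the instances $S \cup T'$ for which $T'$ is a minimal ground solution for $S$ under $M$. For any ground instance $I$ with $I \models \DDB_{M,S}$, the positive source atoms force $S \subseteq I$ and the negative source atoms (one for each $\tup{t} \in \Const^{\arity(R)} \setminus R^S$, $R \in \sigma$) force the source part of $I$ to be exactly $S$. Hence $I = S \cup T'$ with $T'$ a ground target instance, and $I \models \Sigma$ means that $T'$ is a ground solution for $S$ under $M$. Since no source atom can be removed without violating $\DDB_{M,S}$, minimality of $I$ amounts to minimality of $T'$ among ground solutions.

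For the ``only if'' direction, suppose $T$ is a GCWA-solution, so $S \cup T$ is a model of $\DDB_{M,S} \cup \overline{\overline{\DDB_{M,S}}}$. In particular, $T$ is a ground solution. Pick any atom $R(\tup{t}) \in T$. Because $S \cup T$ satisfies $\overline{\overline{\DDB_{M,S}}}$, there must be a minimal model $I_{\tup{t}}$ of $\DDB_{M,S}$ containing $R(\tup{t})$; by the previous paragraph, $I_{\tup{t}} = S \cup T_{\tup{t}}$ for some minimal ground solution $T_{\tup{t}}$, and $R(\tup{t}) \in T_{\tup{t}}$ (as $R \in \tau$). Setting $\CT \isdef \{T_{\tup{t}} \mid R(\tup{t}) \in T\}$ yields $T \subseteq \bigunion \CT$, as required.

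For the ``if'' direction, let $T$ be a (ground) solution with $T \subseteq \bigunion \CT$ for some set $\CT$ of minimal ground solutions for $S$ under $M$. Then $S \cup T$ satisfies $\Sigma$ as well as the positive and negative source-atom clauses of $\DDB_{M,S}$, so $S \cup T \models \DDB_{M,S}$. It remains to verify $\overline{\overline{\DDB_{M,S}}}$: take an atom $R(\tup{t}) \in S \cup T$ with $\tup{t}$ a tuple of constants (recall $T$ is ground). If $R \in \sigma$, then $R(\tup{t}) \in S$, which lies in every minimal model by the first paragraph. If $R \in \tau$, then $R(\tup{t}) \in T'$ for some $T' \in \CT$, and $S \cup T'$ is a minimal model of $\DDB_{M,S}$ containing $R(\tup{t})$. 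In either case, $R(\tup{t})$ belongs to some minimal model of $\DDB_{M,S}$, so $S \cup T$ satisfies every clause in $\overline{\overline{\DDB_{M,S}}}$, i.e., $T$ is a GCWA-solution.

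There is no genuine obstacle here; the only step requiring care is the identification of minimal models of $\DDB_{M,S}$ with $S \cup T'$ for $T'$ a minimal ground solution, since one must explicitly use both the positive and negative source-atom clauses in $\DDB_{M,S}$ to pin down the source part.
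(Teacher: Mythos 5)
Your proof is correct and follows essentially the same route as the paper's: unfold the definition of $\overline{\overline{\DDB_{M,S}}}$ and observe that an instance satisfies these negated clauses exactly when each of its atoms occurs in some minimal model, hence in some minimal ground solution. You merely make explicit the identification of minimal models of $\DDB_{M,S}$ with the instances $S \union T'$ for $T'$ a minimal ground solution (and the treatment of source atoms), which the paper's terser proof leaves implicit.
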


\begin{proof}
  Suppose $T$ is a GCWA-solution for $S$ under $M$.
  Then for each atom $A$ in $T$,
  there is a minimal ground solution $T_A$ for $S$ under $M$
  with $A \in T_A$
  (otherwise, $\lnot A \in \overline{\overline{D_{M,S}}}$,
   so that $T$ would not be a GCWA-solution for $S$ under $M$).
  But then we have $T \subseteq \bigunion_{A \in T} T_A$.

  Suppose now that $\CT$ is a set of minimal ground solutions for $S$ under $M$
  such that $T \subseteq \bigunion \CT$.
  Then, for all atoms $A \in T$
  we have $\lnot A \notin \overline{\overline{D_{M,S}}}$,
  and it follows that $\CT$ is a GCWA-solution for $S$ under $M$.
\end{proof}

Similar to the RCWA-answers semantics,
it can be shown that $\fcertGCWA$ coincides with $\fcertCWA$
on schema mappings defined by full st-tgds.
Moreover, $\fcertGCWA$ leads to the desired answers to the query
in Example~\ref{ex:CWA-failure}:

\begin{exa}
  \label{ex:CWA-failure/GCWA-fix}
  Recall the schema mapping $M$, the source instance $S$, and the query $q$
  from Example~\ref{ex:CWA-failure}.
  We now have
  \begin{align*}
    \overline{\overline{\DDB_{M,S}}} =
    \set{\lnot P(b) \mid b \in \Const,\, b \neq a}
    \union
    \set{\lnot E(b,c) \mid b, c \in \Const,\, b \neq a},
  \end{align*}
  because each atom of the form $E(a,c)$
  is true in some minimal model of $\DDB_{M,S}$,
  and each atom of the form $E(b,c)$ with $b \neq a$
  is false in all minimal models of $\DDB_{M,S}$.
  Therefore, the GCWA-solutions for $S$ under $M$
  are precisely the target instances $T$ for $M$
  for which there is a finite nonempty set $B \subseteq \Const$ with $T = T_B$,
  where $E^{T_B} = \set{(a,b) \mid b \in B}$.
  It follows that $\certGCWA{M}{S}{q} = \set{a}$, as desired.
\end{exa}

Nevertheless, there are cases where the GCWA is still quite unsatisfactory,
as shown by the following example:

\begin{exa}
  \label{ex:GCWA-problem}
  Consider a slight extension of the schema mapping from
  Example~\ref{ex:CWA-failure},
  namely $M = (\set{P},\set{E,F},\Sigma)$,
  where $\Sigma$ consists of the st-tgd
  \begin{align*}
    \theta \,\isdef\,
    \forall x\, \Bigl(
      P(x) \limplies \exists z_1 \exists z_2\,
      \bigl(E(x,z_1) \land F(z_1,z_2)\bigr)
    \Bigr).
  \end{align*}
  Let $S$ be the source instance for $M$ with $P^S = \set{a}$.
  Then,
  \begin{align*}
    \overline{\overline{\DDB_{M,S}}} =
    \set{\lnot P(b) \mid b \in \Const,\, b \neq a} \union
    \set{
      \lnot E(b,c) \mid b, c \in \Const,\, b \neq a
    }.
  \end{align*}
  Note that for all $b,c \in \Const$
  we have $\lnot F(b,c) \notin \overline{\overline{\DDB_{M,S}}}$,
  since the target instance $T$ for $M$ with $P^T = \set{a}$,
  $E^T = \set{(a,b)}$ and $F^T = \set{(b,c)}$
  is a minimal model of $\DDB_{M,S}$.
  So, the GCWA-solutions for $S$ under $M$ are the target instances $T$
  for $M$ for which there is a finite nonempty set $B \subseteq \Const$
  with the following properties:
  (1) $E^T = \set{(a,b) \mid b \in B}$, and
  (2) for at least one $b \in B$
      there is some $c \in \Const$ with $(b,c) \in F^T$.
  In particular, the target instance $T^*$ with $E^{T^*} = \set{(a,b)}$ and
  $F^{T^*} = \set{(b,c),(d,e)}$ is a GCWA-solution for $S$ under $M$.
  For the Boolean query
  \[
    q \,\isdef\,
    \forall z_1 \forall z_2\, \bigl(
      F(z_1,z_2) \limplies \exists x\, E(x,z_1)
    \bigr)
  \]
  we thus have $\certGCWA{M}{S}{q} = \emptyset$.

  So, $\certGCWA{M}{S}{q}$ tells us that it is possible that there is a tuple
  $(b,c)$ in $F$ for which $(a,b)$ is not in $E$.
  However, $\theta$ and $S$ do not ``mention'' this possibility.
  In particular, $\theta$ and $S$ only tell us that there are one or more pairs
  $(b,c) \in \Const^2$ such that $E(a,b)$ and $F(b,c)$ occur together
  in a solution.
  Thus, whenever $E(a,b)$ is present for some $b \in \Const$,
  then $F(b,c)$ should be present for some $c \in \Const$.
  Similarly, whenever $F(b,c)$ is present for some $b,c \in \Const$,
  then $E(a,b)$ should be present.
\end{exa}

\subsection{Extensions of the GCWA}
\label{sec:deductive/other}

Various extensions of the GCWA have been proposed.
One of these extensions is the \emph{extended GCWA (EGCWA)}
by Yahya and Henschen \cite{YH:JAR1-2},
which restricts the set of models of a deductive database $\DDB$
to the minimal models of $\DDB$.
So, given a schema mapping $M = \Des$ and a source instance $S$ for $M$,
an \emph{EGCWA-solution} for $S$ under $M$ can be defined as
a ground minimal solution for $S$ under $M$,
and given a query $q(\tup{x})$ we can define
\[
  \certEGCWA{M}{S}{q} \,\isdef\, \cert(q,\CI),
\]
where $\CI$ is the set of all EGCWA-solutions for $S$ under $M$.
Then, for the schema mapping $M$, the source instance $S$ for $M$,
and the query $q$ in Example~\ref{ex:CWA-failure/GCWA-fix},
$\certEGCWA{M}{S}{q} = \certGCWA{M}{S}{q}$,
and for the schema mapping $M$, the source instance $S$ for $M$,
and the query $q$ in Example~\ref{ex:GCWA-problem},
$\certEGCWA{M}{S}{q} \neq \emptyset$, as desired.
However, the EGCWA seems to be too strong in the sense that it removes
too many solutions from the set of all solutions.
More precisely, it interprets existential quantifiers
(when viewed as disjunctions) exclusively rather than inclusively.
We illustrate this by the following example.%
\footnote{Example~\ref{ex:Libkin-CWA-failure} illustrates this as well,
  but Example~\ref{ex:EGCWA-failure} seems to make it more clear
  why it may be desirable to interpret existential quantifiers inclusively.}

\begin{exa}
  \label{ex:EGCWA-failure}
  Let $M = (\set{P},\set{E},\Sigma)$ be a schema mapping,
  where $\Sigma$ consists of
  \[
    \theta\, =\,
    \forall x\, \bigl(P(x) \limplies \exists^{[2,3]} z\, E(x,z)\bigr),
  \]
  where $\exists^{[2,3]} z\, E(x,z)$ is an abbreviation for ``there exist two
  or three $z$ such that $E(x,z)$''.
  Let $S$ be the source instance for $M$ with $P^S = \set{a}$.
  Then the minimal solutions for $S$ under $M$
  have the form $\set{E(a,b_1),E(a,b_2)}$,
  where $b_1,b_2$ are distinct constants.
  Thus, for
  \begin{align*}
    q(x) \,\isdef\, \exists z_1 \exists z_2\, \Bigl(
      E(x,z_1) \land E(x,z_2) \land
        \forall z_3\, \bigl(
          E(x,z_3) \limplies (z_3 = z_1 \lor z_3 = z_2)
        \bigr)
    \Bigr),
  \end{align*}
  we have $\certEGCWA{M}{S}{q} = \set{a}$.
  In other words, the answer $\certEGCWA{M}{S}{q}$ excludes the possibility
  that there are three distinct values $b_1,b_2,b_3$ with $E(a,b_i)$
  for each $i \in \set{1,2,3}$.
  But $\theta$ and $S$ explicitly mention this possibility.
  Thus, intuitively, $\fcertEGCWA$ is inconsistent with $M$ and $S$.
\end{exa}

To conclude this section,
let us consider the \emph{possible worlds semantics (PWS)} by Chan
\cite{Chan:TLDE5-2}.
A natural translation of the PWS for the case of schema mappings defined
by st-tgds is as follows:
Let $M = \Des$ be a schema mapping, where $\Sigma$ is a set of st-tgds,
and let $S$ be a source instance for $M$.
The definition of a PWS-solution for $S$ under $M$ can be given in terms of
\emph{justifications}, as in \cite{HLS11}.
Given a target instance $T$ for $M$ and an atom $R(\tup{t}) \in T$,
we say that $R(\tup{t})$ is \emph{justified} in $T$ under $M$ and $S$
if and only if there is a st-tgd $\tgd$ in $\Sigma$,
tuples $\tup{a},\tup{b}$ over $\dom(S)$ with $S \models \phi(\tup{a},\tup{b})$,
and a tuple $\tup{u}$ over $\dom(T)$ such that
$T \models \psi(\tup{a},\tup{u})$, and $R(\tup{t})$ is one of the atoms
in $\psi(\tup{a},\tup{u})$.
A PWS-solution for $S$ under $M$ is then a ground solution $T$ for $S$
under $M$ such that all atoms in $T$ are justified in $T$ under $M$ and $S$.
For a query $q$ over $\tau$, we let
\[
  \certPWS{M}{S}{q} \,\isdef\, cert(q,\CI),
\]
where $\CI$ is the set of all PWS-solutions for $S$ under $M$.
However, $\fcertPWS$ does not respect logical equivalence of schema mappings
which can be easily verified using the schema mapping, the source instance
and the query from Example~\ref{ex:logical-equivalence}.

\section{The \texorpdfstring{\XGCWA}{GCWA*}-Semantics}
\label{sec:XGCWA}

We now introduce the \XGCWA-semantics,
and argue that it has the desired properties --
being invariant under logically equivalent schema mappings,
and being just ``open enough'' to interpret existential quantifiers
inclusively.

Among the semantics considered in the previous sections,
the GCWA-semantics is closest to the desired semantics.
For instance, consider the schema mapping $M = (\set{P},\set{E},\Sigma)$,
where $\Sigma$ consists of
\(
  \forall x\, (P(x) \limplies \exists z\, E(x,z)),
\)
and the source instance $S$ for $M$ with $P^S = \set{a}$
from Example~\ref{ex:CWA-failure}.
Let $\CT$ be the set of all GCWA-solutions for $S$ under $M$.
As shown in Example~\ref{ex:CWA-failure/GCWA-fix},
$\CT$ consists of all target instances $T$ for $M$
such that there is a nonempty finite set $B \subseteq \Const$ with $T = T_B$,
where
\(
  E^{T_B} = \set{(a,b) \mid b \in B}.
\)
The set $\CT$ is precisely as we would like the set of solutions to be.
Intuitively, it precisely captures what is expressed by $M$ and $S$:
there is one $b \in \Const$ satisfying $E(a,b)$,
or there are two distinct $b_1,b_2 \in \Const$
satisfying $E(a,b_1)$ and $E(a,b_2)$,
or there are three distinct $b_1,b_2,b_3 \in \Const$
satisfying $E(a,b_1)$, $E(a,b_2)$ and $E(a,b_3)$,
and so on.
The case that there are $n$ distinct $b_1,\dotsc,b_n \in \Const$
such that $E(a,b_i)$ holds for each $i \in \set{1,\dotsc,n}$
is captured precisely by $T_B$, where $B \isdef \set{b_1,\dotsc,b_n}$.
However, as we have argued in Example~\ref{ex:GCWA-problem},
with respect to other schema mappings the GCWA is still ``too open''.

Note that the set $\CT$ in the above example
is the set of all ground solutions for $S$ under $M$
that are unions of minimal solutions.
Indeed, it seems to be a good idea to use the set of all such solutions
as the set of ``valid solutions''.
As we have done in Remark~\ref{rem:inclusive},
we can express the existential quantifier in the st-tgd
from Example~\ref{ex:CWA-failure}
equivalently by an infinite disjunction,
resulting in the following $\LInf$-sentence:
\begin{align*}
  \theta' \,\isdef\,
  \forall x \left(P(x) \limplies
    \biglor_{c \in \Const} E(x,c)\right).
\end{align*}
Then the ground minimal solutions for a source instance $S$
under $M' = (\set{P},\set{E},\set{\theta'})$
correspond to the disjuncts of the disjunction in $\theta'$,
and an inclusive interpretation of this disjunction is guaranteed
by taking all ground solutions that are unions of minimal solutions
as ``valid solutions''.

This can be generalized to other schema mappings defined by st-tgds.
For instance, recall the schema mapping $M$,
and the source instance $S$ for $M$ from Example~\ref{ex:GCWA-problem},
where the GCWA is ``too open''.
Again, we can express the st-tgd $\theta$ that defines $M$
by an $\LInf$-sentence where the existential quantifier in $\theta$
is replaced by an infinite disjunction:
\begin{align*}
  \theta' \,\isdef\,
  \forall x\, \left(
    P(x) \limplies \biglor_{c_1,c_2 \in \Const}
    \bigl(E(x,c_1) \land F(c_1,c_2)\bigr)
  \right).
\end{align*}
Then the ground minimal solutions for $S$ under $M$
correspond to the disjuncts of the disjunction in $\theta'$,
and an inclusive interpretation of this disjunction is guaranteed
by taking the set $\CT$ of all ground solutions for $S$ under $M$
that are unions of minimal solutions as ``valid solutions''.
That is, we take the set $\CT$ of all ground target instances $T$ for $M$
such that
\(
  E^T = \set{(a,b) \mid \text{$(b,c) \in F^T$ for some $c \in \Const$}}
\)
and
\(
  F^T \neq \emptyset.
\)
Indeed, the set of the certain answers to the query $q$
from Example~\ref{ex:GCWA-problem} on $\CT$ is nonempty, as desired.

The preceding two examples suggest
to answer queries by the certain answers on the set of all ground solutions
that are unions of minimal solutions.
Let us call such solutions \emph{\XGCWA-solutions} for the moment:

\begin{defi}[working definition]
  Let $M = \Des$ be a schema mapping, let $S$ be a source instance for $M$,
  and let $q(\tup{x})$ be a query over $\tau$.
  \begin{enumerate}[(1)]
  \item
    A \emph{\XGCWA-solution} for $S$ under $M$
    is a ground solution for $S$ under $M$
    that is a union of minimal solutions for $S$ under $M$.
  \item
    We call $\certXGCWA{M}{S}{q} \isdef \cert(q,\CI)$,
    where $\CI$ is the set of the \XGCWA-solutions for $S$ under $M$,
    the \emph{\XGCWA-answers to $q(\tup{x})$ on $M$ and $S$}.
  \end{enumerate}
\end{defi}

\noindent The definition of \XGCWA-solutions and \XGCWA-answers
already seems to be a good approximation to the concept of solutions,
and query answering semantics, respectively,
that we would like to have.
Immediately from the definitions,
we obtain that the \XGCWA-answers are invariant
under logically equivalent schema mappings:

\begin{prop}
  If $M_1 = (\sigma,\tau,\Sigma_1)$ and $M_2 = (\sigma,\tau,\Sigma_2)$
  are logically equivalent schema mappings,
  $S$ is a source instance for $M_1$ and $M_2$, respectively,
  and $q(\tup{x})$ is a query over $\tau$,
  then $\certXGCWA{M_1}{S}{q} = \certXGCWA{M_2}{S}{q}$.
\end{prop}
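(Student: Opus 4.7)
The plan is to observe that this proposition follows almost directly from the definitions, the only content being a careful unpacking of what ``logically equivalent'' means at the level of solution sets. Since the notion of a \XGCWA-solution is defined purely in terms of the set of solutions for $S$ (via ground solutions, minimal solutions, and unions thereof), and the notion of solution itself is defined as a target instance $T$ with $S \union T \models \Sigma$, every ingredient going into $\fcertXGCWA$ is determined by the class of models of $\Sigma$ over $\sigma \union \tau$.

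Concretely, I would proceed as follows. First, I would note that if $\Sigma_1$ and $\Sigma_2$ are logically equivalent, then for every instance $I$ over $\sigma \union \tau$ we have $I \models \Sigma_1$ iff $I \models \Sigma_2$. Fixing any source instance $S$, this immediately gives that a target instance $T$ is a solution for $S$ under $M_1$ if and only if it is a solution for $S$ under $M_2$, because the defining condition $S \union T \models \Sigma_i$ is identical in the two cases. Second, I would observe that since ``ground'' and ``minimal'' (in the subset-ordering on target instances) are properties that depend only on the underlying set of solutions, the minimal ground solutions for $S$ coincide under $M_1$ and $M_2$. Third, unions of minimal ground solutions therefore also coincide, so by the working definition the sets of \XGCWA-solutions for $S$ under $M_1$ and under $M_2$ are the same set $\CI$.

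Finally, $\certXGCWA{M_i}{S}{q} = \cert(q, \CI)$ for $i \in \{1,2\}$ by definition, and since the two sets $\CI$ are literally equal, the certain answers coincide. The only step requiring any thought is ensuring that nothing in the definition of \XGCWA-solution refers to $\Sigma$ in a more fine-grained way than ``$T$ is a solution'' — I would double-check this by inspecting the working definition, but there is no such dependence, so no genuine obstacle arises. The proof is thus a short chain of ``follows from the definition'' observations, and I would present it in two or three sentences.
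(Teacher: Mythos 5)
Your proposal is correct and takes exactly the route the paper intends: the paper states this proposition without proof, introducing it with ``immediately from the definitions,'' and your argument is precisely the spelled-out version of that observation (same models of $\Sigma$, hence same solutions, hence same minimal solutions and unions thereof, hence the same set $\CI$ fed into $\cert(q,\CI)$). No gap; the only trivial nitpick is that the working definition speaks of unions of minimal solutions whose union is ground rather than of minimal ground solutions, but either reading is determined by the solution set alone, so your argument is unaffected.
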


Furthermore, let us generalize the discussion from the beginning of this section
to argue that \XGCWA-solutions and the \XGCWA-answers as defined above
are suitable for schema mappings defined by st-tgds.
Let $M = \Des$ be such a schema mapping.
Given a source instance $S$ for $M$, let
\begin{align*}
  \Psi_{M,S} \isdef
  \{
    \exists \tup{z}\, \psi(\tup{u},\tup{z}) \mid\
    & \text{there are a tgd $\tgd$ in $\Sigma$ and tuples} \\
    & \text{$\tup{u} \in \Const^{\length{\tup{x}}},
       \tup{v} \in \Const^{\length{\tup{y}}}$
       such that $S \models \phi(\tup{u},\tup{v})$}
  \}.
\end{align*}
For each ground target instance $T$ for $M$,
it holds that $T$ is a solution for $S$ under $M$ if and only if
$T$ satisfies all sentences in $\Psi_{M,S}$.
Let $\CT_0$ be the set of all ground minimal solutions for $S$ under $M$.
Since all sentences in $\Psi_{M,S}$ are monotonic,
$\Psi_{M,S}$ is logically equivalent
(on the set of all ground instances over $\tau$)
to the sentence
\begin{align*}
  \psi_{M,S} \isdef
  \biglor_{T_0 \in \CT_0} \bigland_{R(\tup{t}) \in T_0} R(\tup{t}),
\end{align*}
that is, for all ground instances $T$ over $\tau$,
we have $T \models \psi_{M,S}$ if and only if
$T$ satisfies all sentences in $\Psi_{M,S}$.
Now, $\psi_{M,S}$ tells us that
there is one $T_0 \in \CT_0$ such that all $R(\tup{t}) \in T_0$ are satisfied,
or there are two $T_0 \in \CT_0$ such that all $R(\tup{t}) \in T_0$
are satisfied, and so on.
So, intuitively,
the set of all solutions that are unions of solutions from $\CT_0$
(namely, the set of all $\XGCWA$-solutions for $S$ under $M$)
captures what is expressed by $M$ and $S$
in the sense as explained in the two motivating examples
from the beginning of this section.

\begin{rem}
  The above argumentation can be generalized to more general classes
  of schema mappings.
  For example, let us consider schema mappings
  defined by a certain kind of $\LInf$ sentences.
  Let $M = \Des$ be a schema mapping,
  where $\Sigma$ consists of
  \emph{right-monotonic $\LInf$-st-tgds},
  which are $\LInf$ sentences of the form
  \begin{align*}
    \theta \isdef \forall \tup{x} (\phi(\tup{x}) \limplies \psi(\tup{x})),
  \end{align*}
  where $\phi$ is a $\LInf$ formula over $\sigma$,
  and $\psi$ is a \emph{monotonic} $\LInf$ formula over $\tau$.
  We assume that for each instance $S$ over $\sigma$,
  and for each instance $T$ over $\tau$,
  we have $S \union T \models \theta$ if and only if
  for all $\tup{a} \in (\dom(S) \union \dom(\theta))^{\length{\tup{x}}}$,
  where $\dom(\theta)$ is the set of all constants that occur in $\theta$,
  $S \models \phi(\tup{a})$ implies $T \models \psi(\tup{a})$.
  This can be enforced, for example, by relativizing the universal quantifiers,
  and the quantifiers in $\phi$ to the
  active domain over $\sigma$,
  and by relativizing the quantifiers in $\psi$ to the active domain
  over $\tau$.
  Note that right-monotonic $\LInf$-st-tgds capture st-tgds.

  Now, the above argumentation for schema mappings defined by st-tgds
  goes through for $M$.
  The only difference is that,
  given a source instance $S$ for $M$,
  we let
  \begin{align*}
    \Psi_{M,S} \isdef
    \{
      \psi(\tup{a}) \mid\
      & \text{there are
        $\forall \tup{x} (\phi(\tup{x}) \limplies \psi(\tup{x}))$
        in $\Sigma$ and $\tup{a} \in \Const^{\length{\tup{x}}}$
        with $S \models \phi(\tup{a})$}
      \}.
  \end{align*}
  The remaining part goes through unchanged.
\end{rem}

The following example shows that the \XGCWA-answers
can be appropriate beyond schema mappings
defined by right-monotonic $\LInf$-st-tgds.

\begin{exa}
  Recall the schema mapping $M$, the source instance $S$ for $M$,
  and the query $q$ from Example~\ref{ex:EGCWA-failure}.
  For each ground target instance $T$ for $M$
  that is the union of minimal solutions for $S$ under $M$,
  there exists a nonempty finite set $C \subseteq \Const$ with
  $E^T = \set{(a,b) \mid b \in C}$.
  $T$ is a \XGCWA-solution for $S$ under $M$ if and only if
  $2 \leq \card{C} \leq 3$, as desired.
  Note that the \XGCWA-answers to $q$ on $M$ and $S$ are empty,
  as intuitively expected.
\end{exa}

However, let $M = \Des$ be a schema mapping,
where $\Sigma$ does \emph{not} entirely consist of right-monotonic
$\LInf$-st-tgds,
and let $S$ be a source instance for $M$.
Then the set of the \XGCWA-solutions for $S$ under $M$
may suppress information that should intuitively be taken into account
when answering queries:

\begin{exa}
  \label{ex:XGCWA-answers/motivation}
  Consider the schema mapping $M = (\set{P},\set{E,F},\set{\theta_1,\theta_2})$,
  where
  \begin{align*}
    \theta_1 & \,\isdef\,
      \forall x\, \bigl(P(x) \limplies \exists z \exists z'\, E(z,z')\bigr), \\
    \theta_2 & \,\isdef\,
      \forall x \forall y\, \bigl(E(x,y) \land E(x',y) \limplies F(x,x')\bigr),
  \end{align*}
  and let $S$ be the source instance for $M$ with $P^S = \set{a}$.
  Furthermore, let $c_1,c_2,c$ be constants with $c_1 \neq c_2$.
  Then the target instance
  \[
    T \,\isdef\,
    \set{E(c_1,c),E(c_2,c)} \union \set{F(c_i,c_j) \mid 1 \leq i,j \leq 2}
  \]
  for $M$ is a solution for $S$ under $M$.
  However, it is not a $\XGCWA$-solution for $S$ under $M$,
  since every ground minimal solution for $S$ under $M$
  has the form $\set{E(d,d'),F(d,d)}$ for $d,d' \in \Const$.

  Nevertheless,
  it seems natural to take into account $T$ when answering queries.
  Intuitively,
  under an inclusive interpretation of the existential quantifiers
  in $\theta_1$,
  the st-tgd $\theta_1$ and the atom $P(a)$ in $S$
  tell us that it is possible that both $E(c_1,c)$ and $E(c_2,c)$ hold.
  In combination with $\theta_2$,
  this tells us that it is possible that a solution
  contains $E(c_1,c)$, $E(c_2,c)$ and $F(c_i,c_j)$ for $i,j \in \set{1,2}$.
  Therefore, $T$ should be a possible solution.

  We now extend the set of \XGCWA-solutions
  so that solutions like $T$ are included as well.
  We do this using the following closure operation.
  Let $\CT$ be a nonempty finite set of ground minimal solutions
  for $S$ under $M$,
  say $\CT$ consists of instances $T_i = \set{E(d_i,e_i),F(d_i,d_i)}$
  for $i = 1,\dotsc,n$.
  $\CT$ represents the information that $\bigland_{i=1}^n E(d_i,e_i)$ holds,
  and $S$ and $\theta_1$ intuitively tell us that this is possible.
  In general, $T_0 \isdef \bigunion \CT$ is not a solution for $S$ under $M$,
  since in general it does not satisfy $\theta_2$
  (it does if the constants $e_1,\dotsc,e_n$ are distinct).
  However, we can extend $T_0$ to a solution $T'_0$
  by adding atoms to $T_0$ so that $\theta_2$ is satisfied.
  We pick the minimal set of such atoms,
  namely $\set{F(d_i,d_j) \mid 1 \leq i,j \leq n,\, e_i = e_j}$,
  since we do not want to add any atoms that are not needed
  to satisfy $\theta_2$.
  Note that $T'_0$ is minimal among all ground solutions $T'$ for $S$ under $M$
  with $T' \supseteq T_0$.
  We add $T'_0$ to the set of ``valid solutions''.

  The set of solutions that results from applying the closure operator
  contains all solutions $T$ for $S$ under $M$ of the form
  \[
    T'_0 \,=\,
    \set{E(d_i,e_i) \mid 1 \leq i \leq n}
    \union
    \set{F(d_i,d_j) \mid 1 \leq i,j \leq n,\, e_i = e_j},
  \]
  where $n \geq 1$ and $d_1,\dotsc,d_n,e_1,\dotsc,e_n$ are arbitrary constants.
  Intuitively, this set precisely captures what is expressed by $M$ and $S$.
\end{exa}

In general, we iterate the (appropriately generalized) closure operator
until a fixed point is reached.
We start with the set
\[
  \CT_{M,S}^0 \,\isdef\, \Set{T \mid
    \text{$T$ is a ground minimal solution for $S$ under $M$}}.
\]
For each set $\CI$ of instances, let
\[
  \langle \CI\rangle \,\isdef\,
  \Set{\, \bigunion \CI' \mid
    \text{$\CI'$ is a nonempty finite subset of $\CI$}},
\]
where $\bigunion \CI'$ denotes the union of all instances in $\CI'$.
For every $i \geq 0$, let
\begin{align*}
  \CT_{M,S}^{i+1} \,\isdef\, \CT_{M,S}^i \,\union\,
  \bigl\{
    T_0' \mid\
    & \text{$T_0' \notin \langle \CT_{M,S}^i\rangle$,
      and there is a $T_0 \in \langle \CT_{M,S}^i\rangle$
      such that $T_0'$ is minimal} \\
    & \text{among all ground solutions $T'$ for $S$ under $M$
      with $T_0 \subseteq T'$}
  \bigr\}.
\end{align*}
Intuitively, each instance $T_0' \in \CT_{M,S}^{i+1} \setminus \CT_{M,S}^i$
is a ``minimal consequence'' of some ``fact''
$T_0 \in \langle \CT_{M,S}^i\rangle$ mentioned by $M$ and $S$.
In Example~\ref{ex:XGCWA-answers/motivation},
the instance $T$ belongs to $\CT_{M,S}^1 \setminus \CT_{M,S}^0$.

Note that if $\Sigma$ contains only st-tgds,
or more generally, right-monotonic $\LInf$-st-tgds,
we have $\CT_{M,S}^0 = \CT_{M,S}^i$ for all $i \geq 0$,
and $\langle \CT_{M,S}^0\rangle$ is precisely the set of all
\XGCWA-solutions for $S$ under $M$.
So, for schema mappings defined by st-tgds or right-monotonic $\LInf$-st-tgds,
we have to take into account only the \XGCWA-solutions as defined earlier.
For more general schema mappings,
we take into account all solutions for $S$ under $M$
that are unions of one or more instances in
\[
  \CT_{M,S}^* \,\isdef\, \bigunion_{i \geq 0} \CT_{M,S}^i.
\]

\begin{defi}[\XGCWA-solution, \XGCWA-answers]
  \label{def:XGCWA-solutions}%
  Let $M = \Des$ be a schema mapping, let $S$ be a source instance for $M$,
  and let $q$ be a query over $\tau$.
  \begin{enumerate}[(1)]
  \item
    A \emph{\XGCWA-solution} for $S$ under $M$
    is a ground solution $T$ for $S$ under $M$
    that is the union of one or more instances in $\CT^*_{M,S}$.
  \item
    We call $\certXGCWA{M}{S}{q} \isdef \cert(q,\CI)$,
    where $\CI$ is the set of the \XGCWA-solutions for $S$ under $M$,
    the \emph{\XGCWA-answers to $q$ on $M$ and $S$}.
  \end{enumerate}
\end{defi}

\noindent As before,
immediately from the definitions,
we obtain that the \XGCWA-answers are invariant
under logically equivalent schema mappings:

\begin{prop}
  If $M_1 = (\sigma,\tau,\Sigma_1)$ and $M_2 = (\sigma,\tau,\Sigma_2)$
  are logically equivalent schema mappings,
  $S$ is a source instance for $M_1$ and $M_2$, respectively,
  and $q$ is a query over $\tau$,
  then $\certXGCWA{M_1}{S}{q} = \certXGCWA{M_2}{S}{q}$.
\end{prop}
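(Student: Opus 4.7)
The plan is to reduce the proposition to the single observation that the set of ground solutions for $S$ is invariant under replacing $\Sigma_1$ by a logically equivalent $\Sigma_2$, and then to show by a routine induction that every construction feeding into Definition~\ref{def:XGCWA-solutions} depends only on this set of ground solutions, never on the syntactic presentation of the constraints.

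First, I would observe that because $\Sigma_1$ and $\Sigma_2$ are logically equivalent, the same instances over $\sigma \union \tau$ satisfy them. In particular, for any ground target instance $T$, we have $S \union T \models \Sigma_1$ iff $S \union T \models \Sigma_2$, so the collection of ground solutions for $S$ under $M_1$ equals the collection of ground solutions for $S$ under $M_2$. As a consequence, the collection of ground \emph{minimal} solutions also coincides, giving $\CT^0_{M_1,S} = \CT^0_{M_2,S}$.

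Next, I would argue by induction on $i \geq 0$ that $\CT^i_{M_1,S} = \CT^i_{M_2,S}$. The base case is handled above. For the inductive step, the operator $\langle \cdot \rangle$ is purely set-theoretic, so $\langle \CT^i_{M_1,S}\rangle = \langle \CT^i_{M_2,S}\rangle$ by the induction hypothesis. Moreover, the instances added in passing from $\CT^i$ to $\CT^{i+1}$ are defined solely in terms of (a) membership in $\langle \CT^i\rangle$ and (b) minimality among ground solutions for $S$ under $M$ extending a given $T_0$. Both ingredients refer only to the set of ground solutions, which we have already seen is the same for $M_1$ and $M_2$. Hence $\CT^{i+1}_{M_1,S} = \CT^{i+1}_{M_2,S}$, and taking unions over $i$ yields $\CT^*_{M_1,S} = \CT^*_{M_2,S}$.

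Finally, since a \XGCWA-solution for $S$ under $M$ is by definition a ground solution that is a union of one or more elements of $\CT^*_{M,S}$, and since both the set $\CT^*_{M,S}$ and the set of ground solutions agree for $M_1$ and $M_2$, the sets of \XGCWA-solutions coincide. Therefore
\[
  \certXGCWA{M_1}{S}{q} \,=\, \cert(q,\CI) \,=\, \certXGCWA{M_2}{S}{q},
\]
where $\CI$ is this common set of \XGCWA-solutions. There is no real obstacle here; the statement is essentially a bookkeeping check that every clause of Definition~\ref{def:XGCWA-solutions} is phrased semantically rather than syntactically, so the only thing to be careful about is making sure the inductive step for $\CT^{i+1}$ does not implicitly invoke anything beyond the set of ground solutions for $S$.
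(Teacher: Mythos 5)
Your proof is correct and is exactly the elaboration of the paper's (un-elaborated) argument: the paper states that invariance follows ``immediately from the definitions,'' and your write-up spells out precisely why, namely that the set of ground solutions, hence $\CT^0_{M,S}$, hence each $\CT^i_{M,S}$ and $\CT^*_{M,S}$, and hence the set of \XGCWA-solutions, depend only on the class of models of $\Sigma$ and not on its syntactic presentation. No gaps.
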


Furthermore, it is easy to prove:

\begin{prop}
  \label{prop:XGCWA-solutions}
  Let $M = \Des$ be a schema mapping,
  where $\Sigma$ consists of st-tgds and egds,
  and let $S$ be a source instance for $M$.
  Then a target instance $T$ for $M$ is a \XGCWA-solution for $S$ under $M$
  if and only if
  $T$ is the union of one or more ground minimal solutions for $S$ under $M$,
  and $T$ satisfies all egds in $\Sigma$.
\end{prop}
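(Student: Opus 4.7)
The plan is to prove both directions of the equivalence, and along the way to establish that in the presence of only st-tgds and egds the iteration defining $\CT^*_{M,S}$ already stabilizes at step zero, i.e.\ $\CT^*_{M,S} = \CT^0_{M,S}$.

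For the ``if'' direction, I would take $T = \bigunion_j T_j$ with each $T_j$ a ground minimal solution, and assume $T$ satisfies every egd in $\Sigma$. Since st-tgds are monotonic in the target (their consequent is an existential conjunctive formula over $\tau$), and each $S \union T_j$ satisfies $\Sigma$, the union $S \union T$ still satisfies every st-tgd; together with the egd hypothesis this makes $T$ a ground solution. Because each $T_j \in \CT^0_{M,S} \subseteq \CT^*_{M,S}$, $T$ is a union of instances in $\CT^*_{M,S}$ and hence a \XGCWA-solution by Definition~\ref{def:XGCWA-solutions}.

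For the ``only if'' direction, a \XGCWA-solution is by definition a ground solution (so it automatically satisfies all egds) that is a union of elements of $\CT^*_{M,S}$, so it suffices to establish the claim $\CT^*_{M,S} = \CT^0_{M,S}$. I would prove by induction on $i$ that $\CT^{i+1}_{M,S} = \CT^i_{M,S}$. Take any $T_0 \in \langle \CT^i_{M,S}\rangle = \langle \CT^0_{M,S}\rangle$; then $T_0$ is a finite union of ground minimal solutions and therefore satisfies every st-tgd by the same monotonicity argument. Two cases arise. If $T_0$ also satisfies every egd, then $T_0$ is itself a ground solution, so the unique minimal ground solution $T'$ with $T_0 \subseteq T'$ is $T_0$ itself; but $T_0 \in \langle \CT^i_{M,S}\rangle$, so the definition excludes it and nothing new is added from $T_0$. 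If $T_0$ violates some egd $\forall \tup{x}\,(\phi(\tup{x}) \limplies x_i = x_j)$, witnessed by a tuple of constants $\tup{a}$ with $T_0 \models \phi(\tup{a})$ and $a_i \neq a_j$, then every ground $T' \supseteq T_0$ still satisfies $\phi(\tup{a})$ (since $\phi$ is a conjunction of atoms, hence monotonic) while the inequality $a_i \neq a_j$ persists, so no ground extension of $T_0$ is a solution and again nothing is added.

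The main obstacle is this second case: one must be explicit that an egd violation on ground constants is irreparable by adding atoms, because once $\phi(\tup{a})$ has matched with $a_i \neq a_j$, monotonicity of $\phi$ preserves the violation in every ground superset. This is precisely what distinguishes the st-tgd+egd setting from the broader setting motivating Example~\ref{ex:XGCWA-answers/motivation}, where a target tgd could genuinely populate $\CT^{i+1}_{M,S} \setminus \CT^i_{M,S}$ by forcing the addition of new atoms to satisfy a constraint.
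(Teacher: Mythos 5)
The paper states this proposition without a proof (it is introduced with ``it is easy to prove''), so there is no written argument to compare against; your proof is correct and is clearly the intended one. The only content genuinely beyond the paper's earlier remark that $\CT^0_{M,S}=\CT^i_{M,S}$ for all $i$ when $\Sigma$ consists of st-tgds alone is the handling of egds, and you isolate exactly the right point: a ground egd violation, witnessed by $T_0\models\phi(\tup{a})$ with $a_i\neq a_j$, persists in every ground superset of $T_0$ because $\phi$ is a conjunction of atoms, so such a $T_0$ contributes no minimal ground solution extending it and the closure iteration still stabilizes at $\CT^0_{M,S}$. Both directions are complete, and the ``if'' direction's appeal to monotonicity of the st-tgd consequents is sound under the active-domain semantics since the premises of st-tgds are evaluated over $\sigma$ alone.
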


To conclude this section,
we show that, with respect to schema mappings defined by st-tgds and egds,
\XGCWA-solutions can be defined in a way similar to the definition
of GCWA-solutions.
This characterization also shows that,
with respect to schema mappings defined by st-tgds and egds,
\XGCWA-solutions are special GCWA-solutions.

\begin{defi}
  \label{def:D*}
  For every schema mapping $M = \Des$ and every source instance $S$ for $M$,
  define the following set of $\LInf$ sentences over $\sigma \union \tau$:
  \begin{align*}
    \DDB_{M,S}^* \isdef
    \{
      R(\tup{t}) \limplies \phi
      \mid\
      & \text{$R \in \sigma \union \tau$,
        $\tup{t} \in \Const^{\arity(R)}$, and $\phi$ is a monotonic $\LInf$
        sentence} \\
      & \text{over $\sigma \union \tau$ that is satisfied in every minimal
        model $I$ of $\DDB_{M,S}$} \\
      & \text{with $\tup{t} \in R^I$}
    \}.
  \end{align*}
\end{defi}

\begin{prop}
  \label{prop:XGCWA-characterization}
  Let $M = \Des$ be a schema mapping,
  where $\Sigma$ is a set of st-tgds and egds,
  and let $S$ be a source instance for $M$.
  Then for all ground target instances $T$ for $M$,
  the following statements are equivalent:
  \begin{enumerate}[\em(1)]
  \item\label{prop:XGCWA-characterization/XGCWA}
    $T$ is a \XGCWA-solution for $S$ under $M$.
  \item\label{prop:XGCWA-characterization/model}
    $S \union T$ is a model of $\DDB_{M,S} \union \DDB_{M,S}^*$.
  \end{enumerate}
\end{prop}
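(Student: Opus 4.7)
The plan is to establish the equivalence by first identifying the minimal models of $\DDB_{M,S}$ and then using Proposition~\ref{prop:XGCWA-solutions} to translate both conditions into statements about unions of ground minimal solutions. The key preliminary observation is that the positive and negative ground $\sigma$-atoms in $\DDB_{M,S}$ pin down the $\sigma$-part of every model to be exactly~$S$, so the minimal models of $\DDB_{M,S}$ are precisely the instances of the form $S \union T_0$, where $T_0$ ranges over ground minimal solutions for $S$ under $M$.

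For the direction (1)~$\Rightarrow$~(2): By Proposition~\ref{prop:XGCWA-solutions}, write $T = \bigunion_i T_i$, where each $T_i$ is a ground minimal solution and $T$ satisfies every egd in $\Sigma$. Since st-tgds are monotonic and each $S \union T_i$ satisfies $\Sigma$, $S \union T$ satisfies all st-tgds as well; combined with the egd condition, this gives $S \union T \models \DDB_{M,S}$. To verify $S \union T \models \DDB_{M,S}^*$, fix a sentence $R(\tup{t}) \limplies \phi$ in $\DDB_{M,S}^*$ with $S \union T \models R(\tup{t})$. Either $R \in \sigma$ (so $R(\tup{t}) \in S$ and any $S \union T_i$ is a minimal model of $\DDB_{M,S}$ with $\tup{t} \in R^{S \union T_i}$), or $R \in \tau$ and some $T_i$ contains $R(\tup{t})$. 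In both cases there is a minimal model $I \subseteq S \union T$ of $\DDB_{M,S}$ with $\tup{t} \in R^I$, so $I \models \phi$ by definition of $\DDB_{M,S}^*$, and monotonicity of $\phi$ propagates this to $S \union T \models \phi$.

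For the direction (2)~$\Rightarrow$~(1): From $S \union T \models \DDB_{M,S}$ it follows that $T$ is a ground solution for $S$ under $M$, and in particular satisfies every egd in $\Sigma$. To show that $T$ is the union of ground minimal solutions, fix an arbitrary atom $R(\tup{t}) \in T$ and consider the $\LInf$ sentence
\[
  \phi_{R(\tup{t})} \,\isdef\,
  \biglor_{I} \,\bigland_{R'(\tup{s}) \in I} R'(\tup{s}),
\]
where $I$ ranges over all minimal models of $\DDB_{M,S}$ with $\tup{t} \in R^I$. This formula is monotonic, since it is built from positive atoms using $\biglor$ and $\bigland$, and it holds in every minimal model $I$ of $\DDB_{M,S}$ with $\tup{t} \in R^I$ because $I$ itself satisfies the disjunct indexed by~$I$. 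Hence $R(\tup{t}) \limplies \phi_{R(\tup{t})}$ lies in $\DDB_{M,S}^*$, so from $S \union T \models R(\tup{t})$ we obtain $S \union T \models \phi_{R(\tup{t})}$. Some disjunct is therefore satisfied, yielding a minimal model $S \union T_0 \subseteq S \union T$ of $\DDB_{M,S}$ with $\tup{t} \in R^{S \union T_0}$, and thus a ground minimal solution $T_0 \subseteq T$ that contains $R(\tup{t})$. Taking one such $T_0$ for every atom in $T$ exhibits $T$ as a union of ground minimal solutions, and Proposition~\ref{prop:XGCWA-solutions} concludes that $T$ is a $\XGCWA$-solution.

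The main obstacle is the (2)~$\Rightarrow$~(1) direction: one must craft, for each atom $R(\tup{t}) \in T$, a \emph{monotonic} $\LInf$ sentence that, via a single implication in $\DDB_{M,S}^*$, forces the existence of a minimal model of $\DDB_{M,S}$ lying below $S \union T$ and containing the prescribed atom. This is the step in which the infinitary disjunction permitted in the definition of $\DDB_{M,S}^*$ is essential; a finitary sentence would in general not suffice to pick out a single minimal model among the many possible ones.
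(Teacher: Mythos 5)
Your proof is correct and follows essentially the same route as the paper's: both directions rest on the observation that the minimal models of $\DDB_{M,S}$ are exactly the instances $S \union T_0$ for ground minimal solutions $T_0$, and on the key sentence $R(\tup{t}) \limplies \biglor_{I_0} \bigland_{R'(\tup{t}') \in I_0} R'(\tup{t}')$ in $\DDB_{M,S}^*$. The only cosmetic difference is that you run the (2)~$\Rightarrow$~(1) direction directly, atom by atom, whereas the paper argues by contradiction on a single atom missing from $\bigunion \CT_0$.
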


\begin{proof}
  \fromto{\ref{prop:XGCWA-characterization/XGCWA}}
    {\ref{prop:XGCWA-characterization/model}}
  Suppose that $T$ is a \XGCWA-solution for $S$ under $M$.
  By Proposition~\ref{prop:XGCWA-solutions},
  $T$ is a ground solution for $S$ under $M$,
  and there is a set $\CT_0$ of minimal solutions for $S$ under $M$
  such that
  \(
    T = \bigunion \CT_0.
  \)
  We have to show that
  \(
    I \isdef S \union T
  \)
  satisfies $\DDB_{M,S} \union \DDB_{M,S}^*$.

  Since $T$ is a solution for $S$ under $M$,
  we have $I \models \DDB_{M,S}$.
  Thus, it remains to show that $I \models \DDB_{M,S}^*$.

  To this end, consider an arbitrary sentence
  \(
    \psi \isdef R(\tup{t}) \limplies \phi
  \)
  in $\DDB_{M,S}^*$, and assume that
  \(
    I \models R(\tup{t}).
  \)
  Since $I = S \union T$ and $T = \bigunion \CT_0$,
  there is some $T_0 \in \CT_0$ with $\tup{t} \in R^{S \union T_0}$.
  Note that $I_0 \isdef S \union T_0$ is a minimal model of $\DDB_{M,S}$.
  By Definition~\ref{def:D*}, we thus have $I_0 \models \phi$.
  Since $I_0 \subseteq I$ and $\phi$ is monotonic,
  it follows that $I \models \phi$.
  Consequently, $I$ satisfies $\psi$.

  \fromto{\ref{prop:XGCWA-characterization/model}}
    {\ref{prop:XGCWA-characterization/XGCWA}}
  Suppose that
  \(
    I \isdef S \union T
  \)
  is a model of $\DDB_{M,S} \union \DDB_{M,S}^*$.
  Since models are ground instances by definition,
  it follows that $T$ is a ground solution for $S$ under $M$.
  To show that $T$ is a \XGCWA-solution for $S$ under $M$,
  it remains to construct, by Proposition~\ref{prop:XGCWA-solutions},
  a set $\CT_0$ of minimal solutions
  for $S$ under $M$ such that $T = \bigunion \CT_0$.

  Let $\CT_0$ be the set of all minimal solutions $T_0$
  for $S$ under $M$ with $T_0 \subseteq T$.
  We claim that $T = \bigunion \CT_0$.
  By construction, we have $\bigunion \CT_0 \subseteq T$.
  Thus it remains to show that it is not the case that
  $\bigunion \CT_0 \subsetneq T$.

  Suppose, to the contrary, that $\bigunion \CT_0 \subsetneq T$.
  Then there are $R \in \tau$ and $\tup{t} \in \Const^{\arity(R)}$ such that
  \begin{equation}
    \label{prop:XGCWA-characterization/prop1}
    \tup{t} \in R^T
    \quad\text{and}\quad
    \tup{t} \notin R^{T_0}\ \text{for all $T_0 \in \CT_0$}.
  \end{equation}
  On the other hand, there is at least one minimal model $I_0$
  of $\DDB_{M,S}$ with $\tup{t} \in R^{I_0}$.
  Otherwise, $R(\tup{t}) \limplies \biglor \emptyset$,
  which is equivalent to $\lnot R(\tup{t})$,
  would be in $\DDB_{M,S}^*$, so that $\tup{t} \notin R^I \supseteq R^T$
  would contradict \eqref{prop:XGCWA-characterization/prop1}.
  Let
  \begin{equation*}
    \CI_0 \,\isdef\,
    \set{
      I_0
      \mid
      \text{$I_0$ is a minimal model of $\DDB_{M,S}$
        with $\tup{t} \in R^{I_0}$}
    }.
  \end{equation*}
  Then,
  \begin{align*}
    \psi \isdef R(\tup{t}) \limplies \phi
    \quad \text{with} \quad
    \phi \isdef \biglor_{I_0 \in \CI_0} \bigland_{R'(\tup{t}') \in I_0} R'(\tup{t}')
  \end{align*}
  is satisfied in \emph{every} minimal model of $\DDB_{M,S}$.
  Since $\phi$ is monotonic, we thus have $\psi \in \DDB_{M,S}^*$.
  Furthermore, since $I \models \DDB_{M,S}^*$ and $I \models R(\tup{t})$,
  it follows that $I \models \phi$.
  In particular, there must be some $I_0 \in \CI_0$ such that
  $I \models \bigland_{R'(\tup{t}') \in I_0} R'(\tup{t}')$,
  and thus, $I_0 \subseteq I$.
  Note that $I_0 = S \union T_0$ for some $T_0 \in \CT_0$.
  Together with $\tup{t} \in R^{I_0}$ and $R \in \tau$,
  this implies that $\tup{t} \in R^{T_0}$.
  However, this contradicts \eqref{prop:XGCWA-characterization/prop1}.
  Consequently, $\bigunion \CT_0 = T$.
\end{proof}

Moreover, the following result translates \cite[Theorem~5]{Minker:CADE82}
from GCWA-solutions to \XGCWA-solutions,
and shows that for a given schema mapping $M$ and a source instance $S$ for $M$,
the set $\DDB_{M,S} \union \DDB_{M,S}^*$
is maximally consistent in the sense that the addition of any sentence $\psi$
of the form $R(\tup{t}) \limplies \phi$,
where $\phi$ is a monotonic $\LInf$ sentence and
$\DDB_{M,S} \union \DDB_{M,S}^* \not\models \psi$,
leads to a set of formulas that is inconsistent with
$\DDB_{M,S} \union \DDB_{M,S}^*$.

\begin{prop}
  \label{prop:maximal-consistency}
  Let $M = \Des$ be a schema mapping,
  let $S$ be a nonempty source instance for $M$,
  let $\DDB \isdef \DDB_{M,S}$ and let $\DDB' \isdef \DDB \union \DDB^*$.
  \begin{enumerate}[\em(1)]
  \item\label{prop:maximal-consistency/consistency}
    For all monotonic $\LInf$-sentences $\phi$ over $\sigma \union \tau$,
    we have $\DDB \models \phi$ if and only if $\DDB' \models \phi$.
  \item\label{prop:maximal-consistency/maximality}
    For all $\psi \isdef R(\tup{t}) \limplies \phi$,
    where $R(\tup{t})$ is a ground atom over $\sigma \union \tau$,
    $\phi$ is a $\LInf$ sentence over $\sigma \union \tau$,
    and $\DDB' \not\models \psi$:
    \begin{enumerate}[\em(a)]
    \item
      $\DDB' \union \set{\psi}$ has no model, or
    \item
      there is a monotonic $\LInf$-sentence $\chi$ over $\sigma \union \tau$
      such that $\DDB' \union \set{\psi} \models \chi$,
      but $\DDB' \not\models \chi$.
    \end{enumerate}
  \end{enumerate}
\end{prop}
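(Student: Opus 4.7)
For part~(1), the direction $\DDB \models \phi \Longrightarrow \DDB' \models \phi$ is immediate from $\DDB \subseteq \DDB'$. For the converse, suppose $\DDB' \models \phi$ for some monotonic $\phi$, and fix any model $I$ of $\DDB$. Because instances are finite, $I$ contains a minimal submodel $I_0 \subseteq I$ that still satisfies $\DDB$; such an $I_0$ is a (global) minimal model of $\DDB$, since any $J \subsetneq I_0$ modelling $\DDB$ would contradict the minimality of $I_0$ inside $I$. The key step is to verify that $I_0 \models \DDB^*$. For any sentence $R(\tup{t}) \limplies \phi'$ in $\DDB^*$, either $I_0 \not\models R(\tup{t})$ and the implication is vacuous, or $\tup{t} \in R^{I_0}$, in which case $I_0$ is itself a minimal model of $\DDB$ with $\tup{t} \in R^{I_0}$ and so $I_0 \models \phi'$ directly by Definition~\ref{def:D*}. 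Hence $I_0 \models \DDB'$, so $I_0 \models \phi$, and monotonicity of $\phi$ together with $I_0 \subseteq I$ lifts this to $I \models \phi$.

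For part~(2), assume $\DDB' \union \set{\psi}$ has at least one model (else case~(a) holds). The plan is to produce a monotonic $\LInf$-sentence $\chi$ witnessing case~(b) by taking the disjunction of the atomic diagrams of the minimal models of $\DDB' \union \set{\psi}$, namely
\[
  \chi \,\isdef\, \biglor_{J \in \mathcal{J}}\, \bigland_{A \in J}\, A,
\]
where $\mathcal{J}$ denotes the set of minimal models of $\DDB' \union \set{\psi}$. Since each disjunct is a conjunction of ground atoms, $\chi$ is monotonic. The entailment $\DDB' \union \set{\psi} \models \chi$ follows because every model of $\DDB' \union \set{\psi}$ contains (by finiteness) a minimal submodel from $\mathcal{J}$, and therefore satisfies the corresponding conjunction.

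The hard part is proving $\DDB' \not\models \chi$, i.e.\ exhibiting a model $I^*$ of $\DDB'$ into which no $J \in \mathcal{J}$ embeds as a subinstance. Starting from the hypothesis $\DDB' \not\models \psi$, one obtains a model of $\DDB'$ satisfying $R(\tup{t}) \land \lnot\phi$; the plan is to refine this choice to an $I^*$ that is \emph{minimal} with respect to inclusion among all such witnesses, and then leverage the specific implication $R(\tup{t}) \limplies \phi^{R,\tup{t}}$ in $\DDB^*$ (with $\phi^{R,\tup{t}}$ the monotonic disjunction of atomic diagrams of the minimal $\DDB$-models containing $R(\tup{t})$, as introduced in the proof of Proposition~\ref{prop:XGCWA-characterization}) to pin down the structure of $I^*$ as a union of minimal models of $\DDB$ containing $R(\tup{t})$.

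The obstacle I expect to be genuinely delicate is the case analysis on whether a candidate $J \in \mathcal{J}$ with $J \subseteq I^*$ contains $R(\tup{t})$. If $R(\tup{t}) \in J$, then $J \models \psi$ forces $J \models \phi$, and one would like to derive from this a proper sub-witness of $I^*$ violating its minimality. If $R(\tup{t}) \notin J$, one has to work harder because $\phi$ is not assumed monotonic, so $\lnot\phi$ at $I^*$ does not transfer to subinstances; here the plan is to combine the $\DDB^*$-implications with the minimality of $I^*$ to show that every proper subinstance of $I^*$ is either excluded by some sentence $R'(\tup{t}') \limplies \phi'$ in $\DDB^*$ (because it contains $R(\tup{t})$ but not enough atoms to satisfy $\phi^{R,\tup{t}}$), or it still falsifies $\phi$, again contradicting minimality of $I^*$.
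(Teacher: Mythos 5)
Your part~(1) is correct and in fact supplies the detail the paper omits (it just declares the statement obvious): every minimal model of $\DDB$ satisfies $\DDB^*$ by Definition~\ref{def:D*}, so a minimal submodel of any model of $\DDB$ is a model of $\DDB'$, and monotonicity of $\phi$ lifts the conclusion back up. In part~(2) you also pick exactly the same sentence $\chi$ as the paper (the disjunction of the atomic diagrams of the minimal models of $\DDB' \union \set{\psi}$) and your proof of $\DDB' \union \set{\psi} \models \chi$ is the paper's.

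The genuine gap is in showing $\DDB' \not\models \chi$, which you yourself leave as a sketch. Your plan is to take $I^*$ inclusion-minimal among models of $\DDB'$ satisfying $R(\tup{t}) \land \lnot\phi$, but such an $I^*$ need not refute $\chi$: it can properly contain a minimal model $J_0$ of $\DDB'$ with $R(\tup{t}) \notin R^{J_0}$ (or with $J_0 \models \phi$), and any such $J_0$ satisfies $\psi$, hence lies in (or contains an element of) $\mathcal{J}$, forcing $I^* \models \chi$. Your case analysis does not close this: in the case $R(\tup{t}) \in J$ you hope to extract a ``proper sub-witness'' from $J \models \phi$, but a subinstance satisfying $\phi$ is precisely \emph{not} a witness for $\lnot\psi$, and in the case $R(\tup{t}) \notin J$ nothing in $\DDB^*$ excludes $J$. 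The missing idea is to minimize over the right class: the paper takes $I_0$ to be a \emph{minimal model of $\DDB$} with $I_0 \not\models \psi$ (its existence is extracted from $\psi \notin \DDB^*$). Such an $I_0$ is automatically a minimal model of $\DDB'$, so if $\DDB' \models \chi$ then some $I_0' \in \mathcal{J}$ satisfies $I_0' \subseteq I_0$, and minimality forces $I_0' = I_0$ --- contradicting $I_0' \models \psi$ and $I_0 \not\models \psi$. With that choice of witness the argument is three lines; with yours it does not go through.
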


\begin{proof}
  Statement \ref{prop:maximal-consistency/consistency} is obvious,
  so in the following we prove \ref{prop:maximal-consistency/maximality}.
  Let $\psi$ be given.
  If $\DDB' \union \set{\psi}$ has no model, then we are done.
  So assume that $\DDB' \union \set{\psi}$ has a model.
  Let $\CI_0$ be the set of all minimal models
  of $\DDB' \union \set{\psi}$, and consider the monotonic $\LInf$ sentence
  \begin{align*}
    \chi \isdef
    \biglor_{I_0 \in \CI_0} \bigland_{R'(\tup{t}') \in I_0} R'(\tup{t}').
  \end{align*}
  Clearly, we have $\DDB' \union \set{\psi} \models \chi$.
  Indeed, if $I$ is a model of $\DDB' \union \set{\psi}$,
  let $I_0 \in \CI_0$ be such that $I_0 \subseteq I$.
  Then, $I_0 \models \bigland_{R'(\tup{t}') \in I_0} R'(\tup{t}')$,
  and therefore, $I_0 \models \chi$.
  Since $\chi$ is monotonic and $I_0 \subseteq I$,
  this leads to $I \models \chi$.

  Furthermore, we have $\DDB' \not\models \chi$.
  For a contradiction suppose that $\DDB' \models \chi$.
  Note that there is a minimal model $I_0$ of $\DDB$
  with $I_0 \not\models \psi$.
  (This follows immediately
   from $\DDB^* \subseteq \DDB'$ and $\DDB' \not\models \psi$,
   which imply $\psi \notin \DDB^*$.)
  Since $I_0$ is a minimal model of $\DDB'$ as well,
  and $\DDB' \models \chi$,
  we have $I_0 \models \chi$.
  Thus, there is some $I_0' \in \CI_0$ such that
  $I_0 \models \bigland_{R'(\tup{t}') \in I_0'} R'(\tup{t}')$.
  In other words, $I_0' \subseteq I_0$, which implies $I_0' = I_0$,
  because $I_0$ is a minimal model of $\DDB'$,
  and $I_0' \models \DDB'$.
  But this is impossible, since $I_0' \models \psi$ and
  $I_0 \not\models \psi$.
  Consequently, $\DDB' \not\models \chi$.
\end{proof}

\section{Data Complexity of Query Evaluation
  under the \texorpdfstring{\XGCWA}{GCWA*}-Semantics}
\label{sec:complexity}

In this section, we study the data complexity of computing \XGCWA-answers,
where data complexity means that the schema mapping and the query
to be evaluated are fixed (i.e., not part of the input).
We concentrate on schema mappings defined by st-tgds only.

Since in data exchange,
the goal is to answer queries based on some materialized solution,
given a schema mapping $M = \Des$ and a query language $L$,
we are particularly interested in whether there are algorithms
$\mathbb{A}_1,\mathbb{A}_2$ with the following properties:
\begin{enumerate}[(1)]
\item\label{alg1}
  $\mathbb{A}_1$ takes a source instance $S$ for $M$ as input
  and computes a solution $T$ for $S$ under $M$, and
\item\label{alg2}
  $\mathbb{A}_2$ takes a solution $T$ computed by $\mathbb{A}_1$
  and a query $q \in L$ over $\tau$ as input
  and computes $\certXGCWA{M}{S}{q}$.
\end{enumerate}
In particular, $\mathbb{A}_1$ takes care of the actual data exchange
(dependent on the query language, but independent of any concrete query),
while $\mathbb{A}_2$ answers queries based on some materialized solution.
At best,
both $\mathbb{A}_1$, and $\mathbb{A}_2$ for fixed $q \in L$,
run in polynomial time.

For proving complexity lower bounds, we consider,
for fixed schema mappings $M = \Des$ and queries $q(\tup{x})$ over $\tau$,
the decision problem
\bigskip
\problem{$\Eval(M,q)$}
  {a source instance $S$ for $M$,
   and a tuple $\tup{t} \in \Const^{\length{\tup{x}}}$}
  {Is $\tup{t} \in \certXGCWA{M}{S}{q}$?}
\bigskip
The complexity of this problem can be seen as a lower bound
on the joint complexity of finding a solution $T$ as in step~\ref{alg1} above,
and obtaining $\certXGCWA{M}{S}{q}$ from $T$ as in step~\ref{alg2}.
If, for example, $\Eval(M,q)$ is $\co\NP$-complete,
then finding $T$ is intractable, or computing $\certXGCWA{M}{S}{q}$
from $T$ is intractable.

We first consider the complexity of computing the $\XGCWA$-answers
to monotonic queries and existential queries
in Sections~\ref{sec:complexity/monotonic} and \ref{sec:complexity/existential},
and deal with the present section's main result concerning universal queries
in Section~\ref{sec:complexity/universal}.

\subsection{Monotonic Queries}
\label{sec:complexity/monotonic}

For monotonic queries, all results obtained for the certain answers semantics
(see, e.g.,
 \cite{FKMP:TCS336-1,Madry:IPL94,ABFL:PODS04,Kolaitis:PODS05,Libkin:PODS06,
       DNR:PODS08,ABR:ICDT09,Barcelo:SR38-1})
carry over to the \XGCWA-answers semantics:

\begin{prop}
  \label{prop:certXGCWA-for-monotonic-queries}
  Let $M = \Des$ be a schema mapping, let $S$ be a source instance for $M$,
  and let $q(\tup{x})$ be a monotonic query over $\tau$.
  Then, $\certXGCWA{M}{S}{q} = \certOWA{M}{S}{q}$.
\end{prop}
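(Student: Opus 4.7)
The plan is to prove the two inclusions $\certOWA{M}{S}{q} \subseteq \certXGCWA{M}{S}{q}$ and $\certXGCWA{M}{S}{q} \subseteq \certOWA{M}{S}{q}$ separately. The first, easy inclusion is immediate from Definition~\ref{def:XGCWA-solutions}: every $\XGCWA$-solution for $S$ under $M$ is in particular a solution for $S$ under $M$, so intersecting $q(T)$ over the smaller class of $\XGCWA$-solutions can only give a set at least as large as intersecting over all solutions.

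The nontrivial inclusion is where all the work lies. Fix $\tup{t} \in \certXGCWA{M}{S}{q}$; this must be a tuple of constants, since $\XGCWA$-solutions are ground and the answers on a ground instance are themselves tuples of constants. Given an arbitrary solution $T$ for $S$ under $M$, the goal is to show $\tup{t} \in q(T)$, and the strategy is to reduce $T$ to a $\XGCWA$-solution contained in it after renaming nulls, and then invoke monotonicity. Let $C$ be a finite set of constants for which both $\Sigma$ and $q$ are $C$-generic, and pick a valuation $v \in \legal(T)$ that sends the finitely many nulls of $T$ to pairwise distinct fresh constants lying outside $C \union \const(S) \union \const(T)$. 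Then $v$ extends to a permutation $\pi$ of $\Dom$ swapping $\nulls(T)$ with those fresh constants and fixing everything else; in particular, $\pi$ fixes $C$ pointwise. By $C$-genericity of $\Sigma$, $S \union v(T) = \pi(S \union T) \models \Sigma$, so $v(T)$ is a ground solution for $S$ under $M$. Removing atoms from $v(T)$ one at a time as long as the remainder is still a solution yields a ground minimal solution $T_0 \subseteq v(T)$, which lies in $\CT^0_{M,S} \subseteq \CT^*_{M,S}$ and is therefore a $\XGCWA$-solution by Definition~\ref{def:XGCWA-solutions}. Hence $\tup{t} \in q(T_0)$; monotonicity of $q$ together with $T_0 \subseteq v(T)$ upgrades this to $\tup{t} \in q(v(T))$; and finally $C$-genericity of $q$, combined with the fact that $\pi$ fixes every component of the constant tuple $\tup{t}$, transfers the answer back from $v(T)$ to $T$.

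The main obstacle is the asymmetry between arbitrary solutions, which may contain nulls, and $\XGCWA$-solutions, which are required to be ground; the null-renaming step above is exactly what bridges this gap, and the $C$-genericity assumption already built into schema mappings and queries in Section~\ref{sec:basics} is essential for it to go through. It is worth remarking that monotonicity of $q$ (rather than the stronger assumption of preservation under homomorphisms) is precisely what is needed to pass from $T_0$ to $v(T)$, so no stronger hypothesis on $q$ enters the argument.
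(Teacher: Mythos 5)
Your proof is correct and takes essentially the same route as the paper's: the easy inclusion follows from containment of the solution sets, and the converse is obtained by passing to a ground minimal solution $T_0 \subseteq T$ and invoking monotonicity of $q$. The only difference is that you spell out the reduction from arbitrary solutions to ground ones via an explicit genericity/permutation argument, a step the paper dispatches in one line with the remark that nulls can be seen as special constants.
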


\begin{proof}
  Since every \XGCWA-solution for $S$ under $M$
  is a solution for $S$ under $M$,
  we have $\certOWA{M}{S}{q} \subseteq \certXGCWA{M}{S}{q}$.
  To show $\certXGCWA{M}{S}{q} \subseteq \certOWA{M}{S}{q}$,
  consider a tuple $\tup{t} \in \certXGCWA{M}{S}{q}$.
  We have to show that $\tup{t} \in q(T)$ for all solutions $T$ for $S$
  under $M$.
  To this end, it suffices to show that $\tup{t} \in q(T)$
  for all \emph{ground} solutions $T$ for $S$ under $M$,
  since nulls can be seen as special constants.
  Let $T$ be a ground solution for $S$ under $M$,
  and let $T_0$ be a minimal solution for $S$ under $M$ with $T_0 \subseteq T$.
  By Definition~\ref{def:XGCWA-solutions},
  $T_0$ is a \XGCWA-solution for $S$ under $M$,
  and since $\tup{t} \in \certXGCWA{M}{S}{q}$, we have $\tup{t} \in q(T_0)$.
  Since $q$ is monotonic and $T_0 \subseteq T$,
  we conclude $\tup{t} \in q(T)$.
\end{proof}

In particular,
if $M$ is a schema mapping defined by st-tgds,
and $q(\tup{x})$ is a union of conjunctive queries over $M$'s target schema,
then Proposition~\ref{prop:OWA-UCQ} implies
that there is a polynomial time algorithm
that takes a universal solution for a source instance $S$ for $M$ as input
and outputs the \XGCWA-answers to $q(\tup{x})$ on $M$ and $S$.
Note that by Theorem~\ref{thm:core-algorithm},
a universal solution can be computed in polynomial time
(for fixed $M$) from a given source instance for $M$.

\subsection{Existential Queries and Beyond}
\label{sec:complexity/existential}

We now turn to \emph{existential queries},
which are FO queries of the form
$q(\tup{x}) = \exists \tup{y}\, \phi(\tup{x},\tup{y})$,
where $\phi$ is quantifier-free.
A particular class of existential queries
are \emph{conjunctive queries with negation} ($\CQneg$ queries, for short),
which are queries of the form
\(
  q(\tup{x}) = \exists \tup{y}\, (L_1 \land \dotsb \land L_k)
\)
where each $L_i$ is either a relational atom $R(\tup{u})$
or the negation of a relational atom.
A simple reduction from the $\CLIQUE$ problem \cite{GJ79}
shows that $\Eval(M,q)$ can be $\co\NP$-hard for schema mappings $M$
defined by LAV tgds and $\CQneg$ queries with only one negated atom:

\begin{prop}
  \label{prop:CQ-with-neg-coNP-complete}
  There exists a schema mapping $M = \Des$,
  where $\Sigma$ consists of two LAV tgds,
  and a Boolean $\CQneg$ query $q$ over $\tau$ with one negated atomic formula
  such that $\Eval(M,q)$ is $\co\NP$-complete.
\end{prop}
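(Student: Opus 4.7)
\medskip
The plan is to reduce \CLIQUE{} to the complement of $\Eval(M,q)$. Let $\sigma \isdef \set{E,D}$ with $E,D$ binary, let $\tau \isdef \set{F,E',D'}$ with $F,E',D'$ binary, and let $\Sigma$ consist of the two LAV tgds
\begin{align*}
  \theta_1 &\isdef \forall i,j\,\bigl(D(i,j) \limplies
    \exists x,y\,(F(i,x) \land F(j,y) \land D'(i,j))\bigr), \\
  \theta_2 &\isdef \forall u,v\,\bigl(E(u,v) \limplies E'(u,v)\bigr).
\end{align*}
Both are LAV since their bodies consist of a single atom. With $M \isdef (\sigma,\tau,\Sigma)$, I take the Boolean $\CQneg$ query
\[
  q \isdef \exists i,j,x,y\,\bigl(
    D'(i,j) \land F(i,x) \land F(j,y) \land \lnot E'(x,y)
  \bigr),
\]
which contains exactly one negated atomic formula.

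\medskip
Given a graph $G=(V(G),E(G))$ without self-loops and an integer $k \geq 1$, I encode the instance by a source instance $S$ with $E^S \isdef E(G)$ and $D^S \isdef \set{(i,j) : i,j \in \set{1,\dotsc,k},\, i \neq j}$. By Proposition~\ref{prop:XGCWA-solutions}, since $\Sigma$ contains only st-tgds, the \XGCWA-solutions for $S$ under $M$ coincide with the unions of ground minimal solutions for $S$ under $M$. A straightforward analysis of $\theta_1$ and $\theta_2$ shows that each ground minimal solution equals
\[
  T_f \isdef \set{F(i,f(i)) : 1 \leq i \leq k}
    \,\union\, \set{D'(i,j) : (i,j) \in D^S}
    \,\union\, \set{E'(u,v) : (u,v) \in E^S}
\]
for some mapping $f\colon \set{1,\dotsc,k} \to \Const$: each $(i,j) \in D^S$ forces the presence of at least one $F(i,\cdot)$ and one $F(j,\cdot)$ atom, and any additional $F(i,\cdot)$ atom could be removed without violating $\theta_1$. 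Consequently the \XGCWA-solutions are exactly the instances $T$ with $D'^T = D^S$, $E'^T = E^S$, and $F^T = \set{(i,c) : 1 \leq i \leq k,\, c \in C_i}$ for arbitrary nonempty finite $C_1,\dotsc,C_k \subseteq \Const$.

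\medskip
The central claim is that some such \XGCWA-solution $T$ satisfies $T \not\models q$ if and only if $G$ contains a $k$-clique. Note that $T \not\models q$ says exactly that for all distinct $i,j \in \set{1,\dotsc,k}$ and all $x \in C_i$, $y \in C_j$ we have $(x,y) \in E^S$. Given a $k$-clique $\set{v_1,\dotsc,v_k}$, the choice $C_i \isdef \set{v_i}$ produces such a $T$. Conversely, if $T \not\models q$ then the absence of self-loops in $E^S$ forces the $C_i$'s to be pairwise disjoint (otherwise some $c \in C_i \cap C_j$ with $i \neq j$ would require $(c,c) \in E^S$), and any choice $v_i \in C_i$ then yields $k$ distinct, pairwise adjacent vertices in~$G$. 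This gives a polynomial-time reduction from \CLIQUE{} to the complement of $\Eval(M,q)$ and hence $\co\NP$-hardness.

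\medskip
Membership in $\co\NP$ is immediate from the same analysis: whenever some \XGCWA-solution refutes $q$, so does one obtained from singleton $C_i$'s, and such a solution is of size polynomial in $|S|$ and can be verified directly. The main obstacle is that a $\CQneg$ query with only one negated atom cannot directly express the distinctness constraint ``$i \neq j$''; I sidestep this by placing all distinct position pairs into the source relation $D$ and copying them into $D'$ as part of $\theta_1$. Together with the absence of self-loops in $E^S$, this forces the $C_i$'s to be pairwise disjoint, so that positional distinctness in the query is translated into genuine vertex distinctness in the clique.
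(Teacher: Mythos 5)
Your construction is, up to a renaming of relation symbols, exactly the paper's: the paper uses source relations $E_0,C_0$, target relations $E,C,A$, the analogous pair of LAV tgds (one full copying tgd for edges, one LAV tgd with two existentials attaching an $A$-/$F$-value to each position), the same $\CQneg$ query with a single negated atom, and the same reduction from $\CLIQUE$ that places the $k(k-1)$ ordered position pairs in the source relation and the graph's edges in the other. The only points the paper treats that you gloss over are the degenerate case $k=1$ (where $D^S=\emptyset$, every \XGCWA-solution falsifies $q$, yet an empty graph has no $1$-clique, so your stated equivalence fails and the paper dispatches this by a trivial special case) and the explicit observation that a refuting minimal solution must map each position into $\dom(E^S)$, which is what bounds the nondeterministic guess and makes the $\co\NP$ membership argument go through.
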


\begin{proof}
  Let $M = \Des$, where $\sigma$ consists of binary relation symbols $E_0,C_0$,
  $\tau$ consists of binary relation symbols $E,C,A$,
  and $\Sigma$ consists of the following st-tgds:
  \begin{align*}
    \theta_1 &  \,\isdef\, \forall x \forall y\, \bigl(
        E_0(x,y) \limplies E(x,y)
      \bigr), \\
    \theta_2 &  \,\isdef\, \forall x \forall y\, \bigl(
        C_0(x,y) \limplies \exists z_1 \exists z_2\, (
          C(x,y) \land A(x,z_1) \land A(y,z_2)
        )
      \bigr).
  \end{align*}
  Furthermore, let
  \begin{align*}
    q \,\isdef\,
    \exists x \exists y \exists z_1 \exists z_2\, \bigl(
      C(x,y) \land A(x,z_1) \land A(y,z_2) \land \lnot E(z_1,z_2)
    \bigr).
  \end{align*}
  We show that $\Eval(M,q)$ is $\co\NP$-complete
  by showing that the complement of $\Eval(M,q)$ is $\NP$-complete.

  \proofpart{Membership}
  The complement of $\Eval(M,q)$ is solved by
  a nondeterministic Turing machine as follows.
  Given a source instance $S$ for $M$,
  the machine needs to decide whether $\certXGCWA{M}{S}{q} = \emptyset$,
  that is, whether there is a \XGCWA-solution $T$ for $S$ under $M$
  such that $T \models \lnot q$.

  Note that
  \begin{align}
    \label{eq:CQ-with-neg-coNP-complete/negation}
    \lnot q \,\equiv\,
    \forall x \forall y \forall z_1 \forall z_2\, \bigl(
      C(x,y) \land A(x,z_1) \land A(y,z_2) \limplies E(z_1,z_2)
    \bigr),
  \end{align}
  and that the minimal ground solutions for $S$ under $M$
  are all solutions of the form
  \[
    T_f\, \isdef\,
    \Set{E(a,b) \mid (a,b) \in E_0^S}
    \union
    \Set{C(c,c') \mid (c,c') \in C_0^S}
    \union
    \Set{A(c,f(c)) \mid c \in \dom(C_0^S)},
  \]
  for some mapping $f\colon \dom(C_0^S) \to \Const$.
  Hence, if $T$ is a \XGCWA-solution for $S$ under $M$ with $T \models \lnot q$,
  there is a minimal ground solution $T' \subseteq T$ for $S$ under $M$
  with $T' \models \lnot q$.
  In particular,
  it suffices to decide whether there is a minimal ground solution $T$
  for $S$ under $M$ such that $T \models \lnot q$.

  By \eqref{eq:CQ-with-neg-coNP-complete/negation},
  if $T_f \models \lnot q$ for some $f\colon \dom(C_0^S) \to \Const$,
  then for all $c \in \dom(C_0^S)$ we have $f(c) \in \dom(E_0^S)$.
  Hence, in order to check whether there is a minimal ground solution $T$
  for $S$ under $M$ such that $T \models \lnot q$,
  it suffices to guess a mapping $f\colon \dom(C_0^S) \to \dom(S)$,
  and to check whether $T_f \models \lnot q$.
  Clearly, this can be done by a nondeterministic Turing machine
  in time polynomial in the size of $S$.

  \proofpart{Hardness}
  To show that the complement of $\Eval(M,q)$ is $\NP$-hard,
  we present a reduction from the $\NP$-complete $\CLIQUE$ problem \cite{GJ79}.
  The $\CLIQUE$ problem is to decide,
  given an undirected graph $G = (V,E)$ without loops and
  a positive integer $k$,
  whether $G$ contains a \emph{clique} of size $k$.
  Here, a clique in $G$ is a set $C \subseteq V$ such that
  for all $u,v \in C$ with $u \neq v$ we have $\set{u,v} \in E$.

  Let $G = (V,E)$ be an undirected graph without loops,
  and let $k \geq 1$ be an integer.
  If $k = 1$, then $G$ has a clique of size $k$ if and only if $V$ is nonempty,
  and we can reduce $(G,k)$ to some predefined fixed source instance $S$
  for $M$ such that $\certXGCWA{M}{S}{q} = \emptyset$ if and only if
  $V$ is nonempty
  (i.e., a source instance $S$ with $C_0^S = \emptyset$ if $V$ is nonempty,
   and a source instance $S$ with $C_0^S = \set{(c,c)}$ for some $c \in \Const$
   if $V$ is empty).

  If $k \geq 2$, we reduce $(G,k)$ to the source instance $S$ for $M$ with $E_0^S = E$
  and $C_0^S = \set{(c_i,c_j) \mid 1 \leq i,j \leq k,\, i \neq j}$,
  where $c_1,\dotsc,c_k$ are pairwise distinct constants
  that do not occur in $V$.
  We claim that $G$ has a clique of size $k$ if and only if
  $\certXGCWA{M}{S}{q} = \emptyset$.

  \onlyifdir\
  Let $C = \set{v_1,\dotsc,v_k}$ be a clique of size $k$ in $G$,
  and let $T$ be the target instance for $M$ with $E^T = E$, $C^T = C_0^S$
  and $A^T = \set{(c_i,v_i) \mid 1 \leq i \leq k}$.
  Then $T$ is a minimal solution for $S$ under $M$,
  and, by Definition~\ref{def:XGCWA-solutions},
  a \XGCWA-solution for $S$ under $M$.
  Furthermore, we have $T \not\models q$.
  To see this, note that for all $u,v,w_1,w_2 \in \dom(T)$ with
  $T \models C(u,v) \land A(u,w_1) \land A(v,w_2)$,
  there are distinct $i,j \in \set{1,\dotsc,k}$ with $u = c_i$ and $v = c_j$,
  so that $w_1 = v_i$ and $w_2 = v_j$.
  Since $v_i,v_j \in C$ and $E^T = E$, we thus have $T \models E(w_1,w_2)$
  for all such $u,v,w_1,w_2$.
  Since $T$ is a \XGCWA-solution for $S$ under $M$, and $T \not\models q$,
  we have $\certXGCWA{M}{S}{q} = \emptyset$.

  \ifdir\
  Suppose that $\certXGCWA{M}{S}{q} = \emptyset$.
  Then there is a \XGCWA-solution $T$ for $S$ under $M$
  with $T \not\models q$.
  For all $i \in \set{1,\dotsc,k}$, let
  \begin{align*}
    V_i \,\isdef\, \Set{v \in \dom(T) \mid (c_i,v) \in A^T}.
  \end{align*}
  Since $S \union T \models \theta_2$,
  each $V_i$ is nonempty.
  Thus, there is a set $C = \set{v_1,\dotsc,v_k}$ such that $v_i \in V_i$
  for each $i \in \set{1,\dotsc,k}$.
  Moreover, for all $i,j \in \set{1,\dotsc,k}$ with $i \neq j$,
  we have $(v_i,v_j) \in E$.
  To see this,
  observe that $T \models C(c_i,c_j) \land A(c_i,v_i) \land A(c_j,v_j)$,
  so that $T \not\models q$ implies $T \models E(v_i,v_j)$.
  It follows that $C$ is a clique in $G$ of size $k$
  (since $k \geq 2$ and $G$ has no loops).
\end{proof}

Adding only one universal quantifier can make the problem undecidable.
Specifically, let us consider $\exists^* \forall$ FO queries,
which are FO queries of the form
\(
  \exists x_1 \dotsb \exists x_k \forall y\, \phi,
\)
where $\phi$ is quantifier-free.
Then we have:

\begin{prop}
  \label{prop:EX*ALL-undecidable}
  There exists a schema mapping $M = \Des$,
  where $\Sigma$ consists of two LAV tgds,
  and a Boolean $\exists^* \forall$ FO query $q$ over $\tau$
  such that $\Eval(M,q)$ is undecidable.
\end{prop}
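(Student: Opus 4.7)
The plan is to reduce the \emph{complement} of the halting problem for deterministic Turing machines on empty input, which is $\Pi^0_1$-complete, to $\Eval(M,q)$ for a suitable fixed schema mapping $M$ and Boolean $\exists^*\forall$ query $q$. Given a Turing machine $N$, the source instance $S_N$ will encode $N$'s transition table; both $M$ and $q$ are fixed once and for all. The goal is to arrange that the empty tuple belongs to $\certXGCWA{M}{S_N}{q}$ if and only if $N$ does \emph{not} halt on empty input, which immediately yields undecidability of $\Eval(M,q)$.

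I would take the target schema $\tau$ rich enough to describe a sequence of configurations of $N$ (say, with relations $\textit{Cell}(t,p,s)$, $\textit{Head}(t,p,q)$, and relations populated by the copied transition table). The set $\Sigma$ consists of two LAV tgds: one that simply copies the transition table from $\sigma$ to auxiliary target relations, and a second one that fires on a designated source atom with a head of the form $\exists \bar z\,\psi(x,\bar z)$, where all atoms of $\psi$ are \emph{packed} around the existential variables $\bar z$. Thus each trigger produces a single ``candidate configuration cluster'' whose components are bound together by shared fresh nulls. By Proposition~\ref{prop:XGCWA-solutions}, the \XGCWA-solutions for $S_N$ under $M$ are precisely the unions of ground minimal solutions, and such minimal solutions instantiate the $\bar z$ of each trigger with constants. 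A \XGCWA-solution thus amounts to an arbitrary finite family of candidate clusters glued at the constants of $S_N$. The query $q$ is chosen of the form $\exists x_1\cdots\exists x_k\,\forall y\,\phi(\bar x,y)$ so that $\neg q$ expresses, via the standard encoding of deterministic computations by first-order formulas over a linearly ordered grid, ``the target instance embeds a valid accepting run of $N$ on the empty tape'': the existential block of $\neg q$ pins down the initial configuration, the accepting configuration, and a discrete time ordering between them, while the single outer universal of $\neg q$ walks through cells and checks local consistency of the transition function. A single universal suffices because the transition function is local.

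With $q$ so designed, some \XGCWA-solution falsifies $q$ iff some \XGCWA-solution contains a genuine accepting run of $N$, iff $N$ halts on empty input. Hence the empty tuple is in $\certXGCWA{M}{S_N}{q}$ iff $N$ does not halt, which gives the reduction. The main obstacle is the \emph{soundness} direction: while it is routine to realise any real halting computation as a \XGCWA-solution falsifying $q$, one must rule out that a \XGCWA-solution obtained by combining minimal solutions in unintended ways can accidentally satisfy $\neg q$ without corresponding to a true run of $N$. This is where packedness of the head of the second tgd is essential---requiring the atoms encoding a configuration to share nulls forces every ``run'' read off from a \XGCWA-solution to be composed of coherent clusters, hence of real transition steps of $N$. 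The closure step in Definition~\ref{def:XGCWA-solutions} is harmless because $\Sigma$ contains only st-tgds, so by Proposition~\ref{prop:XGCWA-solutions} the \XGCWA-solutions admit the clean union-of-minimal-solutions description used throughout the argument.
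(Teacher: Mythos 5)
There is a genuine gap, and it sits exactly at the point your sketch leaves open. Since $q$ must be of the form $\exists x_1\cdots\exists x_k\,\forall y\,\phi$, its negation $\lnot q$ is a $\forall^k\exists^1$ sentence. Yet you describe $\lnot q$ as having an existential block that ``pins down the initial configuration, the accepting configuration, and a discrete time ordering'' followed by ``a single outer universal'' --- that is an $\exists^*\forall$ sentence, which would make $q$ itself a $\forall^*\exists$ query and hence not of the shape required by the proposition. If you instead keep $\lnot q$ in the mandated $\forall^*\exists^1$ form, you can no longer existentially guess the run: the formula must characterize, purely by universally quantified local conditions plus one inner existential witness, when an \emph{arbitrary} union of candidate clusters (an arbitrary \XGCWA-solution, via Proposition~\ref{prop:XGCWA-solutions}) constitutes a genuine accepting computation of $N$ --- including the successor/order structure on time steps and tape positions, the frame conditions, the uniqueness of the head, and the initial-configuration constraint. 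You acknowledge this soundness obstacle but only gesture at packedness, which controls coherence \emph{within} a single cluster and says nothing about how clusters chain together globally. So the central technical content of the proof is missing.

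The paper avoids this difficulty by reducing from a different undecidable problem whose positive instances are \emph{naturally} characterized by a $\forall^*\exists^1$ property of a finite relation: the embedding problem for finite semigroups. There the two LAV tgds copy the graph of a partial function $p$ into $R_p$ and merely license arbitrary nonempty finite ternary relations $R_f$; the sentence $\hat q = \lnot q$ asserts $R_p\subseteq R_f$ and that $R_f$ is the graph of a total associative function, all of which prenexes to $\forall^*\exists^1$, so that some \XGCWA-solution satisfies $\hat q$ if and only if $p$ embeds into a finite semigroup. If you want to salvage a Turing-machine-based argument, you would have to show that ``encodes a halting computation of $N$'' is expressible with prefix $\forall^*\exists^1$ over the unrestricted unions of clusters that constitute the \XGCWA-solutions; that is a substantial claim which your proposal neither states precisely nor proves.
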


\begin{proof}
  Let $M = (\set{R},\set{R_p,R_f},\Sigma)$,
  where $R,R_p,R_f$ are ternary relation symbols,
  and $\Sigma$ consists of the st-tgds
  $\theta_c \isdef \forall \tup{x} (R(\tup{x}) \limplies R_p(\tup{x}))$ and
  \(
    \theta_d \isdef
    \forall \tup{x} (R(\tup{x}) \limplies \exists \tup{y}\, R_f(\tup{y})).
  \)
  Let $\hat{q}$ be a FO query that is true in a target instance $T$ for $M$
  precisely if $R_p^T \subseteq R_f^T$,
  and $R_f^T$ encodes the graph of a total associative function
  $f\colon B \times B \to B$ for some set $B$:
  \begin{align*}
    \hat{q}
    %\allowdisplaybreaks
    \ \, \isdef\, \ \ & %\underbrace{
        \forall \tup{x}\, \bigl(R_p(\tup{x}) \limplies R_f(\tup{x})\bigr)
      %}_\text{``$R_p \subseteq R_f$''}
    \\\allowdisplaybreaks
    \land\ & %\underbrace{
        \forall \tup{x}\, \forall y_1 \forall y_2 \bigl(
          R_f(\tup{x},y_1) \land R_f(\tup{x},y_2) \limplies y_1 = y_2
        \bigr)
      %}_\text{``$R_f$ is the graph of a function $f$''}
          \\
    \allowdisplaybreaks
    \land\ & %\underbrace{
        \forall x \forall y\, \bigl(
          \phi_{\dom}(x) \land \phi_{\dom}(y) \limplies \exists z\, R_f(x,y,z)
        \bigr)
      %}_\text{``$f$ is a total function from $B^2$ to $B$ for some set $B$''}
        \\
    \allowdisplaybreaks
    \land\ & %\underbrace{
      \forall x \forall y \forall z \forall u \forall v \forall w\, \bigl(
          R_f(x,y,u) \land R_f(u,z,v) \land R_f(y,z,w) \limplies R_f(x,w,v)
      \bigr)
        %}_\text{``$f$ is associative''}
        ,
  \end{align*}
  where, in a target instance $T$ for $M$,
  \begin{align*}
    \phi_{\dom}(x) \,\isdef\, \exists x_1 \exists x_2 \exists x_3\, \left(
      R_f(x_1,x_2,x_3) \land \biglor_{i=1}^3 x = x_i
    \right)
  \end{align*}
  defines the set of all values that occur in $R_f^T$.
  Note that the last three lines in the definition of $\hat{q}$
  are essentially the target constraints of the schema mapping
  in \cite[Theorem~3.6]{KPT:PODS06}.
  Note also that the negation of $\hat{q}$ is equivalent to a Boolean
  $\exists^* \forall$ FO query.
  Let $q$ be this $\exists^* \forall$ FO query.

  To show that $\Eval(M,q)$ is undecidable, we reduce the
  \problem{\ProblemName{Embedding problem for finite semigroups}}
    {a partial function $p\colon A^2 \to A$, where $A$ is a finite set}
    {Is there a finite set $B \supseteq A$ and a total function
     $f\colon B^2 \to B$
     such that $f$ is associative, and $f$ extends $p$
     (i.e., $p(x,y)$ defined implies $f(x,y) = p(x,y)$)?}
  that is known to be undecidable \cite{KPT:PODS06}, to $\Eval(M,q)$.
  Let $p\colon A^2 \to A$ be a partial function, where $A$ is a finite set.
  Construct the source instance $S$ for $M$,
  where $R^S$ is the graph of $p$,
  that is, $R^S = \set{(a,b,c) \mid p(a,b) = c}$.
  We claim that $\certXGCWA{M}{S}{q} = \emptyset$ if and only if
  $p$ is a ``yes''-instance of the embedding problem for finite semigroups.

  Note that the \XGCWA-solutions for $S$ under $M$
  are all the target instances $T$ for $M$ such that $R_p^T = R^S$,
  and either
  (1) $R^S = R_p^T = R_f^T = \emptyset$, or
  (2) $R_f^T$ is a nonempty finite subset of $\Const^3$.
  Therefore, $\certXGCWA{M}{S}{q} = \emptyset$ if and only if
  there is a \XGCWA-solution $T$ for $S$ under $M$ such that
  $R_f^T$ is the graph of a total function $f\colon \dom(T)^2 \to \dom(T)$
  that is associative and extends $p$.
  This is the case precisely if
  $p$ is a ``yes''-instance of the embedding problem for finite semigroups.
\end{proof}

\subsection{Universal Queries}
\label{sec:complexity/universal}

As we have seen in Section~\ref{sec:complexity/existential},
computing \XGCWA-answers to existential queries may be a difficult task,
and even more difficult (if possible at all)
if the query additionally contains universal quantifiers.

We now turn to \emph{universal queries},
which are FO queries of the form
$q(\tup{x}) = \forall \tup{y}\, \phi(\tup{x},\tup{y})$,
where $\phi$ is quantifier-free.
As a general upper bound for such queries with respect to schema mappings
defined by st-tgds we obtain:

\begin{prop}
  \label{prop:universal-queries-and-st-tgds}
  Let $M = \Des$ be a schema mapping, where $\Sigma$ consists of st-tgds,
  and let $q(\tup{x})$ be a universal query over $\tau$.
  Then, $\Eval(M,q)$ is in $\co\NP$.
\end{prop}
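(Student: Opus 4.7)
The plan is to show that the complement of $\Eval(M,q)$ lies in $\NP$. Writing $q(\tup{x}) = \forall \tup{y}\, \phi(\tup{x},\tup{y})$ with $\phi$ quantifier-free and converting $\lnot \phi(\tup{t},\tup{y})$ into disjunctive normal form as $\biglor_i D_i(\tup{y})$, the complement asks whether there exist an assignment $\tup{u}$, an index $i$, and a \XGCWA-solution $T$ for $S$ under $M$ such that $T \models D_i(\tup{u})$; that is, every atom occurring in a positive literal of $D_i(\tup{u})$ lies in $T$, and every atom occurring in a negative literal lies outside $T$. By Proposition~\ref{prop:XGCWA-solutions} (applied with an empty set of egds), a \XGCWA-solution is precisely a union of one or more ground minimal solutions. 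The nondeterministic algorithm therefore guesses, for each required positive atom $B$, a single ground minimal solution $T_B$ that contains $B$ (and guesses one arbitrary ground minimal solution if no positive atom is required), sets $T \isdef \bigunion_B T_B$, and verifies that no $T_B$ contains any of the forbidden negated atoms.

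The next step is to justify that this certificate has polynomial size. The number of pieces $T_B$ to be guessed is bounded by the fixed number of positive literals in a single disjunct of $\lnot\phi$. For fixed $M$ whose dependencies are only st-tgds, the number of active triggers $(\theta,\tup{a},\tup{b})$ is polynomial in $|S|$ and each trigger contributes at most a constant number of atoms to any ground minimal solution, so every ground minimal solution has polynomial size. For the witness $\tup{u}$, a standard genericity argument restricts its entries to a polynomial-size pool consisting of the constants in $S$, $\tup{t}$, $\Sigma$, and $q$ together with a bounded supply of reusable fresh constants, which can simultaneously serve as existential witnesses inside the $T_B$'s.

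The main obstacle is verifying in polynomial time that each guessed $T_B$ is \emph{subset-minimal}, since this superficially looks like a $\co\NP$ check. The resolution, specific to the st-tgd setting, is a monotonicity observation: because st-tgds are preserved when the target instance is enlarged, any proper subsolution $T' \subsetneq T_B$ would already arise by removing a single atom, namely any $A \in T_B \setminus T'$. Hence $T_B$ is subset-minimal if and only if for every $A \in T_B$ the instance $S \union (T_B \setminus \set{A})$ violates some st-tgd in $\Sigma$, which reduces to a polynomial number of polynomial checks. The remaining verifications -- that each $T_B$ is a ground solution, that $B \in T_B$, and that $T$ satisfies $D_i(\tup{u})$ -- are all directly polynomial, so the overall procedure runs in nondeterministic polynomial time, placing $\Eval(M,q)$ in $\co\NP$.
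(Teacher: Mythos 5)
Your overall strategy is the paper's: reduce the complement to deciding in $\NP$ whether some union of ground minimal solutions satisfies a disjunct of $\lnot q(\tup{t})$, guess polynomially many ground minimal solutions of polynomial size, and certify minimality of each guessed piece by checking that removing any single atom destroys the solution property --- a check that is justified exactly by your monotonicity observation, since st-tgds are monotone in the target for a fixed source. The polynomial size bound on ground minimal solutions and the restriction of witnesses to a polynomial pool of constants are also present in the paper, which obtains them by noting that every ground minimal solution is a valuation of $\core(M,S)$.

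The gap is in how many minimal solutions you guess. You take one ground minimal solution per \emph{positive} atom of the chosen disjunct $D_i(\tup{u})$, but the paper evaluates queries under the \emph{active domain semantics}: an existential witness must lie in $\dom(T) \union \dom(q)$. A variable of $\tup{y}$ that occurs in $D_i$ only in negated atoms or inequalities is not anchored to any positive atom, yet its value must still belong to the domain of the union you build, and that union may simply be too small to contain it. Concretely, take $\Sigma = \set{\forall x\,(P(x) \limplies \exists z_1 \exists z_2\, E(z_1,z_2))}$, $P^S = \set{a}$, and the Boolean universal query $q = \forall y_1 \forall y_2 \forall y_3\,(E(y_1,y_2) \limplies (y_3 = y_1 \lor y_3 = y_2))$. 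Every ground minimal solution is a single atom $\set{E(c,d)}$ with domain of size at most two, so on the union of the one minimal solution your algorithm guesses there is no admissible value for $y_3$ and every run rejects; but the \XGCWA-solution $\set{E(c,d),E(e,f)}$ with four distinct constants satisfies $\lnot q$, so $\certXGCWA{M}{S}{q}$ is in fact empty. Your remark that the fresh constants ``can simultaneously serve as existential witnesses inside the $T_B$'s'' does not repair this, because a ground minimal solution cannot always absorb an extra prescribed value. The fix is the one the paper uses: additionally guess up to $\max \set{\arity(R) \mid R \in \tau}$ further ground minimal solutions per negated literal and inequality, chosen so that every value used by the assignment occurs in the union; this keeps the certificate of constant cardinality and polynomial size, and the rest of your argument then goes through.
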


The proof of Proposition~\ref{prop:universal-queries-and-st-tgds}
uses basic ideas from the proof of this section's main result,
Theorem~\ref{thm:universal-queries},
and is deferred to Section~\ref{sec:XGCWA-answers/universal/coNP}.

In what follows,
we prove that for schema mappings defined by st-tgds
which are \emph{packed} as defined below,
the \XGCWA-answers to universal queries can even be computed in polynomial time.

\begin{defi}[packed st-tgd]
  \label{def:packed-tgd}
  An st-tgd $\tgd$ is \emph{packed}
  if for all distinct atoms $R_1(\tup{u}_1),R_2(\tup{u}_2)$ in $\psi$,
  there is a variable in $\tup{z}$ that occurs both in $\tup{u}_1$
  and in $\tup{u}_2$.
\end{defi}

Notice that the schema mapping defined in the proof
of Proposition~\ref{prop:EX*ALL-undecidable} is defined by packed st-tgds.

Although schema mappings defined by packed st-tgds
are not as expressive as schema mappings defined by st-tgds,
they seem to form an interesting class of schema mappings.
Packed st-tgds still allow for non-trivial use of existential quantifiers
in the heads of st-tgds.
For example, consider a schema mapping $M$ defined by st-tgds $\tgd$,
where $\psi$ contains at most two atoms that contain variables from $\tup{z}$.
Then $M$ is logically equivalent to a schema mapping defined by packed st-tgds.
To see this, let
\[
  \theta \,\isdef\, \tgd
\]
be an st-tgd in $M$,
and let $G$ be the graph whose vertices are the atoms in $\psi$,
and which has an edge between two distinct atoms
if they share a variable from $\tup{z}$.
Let $C_1,\dotsc,C_k$ be the connected components of $G$,
and for every $i \in \set{1,\dotsc,k}$ let
\[
  \theta_i \,\isdef\,
  \forall \tup{x} \forall \tup{y}\, \bigl(
    \phi(\tup{x},\tup{y})
    \limplies
    \exists \tup{z}\, \psi_i
  \bigr),
\]
where $\psi_i$ is the conjunction of all atoms in $C_i$.
Then $\theta$ is logically equivalent to $\set{\theta_1,\dotsc,\theta_k}$.
Using that $\psi$ contains at most two atoms with variables from $\tup{z}$,
it is easy to see that each $\theta_i$ is a packed st-tgd.
As a special case, it follows that each full st-tgd is equivalent
to a set of packed st-tgds.
An example of a st-tgd that is \emph{not} packed is
\(
  \forall x (P(x) \limplies \exists z_1 \exists z_2 \exists z_3 (
    E(x,z_1) \land E(z_1,z_2) \land E(z_2,z_3)
  )).
\)

We are now ready to state this section's main result:

\begin{thm}
  \label{thm:universal-queries}
  Let $M = \Des$ be a schema mapping, where $\Sigma$ consists of packed st-tgds,
  and let $q(\tup{x})$ be a universal query over $\tau$.
  Then there is a polynomial time algorithm that,
  given $\core(M,S)$ for some source instance $S$ for $M$,
  outputs $\certXGCWA{M}{S}{q}$.
\end{thm}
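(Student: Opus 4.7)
The plan is to reduce $\XGCWA$-query evaluation to polynomially many local realizability checks on the input instance $K \isdef \core(M,S)$, exploiting the fact that packed st-tgds endow $K$ with a constant-size block structure.

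First, put $\lnot \phi$ in disjunctive normal form as $\biglor_i \chi_i$, where each $\chi_i$ is a conjunction of literals. Since $q$ is fixed, the number of disjuncts, variables, and literals is constant, and candidate tuples $\tup{a}$ need only range over the polynomial-size set of constants in $\dom(K) \union \dom(q)$; it thus suffices to decide, for each candidate $\tup{a}$ and each $\chi_i$, whether some $\XGCWA$-solution $T$ satisfies $T \models \chi_i(\tup{a},\tup{b})$ for some $\tup{b}$. Since $\Sigma$ contains only st-tgds, which are preserved under homomorphisms on the target, every ground solution for $S$ under $M$ is a ground homomorphic image $v(K)$, and by Proposition~\ref{prop:XGCWA-solutions} the $\XGCWA$-solutions are exactly unions of minimal such images. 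Writing $\chi_i$ as positive atoms $\alpha_1,\dotsc,\alpha_p$ and negated atoms $\lnot\beta_1,\dotsc,\lnot\beta_q$ (together with the equality pattern on $\tup{a},\tup{b}$), the existence of such $T,\tup{b}$ is equivalent to finding, for each positive literal $\alpha_j(\tup{a},\tup{b})$, a valuation $v_j$ of $K$ such that $v_j(K)$ is a minimal ground solution containing $\alpha_j(\tup{a},\tup{b})$ but none of the $\beta_k(\tup{a},\tup{b})$; the different $v_j$ can be chosen independently.

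Next, packing forces $K$ into a clean block structure. Since any two atoms in the head of a packed st-tgd share an existential variable, the atoms produced by any single trigger form a connected block sharing only source constants with the rest of $K$, so the nulls of $K$ partition into pairwise disjoint blocks of constant size (bounded by $M$). The way $(\tup{a},\tup{b})$ interacts with a minimal image $v_j(K)$ is therefore described by a \emph{pattern}: for each $b_i$, a specification of whether it equals some $a_k$, equals some earlier $b_{i'}$, equals some specific constant of $\dom(K)$, or is a fresh value identified as the image of a specific null in a specific block of $K$. Because $\length{\tup{b}}$ is constant and each block has constant size, the total number of patterns is polynomial in the size of $K$.

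Finally, for each pattern and each $\alpha_j$ we check in polynomial time whether a valuation $v_j$ realizing the pattern on the relevant block of $K$ produces $\alpha_j(\tup{a},\tup{b})$ without producing any $\beta_k(\tup{a},\tup{b})$. The block involved has constant size, so this local test is constant-time; nulls of $K$ in other blocks are valuated to pairwise distinct fresh constants disjoint from $\tup{a},\tup{b}$ and $\dom(K)$, which by genericity cannot produce any $\beta_k(\tup{a},\tup{b})$. Combined with the core property of $K$ --- which precludes any homomorphism to a proper subinstance --- this generic extension yields a minimal ground solution $v_j(K)$. The principal technical hurdle is this last claim: a \emph{fresh-extension lemma} showing that, when $\Sigma$ is packed, fixing $v_j$ on the block involved and choosing it generically elsewhere always yields a minimal ground solution that avoids every $\beta_k(\tup{a},\tup{b})$. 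Packing is essential because blocks interact only through source constants, so the forced assignment in one block cannot inadvertently collapse another or produce an unintended $\beta_k(\tup{a},\tup{b})$. Granted this lemma, pattern enumeration followed by constant-time per-block realizability checks produces $\certXGCWA{M}{S}{q}$ in polynomial time in the size of $K$.
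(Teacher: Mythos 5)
Your overall architecture matches the paper's: reduce to $\core(M,S)$, put $\lnot q$ into disjunctive normal form, exploit the constant bound on the number of nulls per atom block, enumerate block-local valuations, and assemble a union of minimal instances witnessing a disjunct. But the ``fresh-extension lemma'' that you explicitly grant at the end is not a technical hurdle to be deferred --- it \emph{is} the theorem. In the paper it corresponds to the combination of Definition~\ref{def:minval_B}, Lemma~\ref{lemma:minval} and, above all, Lemma~\ref{lemma:universal-queries/basis}, whose proof (via the graph $G$ on atoms, the iterated maps $\hat{g}^s$, and the argument that a cycle would turn a block-local valuation into a homomorphism from $T$ into $T$ minus a block, contradicting that $T$ is a core) occupies most of Section~\ref{sec:XGCWA-answers/universal/atoms}. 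Moreover, as you state it the lemma is too strong: even in the packed case the paper does not claim that an arbitrary valuation fixed on one block and generic elsewhere yields a minimal ground solution; it first minimizes among the block-local valuations ($\minval_C(T,B)$ rather than $\val_C(T,B)$) and then passes to a core via a retraction fixing the new nulls, and only then shows membership in $\minrep_C(T)$ (Lemma~\ref{lemma:minval}). Example~\ref{ex:naive-algorithm-fails} shows concretely that without packedness a block-local valuation can produce an atom that occurs in \emph{no} minimal instance of $\rep(T)$, so soundness of your per-block realizability check genuinely depends on this lemma and not merely on genericity outside the block.

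Your stated reason why packing saves the day --- ``blocks interact only through source constants, so the forced assignment in one block cannot inadvertently collapse another'' --- is also not the operative mechanism: distinct atom blocks \emph{never} share nulls, for arbitrary st-tgds, so that property is free and does not prevent the failure in Example~\ref{ex:naive-algorithm-fails}. What packing actually buys is that any two atoms \emph{within} a block share a null, which forces a block-local valuation that maps one atom isomorphically outside the block to map the whole block outside itself, contradicting the core property of $T$; this is the pivot of the proof of Lemma~\ref{lemma:universal-queries/basis}. Finally, your claim that the witnessing valuations $v_j$ ``can be chosen independently'' needs the compatibility and join machinery of Definitions~\ref{def:compatible} and~\ref{def:join}: when an existential variable shared by two positive literals is assigned a null-derived fresh value, the two witnessing minimal instances must be consistently renamed so that the shared value is literally identical in both, and one must verify that the resulting glued instance still decomposes as a union of minimal instances in $\rep(T)$ (plus extra injective copies of $T$ to realize values occurring only in negated literals and inequalities). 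None of these steps is unfixable, but as written the proposal assumes precisely the parts of the argument where packedness and the core property do their work.
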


Note that Theorem~\ref{thm:universal-queries}
and Theorem~\ref{thm:core-algorithm}
immediately imply that for every schema mapping $M$ specified by packed st-tgds,
and for every universal query $q(\tup{x})$ over $M$'s target schema,
there is a polynomial time algorithm that takes a source instance $S$ for $M$
as input,
and outputs $\certXGCWA{M}{S}{q}$.
In particular:

\begin{cor}
  If $M$ is a schema mapping defined by packed st-tgds,
  and $q(\tup{x})$ is a universal query over $M$'s target schema,
  then $\Eval(M,q)$ is in $\PTIME$.
\end{cor}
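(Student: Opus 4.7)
The plan is to reduce the computation of $\certXGCWA{M}{S}{q}$ to polynomially many local checks on $K \isdef \core(M,S)$. Since $q(\tup{x})$ is universal, $\lnot q$ is equivalent to an existential formula $\exists \tup{y}\, \biglor_i \chi_i(\tup{x},\tup{y})$, where each $\chi_i$ is a conjunction of relational literals. A tuple of constants $\tup{a}$ fails to lie in $\certXGCWA{M}{S}{q}$ exactly when there is some disjunct $\chi_i$, some \XGCWA-solution $T$ for $S$ under $M$, and a tuple $\tup{b}$ of constants with $T \models \chi_i(\tup{a},\tup{b})$. Because $q$ is fixed, the number of disjuncts is constant, so it suffices, for each $\chi_i$ separately, to decide this existence question in polynomial time from $K$.

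The first step is to describe the \XGCWA-solutions in terms of $K$. Using packedness, I would argue that $\cansol(M,S)$ partitions into \emph{blobs}, one per firing of a packed st-tgd: the atoms of $\psi(\tup{a}',\tup{\bot})$ together with their shared nulls $\tup{\bot}$. Packedness ensures that each blob is a single connected piece, because any two of its atoms share an existentially quantified variable. Hence each blob survives, possibly after identification with another blob by the core-computing homomorphism, as a subinstance of $K$. Combined with Proposition~\ref{prop:XGCWA-solutions}, this implies that every ground minimal solution for $S$ under $M$ arises from $K$ by independently choosing, for each blob of $K$, a valuation of its nulls to constants; a \XGCWA-solution is then a union of one or more such instantiations of $K$.

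The second and main step is a bounded-witness argument. I would show that if $T \models \chi_i(\tup{a},\tup{b})$ for some \XGCWA-solution $T$, then there is a \XGCWA-solution $T^\star$ built from a number of blob-instantiations bounded by the (fixed) size of $\chi_i$, together with a tuple $\tup{b}^\star$ such that $T^\star \models \chi_i(\tup{a},\tup{b}^\star)$. The positive literals of $\chi_i$ pin down at most $\length{\chi_i}$ blob-instantiations that must be included, namely those responsible for producing these atoms. For the negative literals, the remaining nulls of the chosen blob-instantiations may be sent to \emph{fresh} constants, distinct from those appearing in $\tup{a}$, in $\tup{b}^\star$, and in the positive witnesses. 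Packedness is crucial here: since each blob is a single undivided unit, fresh constants for its still-unused nulls cannot conspire to create a forbidden atom. Candidate tuples $\tup{b}^\star$ can therefore be enumerated over a polynomial-size domain, namely the constants occurring in $\tup{a}$ and in $K$ together with a fixed pool of fresh constants of size bounded by $\length{\chi_i}$, and for each candidate the existence of suitable blob-instantiations is decided by a fixed positive query evaluation over $K$.

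The principal obstacle I expect is step two, in particular the treatment of negative literals. Making sure that the \XGCWA-solution avoids the undesired atoms relies crucially on packedness: without it, covering a required positive literal could unavoidably bring along additional atoms that violate a negative literal of the same conjunction. The technical heart of the argument will therefore be a careful \emph{safe-instantiation lemma} specifying which blob-valuations may be chosen simultaneously, and the packed structure is what makes such a lemma available.
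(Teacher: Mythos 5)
Your overall strategy is the right one --- compute $\core(M,S)$ in polynomial time via Theorem~\ref{thm:core-algorithm}, rewrite $\lnot q$ as a disjunction of existential conjunctions of literals, and for each conjunction search for a bounded witness assembled from block-local valuations of the core, using fresh constants to protect the negative literals. This is essentially the shape of the paper's proof of Theorem~\ref{thm:universal-queries/bool}. However, your first step rests on a claim that is not established and that, in its corrected form, is precisely the hardest lemma of the paper. You assert that every ground minimal solution ``arises from $K$ by independently choosing, for each blob of $K$, a valuation of its nulls.'' Lemma~\ref{lemma:minimal-from-core} does give that the ground minimal solutions are exactly the minimal instances of $\rep(K)$, and every valuation of $K$ decomposes block by block; but minimality of $v(K)$ in $\rep(K)$ is a \emph{global} condition on $v$, not a conjunction of per-block conditions. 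What your algorithm actually needs is a two-way correspondence between global minimal representatives and block-local ones: soundness requires that the block-local instantiations you assemble really are minimal instances of $\rep(K)$ (otherwise their union need not be a \XGCWA-solution), and completeness requires that every positive atom realizable in \emph{some} global minimal instance is already realizable, with a homomorphism onto that instance, by one of the polynomially many block-local representatives. Example~\ref{ex:naive-algorithm-fails} shows that the naive per-block analysis genuinely fails for arbitrary cores; the statement you need for packed cores is Lemma~\ref{lemma:universal-queries/basis}, whose proof (iterating the block-local minimizing maps and deriving a contradiction with $T$ being a core if one block ever folds isomorphically into another) is a substantial argument and is exactly the ``safe-instantiation lemma'' you defer. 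Deferring it leaves the crux of the corollary unproved.

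A secondary point: you attribute the role of packedness to the negative literals (``fresh constants for its still-unused nulls cannot conspire to create a forbidden atom''), but that part of the argument needs no packedness --- the paper handles negative literals and inequalities by adjoining a fixed number of freshly renamed isomorphic copies of the core and choosing an injective valuation, and the soundness lemma for block-local minimization (Lemma~\ref{lemma:minval}) holds without packedness. Packedness is needed on the \emph{positive} side, to guarantee that the sets $\minrep_C(T,B)$ suffice to witness every positive atom that any \XGCWA-solution could provide. Finally, ``decided by a fixed positive query evaluation over $K$'' undersells the per-candidate check: one must enumerate the at most $\card{\dom(B) \union C}^{\bs}$ legal maps of each block, select the minimal ones, take cores (Proposition~\ref{prop:minrep-computation}), and then join compatible witnesses for distinct positive literals that share existential variables (Definitions~\ref{def:compatible} and~\ref{def:join}); none of this is a positive query over $K$ itself.
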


An interesting consequence of Theorem~\ref{thm:universal-queries}
is the following.
Let $M$ be a schema mapping defined by packed st-tgds,
and let $S$ be a source instance for $M$.
Recall from Section~\ref{sec:basics}
that the OWA-answers to unions of conjunctive queries on $M$ and $S$
can be computed in polynomial time from $\core(M,S)$
(assuming $M$ and the query are fixed).
In other words,
we only need to compute $\core(M,S)$
in order to answer both unions of conjunctive queries,
and universal queries.
As mentioned above,
$\core(M,S)$ can be computed in polynomial time if $M$ is fixed.

Let us now turn to the proof of Theorem~\ref{thm:universal-queries}.
Observe that Theorem~\ref{thm:universal-queries}
is an immediate consequence of:

\begin{thm}
  \label{thm:universal-queries/bool}
  Let $M = \Des$ be a schema mapping,
  where $\Sigma$ consists of packed st-tgds,
  and let $q(\tup{x})$ be a universal query over $\tau$.
  Then there is a polynomial time algorithm that,
  given $\core(M,S)$ for some source instance $S$ for $M$,
  and a tuple $\tup{t} \in \Const^{\length{\tup{x}}}$,
  decides whether $\tup{t} \in \certXGCWA{M}{S}{q}$.
\end{thm}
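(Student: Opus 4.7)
To decide whether $\tup{t} \in \certXGCWA{M}{S}{q}$, the plan is to detect a witness to the negation: a $\XGCWA$-solution $T$ and an assignment $\tup{a}$ of $\tup{y}$ over $\dom(T) \union \dom(\phi)$ with $T \models \neg\phi(\tup{t},\tup{a})$. Writing $\neg\phi(\tup{t},\tup{y})$ in disjunctive normal form yields, since $q$ is fixed, a constant number of disjuncts, each of the form $\chi(\tup{y}) = \alpha(\tup{y}) \land \beta(\tup{y})$, where $\alpha$ is a conjunction of atomic formulas and $\beta$ is a conjunction of negated atomic formulas over $\tau$. The algorithm processes the disjuncts one at a time, searching for a $\XGCWA$-solution $T$ and a tuple $\tup{a}$ such that $\alpha(\tup{a}) \subseteq T$ while no atom underlying $\beta(\tup{a})$ lies in $T$.

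By Proposition~\ref{prop:XGCWA-solutions}, $T$ may be taken as a union of ground minimal solutions, and it suffices to restrict attention to $T$ built from at most $|\alpha|$ of them---one per atom of $\alpha(\tup{a})$---since shrinking $T$ preserves $\alpha(\tup{a}) \subseteq T$ and only helps avoid atoms of $\beta(\tup{a})$. The core of the proof is a structural lemma, crucially using the packed hypothesis, that the ground minimal solutions for $S$ under $M$ are exactly the images $h(\core(M,S))$ for valuations $h\colon \nulls(\core(M,S)) \to \Const$. One direction is standard: $\core(M,S)$ is universal, so every solution receives a homomorphism from it whose image must, by minimality, equal the whole solution. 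The converse is where packedness is essential: each trigger contributes a block of atoms that pairwise share an existentially quantified null, so the atoms of a block must appear together in any solution witnessing the trigger, preventing any atom of $h(\core(M,S))$ from being dropped.

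Granted this characterisation, the algorithm enumerates for each disjunct $\chi$: an integer $m \leq |\alpha|$; for each atom of $\alpha$, a pair consisting of an index $i \in \set{1,\dotsc,m}$ and a \emph{source atom} $A_i \in \core(M,S)$ that $h_i$ will map onto it; and an equality type on the variables of $\tup{y}$ relative to the constants in $\tup{t} \union \dom(\phi) \union \dom(\core(M,S))$. Each guess determines partial valuations $h_1,\dotsc,h_m$ on the nulls appearing in the source atoms and fixes $\tup{a}$ up to fresh values; the number of guesses is $|\core(M,S)|^{O(1)}$ for fixed $q$. To verify a guess, the algorithm extends every $h_i$ by mapping the still unconstrained nulls to fresh constants, pairwise distinct across different nulls and different indices and distinct from all constants in $\dom(\core(M,S))$, $\dom(\phi)$, $\tup{t}$, and $\tup{a}$. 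Freshness guarantees that any atom of $\bigunion_i h_i(\core(M,S))$ involving a fresh constant cannot equal any atom of $\beta(\tup{a})$, whose constants are confined to $\tup{t}$, $\dom(\phi)$, and $\tup{a}$. It therefore suffices to check, in polynomial time in $|\core(M,S)|$, that: (i) the partial valuations are internally consistent; (ii) each atom of $\alpha(\tup{a})$ equals $h_i(A_i)$ for its assigned pair; and (iii) no \emph{forced} image---an image of a core atom whose every position is constrained by the partial valuation---matches any atom of $\beta(\tup{a})$.

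The main obstacle is the structural lemma of paragraph two. In the absence of packedness, valuations of $\core(M,S)$ can yield non-minimal ground solutions, because witnesses for distinct triggers may overlap in ways that allow some atoms to be removed while still satisfying every tgd; this would break the reduction to unions of images $h_i(\core(M,S))$ and invalidate the local analysis of forced images used in the verification step. Proving the lemma therefore demands a careful block-by-block argument showing that, in the packed setting, each atom in $h(\core(M,S))$ is the indispensable witness of at least one trigger and cannot be replaced by atoms drawn from witness blocks of other triggers.
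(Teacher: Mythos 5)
Your overall architecture (negate $q$, go to DNF, witness each positive atom by a minimal ground solution obtained as a valuation image of $\core(M,S)$, pad with fresh constants to kill the negated atoms) matches the paper's in outline, but the ``structural lemma'' on which everything rests is both misstated and essentially unproved, and it is exactly the hard part of the theorem. You claim that the ground minimal solutions are \emph{exactly} the images $h(\core(M,S))$ for valuations $h$. That is not the right statement: the correct one (Lemma~\ref{lemma:minimal-from-core}, which needs no packedness at all) is that the ground minimal solutions are the \emph{minimal} instances in $\rep(\core(M,S))$ --- and not every valuation image is minimal, because a valuation that collapses nulls can make atoms redundant across blocks (Example~\ref{ex:naive-algorithm-fails} exhibits a core $T$ and a valuation $v$ with $v(T)$ non-minimal in $\rep(T)$). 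Your one-sentence justification (``the atoms of a block must appear together in any solution witnessing the trigger'') addresses only within-block behaviour of a single trigger; the actual difficulty is that minimality is a global condition across blocks, and a partial valuation that correctly realizes your target atom $A_i$ inside one block may admit \emph{no} extension to a globally minimal solution. This breaks both directions of your algorithm: soundness, because the guessed $h_i(\core(M,S))$ need not be a minimal solution, so the union need not be a \XGCWA-solution; and completeness, because a genuine witnessing family of minimal solutions need not be reachable by your ``match one core atom per $\alpha$-atom, fill the rest with fresh constants'' guesses.

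The paper closes this gap with Lemma~\ref{lemma:universal-queries/basis}, whose content is precisely what you need and whose proof is the technical core of the theorem: for a packed core $T$, every atom of every instance in $\minrep_C(T)$ is isomorphic to an atom of some \emph{block-local} minimal representative in $\minrep_C(T,B)$, together with a surjective homomorphism from that representative onto the global minimal instance carrying the local atom to the global one. Packedness enters through a non-trivial fixed-point argument (iterating the block-wise minimizing maps $g_i$ and showing no cycle can form because a packed block mapping into another block would contradict $T$ being a core); it is not the per-trigger observation you sketch. You would also need the paper's compatibility/join machinery (Definitions~\ref{def:compatible} and~\ref{def:join}) to argue that consistent choices for shared $\tup{y}$-variables across the $k$ positive atoms can be glued, and the surjectivity of the homomorphisms from Lemma~\ref{lemma:universal-queries/basis} to pull an arbitrary witnessing set $\CT$ back to one of your polynomially many guesses. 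As written, your proof asserts the conclusion of the key lemma in an incorrect form and defers its proof to a ``careful block-by-block argument'' that is not supplied; this is a genuine gap, not a presentational one.
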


The remaining part of this section is devoted to the proof
of Theorem~\ref{thm:universal-queries/bool}.

\subsubsection{\texorpdfstring{\XGCWA}{GCWA*}-Answers and the Core}
\label{sec:XGCWA-answers/universal/core}

Let us first see how we can decide membership of tuples
in $\certXGCWA{M}{S}{q}$ using $\core(M,S)$.
Consider a schema mapping $M = \Des$, where $\Sigma$ is a set of packed st-tgds,
and let $q(\tup{x})$ be a universal query over $\tau$.
Given $\core(M,S)$ and an $\length{\tup{x}}$-tuple $\tup{t}$ of constants,
how can we decide whether $\tup{t} \in \certXGCWA{M}{S}{q}$?

First observe that if $\tup{t}$ is not a tuple
over $\const(\core(M,S)) \union \dom(q)$,
then by the definition of $\certXGCWA{M}{S}{q}$
we have $\tup{t} \notin \certXGCWA{M}{S}{q}$.
Therefore, in the following we assume that $\tup{t}$ is a tuple
over $\const(\core(M,S)) \union \dom(q)$.
In this case, we have $\tup{t} \notin \certXGCWA{M}{S}{q}$ if and only if
there is a \XGCWA-solution $\tilde{T}$ for $S$ under $M$
such that $\tilde{T} \models \lnot q(\tup{t})$.
By the definition of \XGCWA-solution
and the fact that $\Sigma$ consists of st-tgds,
the latter is the case precisely
if there is a nonempty finite set $\CT$ of ground minimal solutions
for $S$ under $M$
with $\bigunion \CT \models \lnot q(\tup{t})$.
Using the following lemma, we can reformulate the last condition
in terms of $\core(M,S)$.

Recall the definition of a valuation of an instance $T$,
and the definition of $\rep(T)$ from Section~\ref{sec:problems}.
Then:

\begin{lem}
  \label{lemma:minimal-from-core}
  Let $M = \Des$ be a schema mapping,
  where $\Sigma$ consists of st-tgds,
  and let $S$ be a source instance for $M$.
  Then the set of all ground minimal solutions for $S$ under $M$
  is precisely the set of all minimal instances
  in $\rep(\core(M,S))$.
\end{lem}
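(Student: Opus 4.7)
The plan is to prove both containments by exploiting two facts: (i) $\core(M,S)$ is itself a solution and is universal, both guaranteed since $\Sigma$ consists of st-tgds (Theorem~\ref{thm:core-algorithm} and the preceding discussion); and (ii) st-tgds are preserved when one applies a homomorphism that is the identity on the source side.

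As a preliminary observation, I would first verify that every $T \in \rep(\core(M,S))$ is a ground solution for $S$ under $M$. If $T = v(\core(M,S))$ for a valuation $v$, then $v$ is a homomorphism that acts as the identity on $\const(\core(M,S))$. Given any st-tgd $\tgd$ in $\Sigma$ and tuples $\tup{a},\tup{b}$ from $\dom(S)$ (all constants, since $S$ is ground) with $S \models \phi(\tup{a},\tup{b})$, a witness $\tup{u}$ for $\psi(\tup{a},\tup{u})$ in $\core(M,S)$ yields the witness $v(\tup{u})$ for $\psi(\tup{a},v(\tup{u}))$ in $T$, because $v(\tup{a}) = \tup{a}$. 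So $T$ is a solution, and it is ground by the definition of valuation.

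For the inclusion ``minimal ground solution $\Longrightarrow$ minimal element of $\rep(\core(M,S))$'', take a ground minimal solution $T$. Universality of $\core(M,S)$ provides a homomorphism $h\colon \core(M,S)\to T$; since $T$ is ground and $h$ is legal, $h$ is in fact a valuation of $\core(M,S)$, so $h(\core(M,S)) \in \rep(\core(M,S))$. By the preliminary observation, $h(\core(M,S))$ is itself a ground solution, and it is contained in $T$, so minimality of $T$ forces $h(\core(M,S)) = T$. Hence $T \in \rep(\core(M,S))$. Furthermore, any $T' \in \rep(\core(M,S))$ with $T' \subsetneq T$ would be a strictly smaller ground solution, contradicting the minimality of $T$, so $T$ is minimal within $\rep(\core(M,S))$ as well.

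The converse inclusion is essentially symmetric. Let $T$ be a minimal element of $\rep(\core(M,S))$; by the preliminary observation, $T$ is a ground solution. If some ground solution $T' \subsetneq T$ existed, universality would produce a homomorphism $h\colon \core(M,S) \to T'$, which again is a valuation (since $T'$ is ground), so $h(\core(M,S)) \in \rep(\core(M,S))$ and $h(\core(M,S)) \subseteq T' \subsetneq T$, contradicting the minimality of $T$ in $\rep(\core(M,S))$. The only real delicacy is the bookkeeping around the fact that a legal homomorphism from $\core(M,S)$ into a ground instance is automatically a valuation; once this is noted, the two directions reduce to single applications of universality.
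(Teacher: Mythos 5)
Your proof is correct and follows essentially the same route as the paper's: first show that every instance in $\rep(\core(M,S))$ is a ground solution, then map $\core(M,S)$ into a given ground minimal solution $T_0$ via a valuation whose image lies in $T_0$, and use minimality to upgrade the containment to equality. The only real difference is how the valuation is obtained -- you invoke universality of the core directly, observing that a legal homomorphism into a ground instance is automatically a valuation, whereas the paper factors through $\cansol(M,S)$ and the auxiliary fact that every ground minimal solution is a valuation-image of the canonical solution; your version is slightly more self-contained, and you also make explicit the two ``minimality transfers'' between the set of ground solutions and the set $\rep(\core(M,S))$ that the paper leaves implicit.
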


\begin{proof}
  Let $T \isdef \core(M,S)$.
  We first show that every instance in $\rep(T)$
  is a ground solution for $S$ under $M$.
  Let $\hat{T}$ be an instance in $\rep(T)$.
  Then there is a valuation $v$ of $T$ with $v(T) = \hat{T}$.
  This shows that $\hat{T}$ is ground.
  To see that $\hat{T}$ satisfies all st-tgds in $\Sigma$,
  let
  \(
    \theta \isdef
    \forall \tup{x} \forall \tup{y} (
      \phi(\tup{x},\tup{y}) \limplies \exists \tup{z} \psi(\tup{x},\tup{z})
    )
  \)
  be a st-tgd in $\Sigma$, and let $\tup{a},\tup{b}$ be tuples
  with $S \models \phi(\tup{a},\tup{b})$.
  Since $S \union T \models \theta$,
  there is a tuple $\tup{t}$ with $T \models \psi(\tup{a},\tup{t})$,
  and thus $\hat{T} \models \psi(\tup{a},v(\tup{t}))$.
  Altogether, $\hat{T}$ is a ground solution for $S$ under $M$.

  It remains to show that every ground minimal solution for $S$ under $M$
  is in $\rep(T)$.
  Let $T_0$ be a ground minimal solution for $S$ under $M$.
  It is not hard to verify that there is a valuation $v_0$
  of $T^* \isdef \cansol(M,S)$ with $v_0(T^*) = T_0$
  (see also \cite{Libkin:PODS06}).
  Since $T^*$ is a universal solution for $S$ under $M$,
  we have $T = \core(T^*)$,
  and thus $\iota(T) \subseteq T^*$ for some injective mapping
  $\iota\colon \dom(T) \to \dom(T)$ that is legal for $T$.
  Let $v \isdef \compose{\iota}{v_0}$.
  Then,
  \begin{align}
    \label{eq:minimal-from-core}
    v(T) \,=\, v_0(\iota(T)) \,\subseteq\, v_0(T^*) \,=\, T_0.
  \end{align}
  Note that $v(T) \in \rep(T)$.
  Therefore, as shown above, $v(T)$ is a solution for $S$ under $M$.
  Since $T_0$ is a minimal solution for $S$ under $M$,
  \eqref{eq:minimal-from-core} implies $v(T) = T_0$.
  Thus, $v$ is a valuation of $T$ with $v(T) = T_0$,
  which proves that $T_0 \in \rep(T)$.
\end{proof}

Given $\core(M,S)$ and $\tup{t}$,
it remains to decide whether there is a nonempty finite set $\CT$
of minimal instances in $\rep(\core(M,S))$
such that $\bigunion \CT \models \lnot q(\tup{t})$.
Note that, since $q$ is a universal query,
$\lnot q$ is logically equivalent to a query
of the form $\exists \tup{y}\, \phi(\tup{x},\tup{y})$.
Before we consider the general case
(where $\phi$ is an arbitrary quantifier-free query)
in Section~\ref{sec:XGCWA-answers/universal/proof},
the following section deals with the case
that $\tup{y}$ contains no variable and $\phi$ consists of a single atom
$R(\tup{u})$, where $\tup{u}$ is a tuple of constants.
In this case, the problem simplifies to:
Is there a minimal instance in $\rep(\core(M,S))$
that contains $R(\tup{u})$?

\subsubsection{Finding Atoms in Minimal Instances}
\label{sec:XGCWA-answers/universal/atoms}

Let $M = \Des$ be a schema mapping, where $\Sigma$ consists of packed st-tgds,
let $T \isdef \core(M,S)$ for some source instance $S$ for $M$,
and let $R(\tup{t})$ be an atom over $\tau$.
In the following,
we consider the problem of testing whether there is a minimal instance
$T_0$ in $\rep(T)$ with $R(\tup{t}) \in T_0$.
We will often state results in a more general form than necessary,
so that we can apply those results later in the more general setting
considered in Section~\ref{sec:XGCWA-answers/universal/proof}.

First note that there may be infinitely many minimal instances in $\rep(T)$,
so that it is impossible to check out all these instances.
However, it suffices to consider representatives of the minimal instances
in $\rep(T)$,
where constants that do not occur in $T$ or $R(\tup{t})$
are represented by nulls in $T$.
Denoting by $C$ the set of all constants in $\tup{t}$,
the set $\minrep_C(T)$ of all such representatives is formally defined
as follows:

\begin{defi}[$\val_C(T)$, $\minrep_C(T)$]
  \label{def:minrep}
  Let $T$ be an instance, and let $C \subseteq \Const$.
  \begin{enumerate}[(1)]
  \item
    We write $\val_C(T)$ for the set of all mappings
    $f\colon \dom(T) \to \dom(T) \union C$ that are legal for $T$.
  \item
    Let $\minrep_C(T)$ be the set of all instances $\hat{T}$
    for which there is some $f \in \val_C(T)$ with $\hat{T} = f(T)$,
    and there is no $f' \in \val_C(T)$ with $f'(T) \subsetneq \hat{T}$.
  \end{enumerate}
\end{defi}

Throughout this section,
$C$ will usually be the set of constants in $\tup{t}$.

\begin{prop}
  \label{prop:minrep-props}
  Let $T$ be an instance, and let $C \subseteq \Const$.
  \begin{enumerate}[\em(1)]
  \item\label{prop:minrep-props/capture}
    For each $T_0 \in \rep(T)$, the following are equivalent:
    \begin{enumerate}
    \item\label{prop:minrep-props/capture/1}
      $T_0$ is a minimal instance in $\rep(T)$.
    \item\label{prop:minrep-props/capture/3}
      There is an instance $T_0' \in \minrep_C(T)$
      and an injective valuation $v$ of $T'_0$
      such that $v(T_0') = T_0$,
      and $v^{-1}(c) = c$ for all $c \in \dom(T_0) \intersection C$.
    \end{enumerate}
  \item\label{prop:minrep-props/core-is-minimal}
    If $T$ is a core, then $T \in \minrep_C(T)$.
  \item\label{prop:minrep-props/core}
    Each instance in $\minrep_C(T)$ is a core.
  \end{enumerate}
\end{prop}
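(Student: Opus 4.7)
The plan is to dispose of parts~(2) and~(3) first, since both follow directly from the minimality condition, and then concentrate on part~(1), whose forward direction is the only piece requiring a nontrivial construction. For~(2), note that $T = \id(T)$ with $\id \in \val_C(T)$, so $T$ lies in the candidate set; any $f \in \val_C(T)$ with $f(T) \subsetneq T$ would be a homomorphism from $T$ onto a proper subinstance of itself, contradicting that $T$ is a core. For~(3), if $T_0' = f(T) \in \minrep_C(T)$ admitted a homomorphism $h\colon T_0' \to T_0'$ with $h(T_0') \subsetneq T_0'$, then $h \circ f \in \val_C(T)$ and $(h \circ f)(T) = h(T_0') \subsetneq T_0'$, contradicting the minimality of $T_0'$ in $\minrep_C(T)$.

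For the (b)$\Rightarrow$(a) direction of part~(1), I would argue by contrapositive. Suppose $T_0 = v(T_0')$ is not minimal in $\rep(T)$ and pick a valuation $w$ of $T$ with $T_0'' \isdef w(T) \subsetneq T_0$. Every element of $\dom(T_0'')$ lies in $\dom(T_0) = v(\dom(T_0'))$, so the composition $g \isdef v^{-1} \circ w$ is well-defined on $\dom(T)$ and maps into $\dom(T_0') \subseteq \dom(T) \cup C$. Moreover $g(c) = c$ for $c \in \const(T)$, since $v$ fixes constants and is injective. Hence $g \in \val_C(T)$, and injectivity of $v$ gives $g(T) = v^{-1}(T_0'') \subsetneq v^{-1}(T_0) = T_0'$, contradicting $T_0' \in \minrep_C(T)$.

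The main obstacle is the (a)$\Rightarrow$(b) direction. Given a minimal $T_0 \in \rep(T)$ with a witnessing valuation $w$, my plan is to partition $\dom(T)$ by the equivalence relation $x \sim y \iff w(x) = w(y)$ and pick a canonical representative $r([x])$ in each class, preferring the unique element of $[x] \cap C$ if it exists, then the unique element of $[x] \cap \const(T)$, and otherwise any null of $T$ in $[x]$. Setting $f(x) \isdef r([x])$ yields $f \in \val_C(T)$; I then take $T_0' \isdef f(T)$ and define $v$ on $\dom(T_0') = f(\dom(T))$ by $v(f(x)) \isdef w(x)$. This $v$ is well-defined because $w$ is constant on $\sim$-classes, injective because $f$ takes distinct values on distinct classes by construction, and legal because constants are always selected as their own representatives. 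The crucial step is verifying $T_0' \in \minrep_C(T)$: any $f' \in \val_C(T)$ with $f'(T) \subsetneq T_0'$ composes with $v$ to yield a valuation $v \circ f'$ of $T$ whose image satisfies $(v \circ f')(T) \subsetneq v(T_0') = T_0$ by injectivity of $v$, contradicting the minimality of $T_0$ in $\rep(T)$. The fixing condition $v^{-1}(c) = c$ for $c \in \dom(T_0) \cap C$ then follows immediately from the preference rule together with injectivity of $v$, so the constructed $T_0'$ and $v$ witness~(b).
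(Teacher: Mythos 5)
Parts (2), (3), and the (b)$\Rightarrow$(a) direction of part (1) are correct and essentially identical to the paper's arguments, as is the overall shape of your (a)$\Rightarrow$(b) construction (your class-representative map $f$ plays the role of the paper's composition $\bar{v}\circ v_0$, and the verification that $T_0'\in\minrep_C(T)$ by composing a hypothetical smaller $f'$ with the injective $v$ is the paper's argument verbatim).

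There is, however, a genuine gap in (a)$\Rightarrow$(b): your representative $r([x])$ is always chosen from \emph{inside} the class $[x]\subseteq\dom(T)$, but the value $w(x)$ may be a constant in $C\setminus\dom(T)$. In that case $[x]\intersection C=\emptyset$ and $[x]\intersection\const(T)=\emptyset$ (no constant of $T$ can be $\sim$-equivalent to $x$, since constants are fixed by $w$), so your rule assigns a null of $T$ as the representative, and the final clause of (b) fails: for $T=\set{E(\bot)}$ with $E$ unary, $C=\set{c}$ with $c\notin\dom(T)$, and the minimal instance $T_0=\set{E(c)}$, your construction yields $f=\id$, $T_0'=T$, $v(\bot)=c$, hence $v^{-1}(c)=\bot\neq c$. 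The whole reason $\val_C(T)$ is allowed to map into $\dom(T)\union C$ rather than just $\dom(T)$ is to let such a class be represented by the constant $c$ itself. The repair is small and turns your proof into the paper's: whenever $w(x)\in\const(T)\union C$, take the representative of $[x]$ to be $w(x)$ (this is still injective across classes, since classes are determined by their $w$-value, and it keeps $f$ legal for $T$); only when $w(x)\notin\const(T)\union C$ do you pick a null of $T$ from $[x]$. With that change, $c\in\dom(T_0)\intersection C$ forces $c\in\dom(T_0')$ with $v(c)=c$, and the rest of your argument goes through unchanged.
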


\begin{proof}
  \ad{\ref{prop:minrep-props/capture}}
  We first prove that \ref{prop:minrep-props/capture/1} implies
  \ref{prop:minrep-props/capture/3}.
  Suppose that $T_0$ is a minimal instance in $\rep(T)$,
  and let $v_0$ be a valuation of $T$ with $v_0(T) = T_0$.
  Furthermore, let $\bar{v}\colon \dom(T_0) \to \dom(T) \union C$
  be an injective mapping with
  \begin{align}
    \label{eq:bar-v}
    \bar{v}(c) = c
    \quad
    \text{for each $c \in \dom(T_0) \intersection (\const(T) \union C)$},
  \end{align}
  and
  \begin{align}
    \label{eq:bar-v-2}
    \bar{v}(c) \in \nulls(T)
    \quad
    \text{for each $c \in \dom(T_0) \setminus (\const(T) \union C)$}.
  \end{align}
  Then $f \isdef \compose{v_0}{\bar{v}} \in \val_C(T)$,
  and
  \begin{align}
    \label{eq:T_0'}
    T_0' \isdef f(T) = \bar{v}(v_0(T)) = \bar{v}(T_0).
  \end{align}
  Let $v$ be the inverse of $\bar{v}$ on $\dom(T_0')$.
  Then, by \eqref{eq:bar-v}--\eqref{eq:T_0'},
  $v$ is an injective valuation of $T_0'$ such that $v(T_0') = T_0$,
  and $v^{-1}(c) = \bar{v}(c) = c$ for every $c \in \dom(T_0) \intersection C$.

  It remains to show that $T_0' \in \minrep_C(T)$.
  By \eqref{eq:T_0'}, we have $T'_0 = f(T)$, where $f \in \val_C(T)$.
  Suppose, for a contradiction, that there is an $f' \in \val_C(T)$
  with $f'(T) \subsetneq T_0'$.
  Since $v$ is injective and $v(T_0') = T_0$, we then have
  \(
    v(f'(T)) \subsetneq v(T_0') = T_0,
  \)
  which is impossible, since $v(f'(T)) \in \rep(T)$,
  and $T_0$ is a minimal instance in $\rep(T)$.

  We next prove that \ref{prop:minrep-props/capture/3} implies
  \ref{prop:minrep-props/capture/1}.
  Suppose that $T'_0 \in \minrep_C(T)$
  and that $v$ is an injective valuation of $T'_0$ with $v(T'_0) = T_0$
  (we will not need the restriction that $v^{-1}(c) = c$
   for all $c \in \dom(T_0) \intersection C$).
  We show that $T_0$ is a minimal instance in $\rep(T)$.
  To this end, let $f \in \val_C(T)$ be such that $f(T) = T'_0$.
  Then $v_0 \isdef \compose{f}{v}$ is a valuation of $T$,
  so that
  \[
    T_0 = v(T'_0) = v(f(T)) = v_0(T) \in \rep(T).
  \]
  It remains, therefore, to show that there is no $\tilde{T}_0 \in \rep(T)$
  with $\tilde{T}_0 \subsetneq T_0$.

  Suppose, to the contrary, that there is such a $\tilde{T}_0$.
  Let $\tilde{v}_0$ be a valuation of $T$ with $\tilde{v}_0(T) = \tilde{T}_0$,
  and let $\tilde{f} \isdef \compose{\tilde{v}_0}{v^{-1}}$,
  where $v^{-1}$ is the inverse of $v$ on $\dom(T_0)$.
  Since $v^{-1}$ is an injective mapping on $\dom(T_0)$, we have
  \(
    \tilde{f}(T)
    =
    v^{-1}(\tilde{v}_0(T))
    =
    v^{-1}(\tilde{T}_0)
    \subsetneq
    v^{-1}(T_0)
    =
    T'_0,
  \)
  which is impossible,
  since $\tilde{f} \in \val_C(T)$ and $T'_0 \in \minrep_C(T)$.

  \ad{\ref{prop:minrep-props/core-is-minimal}}
  Clearly, the identity $f$ on $\dom(T)$ belongs to $\val_C(T)$
  and satisfies $f(T) = T$.
  Let $f' \in \val_C(T)$ be such that $f'(T) \subseteq T$.
  Then $f'$ is a homomorphism from $T$ to $T$, and since $T$ is a core,
  we cannot have $f'(T) \subsetneq T$.

  \ad{\ref{prop:minrep-props/core}}
  Let $f \in \val_C(T)$ be such that $T_0 \isdef f(T) \in \minrep_C(T)$.
  For a contradiction, suppose that $T_0$ is not a core.
  Let $h$ be a homomorphism from $T_0$ to $T_0$ such that
  $h(T_0)$ is a core of $T_0$.
  Since $T_0$ is not a core, we have $h(T_0) \subsetneq T_0$.
  Thus, for $f' \isdef \compose{f}{h}$,
  we have $f'(T) = h(f(T)) = h(T_0) \subsetneq T_0$,
  which contradicts $T_0 \in \minrep_C(T)$.
  Hence, $T_0$ is a core.
\end{proof}

The converse of Proposition~\ref{prop:minrep-props}%
(\ref{prop:minrep-props/core})
is not true, as shown by the following example:

\begin{exa}
  Let $T$ be an instance over $\sigma = \set{E,P}$,
  where $E^T = \set{(a,\bot),(\bot,\bot')}$ and $P^T = \set{b}$.
  The mapping $f \in \val_C(T)$ with $f(\bot) = a$ and $f(\bot') = b$
  then yields the instance $f(T)$,
  where $E^{f(T)} = \set{(a,a),(a,b)}$ and $P^{f(T)} = \set{b}$.
  Hence, $f(T)$ is a core.
  However, $f(T)$ does not belong to $\minrep_C(T)$,
  since the mapping $f' \in \val_C(T)$ with $f'(\bot) = f'(\bot') = a$
  yields the instance $f'(T)$ with
  $E^{f'(T)} = \set{(a,a)}$ and $P^{f'(T)} = \set{b}$,
  which is a proper subinstance of $f(T)$.
\end{exa}

Note that the size of $\minrep_C(T)$ can be exponential in the size of $T$,
so that it is not possible to enumerate all instances in $\minrep_C(T)$
in polynomial time, given $T$ and $R(\tup{t})$ as input.
To tackle this problem, we take advantage of a nice structural property
of $T$ that can be described in terms of \emph{atom blocks}:

\begin{defi}[atom block \cite{GN:JACM08}]
  \label{def:atom-block}
  Let $T$ be an instance.
  \begin{iteMize}{$\bullet$}
  \item
    The \emph{Gaifman graph of the atoms of $T$}
    is the undirected graph whose vertices are the atoms of $T$,
    and which has an edge between two atoms $A,A' \in T$
    if and only if $A \neq A'$,
    and there is a null that occurs both in $A$ and $A'$.
  \item
    An \emph{atom block} of $T$ is the set of atoms in a connected component
    of the Gaifman graph of the atoms of $T$.
  \end{iteMize}
\end{defi}

\noindent Note that each atom block of $T$ is a subinstance of $T$.
Furthermore, for each atom block $B$ of $T$ that contains at least one null,
$\nulls(B)$ is a \emph{block} as considered in \cite{FKP:TODS30-1}.
The crucial property of $T$ is:

\begin{lem}[\cite{FKP:TODS30-1}]
  \label{lemma:bounded-blocksize}
  For every schema mapping $M = \Des$, where $\Sigma$ consists of st-tgds,
  there is a positive integer $\bs$ such that
  if $S$ is a source instance for $M$,
  and $B$ is an atom block of $\core(M,S)$,
  then $\card{\nulls(B)} \leq \bs$.
\end{lem}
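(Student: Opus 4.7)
The plan is to read off the bound $\bs$ directly from the syntax of $\Sigma$ by analyzing $\cansol(M,S)$ and transferring the analysis to $\core(M,S)$. Concretely, I would set $\bs$ to be the maximum, over all st-tgds in $\Sigma$, of the number of existentially quantified variables (or $\bs \isdef 1$ if $\Sigma$ is empty). This is a finite positive integer because $\Sigma$ is finite and each st-tgd has finitely many variables.

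First I would establish the bound for atom blocks of $T^* \isdef \cansol(M,S)$. By the construction recalled in Section~\ref{sec:basics/data-exchange}, each triple $j = (\theta,\tup{a},\tup{b}) \in \CJ$, with $\theta$ of the form $\forall \tup{x} \forall \tup{y} (\phi(\tup{x},\tup{y}) \limplies \exists \tup{z}\, \psi(\tup{x},\tup{z}))$, contributes exactly the atoms in $\psi(\tup{a},\tup{\bot}_j)$, where $\tup{\bot}_j$ is an $\length{\tup{z}}$-tuple of pairwise distinct nulls; crucially, the nulls occurring in $\tup{\bot}_j$ are disjoint from those occurring in $\tup{\bot}_{j'}$ whenever $j \neq j'$. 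Hence every null of $T^*$ belongs to $\tup{\bot}_j$ for a unique $j$, and every atom of $T^*$ containing such a null is one of the atoms of $\psi(\tup{a},\tup{\bot}_j)$. It follows that any atom block $B^*$ of $T^*$ containing a null is contained in the set of atoms of $\psi(\tup{a},\tup{\bot}_j)$ for a single triple $j$, so $\card{\nulls(B^*)} \leq \length{\tup{z}} \leq \bs$.

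Next I would transfer the bound to $T \isdef \core(M,S)$. Since $\Sigma$ consists of st-tgds, $T^*$ is a universal solution for $S$ under $M$, and by the standard characterization of cores as induced subinstances, $T$ can be chosen (up to isomorphism) as a subinstance of $T^*$. The key observation is then the following: if two atoms $A,A'$ of $T$ share a null $\bot$, they share exactly the same null $\bot$ in $T^*$, so they lie in a common atom block of $T^*$. Propagating this observation along paths in the Gaifman graph of the atoms of $T$, every atom block $B$ of $T$ is contained in a single atom block $B^*$ of $T^*$. Consequently $\nulls(B) \subseteq \nulls(B^*)$, which has cardinality at most $\bs$.

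The main subtlety to watch is the last step: one needs that distinct nulls of $T$ remain distinct as elements of $T^*$, so that the inclusion of blocks really translates into an inclusion of null sets. This is exactly what is bought by realizing $T$ as a genuine subinstance of $T^*$ rather than merely as a homomorphic image (choosing, for instance, the image of a retraction of $T^*$ onto one of its cores). Once this is in place, the resulting bound $\bs$ depends only on $\Sigma$ and applies uniformly to every source instance $S$.
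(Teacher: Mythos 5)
Your proof is correct. The paper itself gives no proof of this lemma --- it is imported verbatim from \cite{FKP:TODS30-1} --- and your argument is essentially the standard one behind that citation: bound the nulls per atom block in $\cansol(M,S)$ by the disjointness of the null tuples $\tup{\bot}_j$ across justifications, then observe that the core is (up to isomorphism) a subinstance of $\cansol(M,S)$, so its Gaifman graph on atoms is an induced subgraph and each of its atom blocks sits inside a block of the canonical solution. The only cosmetic nit is that your $\bs$ should be capped below by $1$ not just when $\Sigma$ is empty but also when every st-tgd in $\Sigma$ is full, so that $\bs$ is genuinely a positive integer as the statement requires.
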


Let us come back to our initial problem --
to decide whether there is a minimal instance in $\rep(\core(M,S))$
that contains the ground atom $R(\tup{t})$.
Let $T \isdef \core(M,S)$, and let $C$ be the set of constants in $\tup{t}$.
By Proposition~\ref{prop:minrep-props}(\ref{prop:minrep-props/capture})
it is enough to decide whether there is a $T_0 \in \minrep_C(T)$
with $R(\tup{t}) \in T_0$.
The following algorithm seems to accomplish this task:
\begin{enumerate}[(1)]
\item
  Compute the atom blocks of $T$.
\item
  Consider the atom blocks $B$ of $T$ in turn, and
\item\label{step:XGCWA-answers/universal/atoms/algo1/if}
  if there is an instance $B_0 \in \minrep_C(B)$ with $R(\tup{t}) \in B_0$,
  accept the input; \\
  otherwise reject it.
\end{enumerate}
Since, by Lemma~\ref{lemma:bounded-blocksize},
there is a constant $\bs$ with $\card{\nulls(B)} \leq \bs$
for each atom block $B$ of $T$,
we have to consider at most $\card{\val_C(B)} = \card{\dom(B) \union C}^\bs$
mappings in step~\ref{step:XGCWA-answers/universal/atoms/algo1/if}
to find all the instances $B_0 \in \minrep_C(B)$.
Thus, the whole algorithm runs in polynomial time.

Example~\ref{ex:naive-algorithm-fails} below shows that
this algorithm is incorrect.
In particular, the example exhibits an instance $T$ that is a core,
and an atom block $B$ of $T$
such that there is an atom $A$ of some minimal instance $B_0 \in \rep(B)$
that is not an atom of any minimal instance in $\rep(T)$.
Letting $C$ be the set of all constants in $A$,
this implies that there is an atom of some instance $B_0 \in \minrep_C(B)$
that is not an atom of any instance in $\minrep_C(T)$.

\begin{exa}
  \label{ex:naive-algorithm-fails}
  Let $T$ be the instance over $\set{E}$ with
  \[
    E^T = \set{(a,b),(a,\bot),(b,\bot),(b,\bot'),(b,\bot''),(\bot',\bot'')},
  \]
  and consider the atom block
  \[
    B = \set{E(b,\bot'),E(b,\bot''),E(\bot',\bot'')}
  \]
  of $T$;
  see Figure~\ref{fig:naive-algorithm-fails} for a graph representation
  of $T$ and $B$.
  \begin{figure}[ht]
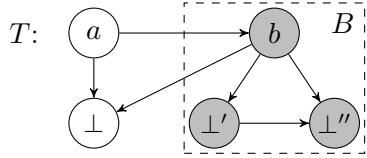

    \centering
    \includeTikZ{img_naive-algo-fail}
    \caption{The instance $T$.
      The subinstance induced by the gray vertices is $B$.}
    \label{fig:naive-algorithm-fails}
  \end{figure}
  Note that $T$ is a core.
  It is not hard to see that every minimal instance in $\rep(B)$
  has one of the following forms:
  \begin{enumerate}[(1)]
  \item
    $\set{E(b,b)}$,
  \item
    $\set{E(b,c),E(c,c)}$ with $c \in \Const \setminus \set{b}$, or
  \item\label{third-form}
    $\set{E(b,c),E(b,c'),E(c,c')}$ with $c,c' \in \Const \setminus \set{b}$
    and $c \neq c'$.
  \end{enumerate}
  Thus, there is a minimal instance in $\rep(B)$
  of the third form that contains $E(c,a)$
  for some $c \in \Const \setminus \set{b}$
  (replace $c'$ in \ref{third-form} with $a$).

  However, there is no minimal instance in $\rep(T)$
  that contains $E(c,a)$:
  Such an instance must be obtained from $T$ by a valuation $v$ of $T$
  with $v(\bot') = c$ and $v(\bot'') = a$,
  since $E(\bot',\bot'')$ is the only atom in $T$ that could be the preimage
  of $E(c,a)$ -- all other atoms either have $a$ or $b$ as their first value.
  However,
  let $v$ be a valuation of $T$ with $v(\bot') = c$ and $v(\bot'') = a$,
  and let $f\colon \dom(T) \to \dom(T)$ be such that $f(a) = a$, $f(b) = b$,
  $f(\bot') = a$ and $f(\bot) = f(\bot'') = \bot$.
  Then, for $v' \isdef \compose{f}{v}$, we have
  \begin{align*}
    v'(T)
    & = \set{E(a,b),E(b,a),E(a,v(\bot)),E(b,v(\bot))} \\
    & \subsetneq \set{E(a,b),E(b,a),E(a,v(\bot)),E(b,v(\bot)),E(b,c),E(c,a)}
      = v(T).
  \end{align*}
  Thus, $v(T)$ is not minimal in $\rep(T)$.
\end{exa}

It is nevertheless possible to solve our initial problem
using the following approach.
Let $T = \core(M,S)$, let $R(\tup{t})$ be a ground atom,
and let $C$ be the set of all constants in $\tup{t}$.
Our goal is to decide whether there is an instance $T_0 \in \minrep_C(T)$
with $R(\tup{t}) \in T_0$.
To this end,
we identify a set $\CS \subseteq \min_C(T)$
of size polynomial in the size of $T$
such that $R(\tup{t})$ occurs in an instance in $\minrep_C(T)$
if and only if $R(\tup{t})$ occurs in an instance in $\CS$.
Furthermore, we ensure that $\CS$ can be computed in polynomial time
from $T$ and $C$.
To define $\CS$, we need a few definitions.

In the following, we fix, for each instance $I$, a core $\core(I) \subseteq I$,
namely the output of the algorithm provided by the following lemma:

\begin{lem}[implicit in \cite{FKP:TODS30-1}]
  \label{lem:block-core}
  There is an algorithm that takes an instance $I$ as input,
  and outputs a core $J \subseteq I$ of $I$ in time $O(n^{b+3})$,
  where $n$ is the size of $I$ and $b$ is the maximum number of nulls
  in an atom block of $I$.
\end{lem}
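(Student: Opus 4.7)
The plan is to iteratively shrink $I$ via endomorphisms, exploiting the fact that atom blocks are null-disjoint. Recall that an instance $J$ is a core iff every homomorphism $J \to J$ has image equal to $J$. Thus I would repeat: search for an endomorphism $h \colon I \to I$ with $h(I) \subsetneq I$, and if one is found, replace $I$ by $h(I)$. The resulting sequence of instances forms a descending chain of subinstances of the original $I$; upon termination, the final instance is a core, and composing the endomorphisms applied along the way produces a homomorphism from the original $I$ into it. Hence this final instance is a core $J \subseteq I$ of $I$, as required.

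The key observation driving efficiency is that, since any null of $I$ occurs in only one atom block, an endomorphism $h \colon I \to I$ is fully determined by specifying, for each block $B$, a ``legal'' mapping $f_B \colon \nulls(B) \to \dom(I)$, with $h$ then being $f_B$ on each $\nulls(B)$ and the identity on constants. The image is $h(I) = \bigunion_B f_B(B)$, and $h$ is an endomorphism iff every $f_B(B) \subseteq I$. Consequently $h(I) \subsetneq I$ iff there is some block $B$ with $B \not\subseteq f_B(B)$; by leaving $f_{B'}$ the identity for every $B' \neq B$, any such shrinkage can be localized to a single block. The algorithm therefore iterates: compute the atom blocks of the current $I$; for each of the at most $n$ blocks $B$, enumerate the at most $n^b$ mappings $f \colon \nulls(B) \to \dom(I)$; for each such $f$ test (in $O(n)$ time, using a hash table for $I$) whether $f(B) \subseteq I$ and $B \not\subseteq f(B)$. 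If a suitable pair $(B,f)$ is found, apply the corresponding endomorphism and iterate; otherwise output $I$. Each iteration strictly decreases the atom count, so at most $n$ iterations occur, and each costs $O(n \cdot n^b \cdot n) = O(n^{b+2})$, for a total running time of $O(n^{b+3})$.

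The one subtlety to verify is that after each replacement the block bound $b$ is preserved, so that the per-iteration cost estimate remains valid across iterations. This follows because the Gaifman graph of a subinstance is an induced subgraph of the Gaifman graph of the larger instance, and hence each block of the smaller instance is contained in some block of the larger one and, in particular, still contains at most $b$ nulls. The principal obstacle is the block-localization argument in the second paragraph: one must confirm that by inspecting only block-local modifications one does not miss any globally shrinking endomorphism. This is precisely where null-disjointness of blocks is essential, since it lets the components $f_B$ be chosen independently, so that every globally shrinking endomorphism already witnesses a block-local shrinkage.
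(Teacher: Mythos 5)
Your proposal is correct and is essentially the paper's own argument: the paper simply invokes the ``blocks algorithm'' of \cite{FKP:TODS30-1} (minus its first step) and defers correctness and the $O(n^{b+3})$ bound to the proof of Theorem~5.9 there, whereas you spell out the same block-local search for a shrinking endomorphism, the localization argument via null-disjointness of blocks, and the termination/timing analysis explicitly. The only cosmetic difference is that the paper fixes the atom blocks of the original instance once at the outset while you recompute them after each contraction; both variants work, since every block of a subinstance sits inside a block of the original and thus still has at most $b$ nulls.
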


\begin{proof}
  Just omit the first step of the \emph{blocks algorithm}
  from \cite{FKP:TODS30-1}.
  That is, given an instance $I$, proceed as follows:
  \begin{enumerate}[(1)]
  \item
    Compute a list $B_1,\dotsc,B_m$ of all atom blocks of $I$,
    and initialize $J$ to be $I$.
  \item\label{step:find-h}
    Check whether there is a homomorphism $h$ from $J$ to $J$
    such that $h$ is not injective,
    and there is some $i \in \set{1,\dotsc,m}$ such that $h(u) = u$
    for each $u \in \dom(J) \setminus \nulls(B_i)$.
  \item
    If such a $h$ exists, replace $J$ by $h(J)$,
    and go to step~\ref{step:find-h}.
  \item
    Output $J$.
  \end{enumerate}
  Now the lemma follows from the proof of \cite[Theorem~5.9]{FKP:TODS30-1}.
\end{proof}

Given $T = \core(M,S)$ and $C$ as above,
we define the set $\CS$ to be the union of the following sets $\minrep_C(T,B)$
over all atom blocks $B$ of $T$.

\begin{defi}[$\minval_C(T,B)$, $\minrep_C(T,B)$]
  \label{def:minval_B}
  Let $T$ be an instance, let $B$ be an atom block of $T$,
  let $\compl{B} \isdef T \setminus B$,
  and let $C \subseteq \Const$.
  \begin{enumerate}[(1)]
  \item
    Let $\val_C(T,B)$ be the set of all mappings $f \in \val_C(T)$
    such that
    \begin{itemize}
    \item
      $f(\bot) = \bot$ for all $\bot \in \nulls(\compl{B})$, and
    \item
      all nulls that occur in $f(B) \setminus \compl{B}$ belong to $\nulls(B)$.
    \end{itemize}
  \item
    Let $\minval_C(T,B)$ be the set of all $f \in \val_C(T,B)$
    such that there is no $f' \in \val_C(T,B)$ with $f'(T) \subsetneq f(T)$.
  \item
    Let $\minrep_C(T,B) \isdef \set{\core(f(T)) \mid f \in \minval_C(T,B)}$.
  \end{enumerate}
\end{defi}

\noindent Using Lemma~\ref{lem:block-core}, we obtain:

\begin{prop}
  \label{prop:minrep-computation}
  For each positive integer $\bs$,
  there is a polynomial time algorithm that,
  given an instance $T$ such that the number of nulls in each atom block
  of $T$ is at most $\bs$,
  and a set $C \subseteq \Dom$,
  outputs a list of all instances that occur in $\minrep_C(T,B)$
  for some atom block $B$ of $T$.
\end{prop}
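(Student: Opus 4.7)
The plan is, for each atom block $B$ of $T$, to enumerate $\val_C(T,B)$ exhaustively, filter it down to $\minval_C(T,B)$ by pairwise comparison, and then compute $\core(f(T))$ for each surviving $f$ via Lemma~\ref{lem:block-core}. The algorithm first computes the atom blocks $B_1,\dotsc,B_m$ of $T$ in polynomial time (as connected components of the Gaifman graph of the atoms of $T$); it then carries out the above three steps separately for each block and returns the union of the resulting lists.

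The key observation enabling the enumeration is that every $f \in \val_C(T,B)$ is completely determined by its restriction to $\nulls(B)$, because $f$ fixes every constant as well as every null in $\nulls(\compl{B})$. Since $\card{\nulls(B)} \leq \bs$ by hypothesis and each null in $\nulls(B)$ is sent to an element of $\dom(T) \union C$, there are at most $(\card{\dom(T)} + \card{C})^{\bs}$ such restrictions, which is polynomial in the input size once $\bs$ is fixed. For each candidate, verifying the second clause of Definition~\ref{def:minval_B}---that every null occurring in $f(B) \setminus \compl{B}$ belongs to $\nulls(B)$---is straightforward, and the final filtering step to $\minval_C(T,B)$ is a polynomial-time pairwise comparison checking $f'(T) \subsetneq f(T)$.

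The main obstacle is controlling the cost of $\core(f(T))$ in the last step. Lemma~\ref{lem:block-core} runs in time $O(n^{b'+3})$, where $b'$ bounds the number of nulls in any atom block of $f(T)$, so one needs $b' \leq \bs$. This is exactly what the second clause of Definition~\ref{def:minval_B} is tailored to guarantee. Since $f$ fixes $\nulls(\compl{B})$ and all constants, we have $f(T) = f(B) \union \compl{B}$; its atoms split into those of $\compl{B}$, whose nulls lie in $\nulls(\compl{B})$ (each original block of $\compl{B}$ having at most $\bs$ nulls), and those of $f(B) \setminus \compl{B}$, whose nulls all lie in $\nulls(B)$ by the second clause (and hence are at most $\bs$ in total). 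Because $\nulls(B)$ and $\nulls(\compl{B})$ are disjoint, no atom of one kind can share a null with an atom of the other kind, so no atom block of $f(T)$ can span both kinds. Hence every atom block of $f(T)$ is contained either in a single atom block of $\compl{B}$ or entirely inside $f(B) \setminus \compl{B}$, and in either case contains at most $\bs$ nulls. Summing the resulting polynomial cost over all blocks $B_i$ and all candidate mappings yields the claimed bound.
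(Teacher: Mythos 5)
Your proof is correct and follows essentially the same route as the paper's: enumerate the candidate mappings block by block (exploiting that each $f \in \val_C(T,B)$ is determined by its restriction to the at most $\bs$ nulls of $B$), filter down to $\minval_C(T,B)$, and apply Lemma~\ref{lem:block-core} to each $f(T)$. You in fact supply more detail than the paper does on the one point it leaves as a parenthetical remark---that every atom block of $f(T)$ still has at most $\bs$ nulls---and your argument for that (disjointness of $\nulls(B)$ and $\nulls(\compl{B})$ prevents any block of $f(T)$ from mixing the two kinds of atoms) is sound.
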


\begin{proof}
  The algorithm is as follows:
  Given $T$ and $C$, first compute a list $f_1,\dotsc,f_m$
  of all mappings $f$ such that there is an atom block $B$ of $T$
  with $f \in \minval_C(T,B)$.
  This can be done in time polynomial in the size of $T$.
  Then, compute and output $\core(f_i(T))$
  for each $i \in \set{1,\dotsc,m}$.
  By Lemma~\ref{lem:block-core}, this can be done in time $O(n^{\bs+3})$,
  where $n$ is the size of $T$
  (note that the number of nulls in each atom block of $f_i(T)$
   is bounded by $\bs$).
\end{proof}

The following Lemma~\ref{lemma:minval} tells us that the instances
in $\minrep_C(T,B)$ indeed belong to $\minrep_C(T)$.
Before stating the lemma, let us introduce retractions.
Given an instance $I$, a \emph{retraction} of $I$ is a homomorphism $h$
from $I$ to $I$ such that $h(u) = u$ for all elements $u$ in the range of $h$.
In particular, for all atoms $A \in h(I)$, we have $A \in I$ and $h(A) = A$.
It is known that a core of $I$ is an instance $J$ for which there is
a retraction $h$ of $I$ with $h(I) = J$,
and there is no retraction of $J$ to a proper subinstance of $J$
(\cite{HN:DM109}).
A \emph{retraction of $I$ over a set $X \subseteq \Dom$}
is a retraction $h$ of $I$ such that $h(u) = u$
for each $u \in X \intersection \dom(I)$.

\begin{lem}
  \label{lemma:minval}
  Let $T$ be an instance, let $B$ be an atom block of $T$,
  let $\compl{B} \isdef T \setminus B$, and let $C \subseteq \Const$.
  Then for each $f \in \minval_C(T,B)$,
  there is a retraction $h$ of $\hat{T} \isdef f(T)$
  over the set of the nulls of $f(B) \setminus \compl{B}$ such that
  \begin{enumerate}[\em(1)]
  \item
    $h(\hat{T})$ is a core of $\hat{T}$, and
  \item
    $h(\hat{T}) \in \minrep_C(T)$.
  \end{enumerate}
  In particular, $\minrep_C(T,B) \subseteq \minrep_C(T)$.
\end{lem}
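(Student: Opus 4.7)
The plan is to build $h$ by steering the core-finding algorithm of Lemma~\ref{lem:block-core} applied to $\hat{T}$, exploiting the minimality of $f$ in $\val_C(T,B)$ to force all of the algorithm's retractions to fix $X \isdef \nulls(f(B) \setminus \compl{B})$ pointwise, and then to verify that the resulting core lies in $\minrep_C(T)$. The key structural observation is that $\hat{T} = \compl{B} \union f(B)$, and the atom blocks of $\hat{T}$ split cleanly into those contained in $\compl{B}$ and those contained in $f(B) \setminus \compl{B}$. Indeed, by the definition of $\val_C(T,B)$, every null occurring in $f(B) \setminus \compl{B}$ lies in $\nulls(B)$, which is disjoint from $\nulls(\compl{B})$ because $B$ is an atom block of $T$. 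In particular $X$ is disjoint from $\nulls(\compl{B})$, and any atom of $\hat{T}$ whose nulls meet $\nulls(\compl{B})$ must itself lie in $\compl{B}$.

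With this setup, I run the algorithm of Lemma~\ref{lem:block-core} on $\hat{T}$. Each iteration picks a non-injective endomorphism $h'$ of the current instance $J$ that is identity outside $\nulls(B')$ for some atom block $B'$ of $J$. If $B'$ descends from a block of $\compl{B}$ in the iteration history, then $\nulls(B') \subseteq \nulls(\compl{B})$ is disjoint from $X$ and $h'$ automatically fixes $X$. Otherwise $B'$ sits inside the still-untouched copy of $f(B) \setminus \compl{B}$; I claim no such reducing $h'$ exists. Extending $h'$ by the identity on the rest of $\hat{T}$ yields an endomorphism $\hat{h}'$ of $\hat{T}$ with $\hat{h}'(\hat{T}) \subsetneq \hat{T}$, and composing $f^* \isdef \hat{h}' \circ (\text{prior retractions}) \circ f$ gives an element of $\val_C(T,B)$: the identity condition on $\nulls(\compl{B})$ is inherited inductively (prior retractions fix $\nulls(\compl{B})$), and the condition that nulls in $f^*(B) \setminus \compl{B}$ lie in $\nulls(B)$ holds because, by the observation above, any atom of $\hat{T}$ containing a null of $\nulls(\compl{B})$ is already in $\compl{B}$. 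Thus $f^*(T) \subsetneq f(T)$, contradicting $f \in \minval_C(T,B)$. Composing all iterations produces a retraction $h$ of $\hat{T}$ over $X$ with $h(\hat{T}) = K$, a core of $\hat{T}$, which establishes~(1).

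For~(2), set $f' \isdef h \circ f$, so that $K = f'(T)$ with $f' \in \val_C(T)$. Suppose toward contradiction that $g \in \val_C(T)$ satisfies $g(T) \subsetneq K$. Because $g(T) \subseteq K \subseteq \hat{T}$ and atoms of $\hat{T}$ outside $\compl{B}$ have their nulls in $X \subseteq \nulls(B)$, every null appearing in $g(B) \setminus \compl{B}$ lies in $\nulls(B)$. Define $\tilde{f}$ to be the identity on $\const(T) \union \nulls(\compl{B})$ and equal to $g$ on $\nulls(B)$; then $\tilde{f} \in \val_C(T,B)$ with $\tilde{f}(T) = \compl{B} \union g(B)$. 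Splitting on where the atom missing from $g(T)$ sits: if it lies in $f(B) \setminus \compl{B}$, then $\tilde{f}(T) \subsetneq \hat{T}$ is immediate, contradicting $f \in \minval_C(T,B)$. The main obstacle is the opposite case, where the missing atom lies in $h(\compl{B}) \setminus (f(B) \setminus \compl{B}) \subseteq \compl{B}$ and hence $g(B) \supseteq f(B) \setminus \compl{B}$; here I refine $\tilde{f}$ by adjusting its values on selected nulls of $\nulls(\compl{B})$ in accordance with how $h$ already retracts $\compl{B}$, so as to witness a strict inclusion $\tilde{f}(T) \subsetneq \hat{T}$. The concluding assertion $\minrep_C(T,B) \subseteq \minrep_C(T)$ then follows at once from~(2), since the $\core(f(T))$ fixed in the definition of $\minrep_C(T,B)$ coincides up to isomorphism with $h(\hat{T})$.
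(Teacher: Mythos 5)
Your proof of part~(1), steering the algorithm of Lemma~\ref{lem:block-core}, is a workable alternative to the paper's direct choice of a maximal retraction over $\nulls(f(B)\setminus\compl{B})$, but one step is justified incorrectly: the Case-A retractions are the identity \emph{outside} $\nulls(B_j)$ for blocks $B_j\subseteq\compl{B}$, so they move precisely nulls of $\compl{B}$ and certainly do not ``fix $\nulls(\compl{B})$''. Consequently the composite $\hat{h}'\circ(\text{prior retractions})\circ f$ need not be the identity on $\nulls(\compl{B})$, hence need not lie in $\val_C(T,B)$, and no contradiction with $f\in\minval_C(T,B)$ follows as written. This is repairable: define the competitor piecewise, as $g\circ f$ on $\nulls(B)$ and the identity elsewhere (where $g$ is the composition of the prior retractions with $h'$); one then checks that $f^*(T)=g(f(B))\union\compl{B}\subsetneq f(T)$, using that the atom of $J$ destroyed by $h'$ contains a null of $B_i$ and therefore lies in $f(B)\setminus\compl{B}$. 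The paper's own proof of part~(1) uses exactly this piecewise definition.

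The real gap is in part~(2), in precisely the case you flag as ``the main obstacle''. When every atom of $K\setminus g(T)$ lies in $\compl{B}$, no adjustment of $\tilde{f}$ on nulls of $\compl{B}$ can help: any mapping that moves a null of $\compl{B}$ is by definition not in $\val_C(T,B)$, so it cannot contradict $f\in\minval_C(T,B)$; and $g$ itself already satisfies $g(T)\subsetneq f(T)$ without contradicting anything, since $f$ is minimal only within $\val_C(T,B)$, not within $\val_C(T)$. The missing idea is a switch to a core-uniqueness argument. From $g(B)\subseteq g(T)\subseteq\hat{T}$ one gets $g(B)\setminus\compl{B}\subseteq f(B)\setminus\compl{B}$, and minimality of $f$ (applied to your $\tilde{f}$) forces equality. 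Then the map that is the identity on $\nulls(B)$ and equals $g$ elsewhere is a homomorphism from $f(T)$ to $g(T)$, while $g(T)\subseteq f(T)$ gives one back; so $f(T)$ and $g(T)$ are homomorphically equivalent and their cores are isomorphic. Taking $g$ with $g(T)\in\minrep_C(T)$, as one may, $g(T)$ is itself a core by Proposition~\ref{prop:minrep-props}(\ref{prop:minrep-props/core}), and $K$ is a core of $f(T)$ by part~(1), whence $K\isomorphic g(T)\subsetneq K$ --- impossible for finite instances. Without this step your argument does not close.
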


\begin{proof}
  Let $\CA \isdef f(B) \setminus \compl{B}$,
  and let $h$ be a retraction of $\hat{T}$ over $\nulls(\CA)$
  such that for
  \begin{align}
    \label{eq:minval/T_0}
    \hat{T}_0 \isdef h(\hat{T}) = h(f(T))
  \end{align}
  we have:
  \begin{align}
    \label{eq:minval/h}
    \text{There is no retraction $h'$ of $\hat{T}_0$ over $\nulls(\CA)$
      with $h'(\hat{T}_0) \subsetneq \hat{T}_0$.}
  \end{align}
  We show that $\hat{T}_0$ is a core of $\hat{T}$,
  and that $\hat{T}_0 \in \minrep_C(T)$.

  \step{1}{$\hat{T}_0$ is a core of $\hat{T}$.}
  Suppose, for a contradiction, that $\hat{T}_0$ is not a core of $\hat{T}$.
  Then there is a retraction $h'$ of $\hat{T}_0$
  with $h'(\hat{T}_0) \subsetneq \hat{T}_0$.
  By \eqref{eq:minval/h},
  there is some $\bot \in \nulls(\CA)$ with $h'(\bot) \neq \bot$.
  Let $A$ be an atom in $\CA$ that contains $\bot$.
  Since $h'(\bot) \neq \bot$ and $h'$ is a retraction,
  $\bot$ does not occur in the range of $h'$,
  and therefore $A$ does not occur in $h'(\CA)$.
  Together with $h'(\CA) \subseteq \CA \union \compl{B}$
  and $A \in \CA$,
  this implies
  \begin{align}
    \label{eq:minval/retr}
    h'(\CA) \setminus \compl{B} \,\subsetneq\, \CA.
  \end{align}

  Consider the mapping $f'\colon \dom(T) \to \dom(T) \union C$
  defined for each $u \in \dom(T)$ by
  \begin{align*}
    f'(u) \isdef
    \begin{cases}
      h'\bigl(h(f(u))\bigr), & \text{if $u \in \nulls(B)$}, \\
      u, & \text{otherwise}.
    \end{cases}
  \end{align*}
  Then $f' \in \val_C(T,B)$,
  since $f \in \val_C(T,B)$ and $h,h'$ are retractions.
  Moreover,
  \begin{align}
    \label{eq:minval/f'}
    f'(B)
    \subseteq h'(\hat{T}_0)
    = h'(\hat{T}_0 \setminus \compl{B}) \union h'(\compl{B})
    \subseteq h'(\hat{T}_0 \setminus \compl{B}) \union \compl{B},
  \end{align}
  where the first inclusion holds due to
  $f'(B) = h'(h(f(B))) \subseteq h'(h(f(T)))$ and \eqref{eq:minval/T_0},
  and the last inclusion holds,
  because $h'$ is a retraction of $\hat{T}_0$.

  Observe also that
  \begin{align}
    \label{eq:minval/A}
    \hat{T}_0 \setminus \compl{B} \subseteq \CA.
  \end{align}
  Indeed, let $A$ be an atom of $\hat{T}_0$ with $A \notin \compl{B}$.
  By \eqref{eq:minval/T_0}, we have $\hat{T}_0 = h(f(T))$.
  Together with $f(T) = f(B) \union f(\compl{B})$,
  and the fact that $h$ is a retraction of $f(T)$,
  this implies that $A \in f(B)$ or $A \in f(\compl{B})$.
  Note that $A \notin f(\compl{B})$,
  because $f(\compl{B}) = \compl{B}$ and $A \notin \compl{B}$.
  Hence, $A \in f(B)$,
  which, together with $A \notin \compl{B}$, implies that $A \in \CA$.

  Consequently, we have
  \begin{align*}
    f'(T)
    & = f'(B) \union \compl{B}
      && \text{since $T = B \union \compl{B}$ and $f' \in \val_C(T,B)$} \\
    & \subseteq h'(\CA) \union \compl{B}
      && \text{by \eqref{eq:minval/f'} and \eqref{eq:minval/A}} \\
    & \subsetneq \CA \union \compl{B}
      && \text{by \eqref{eq:minval/retr}}\\
    & = f(B) \union f(\compl{B})
      && \text{by definition of $\CA$, and $f \in \val_C(T,B)$}\\
    & = f(T),
  \end{align*}
  which contradicts $f \in \minval_C(T,B)$.
  Thus, $\hat{T}_0$ is a core of $\hat{T}$.

  \step{2}{$\hat{T}_0 \in \minrep_C(T)$.}
  First observe that $\hat{T}_0 = f_0(T)$,
  where $f_0 \isdef \compose{f}{h}$ and $f_0 \in \val_C(T)$.
  It remains, therefore, to show that there is no $f' \in \val_C(T)$
  with $f'(T) \subsetneq \hat{T}_0$.

  Suppose, for a contradiction, that there is such a mapping $f'$.
  Without loss of generality, we may assume that $f'(T) \in \minrep_C(T)$.
  Moreover,
  \begin{align}
    \label{eq:minval/f'->f}
    f'(T)
    \subseteq \hat{T}_0
    \stackrel{\eqref{eq:minval/T_0}}{=} h(f(T))
    \subseteq f(T)
    = f(B) \union \compl{B}.
  \end{align}
  Next observe that
  \begin{align}
    \label{eq:minval/B-ident}
    f'(B) \setminus \compl{B} = f(B) \setminus \compl{B}.
  \end{align}
  Otherwise, we could use $f'$ to construct a mapping $f'' \in \val_C(T,B)$
  such that $f''(T) \subsetneq f(T)$,
  which is impossible, because $f \in \minval_C(T,B)$.
  Indeed, assume that \eqref{eq:minval/B-ident} is not true.
  Then,
  \begin{align}
    \label{eq:minval/not-B-ident}
    f'(B) \setminus \compl{B} \subsetneq f(B) \setminus \compl{B},
  \end{align}
  since by \eqref{eq:minval/f'->f}
  we have $f'(B) \subseteq f(B) \union \compl{B}$.
  Let $f''\colon \dom(T) \to \dom(T) \union C$
  be such that for each $u \in \dom(T)$,
  $f''(u) = f'(u)$ if $u \in \nulls(B)$,
  and $f''(u) = u$ otherwise.
  Then it is not hard to see that $f'' \in \val_C(T,B)$.
  Moreover, we have
  \begin{align*}
    f''(T)
    =
    (f'(B) \setminus \compl{B}) \union \compl{B}
    \stackrel{\eqref{eq:minval/not-B-ident}}{\subsetneq}
    (f(B) \setminus \compl{B}) \union \compl{B}
    =
    f(T),
  \end{align*}
  as claimed.

  Now, \eqref{eq:minval/B-ident} implies that the mapping
  $h'\colon \dom(T) \to \dom(T) \union C$
  that is defined for each $u \in \dom(T)$ by
  $h'(u) = u$ if $u \in \nulls(B)$, and $h'(u) = f'(u)$ otherwise,
  is a homomorphism from $f(T) = f(B) \union \compl{B}$ to $f'(T)$.
  Furthermore, by \eqref{eq:minval/f'->f} we have $f'(T) \subseteq f(T)$,
  and therefore,
  the identity on $\dom(f'(T))$ is a homomorphism from $f'(T)$ to $f(T)$.
  This implies that $f'(T)$ and $f(T)$ are homomorphically equivalent,
  and therefore, their cores are isomorphic.
  Since $\hat{T}_0$ is a core of $f(T)$ as shown in Step~1,
  and $f'(T)$ is a core
  by Proposition~\ref{prop:minrep-props}(\ref{prop:minrep-props/core}),
  we have $\hat{T}_0 \isomorphic f'(T)$.
  However, this is a contradiction to our earlier assumption
  that $f'(T) \subsetneq \hat{T}_0$.
\end{proof}

Clearly, the union of the sets $\minrep_C(T,B)$ over all atom blocks $B$ of $T$
does not cover the whole set of instances in $\minrep_C(T)$.
However, Lemma~\ref{lemma:universal-queries/basis} below tells us
that for each atom $A$ of some instance $T_0 \in \minrep_C(T)$
there is an atom block $B$ of $T$ and an instance $T_B \in \minrep_C(T,B)$
that contains an atom $A'$ isomorphic to $A$ in the following sense:

\begin{notation}
  We say that two atoms $A_1,A_2$ are \emph{isomorphic},
  and we write $A_1 \isomorphic A_2$,
  if the instances $\set{A_1}$ and $\set{A_2}$ are isomorphic.
\end{notation}

Note that $R(u_1,\dotsc,u_r)$ and $R'(u'_1,\dotsc,u'_{r'})$
are isomorphic if and only if $R = R'$, $r = r'$,
and for all $i,j \in \set{1,\dotsc,r}$,
$u_i \in \Const$ if and only if $u_i' \in \Const$,
$u_i \in \Const$ implies $u_i = u_i'$,
and $u_i = u_j$ if and only if $u'_i = u'_j$.

Lemma~\ref{lemma:universal-queries/basis} is based on the following notion
of a \emph{packed} atom block:

\begin{defi}[packed atom block]
  An atom block $B$ of an instance is called \emph{packed}
  if for all atoms $A,A' \in B$ with $A \neq A'$,
  there is a null that occurs both in $A$ and $A'$.
\end{defi}

Immediately from the definitions, we obtain:

\begin{prop}
  \label{prop:packed}
  If $M = \Des$ is a schema mapping, where $\Sigma$ consists of packed st-tgds,
  and $S$ is a source instance for $M$,
  then each atom block of $\core(M,S)$ is packed.
\end{prop}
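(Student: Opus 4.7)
The plan is to bootstrap from the canonical universal solution $T^* \isdef \cansol(M,S)$, where the packed-block property is visible directly from the construction, and then transfer it to $\core(M,S)$ using the fact that the core embeds into $T^*$.

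First I would verify that every atom block of $T^*$ is packed. Recall from Section~\ref{sec:basics/data-exchange} that $T^*$ is built by adding, for each $j = (\theta,\tup{a},\tup{b}) \in \CJ$ with $\theta = \tgd$, the atoms of $\psi(\tup{a},\tup{\bot}_j)$, where the nulls in $\tup{\bot}_j$ are fresh and disjoint from those in $\tup{\bot}_{j'}$ for $j'\neq j$. If $\theta$ is packed, then for any two distinct atoms in $\psi$ there is a variable from $\tup{z}$ they share, and after the substitution this becomes a (shared) null from $\tup{\bot}_j$. Since nulls of distinct applications are disjoint, an atom of $T^*$ with at least one null belongs to exactly one application $j$, and the set of non-ground atoms produced by $j$ forms a clique in the Gaifman graph of atoms. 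Hence every atom block of $T^*$ either is a singleton of a ground atom or coincides with the set of non-ground atoms produced by a single application of a packed st-tgd, and in both cases it is packed.

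Next I would pass to the core. Up to isomorphism we may assume $\core(M,S) \subseteq T^*$: indeed $T^*$ is a universal solution for $S$ under $M$, so its core is isomorphic to $\core(M,S)$, and a core of any instance is (up to renaming of nulls) a subinstance of it. Let $T \isdef \core(M,S)$ be realized as such a subinstance. Every atom of $T$ is an atom of $T^*$, and every null occurring in $T$ is a null of $T^*$.

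The key observation is then that each atom block of $T$ is contained in a single atom block of $T^*$. For, if $A_1,A_2 \in T$ are joined by a path in the Gaifman graph of atoms of $T$, the same sequence of atoms and shared nulls is present in $T^*$, so $A_1,A_2$ lie in the same atom block of $T^*$. Now let $B$ be any atom block of $T$ and let $B^*$ be the atom block of $T^*$ containing $B$. For distinct $A_1,A_2 \in B \subseteq B^*$, packedness of $B^*$ yields a null $\bot$ occurring in both $A_1$ and $A_2$; since $A_1,A_2 \in T$, this null belongs to $\dom(T)$, so it witnesses packedness of $B$ as well. I expect the only subtle point to be exactly this ``embedding of the core into $T^*$'' step, which is what ensures that core computation does not merge nulls across different packed applications and hence cannot connect two previously-disjoint packed blocks into a non-packed one.
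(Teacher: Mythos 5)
Your proof is correct, and since the paper offers no argument beyond the remark that the proposition follows ``immediately from the definitions,'' your write-up simply supplies the details of that intended immediate argument: the atom blocks of $\cansol(M,S)$ are packed by construction of the canonical solution from packed st-tgds (fresh, pairwise disjoint null tuples per tgd application, with any two distinct head atoms sharing an existential variable), the core is, up to a null-preserving isomorphism, a subinstance of $\cansol(M,S)$, and packedness of atom blocks is inherited by subinstances because each atom block of a subinstance sits inside an atom block of the larger instance. All three steps check out, so there is nothing to correct.
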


We are now ready to state the main result
of the present Section~\ref{sec:XGCWA-answers/universal/atoms}:

\begin{lem}
  \label{lemma:universal-queries/basis}
  Let $T$ be an instance such that $T$ is a core,
  and each atom block of $T$ is packed.
  Let $C \subseteq \Const$, and let $T_0 \in \minrep_C(T)$.
  Then for each atom $A \in T_0$, there are
  \begin{enumerate}[\em(1)]
  \item
    an atom block $B$ of $T$,
  \item
    an instance $T_B \in \minrep_C(T,B)$,
  \item
    an atom $A' \in T_B$ with $A' \isomorphic A$, and
  \item
    a homomorphism $h$ from $T_B$ to $T_0$ with $h(T_B) = T_0$ and $h(A') = A$.
  \end{enumerate}
\end{lem}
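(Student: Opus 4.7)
The plan proceeds as follows. Write $T_0 = f(T)$ with $f \in \val_C(T)$ witnessing $T_0 \in \minrep_C(T)$, pick any preimage $A^* \in T$ of $A$ under $f$, and let $B$ be the unique atom block of $T$ containing $A^*$. This $B$ will be the atom block asserted by the lemma. The overall strategy is to build a valuation $g \in \val_C(T, B)$ whose image contains an isomorphic copy of $A$ and admits a surjective homomorphism $h_0\colon g(T) \to T_0$ with $h_0(g(A^*)) = A$, and then to minimize within $\val_C(T, B)$ and pass to a core to land in $\minrep_C(T, B)$.

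For the construction of $g$, the obstacle to using $f$ directly is that $f$ may send some null in $\nulls(B)$ to a null in $\nulls(\compl{B})$, violating the condition that nulls in $g(B) \setminus \compl{B}$ belong to $\nulls(B)$. I fix this by picking an injection
\[
  \sigma\colon f(\nulls(B)) \intersection \nulls(\compl{B})
    \ \to\ \nulls(B) \setminus f(\nulls(B)),
\]
which exists because $\card{f(\nulls(B))} \leq \card{\nulls(B)}$ forces $\card{f(\nulls(B)) \intersection \nulls(\compl{B})} \leq \card{\nulls(B) \setminus f(\nulls(B))}$. Extending $\sigma$ to $\bar{\sigma}$ by the identity off its domain, I set $g(\bot) \isdef \bar{\sigma}(f(\bot))$ for $\bot \in \nulls(B)$ and $g(x) \isdef x$ for every other $x \in \dom(T)$. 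A direct verification shows $g \in \val_C(T, B)$, and the injectivity of $\bar{\sigma}$ on the range of $f$ restricted to $\dom(A^*)$ yields $g(A^*) \isomorphic A$. The companion map $h_0\colon g(T) \to T_0$ is defined to invert $\bar{\sigma}$ on the renamed nulls and to coincide with $f$ on $\nulls(\compl{B})$; a case analysis on atoms in $B$ versus $\compl{B}$ gives $h_0(g(A'')) = f(A'')$ for every $A'' \in T$, so $h_0$ is a surjective homomorphism onto $T_0$ sending $g(A^*)$ to $A$.

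To land in $\minrep_C(T, B)$, pick $g' \in \minval_C(T, B)$ with $g'(T) \subseteq g(T)$ (which exists by finiteness of the inclusion poset of $\val_C(T, B)$-images), and set $T_B \isdef \core(g'(T)) \in \minrep_C(T, B)$. The main obstacle is to show that an atom $A' \in T_B$ with $A' \isomorphic A$ survives both the minimization and the core retraction, and that the restriction of $h_0$ to $T_B$ remains surjective onto $T_0$. Packedness of $B$ is crucial here: any proper reduction of $g(B)$ by $g'$ or by a retraction to the core must identify nulls that are shared by distinct atoms of $B$, and pushing such an identification through $h_0$ would yield an element of $\set{f'(T) \mid f' \in \val_C(T)}$ strictly inside $T_0$, contradicting $T_0 \in \minrep_C(T)$. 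Consequently, the atom $g(A^*)$ is preserved up to isomorphism inside $T_B$, giving the required atom $A'$, and the composed map $h \isdef h_0|_{T_B}$ provides the surjection with $h(A') = A$.
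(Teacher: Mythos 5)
The first half of your construction---the block-local valuation $g$ obtained by renaming into $\nulls(B)$ the nulls that $f$ pushes outside $B$, together with the companion map $h_0$ that inverts the renaming so that $h_0(g(A''))=f(A'')$ for all $A''\in T$---is sound and is essentially the paper's Step~1 (the maps $f_i$ and $r_i$). The gap is in the survival argument. You claim that if the minimization $g'\in\minval_C(T,B)$ (or the subsequent core retraction) destroys every isomorphic copy of $A$, then ``pushing the identification through $h_0$'' yields some $f'\in\val_C(T)$ with $f'(T)\subsetneq T_0$. This does not follow: one does have $h_0\circ g'\in\val_C(T)$ and $(h_0\circ g')(T)\subseteq h_0(g(T))=T_0$, so minimality of $T_0$ forces $(h_0\circ g')(T)=T_0$; but since $h_0$ is not injective, this is perfectly compatible with $g'(T)\subsetneq g(T)$ and with $g'(T)$ containing no atom isomorphic to $A$ that $h_0$ sends to $A$ --- the atom of $g'(T)$ that $h_0$ maps onto $A$ may lie in $\compl{B}$ (i.e.\ be a different preimage of $A$ under $f$, sitting in another block), or may be a non-isomorphic atom whose nulls $h_0$ collapses. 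So no contradiction with $T_0\in\minrep_C(T)$ arises, and note that packedness never actually enters your argument; you only assert that it is ``crucial''.

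Indeed, the statement you rely on --- that for the \emph{particular} block $B$ containing a chosen preimage of $A$, the minimized image retains an isomorphic copy of $A$ mapped to $A$ --- is exactly what the paper's proof sketch disclaims: it explicitly describes the scenario in which, for a given $j$ with $A\in f(B_j)$, the instance $T_j$ contains no suitable atom, and it only rules out the ``extreme case'' that this happens for \emph{every} block containing a preimage of $A$. Ruling out that extreme case is where all the work lies: the paper fixes minimizers $g_1,\dotsc,g_n$ for all blocks simultaneously, forms $T^*=\bigcup_i\bigl(g_i(B_i)\setminus\compl{B_i}\bigr)$, and shows that $T^*$ is the image of a single valuation $f^*\in\val_C(T)$ obtained by iterating the $g_i$ (after a renaming $\rho$) to a fixed point; packedness together with the core property of $T$ is used precisely to show this iteration cannot cycle (a cycle would turn some $g_j$ into a homomorphism from $T$ into $T\setminus B_j$), and only then does minimality of $T_0$ give $r(T^*)=T_0$ and hence the desired atom in \emph{some} block's $\minrep_C(T,B)$. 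Your proof would need either to reproduce this global argument or to give a genuine proof of the stronger per-block claim; as written, neither is present.
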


\begin{proof}
  Let $T$ be an instance such that $T$ is a core
  and each atom block of $T$ is packed.
  Let $C \subseteq \Const$,
  and let $T_0 \in \minrep_C(T)$.
  Furthermore,
  let $B_1,\dotsc,B_n$ be an enumeration of all the atom blocks of $T$.
  We prove the following stronger statement:
  \begin{align}
    \tag{$\star$}
    \label{eq:universal-queries/basis/main-claim}
    \begin{minipage}{0.85\linewidth}
      Let $i \in \set{1,\dotsc,n}$.
      Then there is an instance $T_i \in \minrep_C(T,B_i)$
      and a homomorphism $h_i$ from $T_i$ to $T_0$ with $h_i(T_i) = T_0$
      such that the following is true:
      For each atom $A \in T_0$,
      there is an index $j \in \set{1,\dotsc,n}$ and an atom $A' \in T_j$
      with $h_j(A') = A$ and $A' \isomorphic A$.
    \end{minipage}
  \end{align}

  \smallskip

  \proofpartnl{Idea of the construction.}
  We start with an $f \in \val_C(T)$ such that $f(T) = T_0$.
  The first step is to find mappings
  $f_1 \in \val_C(T,B_1),\dotsc,f_n \in \val_C(T,B_n)$
  such that each $f_i(B_i)$ is isomorphic to $f(B_i)$.
  This is easy since we can assign an ``unused'' null in $B_i$
  to any null in $B_i$ that is mapped by $f$ to a null outside $B_i$.

  We then ``minimize'' each $f_i(T)$ by picking a $g_i \in \minval_C(T,B_i)$
  with $g_i(T) \subseteq f_i(T)$.
  The instance $T_i$ is then defined to be the core of $g_i(T)$
  (that contains $g_i(B_i) \setminus (T \setminus B_i)$).
  It is then not hard to define a homomorphism $h_i$ from $T_i$ to $T_0$
  with $h_i(T_i) = T_0$;
  see Figure~\ref{fig:main-lemma-mappings} for an illustration.
  \begin{figure}
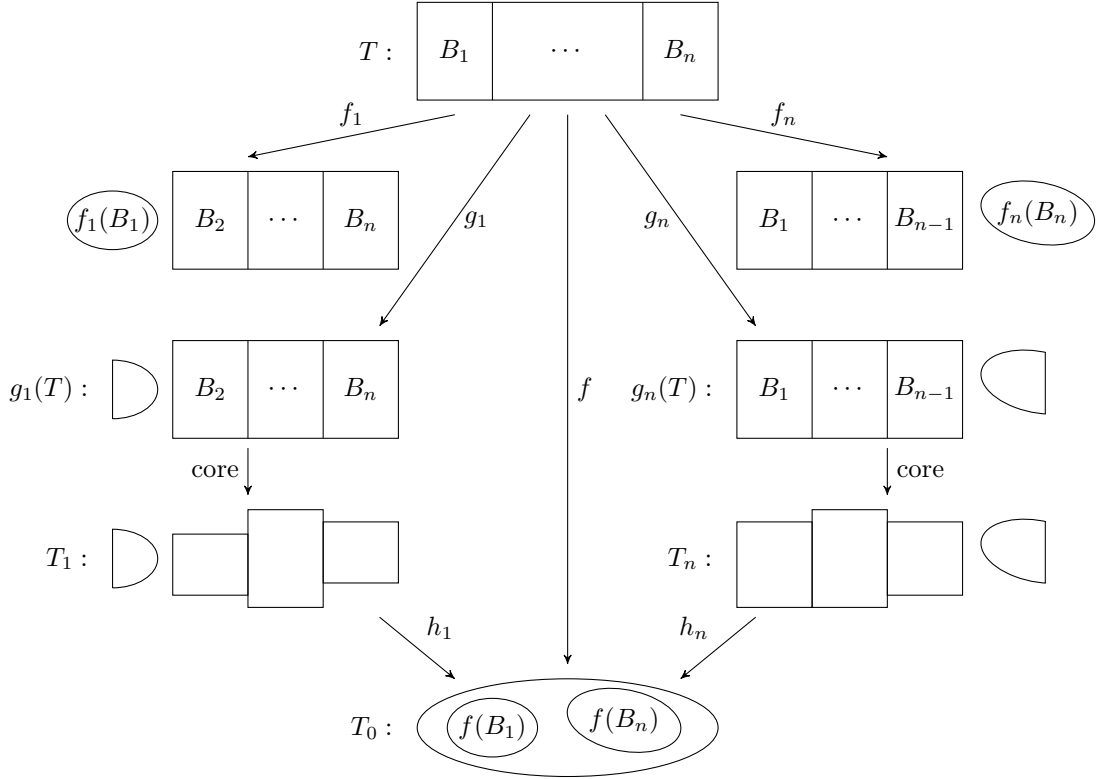

    \centering
    \includeTikZ{img_main-lemma-mappings}
    \caption{The mappings $f,f_i,g_i,h_i$ and their relations.}
    \label{fig:main-lemma-mappings}
  \end{figure}

  Finally, we have to show that for each atom $A \in T_0$,
  there is a $j \in \set{1,\dotsc,n}$ and an atom $A' \in T_j$
  with $h_j(A') = A$ and $A' \isomorphic A$.
  This is not an immediate consequence of the construction of $T_i$ and $g_i$.
  To explain the problem, let us pick an atom $A \in T_0$.
  This atom must occur in some $f(B_j)$,
  possibly in more than one such $f(B_j)$.
  Let $j \in \set{1,\dotsc,n}$ be such that $A \in f(B_j)$.
  It will be clear from the construction of the $f_j$ and $h_j$
  that there is an atom $A' \in f_j(B_j)$ with $A' \isomorphic A$.
  If $A' \in g_j(B_j)$ and $A' \notin T \setminus B_j$,
  we are done:
  in this case we will have $A' \in T_j$ and $h_j(A') = A$.
  However, if $A' \notin g_j(B_j)$ or $A' \in T \setminus B_j$,
  it may be that $T_j$ contains no atom isomorphic to $A$,
  or -- in the case that $T$ contains such an atom, say $A''$ --
  that $h_j$ does not map $A''$ to $A$.

  The extreme case is that for all $j \in \set{1,\dotsc,n}$ with $A \in f(B_j)$,
  there is no atom $A' \in g_j(B_j)$ with $A' \notin T \setminus B_j$
  and $A' \isomorphic A$.
  Using the property that all of the atom blocks $B_1,\dotsc,B_n$ are packed,
  we show that this case does not occur,
  whereby proving \eqref{eq:universal-queries/basis/main-claim}.
  (Example~\ref{ex:universal-queries/basis/failure} below
   shows that it may occur if the atom blocks are not packed.)

  It is helpful here to think in terms of the following graph $G$:%
  \label{def:universal-queries/basis/G}
  the nodes of $G$ are the atoms of $T$,
  and there is an edge from a node $A' \in B_j$ to a node $A''$
  if $g_j(A') \in T \setminus B_j$ and $g_j(A') = A''$.
  The core of the proof can then be summarized as follows:
  We first show that any path in $G$
  must eventually reach a node $A' \in B_j$ for some $j \in \set{1,\dotsc,n}$
  such that $g_j(A') \notin T \setminus B_j$.
  Otherwise, there would be a cycle in $G$ containing a node $A'$.
  This would imply that $A'' \isdef g_j(A')$ is isomorphic to $A'$,
  and that $A'' \notin B_j$.
  But since $B_j$ is packed,
  this implies that $g_j$ is actually a homomorphism from $B_j$
  to $T \setminus B_j$,
  which is impossible since $T$ is a core.
  Using this property,
  we then construct --
  basically by repeated application of the mappings $g_1,\dotsc,g_n$
  followed by a renaming of the nulls --
  a mapping $f' \in \val_C(T)$ such that $T' \isdef f'(T) \subseteq T_0$.
  If there are no $j \in \set{1,\dotsc,n}$ and $A' \in T_j$
  with $h_j(A') = A$ and $A' \isomorphic A$,
  we will have $A \notin T'$,
  which would mean that $T' \subsetneq T_0$
  and contradict $T_0 \in \minrep_C(T)$.
  Consequently, there is a $j \in \set{1,\dotsc,n}$ and an atom $A' \in T_j$
  with $h_j(A') = A$ and $A' \isomorphic A$,
  which proves \eqref{eq:universal-queries/basis/main-claim}.

  \smallskip

  \proofpartnl{The details.}
  Let $f \in \val_C(T)$ be such that $f(T) = T_0$.
  The construction of the instances $T_i$ and the homomorphisms $h_i$
  proceeds in three steps.
  First, we ``split'' $f$ into mappings
  $f_1 \in \val_C(T,B_1),\dotsc,f_n \in \val_C(T,B_n)$
  such that each $f_i(B_i)$ is isomorphic to $f(B_i)$.
  Second, we use these mappings to construct the instances $T_i$
  and the homomorphisms $h_i$.
  Third, we show that for each atom $A \in T_0$,
  there is a $j \in \set{1,\dotsc,n}$ and an atom $A' \in T_j$
  with $h_j(A') = A$ and $A' \isomorphic A$.

  \step{1}{Construction of $f_1,\dotsc,f_n$.}
  Let $i \in \set{1,\dotsc,n}$.
  We construct a mapping $f_i \in \val_C(T,B_i)$
  and an injective homomorphism $r_i$ from $f_i(B_i)$ to $f(B_i)$
  with $r_i(f_i(B_i)) = f(B_i)$ as follows.
  Pick an injective mapping
  \[
    \bar{r}_i\colon \dom(f(B_i)) \to \const(f(B_i)) \union \nulls(B_i)
  \]
  such that $\bar{r}_i(c) = c$ for each $c \in \const(f(B_i))$,
  and $\bar{r}_i(\bot) \in \nulls(B_i)$ for each $\bot \in \nulls(f(B_i))$.
  Then define $f_i\colon \dom(T) \to \dom(T) \union C$
  such that for each $u \in \dom(T)$,
  \[
    f_i(u) \,\isdef\,
    \begin{cases}
      \bar{r}_i(f(u)), & \text{if $u \in \nulls(B_i)$} \\
      u, & \text{otherwise}.
    \end{cases}
  \]
  By construction, we have $f_i \in \val_C(T,B_i)$.
  Furthermore, for each atom $A$ of $f(B_i)$,
  we have $\bar{r_i}(A) \isomorphic A$.
  In particular, each atom of $f(B_i)$ is isomorphic to an atom of $f_i(B_i)$,
  and vice versa.
  Let $r_i$ be the inverse of $\bar{r}_i$ on $\dom(f_i(B_i))$.
  Then, $r_i$ is an injective homomorphism from $f_i(B_i)$ to $f(B_i)$ with
  \begin{align}
    \label{eq:universal-queries/basis/proj}
    r_i(f_i(B_i))
    \,=\,
    r_i(\bar{r}_i(f(B_i)))
    \,=\,
    f(B_i).
  \end{align}
  In particular,
  \begin{align}
    \label{eq:universal-queries/basis/iso}
    r_i(A) \,\isomorphic\, A
    \quad \text{for all atoms $A \in f_i(B_i)$}.
  \end{align}

  \smallskip

  \step{2}{Construction of the instances $T_i$ and the homomorphisms $h_i$.}
  Let $i \in \set{1,\dotsc,n}$,
  and pick $g_i \in \minval_C(T,B_i)$ with $g_i(T) \subseteq f_i(T)$.
  By Lemma~\ref{lemma:minval},
  there is a retraction $h'_i$ of $g_i(T)$
  over the set of the nulls of $g_i(B_i) \setminus (T \setminus B_i)$
  such that
  \begin{align}
    \label{eq:T_i}
    T_i \,\isdef\, h'_i(g_i(T)) \,\in\, \minrep_C(T).
  \end{align}
  Define $h_i\colon \dom(g_i(T)) \to \dom(T_0)$
  such that for each $u \in \dom(g_i(T))$,
  \begin{align*}
    h_i(u) \,\isdef\,
    \begin{cases}
      r_i(u), & \text{if $u \in \dom(g_i(T) \setminus (T \setminus B_i))$} \\
      f(u),   & \text{otherwise.}
    \end{cases}
  \end{align*}
  Note that $r_i$ is defined for all values that occur in
  $\dom(g_i(T) \setminus (T \setminus B_i))$,
  since $g_i(T) \subseteq f_i(T) = f_i(B_i) \union (T \setminus B_i)$,
  and therefore,
  \begin{align}
    \label{eq:universal-queries/basis/g_i_T}
    g_i(T) \setminus (T \setminus B_i) \,\subseteq\, f_i(B_i).
  \end{align}
  Furthermore,
  \begin{align*}
    h_i(g_i(T) \setminus (T \setminus B_i))
    & \,=\,
    r_i(g_i(T) \setminus (T \setminus B_i))
    \,\stackrel{\eqref{eq:universal-queries/basis/g_i_T}}{\subseteq}\,
    r_i(f_i(B_i))
    \,\stackrel{\eqref{eq:universal-queries/basis/proj}}{=}\,
    f(B_i)
    \,\subseteq\,
    T_0, \\
    \intertext{and}
    h_i(T \setminus B_i)
    & \,=\, f(T \setminus B_i) \,\subseteq\, f(T) \,=\, T_0,
  \end{align*}
  which yields $h_i(g_i(T)) \subseteq T_0$.
  In particular,
  \[
    h_i(h'_i(g_i(T))) \,\subseteq\, h_i(g_i(T)) \,\subseteq\, T_0.
  \]
  Since $\compose{g_i}{\compose{h'_i}{h_i}} \in \val_C(T)$
  and $T_0 \in \minrep_C(T)$,
  we have $h_i(h'_i(g_i(T))) = T_0$, and hence,
  \[
    h_i(T_i) \,\stackrel{\eqref{eq:T_i}}{=}\, h_i(h'_i(g_i(T)))\, =\, T_0.
  \]
  Let $\tilde{h}_i$ be the restriction of $h_i$ to $\dom(T_i)$.
  Then, clearly, $\tilde{h}_i$ is a homomorphism
  from $T_i$ to $T_0$ with $\tilde{h}_i(T_i) = T_0$.

  \step{3}
    {For each $A \in T_0$ there are $j \in \set{1,\dotsc,n}$ and $A' \in T_j$
      with $h_j(A') = A$ and $A' \isomorphic A$.}
  Let
  \[
    T^* \,\isdef\,
    \bigunion_{i=1}^n\, \bigl(g_i(B_i) \setminus (T \setminus B_i)\bigr).
  \]
  To prove \eqref{eq:universal-queries/basis/main-claim},
  it suffices to show that there is a mapping $r\colon \dom(T^*) \to \dom(T_0)$
  with
  \begin{enumerate}[(1)]
  \item\label{cond:universal-queries/basis/image}
    $r(T^*) = T_0$,
  \item\label{cond:universal-queries/basis/iso}
    $r(A') \isomorphic A'$ for each $A' \in T^*$, and
  \item\label{cond:universal-queries/basis/h_i}
    $r(A') = h_i(A')$ for each $i \in \set{1,\dotsc,n}$
    and each $A' \in g_i(B_i) \setminus (T \setminus B_i)$.
  \end{enumerate}
  Indeed, let $A \in T_0$.
  Since $r(T^*) = T_0$ by condition~\ref{cond:universal-queries/basis/image},
  there is some $A' \in T^*$ with $r(A') = A$.
  So, by the construction of $T^*$, there is an $i \in \set{1,\dotsc,n}$
  such that $A' \in g_i(B_i) \setminus (T \setminus B_i) \subseteq T_i$.
  Condition~\ref{cond:universal-queries/basis/h_i} then yields
  $h_i(A') = r(A') = A$,
  and since $r(A') \isomorphic A'$
  by condition~\ref{cond:universal-queries/basis/iso},
  we have $A' \isomorphic A$.

  Define $r\colon \dom(\bigunion_{i=1}^n f_i(B_i)) \to \dom(T_0)$ such that
  \[
    r(u) = r_i(u)
    \quad
    \text{for all $i \in \set{1,\dotsc,n}$ and $u \in \dom(f_i(B_i))$}.
  \]
  This is well-defined,
  since $\nulls(f_i(B_i)) \intersection \nulls(f_j(B_j)) = \emptyset$
  for all distinct $i,j \in \set{1,\dotsc,n}$,
  and each $r_i$ is the identity on constants.
  We claim that $r$ satisfies
  conditions~\ref{cond:universal-queries/basis/image}--%
  \ref{cond:universal-queries/basis/h_i} above.

  To see that $r$ satisfies condition~\ref{cond:universal-queries/basis/iso},
  let $A' \in T^*$.
  Then there is some $i \in \set{1,\dotsc,n}$
  with $A' \in g_i(B_i) \setminus (T \setminus B_i)$.
  By \eqref{eq:universal-queries/basis/iso}
  and \eqref{eq:universal-queries/basis/g_i_T},
  we thus have $r(A') = r_i(A') \isomorphic A'$.

  To see that $r$ satisfies condition~\ref{cond:universal-queries/basis/h_i},
  let $i \in \set{1,\dotsc,n}$ and
  $A' \in g_i(B_i) \setminus (T \setminus B_i)$.
  Then, $r(A') = r_i(A') = h_i(A')$,
  where the last equality follows from the construction of $h_i$.

  It thus remains to show that
  $r$ satisfies condition~\ref{cond:universal-queries/basis/image},
  that is, $r(T^*) = T_0$.
  Note that, by \eqref{eq:universal-queries/basis/g_i_T}, we have
  \begin{align*}
    T^* \,\subseteq\, \bigunion_{i=1}^n f_i(B_i).
  \end{align*}
  Hence,
  \begin{align*}
    r(T^*)
    \,\subseteq\,
    r\left(\bigunion_{i=1}^n f_i(B_i)\right)
    \,=\,
    \bigunion_{i=1}^n r(f_i(B_i))
    \,=\,
    \bigunion_{i=1}^n r_i(f_i(B_i))
    \,\stackrel{\eqref{eq:universal-queries/basis/proj}}{=}\,
    \bigunion_{i=1}^n f(B_i)
    \,=\,
    T_0.
  \end{align*}
  To show that $r(T^*) = T_0$,
  we show that there is some $f^* \in \val_C(T)$ with $f^*(T) = T^*$.
  Then, $f' \isdef \compose{f^*}{r} \in \val_C(T)$.
  Since $T_0 \in \minrep_C(T)$ and $f'(T) = r(f^*(T)) = r(T^*) \subseteq T_0$,
  this implies $r(T^*) = T_0$, and the proof is complete.

  Thus, it remains to show that there is a mapping $f^* \in \val_C(T)$
  with $f^*(T) = T^*$.
  Basically, $f^*$ is obtained by repeated application of the mappings
  $g_1,\dotsc,g_n$.

  Let us first modify $g_1,\dotsc,g_n$ as follows.
  Choose an arbitrary ``renaming'' of the nulls of $T$.
  That is, pick an injective mapping
  $\rho\colon \dom(T) \to \const(T) \union (\Nulls \setminus \nulls(T))$
  such that $\rho(c) = c$ for each constant $c \in \const(T)$.
  Note that $\rho$ maps each null of $T$ to a unique null that does not
  occur in $T$.
  Let
  \[
    \CX \,\isdef\, \rho(\nulls(T)).
  \]
  For each $i \in \set{1,\dotsc,n}$ then define
  \(
    \hat{g}_i\colon
    \dom(T) \union C \union \CX \to \dom(T) \union C \union \CX
  \)
  such that for each $u \in \dom(T) \union C \union \CX$,
  \begin{align*}
    \hat{g}_i(u)
    \,\isdef\,
    \begin{cases}
      g_i(\rho^{-1}(u)),
      & \text{if $u \in \rho(\nulls(B_i))$ and
        $g_i(\rho^{-1}(u)) \in \nulls(B_i)$} \\
      \rho(g_i(\rho^{-1}(u)),
      & \text{if $u \in \rho(\nulls(B_i))$ and
        $g_i(\rho^{-1}(u)) \notin \nulls(B_i)$} \\
      u,
      & \text{otherwise.}
    \end{cases}
  \end{align*}
  Note that for each $i \in \set{1,\dotsc,n}$,
  \begin{align}
    \label{eq:g_i}
    \hat{g}_i(\rho(B_i)) \setminus \rho(T)
    \,=\,
    g_i(B_i) \setminus (T \setminus B_i).
  \end{align}

  Now let
  \[
    \hat{g} \,\isdef\,
    \compose{\hat{g}_1}{\compose{\hat{g}_2}{\compose{\dotsb}{\hat{g}_n}}}.
  \]
  Recall the graph $G$ mentioned at the beginning of the proof,
  on page~\pageref{def:universal-queries/basis/G}.
  Then an application of $\hat{g}$ to an atom $\rho(A)$ with $A \in B_{i_1}$
  corresponds to following the maximal path in $G$
  that starts in $A$ and proceeds
  to atoms $A' \in B_{i_2},A'' \in B_{i_3},\dotsc$
  with $i_1 < i_2 < i_3 < \dotsb$.
  If $A''' \in B_j$ is the endpoint of this path,
  then either
  $g_j(A''') \notin T \setminus B_j$
  and $\hat{g}(\rho(A)) = \hat{g}_j(\rho(A''')) = g_j(A''')$,
  or
  $g_j(A''') \in T \setminus B_j$
  and $\hat{g}(\rho(A)) = \rho(A''')$.

  For each $s \geq 0$ let
  \[
    \hat{g}^s \,\isdef\,
    \begin{cases}
      \rho, & \text{if $s = 0$}, \\
      \compose{\hat{g}^{s-1}}{\hat{g}}, & \text{if $s \geq 1$.}
    \end{cases}
  \]
  We show by induction that
  \begin{align}
    \label{eq:universal-queries/basis/descending}
    \hat{g}^1(T) \,\supseteq\, \hat{g}^2(T) \,\supseteq\,
    \hat{g}^3(T) \,\supseteq\, \dotsb.
  \end{align}
  To prove $\hat{g}^1(T) \supseteq \hat{g}^2(T)$,
  let $A \in \hat{g}^2(T)$.
  If $A \in T^*$, then by \eqref{eq:g_i}, we have $A \in \hat{g}^1(T)$.
  Otherwise, if $A \in \rho(T)$,
  there is an $A' \in \hat{g}^1(T)$ with $\hat{g}(A') = A$ and $A' \in \rho(T)$.
  Since $A' \in \rho(T)$, we have $A \in \hat{g}(\rho(T)) = \hat{g}^1(T)$,
  as desired.
  To prove $\hat{g}^{i+2}(T) \supseteq \hat{g}^{i+1}(T)$ for $i \geq 1$,
  let $A \in \hat{g}^{i+2}(T)$.
  Then there is an $A' \in \hat{g}^{i+1}(T)$ with $\hat{g}(A') = A$.
  Since $\hat{g}^{i+1}(T) \subseteq \hat{g}^i(T)$ by the induction hypothesis,
  we have $A \in \hat{g}(\hat{g}^i(T)) = \hat{g}^{i+1}(T)$,
  as desired.

  By \eqref{eq:universal-queries/basis/descending}
  and since $\hat{g}^1(T)$ is finite,
  there is an $s_0 \geq 1$ such that
  $\hat{g}^{s_0}(T) = \hat{g}^s(T)$ for each $s \geq s_0$.
  Let $f^* \isdef \hat{g}^{s_0}$.
  We show that $f^*(T) = T^*$.

  First observe that
  \[
    T^*
    \,=\,
    \bigunion_{i=1}^n\, \bigl(g_i(B_i) \setminus (T \setminus B_i)\bigr)
    \,=\,
    \hat{g}^1(T) \setminus \rho(T)
    \,=\,
    f^*(T) \setminus \rho(T).
  \]
  To see that $T^*$ is not a proper subinstance of $f^*(T)$,
  we show that $f^*(T)$ contains no atoms from $\rho(T)$.

  For a contradiction, suppose that $f^*(T)$ contains an atom $A \in \rho(T)$.
  Then, $A \in \rho(B_i)$ for some $i \in \set{1,\dotsc,n}$.
  Since $\hat{g}(f^*(T)) = f^*(T)$,
  we know that $\hat{g}$ is a bijection on $\dom(f^*(T))$.
  Furthermore, since $\hat{g}$ is the identity on $\dom(T) \union C$,
  we have $\hat{g}(\bot) \in \CX$ for each $\bot \in \CX$.
  It follows that
  \begin{align*}
    \hat{g}_i(A) \isomorphic A,
    \quad \text{and} \quad
    \hat{g}_i(A) \in \rho(B_j)
    \quad \text{for some $j \in \set{1,\dotsc,n} \setminus \set{i}$}.
  \end{align*}
  Let $A' \isdef \rho^{-1}(A)$.
  By the construction of $\hat{g}_i$, we have
  \begin{align*}
    g_i(A') \isomorphic A',
    \quad \text{and} \quad
    g_i(A') \in B_j
    \quad \text{for some $j \in \set{1,\dotsc,n} \setminus \set{i}$}.
  \end{align*}
  Since $B_i$ is packed and $g_i$ maps each null in $A'$ to a null in $B_j$,
  each atom in $g_i(B_i)$ contains a null from $B_j$.
  Together with $g_i \in \val_C(T,B_i)$, this implies $g_i(B_i) \subseteq B_j$.
  In other words, $g_i$ is a homomorphism from $T$ to $T \setminus B_i$,
  which contradicts the fact that $T$ is a core.
  Consequently, we must have $f^*(T) = T^*$.
\end{proof}

\begin{cor}
  Let $T$ be an instance such that $T$ is a core,
  and each atom block of $T$ is packed.
  Let $C \subseteq \Const$, and let $A$ be an atom.
  Then the following statements are equivalent:
  \begin{iteMize}{$\bullet$}
  \item
    There is an instance in $\minrep_C(T)$
    that contains an atom isomorphic to $A$.
  \item
    There is an atom block $B$ of $T$
    such that some instance in $\minrep_C(T,B)$
    contains an atom isomorphic to $A$.
  \end{iteMize}
\end{cor}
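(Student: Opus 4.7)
The corollary should follow almost immediately from the two main lemmas of Section~\ref{sec:XGCWA-answers/universal/atoms}, namely Lemma~\ref{lemma:minval} and Lemma~\ref{lemma:universal-queries/basis}, so my plan is simply to combine them. Note first that the hypotheses of the corollary (namely, $T$ is a core and every atom block of $T$ is packed) are exactly what Lemma~\ref{lemma:universal-queries/basis} demands, and Lemma~\ref{lemma:minval} imposes no additional constraints on $T$, so both lemmas are available without further work.

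For the downward direction, I would start with an instance $T_0 \in \minrep_C(T)$ and an atom $A'' \in T_0$ with $A'' \isomorphic A$. Apply Lemma~\ref{lemma:universal-queries/basis} to the atom $A'' \in T_0$ to obtain an atom block $B$ of $T$, an instance $T_B \in \minrep_C(T,B)$, and an atom $A' \in T_B$ with $A' \isomorphic A''$. Since atom isomorphism is transitive (being just isomorphism of single-atom instances), $A' \isomorphic A$, which gives the desired $T_B$ and $A'$.

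For the upward direction, suppose there is an atom block $B$ of $T$, an instance $T_B \in \minrep_C(T,B)$, and an atom $A' \in T_B$ with $A' \isomorphic A$. The concluding sentence of Lemma~\ref{lemma:minval} states that $\minrep_C(T,B) \subseteq \minrep_C(T)$, so $T_B$ itself already lies in $\minrep_C(T)$, and the atom $A' \in T_B$ is isomorphic to $A$.

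Since both directions are one-line reductions to the established lemmas, there is no real obstacle: all the technical content (namely, the block-by-block minimization argument and the $G$-cycle-vs-packedness contradiction in the proof of Lemma~\ref{lemma:universal-queries/basis}) has already been dealt with. The only thing to double-check while writing it up is that the version of ``contains an atom isomorphic to $A$'' used in the corollary matches the statement of Lemma~\ref{lemma:universal-queries/basis} (which speaks of a specific atom $A \in T_0$ rather than just some atom isomorphic to one in $T_0$); this is immediate because the lemma's conclusion is existential over the block $B$ and atom $A' \in T_B$, and we invoke it for the atom $A'' \in T_0$ witnessing the hypothesis.
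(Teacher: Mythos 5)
Your proposal is correct and matches the paper's (implicit) argument exactly: the corollary is stated without proof precisely because the downward direction is Lemma~\ref{lemma:universal-queries/basis} applied to the witnessing atom, and the upward direction is the inclusion $\minrep_C(T,B) \subseteq \minrep_C(T)$ from Lemma~\ref{lemma:minval}. Your transitivity remark about atom isomorphism is the only bookkeeping needed, and it is handled correctly.
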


\noindent The following polynomial time algorithm for deciding
whether $R(\tup{t})$ occurs in some minimal instance in $\rep(T)$
immediately suggests itself.
Let $C$ be the set of all constants in $\tup{t}$.
Consider each atom block $B$ of $T$, and each $T_0 \in \minrep_C(T,B)$ in turn,
and accept the input if and only if $R(\tup{t}) \in T_0$ for some $T_0$.
By Proposition~\ref{prop:minrep-computation},
the instances $T_0$ can be computed in polynomial time.

The following example shows that the proof
of Lemma~\ref{lemma:universal-queries/basis} fails
if $T$ contains atom blocks that are not packed.

\begin{exa}
  \label{ex:universal-queries/basis/failure}
  Let $E$ be a binary relation symbol,
  and consider the instance $T$ over $\set{E}$ with
  \begin{align*}
    E^T \,=\, \{
      (\bot_1,a), (\bot_1,b), (\bot_1,\bot'_1), (\bot'_1,c),
      (\bot_2,a), (\bot_2,b), (\bot_2,\bot'_2), (c,\bot'_2)
    \}.
  \end{align*}
  Note that $T$ is a core, and that $T$ has the two atom blocks
  \begin{align*}
    B_1 & \,=\, \set{E(\bot_1,a), E(\bot_1,b), E(\bot_1,\bot'_1), E(\bot'_1,c)},
        \ \text{and} \\
    B_2 & \,=\, \set{E(\bot_2,a), E(\bot_2,b), E(\bot_2,\bot'_2), E(c,\bot'_2)}.
  \end{align*}
  See Figure~\ref{fig:universal-queries/basis/failure/graph}
  for a graph representation of $T$ and its two atom blocks $B_1$ and $B_2$.
  \begin{figure}[ht]
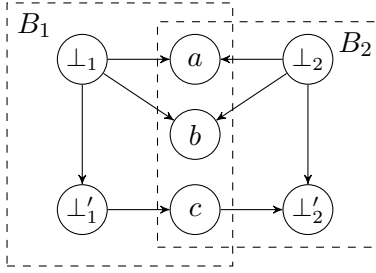

    \centering
    \includeTikZ{img_main-lemma-fail}
    \caption{The instance $T$,
      and the two atom blocks $B_1$ and $B_2$ of $T$,
      which are the subinstances induced by the vertices
      in the corresponding dashed rectangles.}
    \label{fig:universal-queries/basis/failure/graph}
  \end{figure}
  Note also that neither $B_1$ nor $B_2$ is packed.

  Consider $f \in \val_\emptyset(T)$
  with $f(\bot_1) = f(\bot_2) = a$ and $f(\bot'_1) = f(\bot'_2) = b$.
  Then it is not hard to see that
  \[
    f(T) \,=\, \set{E(a,a),E(a,b),E(b,c),E(c,b)} \in \minrep_\emptyset(T).
  \]
  Furthermore, for the mappings $f_i$ created in the proof
  of Lemma~\ref{lemma:universal-queries/basis}, we have
  \begin{align*}
    f_1(T) \,=\, \{
    & E(a,a), E(a,b), E(b,c),
      E(\bot_2,a), E(\bot_2,b), E(\bot_2,\bot'_2), E(c,\bot'_2)
    \}
    \intertext{and}
    f_2(T) \,=\, \{
    & E(a,a), E(a,b), E(c,b),
      E(\bot_1,a), E(\bot_1,b),E(\bot_1,\bot'_1),E(\bot'_1,c)
    \}.
  \end{align*}

  For $g_i \in \val_\emptyset(T,B_i)$ with $g_i(\bot_i) = \bot_{3-i}$ and
  $g_i(\bot'_i) = b$, it holds that $g_i \in \minval_\emptyset(T,B_i)$,
  and moreover,
  \begin{align*}
    g_1(T)
    & \,=\,
      \{E(\bot_2,a), E(\bot_2,b), E(\bot_2,\bot'_2), E(c,\bot'_2), E(b,c)\}
      \in \minrep_\emptyset(T), \\
    g_2(T)
    & \,=\,
      \{E(\bot_1,a), E(\bot_1,b), E(\bot_1,\bot'_1), E(\bot'_1,c), E(c,b)\}
      \in \minrep_\emptyset(T).
  \end{align*}
  Note that $E(a,a)$ and $E(a,b)$ occur in $f_i(B_i)$,
  but neither $g_1(T)$ nor $g_2(T)$ contains $E(a,a)$ or $E(a,b)$.
\end{exa}

\subsubsection{Proof of Theorem~\ref{thm:universal-queries/bool}}
\label{sec:XGCWA-answers/universal/proof}

This section finally proves Theorem~\ref{thm:universal-queries/bool}.
Let $M = \Des$ be a schema mapping, where $\Sigma$ consists of packed st-tgds,
and let $q(\tup{x})$ be a universal query over $\tau$.
We show that there is a polynomial time algorithm that,
given as input an instance $T \isdef \core(M,S)$
for some source instance $S$ for $M$,
and a tuple $\tup{t} \in \Const^{\length{\tup{x}}}$,
decides whether $\tup{t} \in \certXGCWA{M}{S}{q}$.

As shown in Section~\ref{sec:XGCWA-answers/universal/core},
we can assume that $\tup{t}$ is a tuple over $\const(T) \union \dom(q)$,
and that in this case we have $\tup{t} \notin \certXGCWA{M}{S}{q}$
if and only if there is a nonempty finite set $\CT$
of minimal instances in $\rep(T)$
such that $\bigunion \CT \models \lnot q(\tup{t})$.
Now observe that $\lnot q$ is logically equivalent to a query $\bar{q}$
of the form
\begin{align*}
  \bar{q}(\tup{x}) = \biglor_{i=1}^m  q_i(\tup{x}),
\end{align*}
where each $q_i$ is an existential query of the form
\begin{align*}
  q_i(\tup{x}) = \exists \tup{y}_i \bigland_{j=1}^{n_i} \phi_{i,j},
\end{align*}
and each $\phi_{i,j}$ is an atomic FO formula or the negation
of an atomic FO formula.
Indeed, since $q$ is a universal query, we have
\(
  \lnot q \equiv \exists \tup{y}\, \phi(\tup{x},\tup{y}),
\)
where $\phi$ is quantifier-free.
By transforming $\phi$ into ``disjunctive normal form'',
we obtain a query of the form
\(
  \exists \tup{y} \biglor_{i=1}^m \bigland_{j=1}^{n_i} \phi_{i,j},
\)
where each $\phi_{i,j}$ is an atomic FO formula
or the negation of an atomic FO formula.
By moving existential quantifiers inwards, we finally obtain $\bar{q}$.
It remains therefore to decide whether there is some $i \in \set{1,\dotsc,m}$
and a nonempty finite set $\CT$ of minimal instances in $\rep(T)$
such that $\bigunion \CT \models q_i(\tup{t})$.

Fix some constant $\bs$ as in Lemma~\ref{lemma:bounded-blocksize}.
Then, for each atom block $B$ of $\core(M,S)$,
we have $\card{\nulls(B)} \leq \bs$.
Furthermore,
Proposition~\ref{prop:packed} tells us that each atom block of $\core(M,S)$
is packed.
We can now use the following algorithm to decide,
given as input an instance $T \isdef \core(M,S)$
for some source instance $S$ for $M$,
and a tuple $\tup{t} \in \Const^{\length{\tup{x}}}$,
whether $\tup{t} \in \certXGCWA{M}{S}{q}$:
\begin{enumerate}[(1)]
\item\label{algo:universal/check-blocks}
  Determine the atom blocks of $T$
  and check whether each atom block $B$ of $T$ is packed and satisfies
  $\card{\nulls(B)} \leq \bs$;
  if not, reject the input.
\item\label{algo:universal/check-core}
  Check whether $T$ is a core;
  if not, reject the input.
\item\label{algo:universal/loop}
  For each $i \in \set{1,\dotsc,m}$:
  \begin{enumerate}[(a)]
  \item\label{algo:universal/find}
    Check whether there is a nonempty finite set $\CT$ of minimal instances
    in $\rep(T)$ such that $\bigunion \CT \models q_i(\tup{t})$.
  \item\label{algo:universal/not-found}
    If such a $\CT$ exists, reject the input.
  \end{enumerate}
\item\label{algo:universal/accept}
  Accept the input.
\end{enumerate}
Step~\ref{algo:universal/check-blocks} clearly runs in polynomial time,
and step~\ref{algo:universal/check-core} can be implemented in polynomial time
using the algorithm from Lemma~\ref{lem:block-core}
(that algorithm outputs $T$ if and only if $T$ is a core).
Lemma~\ref{lemma:universal-queries/combine} below
 tells us that step~\ref{algo:universal/find} can be implemented
in polynomial time as well.
Thus, once Lemma~\ref{lemma:universal-queries/combine} is proved,
the proof of Theorem~\ref{thm:universal-queries/bool} is complete.

\begin{lem}
  \label{lemma:universal-queries/combine}
  Let $q(\tup{x}) = \exists \tup{y}\, \phi(\tup{x},\tup{y})$
  be a FO query over $\tau$,
  where $\phi = \bigland_{i=1}^p \phi_i$,
  and each $\phi_i$ is an atomic FO formula
  or the negation of an atomic FO formula.
  For each positive integer $\bs$,
  there is a polynomial time algorithm that decides:
  \problem{$\CoreEval_{\tau,\bs}$}
    {an instance $T$ over $\tau$
     such that $T$ is a core
     and each atom block of $T$ is packed and contains at most $\bs$ nulls;
     and a tuple $\tup{t} \in \Const^{\length{\tup{x}}}$}
    {Is there a nonempty finite set $\CT$ of minimal instances
     in $\rep(T)$ such that $\bigunion \CT \models q(\tup{t})$?}
\end{lem}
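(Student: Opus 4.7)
The plan is to exploit the structural results of Sections~\ref{sec:XGCWA-answers/universal/core} and \ref{sec:XGCWA-answers/universal/atoms}: after rewriting $\phi$ in disjunctive normal form, the problem reduces to polynomially many ``per-positive-atom'' checks, each decided by a polynomial search over candidate triples supplied by Proposition~\ref{prop:minrep-computation} and Lemma~\ref{lemma:universal-queries/basis}.

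Writing $\phi$ in disjunctive normal form makes $q$ equivalent to a disjunction of existential queries $q_i(\tup{x}) = \exists \tup{y}\, \bigland_j \phi_{i,j}$ with literal conjuncts, and it suffices to decide existence of $\CT$ for one such disjunct. Fix $q_i$ and an assignment $\beta\colon \tup{y} \to \dom(T) \union \const(\tup{t}) \union \dom(q) \union \set{c_1,\dotsc,c_{\length{\tup{y}}}}$ with $c_1,\dotsc,c_{\length{\tup{y}}}$ fresh constants; up to renaming these fresh constants, there are only polynomially many relevant $\beta$, each yielding ground positive atoms $P_1,\dotsc,P_a$ and negative atoms $N_1,\dotsc,N_b$. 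The existence of a suitable $\CT$ under $\beta$ is equivalent to: for each $P_l$ there is a minimal instance $T_l \in \rep(T)$ containing $P_l$ and none of the $N_j$ (the degenerate case $a = 0$ asks instead for any minimal instance in $\rep(T)$ avoiding all $N_j$). Indeed, ``$\Leftarrow$'' takes $\CT \isdef \set{T_1,\dotsc,T_a}$; conversely, for each $P_l$ some $T_l \in \CT$ must contain $P_l$ and must avoid all $N_j$ since $\bigunion \CT$ does.

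For the per-$P_l$ check, choose $C \subseteq \Const$ appropriately (small enough that the fixity condition in Proposition~\ref{prop:minrep-props}(\ref{prop:minrep-props/capture}) is manageable, yet large enough to carry the needed constants). Then the minimal instances in $\rep(T)$ are exactly the injective-valuation images of members of $\minrep_C(T)$, and by Lemma~\ref{lemma:universal-queries/basis} every atom of any $T_0' \in \minrep_C(T)$ is the homomorphic image of an isomorphic atom $A' \in T_B$ for some atom block $B$ of $T$ and some $T_B \in \minrep_C(T,B)$. Composing with the valuation, a witnessing $T_l$ exists iff there is a triple $(B,T_B,A')$ with $A' \isomorphic P_l$ such that the partial substitution $\sigma_{A'}$ forced by $A' \mapsto P_l$ extends to an injective valuation $\nu$ of $T_B$ with $\nu(T_B)$ containing $P_l$ and avoiding all $N_j$. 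The crucial observation is that any null of $T_B$ outside $\nulls(A')$ may be assigned a fresh constant (not in $\dom(T)$, not in any $\const(P_l)$, and not in any $\const(N_j)$); by freshness, an atom $A'' \in T_B$ whose nulls are not all contained in $\nulls(A')$ cannot be mapped to any $N_j$, so the whole test collapses to checking $\sigma_{A'}(A'') \neq N_j$ for the finitely many atoms $A'' \in T_B$ with $\nulls(A'') \subseteq \nulls(A')$. Proposition~\ref{prop:minrep-computation} together with Lemma~\ref{lemma:bounded-blocksize} enumerates the pairs $(B,T_B)$ in polynomial time, for each such pair the polynomially many atoms $A'$ are checked in polynomial time, and the overall procedure is polynomial.

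The main obstacle is establishing correctness of this characterization in both directions. The forward direction is essentially Lemma~\ref{lemma:universal-queries/basis}, applied to some $T_0' \in \minrep_C(T)$ that gives rise to the witnessing $T_l$. The delicate part is the backward direction: showing that the constructed $\nu(T_B)$ really is a \emph{minimal} instance in $\rep(T)$, which via Proposition~\ref{prop:minrep-props}(\ref{prop:minrep-props/capture}) boils down to an appropriate choice of $C$ and to handling the fixity condition $\nu^{-1}(c) = c$ for $c \in \dom(\nu(T_B)) \intersection C$; some care is needed for constants occurring in $P_l$ or some $N_j$ that may already appear in $T_B$, versus those introduced into the image of $\nu$ via $\sigma_{A'}$ or via the fresh choices.
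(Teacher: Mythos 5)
Your overall route is genuinely different from the paper's: you ground the existential witnesses up front by enumerating assignments $\beta$ over a polynomial-size pool of constants, which makes the resulting positive atoms $P_1,\dotsc,P_a$ independent of one another and lets you replace the paper's compatibility/join machinery (Definitions~\ref{def:compatible} and \ref{def:join}) by $a$ separate per-atom searches driven by Proposition~\ref{prop:minrep-computation} and Lemma~\ref{lemma:universal-queries/basis}. Where it works, this is a real simplification: because the atoms are already ground, no consistency between the per-atom witnesses has to be enforced, and your observation that nulls outside $\nulls(A')$ can be sent to fresh constants (so that only atoms $A''$ with $\nulls(A'') \subseteq \nulls(A')$ can collide with an $N_j$) does yield a correct and complete test for ``is there a minimal instance in $\rep(T)$ containing $P_l$ and avoiding all $N_j$'', modulo the bookkeeping about $C$ that you flag yourself.

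There is, however, a genuine gap: your claimed equivalence ignores the paper's \emph{active domain semantics}, under which a witnessing assignment for $\tup{y}$ must take values in $\dom(\bigunion\CT) \union \dom(q)$. If a variable $y$ occurs only in negative literals (or only in inequalities) and $\beta(y)$ is one of your fresh constants, then nothing in $\CT \isdef \set{T_1,\dotsc,T_a}$ forces that constant into $\dom(\bigunion\CT)$, so $\beta$ is not realizable even though all your per-atom checks succeed. Concretely, take $\tau = \set{Q}$ with $Q$ unary, $T = \set{Q(\bot)}$ (a core, packed, one null) and $q = \exists y\, \lnot Q(y)$. Every union of minimal instances in $\rep(T)$ has the form $\set{Q(c_1),\dotsc,Q(c_n)}$ with active domain $\set{c_1,\dotsc,c_n}$, so $q$ is false on every such union and the correct answer is ``no''; but your algorithm, with $\beta(y)$ a fresh constant $c$, finds the minimal instance $\set{Q(c')}$ with $c' \neq c$ avoiding $N_1 = Q(c)$ and accepts. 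This is precisely the role of the $s-k$ additional copies $\rho_{k+1}(T),\dotsc,\rho_s(T)$ in Algorithm~\ref{alg:XGCWA-answers/universal/composition}: they supply active-domain witnesses for the values of $\beta(\tup{y})$ not covered by positive atoms, and since $\tilde{q}$ is evaluated on the whole assembled instance, the negative literals are also checked against these extra copies --- which, as the example shows, can flip the answer. Your construction needs this component added (and fresh constants must be forbidden altogether when $T$ is ground), together with a verification that the added instances do not themselves introduce any $N_j$.
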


The remaining part of this section is devoted to a proof of
Lemma~\ref{lemma:universal-queries/combine}.

\medskip

Let $q(\tup{x}) = \exists \tup{y}\, \phi(\tup{x},\tup{y})$
be as in the hypothesis of Lemma~\ref{lemma:universal-queries/combine}.
Without loss of generality, there is no variable that occurs both
in $\tup{x}$ and in $\tup{y}$,
and $\phi$ has the form $\bigland_{i=1}^p \phi_i$,
where each $\phi_i$ is a relational atomic FO formula,
the negation of a relational atomic FO formula,
or the negation of an equality.
Let $\bs$ be a positive integer.

Suppose we are given an instance $T$ over $\tau$,
where $T$ is a core, each atom block of $T$ is packed,
and each atom block of $T$ contains at most $\bs$ nulls,
and a tuple $\tup{t} \in \Const^{\length{\tup{x}}}$.
In a first step, we rewrite $\phi$ to a formula $\psi$
by replacing each variable $x$ in $\tup{x}$
with the corresponding constant assigned to $x$ by $\tup{t}$.
That is, if $\tup{x} = (x_1,\dotsc,x_k)$ and $\tup{t} = (t_1,\dotsc,t_k)$,
then $\psi$ is obtained from $\phi$ by replacing,
for each $i \in \set{1,\dotsc,k}$,
each occurrence of the variable $x_i$ in $\phi$ by $t_i$.
Let
\[
  \tilde{q} \isdef \exists \tup{y}\, \psi(\tup{y}).
\]
To check whether there is a nonempty finite set $\CT$
of minimal instances in $\rep(T)$
such that $\bigunion \CT \models q(\tup{t})$,
it suffices to check whether there is a nonempty finite set $\CT$
of minimal instances in $\rep(T)$
such that $\bigunion \CT \models \tilde{q}$.

Suppose that $\psi$ has the form
\begin{align*}
  \psi(\tup{y}) =
  \bigland_{i=1}^k R_i(\tup{x}_i) \land
  \bigland_{i=1}^l \lnot Q_i(\tup{w}_i) \land
  \bigland_{i=1}^m \lnot v_i = v'_i.
\end{align*}
Let $C$ be the set of constants that occur in $\psi$,
and for each $i \in \set{1,\dotsc,k}$,
let $X_i$ be the set of all variables in $\tup{x}_i$.
Given an assignment $\alpha$ for a set $X$ of variables,
and a tuple $\tup{t}$ over $X \union \Const$,
we sloppily write $\alpha(\tup{t})$ for the tuple obtained from $\tup{t}$
by replacing each occurrence of each variable $x \in X$ in $\tup{t}$
with $\alpha(x)$.

The idea for finding a nonempty finite set $\CT$
of minimal instances in $\rep(T)$ with $\bigunion \CT \models \tilde{q}$
is as follows.
In the first step, we compute, for each $i \in \set{1,\dotsc,k}$,
the set of all pairs $(T_i,\alpha_i)$
such that $T_i \in \minrep_C(T,B)$ for some atom block $B$ of $T$,
and $\alpha_i(\tup{x}_i) \in R_i^{T_i}$.
Thus, modulo renaming of values that do not occur in $\const(T) \union C$,
we enumerate the possible assignments $\alpha_i$ of $\tup{x}_i$
under which $R_i(\tup{x}_i)$ is satisfied in some minimal instance
in $\rep(T)$;
the instance $T_i$ can then be considered as a witness to this fact.
In the second step,
we try to join the pairs $(T_1,\alpha_1),\dotsc,(T_k,\alpha_k)$,
where each $(T_i,\alpha_i)$ is a pair computed for $i$ in the first step,
to a single pair $(\tilde{T},\tilde{\alpha})$
such that $\tilde{T}$ satisfies each $R_i(\tup{x}_i)$
under the assignment $\tilde{\alpha}$.
The instance $\tilde{T}$ will actually be the union of isomorphic copies
$\rho_1(T_1),\dotsc,\rho_k(T_k)$ of the instances $T_1,\dotsc,T_k$.
In particular,
the set $\set{\rho_1(T_1),\dotsc,\rho_k(T_k)}$ is already close
to the desired set $\CT$:
it is a finite set of instances from $\min_C(T)$,
and its union satisfies the subformula $\bigland_{i=1}^k R_i(\tup{x}_i)$
of $\tilde{q}$ under $\tilde{\alpha}$.
Not all pairs $(T_1,\alpha_1),\dotsc,(T_k,\alpha_k)$ can be joined together.
We will join only pairs that are \emph{compatible}
in the sense of Definition~\ref{def:compatible} below.
By taking care in how those pairs are joined together,
and using Lemma~\ref{lemma:universal-queries/basis},
we can show that the desired set $\CT$ exists
if and only if the instance obtained from $\tilde{T}$
by adding a large enough, but constant, number of isomorphic copies of $T$
to $\tilde{T}$ satisfies $\tilde{q}$.

More precisely, the algorithm proceeds as follows.
Fix the constant
\[
  s \isdef k + \sum_{i=1}^l\, \length{\tup{w}_i} + 2 \cdot m.
\]
In the above description,
$s-k$ is the number of isomorphic copies of $T$
that will be added to $\tilde{T}$.
For each $i \in \set{1,\dotsc,s}$,
pick an injective mapping
\[
  \rho_i\colon \dom(T) \union C \to \Dom
\]
such that $\rho_i(c) = c$ for each $c \in \const(T) \union C$,
$\rho_i(\bot) \in \Nulls$ for each $\bot \in \nulls(T)$,
and such that for all distinct $i,j \in \set{1,\dotsc,s}$,
we have $\nulls(\rho_i(T)) \intersection \nulls(\rho_j(T)) = \emptyset$.
Then compute, for each $i \in \set{1,\dotsc,k}$, the set
\begin{align}
  \label{eq:Xi-def}
  \begin{aligned}
  \CX_i \,\isdef\,
  \Bigl\{
  (T_0,\alpha) \mid\
  & \text{there is some $T_0' \in \minrep_C(T,B)$
    and an atom block $B$ of $T$} \\
  & \text{such that $T_0 = \rho_i(T_0')$,
    $\alpha\colon X_i \to \dom(T_0)$, and $\alpha(\tup{x}_i) \in R_i^{T_0}$}
  \Bigr\}.
  \end{aligned}
\end{align}
Now we would like to join pairs
$(T_1,\alpha_1) \in \CX_1,\dotsc,(T_k,\alpha_k) \in \CX_k$
into single pairs $(\tilde{T},\tilde{\alpha})$
according to the above description.
However, we would like to do this only if the pairs
$(T_1,\alpha_1),\dotsc,(T_k,\alpha_k)$ are \emph{compatible}
in the following sense.
Intuitively, $(T_1,\alpha_1),\dotsc,(T_k,\alpha_k)$ are compatible
if the nulls in the image of each $\alpha_i$ can be consistently renamed
such that the resulting mappings $\tilde{\alpha}_i$ agree on common variables.

\begin{defi}[compatible]
  \label{def:compatible}
  We say that $(T_1,\alpha_1) \in \CX_1,\dotsc,(T_k,\alpha_k) \in \CX_k$
  are \emph{compatible}
  if there is an equivalence relation $\sim$
  on $D \isdef \bigunion_{i=1}^k \alpha_i(X_i)$
  such that
  \begin{enumerate}[(1)]
  \item\label{def:compatible/cons}
    for all $i,j \in \set{1,\dotsc,k}$ and $x \in X_i \intersection X_j$,
    we have $\alpha_i(x) \sim \alpha_j(x)$,
  \item\label{def:compatible/const}
    for all $u,u' \in D$, if $u \sim u'$ and $u \in \Const$, then $u = u'$,
    and
  \item\label{def:compatible/eq}
    for all $i \in \set{1,\dotsc,k}$ and $x,x' \in X_i$,
    we have $\alpha_i(x) \sim \alpha_i(x')$ if and only if
    $\alpha_i(x) = \alpha_i(x')$.
  \end{enumerate}
\end{defi}

\begin{prop}
  \label{prop:XGCWA-answers/universal/composition/compat}
  There is an algorithm that,
  given $(T_1,\alpha_1) \in \CX_1,\dotsc,(T_k,\alpha_k) \in \CX_k$ as input,
  decides in time linear in the size of $T$
  whether $(T_1,\alpha_1),\dotsc,(T_k,\alpha_k)$ are compatible,
  and if so, outputs an equivalence relation $\sim$ on
  $D \isdef \bigunion_{i=1}^k \alpha_i(X_i)$
  that satisfies conditions~\ref{def:compatible/cons}--\ref{def:compatible/eq}
  of Definition~\ref{def:compatible}.
  In fact, $\sim$ is the smallest such equivalence relation
  (with respect to set inclusion).
\end{prop}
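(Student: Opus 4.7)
The plan is to construct the smallest equivalence relation $\sim_0$ on $D \isdef \bigunion_{i=1}^k \alpha_i(X_i)$ satisfying condition~\ref{def:compatible/cons}, and then verify the two remaining conditions. To build $\sim_0$, I would use a standard union-find data structure: initialize each element of $D$ as its own class, and then for every pair $i,j \in \set{1,\dotsc,k}$ and every variable $x \in X_i \intersection X_j$, merge the classes of $\alpha_i(x)$ and $\alpha_j(x)$. By construction, $\sim_0$ is the smallest (under set inclusion of pairs) equivalence relation on $D$ satisfying condition~\ref{def:compatible/cons}, and every equivalence relation $\sim$ satisfying condition~\ref{def:compatible/cons} contains $\sim_0$ as a subrelation.

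Once $\sim_0$ has been built, the algorithm checks the remaining conditions directly. For condition~\ref{def:compatible/const}, inspect each equivalence class of $\sim_0$ and verify that it contains at most one constant. For condition~\ref{def:compatible/eq}, for each $i \in \set{1,\dotsc,k}$ and each pair $x,x' \in X_i$ with $\alpha_i(x) \neq \alpha_i(x')$, verify that $\alpha_i(x) \not\sim_0 \alpha_i(x')$ (the converse direction of the ``iff'' is automatic from reflexivity). If any of these checks fails, output ``not compatible''; otherwise output $\sim_0$.

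For correctness, note that conditions~\ref{def:compatible/const} and \ref{def:compatible/eq} both have the shape ``if $u \sim u'$ then \dots''. Hence, if $\sim_0$ violates one of them for some pair $(u,u')$, then every equivalence relation containing $\sim_0$ violates it as well. Combined with the fact that every $\sim$ satisfying condition~\ref{def:compatible/cons} contains $\sim_0$, this shows that $(T_1,\alpha_1),\dotsc,(T_k,\alpha_k)$ are compatible if and only if $\sim_0$ itself satisfies all three conditions. In that case $\sim_0$ is, trivially, the smallest witness.

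For the complexity, observe that $\card{D} \leq \sum_{i=1}^k \card{X_i}$, which is bounded by a constant depending only on the fixed query $q$ (in particular, independent of $T$). The union-find operates on a universe of constant size, and both the number of $\textsc{Union}$ calls and the number of subsequent checks are constant. The only nonconstant work is scanning the input to read off the mappings $\alpha_i$ and (if needed) set up dictionary lookups on $\dom(T_0)$, which takes time linear in the input size and therefore linear in the size of $T$. The proof has no real obstacle; the only point worth flagging is the observation that conditions~\ref{def:compatible/const} and \ref{def:compatible/eq} are downward closed along $\subseteq$ on the set of relations satisfying condition~\ref{def:compatible/cons}, which is exactly what legitimizes checking them only on $\sim_0$ rather than searching among all candidates.
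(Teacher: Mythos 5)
Your proposal is correct and follows essentially the same route as the paper: build the smallest equivalence relation generated by the identifications $\alpha_i(x) \sim \alpha_j(x)$ for shared variables, then verify conditions~\ref{def:compatible/const} and \ref{def:compatible/eq} on that single candidate, justified by the fact that these two conditions are downward closed under $\subseteq$ while every compatible witness must contain the generated relation. Your explicit statement of the downward-closure observation is in fact slightly cleaner than the paper's ``the same argument shows'' remark, but the algorithm and correctness argument are the same.
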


\begin{proof}
  Given $(T_1,\alpha_1) \in \CX_1,\dotsc,(T_k,\alpha_k) \in \CX_k$,
  the following algorithm computes the desired relation $\sim$
  if it exists:
  \begin{enumerate}[(1)]
  \item\label{alg:XGCWA-answers/universal/composition/compat/init}
    Initialize $\sim$ to be
    $\set{(u,u) \mid
      \text{$u \in \alpha_i(X_i)$ for some $i \in \set{1,\dotsc,k}$}}$.
  \item\label{alg:XGCWA-answers/universal/composition/compat/eq1}
    For all $i,j \in \set{1,\dotsc,k}$ and $x \in X_i \intersection X_j$,
    add $(\alpha_i(x),\alpha_j(x))$ to $\sim$.
  \item\label{alg:XGCWA-answers/universal/composition/compat/eq2}
    For all $i \in \set{1,\dotsc,k}$ and $x,x' \in X_i$
    with $\alpha_i(x) = \alpha_i(x')$,
    add $(\alpha_i(x),\alpha_i(x'))$ to $\sim$.
  \item\label{alg:XGCWA-answers/universal/composition/compat/closure}
    Update $\sim$ to be the symmetric and transitive closure of $\sim$.
  \item\label{alg:XGCWA-answers/universal/composition/compat/check}
    If $\sim$ satisfies conditions \ref{def:compatible/const} and
    \ref{def:compatible/eq} of Definition~\ref{def:compatible},
    then output $\sim$;
    otherwise output ``not compatible''.
  \end{enumerate}
  Since $k$ and $X_1,\dotsc,X_k$ are constant,
  it should be clear that each of the steps
  \ref{alg:XGCWA-answers/universal/composition/compat/init}--%
  \ref{alg:XGCWA-answers/universal/composition/compat/check}
  can be accomplished in constant time,
  after building the necessary data structures from the input
  in time linear in the size of $T$
  (note that each $T_i$ is at most as large as $T$,
   so that the length of the input is linear in the size of $T$).

  It is now not hard to see that if the algorithm outputs a relation $\sim$,
  then $\sim$ is an equivalence relation
  on $D \isdef \bigunion_{i=1}^k \alpha_i(X_i)$
  that satisfies conditions~\ref{def:compatible/cons}--\ref{def:compatible/eq}
  of Definition~\ref{def:compatible}.
  In particular, $(T_1,\alpha_1),\dotsc,(T_k,\alpha_k)$ are compatible.
  Even more, $\sim$ is the smallest such equivalence relation,
  since every equivalence relation $\sim^*$ on $D$
  that satisfies conditions~\ref{def:compatible/cons}--\ref{def:compatible/eq}
  of Definition~\ref{def:compatible}
  must contain the pairs put into $\sim$
  in steps~\ref{alg:XGCWA-answers/universal/composition/compat/init}--%
  \ref{alg:XGCWA-answers/universal/composition/compat/closure}
  of the algorithm.
  The same argument shows that the algorithm outputs a relation $\sim$
  if there is an equivalence relation $\sim^*$ on $D$
  that satisfies conditions~\ref{def:compatible/cons}--\ref{def:compatible/eq}
  of Definition~\ref{def:compatible},
  that is, if $(T_1,\alpha_1),\dotsc,(T_k,\alpha_k)$ are compatible.
\end{proof}

We now define the join of compatible pairs
$(T_1,\alpha_1) \in \CX_1,\dotsc,(T_k,\alpha_k) \in \CX_k$.
Given an equivalence relation $\sim$
on $D \isdef \bigunion_{i=1}^k \alpha_i(X_i)$
as in Definition~\ref{def:compatible},
the idea of the join is to identify values $u,u' \in D$ with $u \sim u'$,
and to ``glue''
the resulting instances $\tilde{T}_i$ and assignments $\tilde{\alpha}_i$
together to a single instance $\tilde{T}$ and assignment $\tilde{\alpha}$.

\begin{defi}[Join]
  \label{def:join}
  Let $(T_1,\alpha_1) \in \CX_1,\dotsc,(T_k,\alpha_k) \in \CX_k$
  be compatible,
  and let $\sim$ be the smallest equivalence relation
  on $D \isdef \bigunion_{i=1}^k \alpha_i(X_i)$
  that satisfies conditions~\ref{def:compatible/cons}--\ref{def:compatible/eq}
  of Definition~\ref{def:compatible}.
  Pick some linear order $\preceq$ on the elements of $D$,
  and for each $u \in D$,
  let $\hat{u}$ be the minimal element in
  $[u] \isdef \set{u' \in D \mid u' \sim u}$
  with respect to $\preceq$.
  For each $i \in \set{1,\dotsc,k}$,
  define $r_i\colon \dom(T_i) \to \Dom$ such that for each $u \in \dom(T_i)$,
  \begin{align*}
    r_i(u) \isdef
    \begin{cases}
      \hat{u}, & \text{if $u \in \alpha_i(X_i)$}, \\
      u, & \text{otherwise}.
    \end{cases}
  \end{align*}
  Then \emph{the join of $(T_1,\alpha_1),\dotsc,(T_k,\alpha_k)$}
  is the pair $(\tilde{T},\tilde{\alpha})$,
  where $\tilde{T} \isdef \bigunion_{i=1}^k r_i(T_i)$,
  and $\tilde{\alpha}\colon \bigunion_{i=1}^k X_i \to \Dom$ is such that
  for each $x \in \bigunion_{i=1}^k X_i$,
  \begin{align*}
    \tilde{\alpha}(x) \isdef
    \begin{cases}
      r_1(\alpha_1(x)), & \text{if $x \in X_1$} \\
      \quad\ \ \vdots &  \\
      r_k(\alpha_k(x)), & \text{if $x \in X_k$}.
    \end{cases}
  \end{align*}
\end{defi}

Note that different choices of $\preceq$ yield different joins.
For definiteness, we can generate $\preceq$ as follows.
Initialize $\preceq$ to be the empty relation.
For increasing $i = 1,2,\dotsc,k$, consider the variables $x \in X_i$
in some predefined fixed order,
and if $u \isdef \alpha_i(x)$ does not already occur in $\preceq$,
add $u$ as the new maximal element to $\preceq$.
This takes constant time, since $k$ and $X_1,\dotsc,X_k$ are fixed.
For the following construction,
it is not important that the join always yields the same result --
the join resulting from any linear ordering $\preceq$ on $D$ is fine.
What is important are the properties summarized
in Proposition~\ref{prop:XGCWA-answers/universal/composition/join} below.
Note also that modulo the choice of $\preceq$,
$\tilde{\alpha}$ is well-defined
by the construction of $\sim$ and $r_1,\dotsc,r_k$:
if $x \in X_i \intersection X_j$, then $\alpha_i(x) \sim \alpha_j(x)$,
and thus, $r_i(\alpha_i(x)) = r_j(\alpha_j(x))$.

\begin{prop}
  \label{prop:XGCWA-answers/universal/composition/join}
  The join $(\tilde{T},\tilde{\alpha})$ of compatible pairs
  $(T_1,\alpha_1) \in \CX_1,\dotsc,(T_k,\alpha_k) \in \CX_k$
  can be computed in time linear in the size of $T$
  and has the following properties.
  Let $r_1,\dotsc,r_k$ be the mappings used in the construction
  of $(\tilde{T},\tilde{\alpha})$.
  Then for all $i,j \in \set{1,\dotsc,k}$:
  \begin{enumerate}[\em(1)]
  \item
    \label{prop:XGCWA-answers/universal/composition/join/r/r}
    For all $c \in \const(T_i)$ and $\bot \in \nulls(T_i)$,
    we have $r_i(c) = c$ and $r_i(\bot) \in \Nulls$.
  \item\label{prop:XGCWA-answers/universal/composition/join/r/rel2}
    Let $\sim$ be an equivalence relation on
    $D \isdef \bigunion_{i=1}^k \alpha_i(X_i)$ that satisfies
    conditions~\ref{def:compatible/cons}--\ref{def:compatible/eq}
    of Definition~\ref{def:compatible}.
    Then for all $u \in \dom(T_i)$ and $u' \in \dom(T_j)$,
    \begin{align*}
      r_i(u) = r_j(u')\ \Longrightarrow\
      \text{$u = u'$ or:
        $u \in \alpha_i(X_i)$, $u' \in \alpha_j(X_j)$ and $u \sim u'$}.
    \end{align*}
    Furthermore, if $\sim$ is the smallest such equivalence relation,
    then
    \begin{align*}
      r_i(u) = r_j(u') \iff
      \text{$u = u'$ or:
        $u \in \alpha_i(X_i)$, $u' \in \alpha_j(X_j)$ and $u \sim u'$}.
    \end{align*}
  \item\label{prop:XGCWA-answers/universal/composition/join/r/inj}
    $r_i$ is injective.
  \item\label{prop:XGCWA-answers/universal/composition/join/tuples}
    $\tilde{\alpha}(\tup{x}_i) \in R_i^{\tilde{T}}$.
  \end{enumerate}
\end{prop}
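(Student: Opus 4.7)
The plan is to dispatch the four parts in order: statement~(1) unpacks the construction of $\hat{u}$ via Definition~\ref{def:compatible}(\ref{def:compatible/const}); statement~(2) is the main structural claim and carries almost all of the work; statements~(3) and~(4) then follow by specialization and bookkeeping. For the runtime claim, I will note that once the smallest compatible $\sim$ has been produced---in time linear in $|T|$ by Proposition~\ref{prop:XGCWA-answers/universal/composition/compat}---building each $r_i$ amounts to a single pass over $\dom(T_i)$ with $O(1)$ lookups of $\hat{u}$, since $k$ and each $X_i$ are fixed, so that $D$ has bounded size. Hence $\tilde{T} = \bigunion_{i=1}^k r_i(T_i)$ and $\tilde{\alpha}$ can be assembled in time linear in $|T|$.

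For statement~(1), I will show that if $c \in \const(T_i) \intersection \alpha_i(X_i)$, then every $u \in [c]_\sim$ equals $c$: Definition~\ref{def:compatible}(\ref{def:compatible/const}) rules out a distinct constant and also forces any null equivalent to $c$ to equal $c$, which is impossible. Hence $\hat{c} = c$ and $r_i(c) = c$. For $\bot \in \nulls(T_i) \intersection \alpha_i(X_i)$, the same condition prevents $[\bot]_\sim$ from containing a constant, so $\hat{\bot} \in \Nulls$. The cases where $c$ or $\bot$ lies outside $\alpha_i(X_i)$ are immediate from the defining formula for $r_i$.

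Statement~(2) is the main obstacle, and the proof proceeds by case analysis on whether $u \in \alpha_i(X_i)$ and $u' \in \alpha_j(X_j)$. If both lie outside the assignment images, then $r_i(u) = u$ and $r_j(u') = u'$, giving $u = u'$ directly. If both lie inside, then $\hat{u} = \hat{u'}$ yields $u \sim_0 u'$ for the smallest compatible $\sim_0$; since $\sim_0 \subseteq \sim$ for any $\sim$ meeting the three conditions (by the minimality noted in Proposition~\ref{prop:XGCWA-answers/universal/composition/compat}), this transfers to $u \sim u'$. The subtle subcase is when exactly one of $u, u'$ lies in the assignment image, say $u \in \alpha_i(X_i)$ and $u' \notin \alpha_j(X_j)$: here $\hat{u} = u'$ with $\hat{u} \in D \subseteq \Const \union \bigunion_{l=1}^k \nulls(T_l)$, so either $u'$ is a constant---and Definition~\ref{def:compatible}(\ref{def:compatible/const}) forces $u = u'$---or $u'$ is a null, in which case the disjointness $\nulls(T_l) \intersection \nulls(T_j) = \emptyset$ for $l \neq j$ (built into the choice of the $\rho_i$'s) forces $l = j$, putting $u' \in \alpha_j(X_j)$ and contradicting the case assumption. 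The converse direction, for $\sim = \sim_0$, is routine: $u = u'$ yields $r_i(u) = r_j(u')$ via statement~(1) for constants and via null-disjointness for $i \neq j$ (which for a null forces $i = j$), while $u \sim_0 u'$ with both $u, u'$ in the assignment images immediately gives $\hat{u} = \hat{u'}$.

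Statement~(3) is then statement~(2) applied with $i = j$: from $r_i(u) = r_i(u')$ and $u \neq u'$ we would obtain $u, u' \in \alpha_i(X_i)$ and $u \sim_0 u'$, but Definition~\ref{def:compatible}(\ref{def:compatible/eq}) then forces $u = u'$, a contradiction. Statement~(4) is immediate from the construction: $\tilde{\alpha}(\tup{x}_i) = r_i(\alpha_i(\tup{x}_i))$, and since $(T_i,\alpha_i) \in \CX_i$ gives $\alpha_i(\tup{x}_i) \in R_i^{T_i}$, we obtain $\tilde{\alpha}(\tup{x}_i) \in R_i^{r_i(T_i)} \subseteq R_i^{\tilde{T}}$. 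The one place where genuine care is needed is the mixed subcase of statement~(2), where it is precisely the disjointness of nulls across the renamed copies $T_1, \dotsc, T_k$ that prevents a value outside one assignment image from accidentally coinciding with a value inside another.
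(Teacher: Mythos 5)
Your proposal is correct and follows essentially the same route as the paper's proof: part (1) via condition (2) of Definition~\ref{def:compatible}, part (2) by a case split on membership in the $\alpha_i(X_i)$ with the null-disjointness of the renamed copies $\rho_i(T)$ resolving the mixed case (exactly the paper's case (b)), minimality of $\sim$ to transfer to arbitrary compatible equivalence relations, and parts (3) and (4) as direct consequences. The only cosmetic difference is that you split cases on $u \in \alpha_i(X_i)$ rather than on $u \in \alpha_i(X_i) \intersection \Nulls$, which changes nothing of substance.
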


\begin{proof}
  Let us first see that $(\tilde{T},\tilde{\alpha})$
  can be computed in time $O(n)$,
  where $n$ is the size of $T$,
  given $(T_1,\alpha_1),\dotsc,(T_k,\alpha_k)$ as input.
  By Proposition~\ref{prop:XGCWA-answers/universal/composition/compat},
  the relation $\sim$ can be computed in time $O(n)$.
  Since $k$ and $X_1,\dotsc,X_k$ are fixed,
  the linear order $\preceq$ can be computed in constant time.
  Furthermore, since $k$ is constant and all $T_i$ have size at most $n$,
  the mappings $r_i$ and the join $(\tilde{T},\tilde{\alpha})$
  can be computed in time $O(n)$.
  We next prove \ref{prop:XGCWA-answers/universal/composition/join/r/r}--%
  \ref{prop:XGCWA-answers/universal/composition/join/tuples}.

  \ad{\ref{prop:XGCWA-answers/universal/composition/join/r/r}}
  Let $c \in \const(T_i)$.
  If $c \notin \alpha_i(X_i)$, then by the construction of $r_i$
  we have $r_i(c) = c$.
  Otherwise, if $c \in \alpha_i(X_i)$,
  there is some $x \in X_i$ with $\alpha_i(x) = c$,
  so that by the construction of $r_i$
  we have $c = \alpha_i(x) \sim r_i(\alpha_i(x)) = r_i(c)$.
  Condition~\ref{def:compatible/const} of Definition~\ref{def:compatible}
  then yields $r_i(c) = c$.
  Next let $\bot \in \nulls(T_i)$.
  As above, if $\bot \notin \alpha_i(X_i)$, then $r_i(\bot) = \bot \in \Nulls$.
  Otherwise, $\bot \sim r_i(\bot)$,
  so that by condition~\ref{def:compatible/const}
  of Definition~\ref{def:compatible},
  $r_i(\bot) \in \Nulls$.

  \ad{\ref{prop:XGCWA-answers/universal/composition/join/r/rel2}}
  Let $\sim$ be an equivalence relation on $D$
  that satisfies conditions~\ref{def:compatible/cons}--\ref{def:compatible/eq}
  of Definition~\ref{def:compatible}.
  We first prove the statement for the case that $\sim$ is the smallest
  such relation.
  That is, we have to show
  \begin{align}
    \label{eq:XGCWA-answers/universal/composition/join/eq}
    r_i(u) = r_j(u') \iff
    \text{$u = u'$ or:
      $u \in \alpha_i(X_i)$, $u' \in \alpha_j(X_j)$ and $u \sim u'$}.
  \end{align}

  We first prove the direction from right to left.
  Suppose that $u = u'$.
  If $i = j$, we have $r_i(u) = r_j(u')$.
  If $i \neq j$, then $\nulls(T_i) \intersection \nulls(T_j) = \emptyset$
  implies that $u$ and $u'$ are constants,
  and therefore $r_i(u) = u = u' = r_j(u')$
  by \ref{prop:XGCWA-answers/universal/composition/join/r/r}.
  Suppose next that $u \in \alpha_i(X_i)$, $u' \in \alpha_j(X_j)$
  and  $u \sim u'$.
  Pick $x \in X_i$ and $x' \in X_j$ such that
  $\alpha_i(x) = u$ and $\alpha_j(x') = u'$.
  Then, $\alpha_i(x) \sim \alpha_j(x')$,
  and the construction of $r_i$ and $r_j$
  immediately implies $r_i(u) = r_j(u')$.

  We next prove the direction from left to right.
  Let $r_i(u) = r_j(u')$.
  We distinguish the following cases:
  \begin{enumerate}[(a)]
  \item
    \label{case:XGCWA-answers/universal/composition/join/1}
    $u \notin \alpha_i(X_i) \intersection \Nulls$ and
    $u' \notin \alpha_j(X_j) \intersection \Nulls$.
  \item
    \label{case:XGCWA-answers/universal/composition/join/2}
    $u \in \alpha_i(X_i) \intersection \Nulls$ or
    $u' \in \alpha_j(X_j) \intersection \Nulls$.
  \end{enumerate}
  In case~(a), by the construction of $r_i,r_j$ and
  by \ref{prop:XGCWA-answers/universal/composition/join/r/r},
  we have $r_i(u) = u$ and $r_j(u') = u'$.
  Since $r_i(u) = r_j(u')$, this implies $u = u'$.

  So assume case~(b).
  By symmetry it suffices to deal with the case that
  $u \in \alpha_i(X_i) \intersection \Nulls$.
  By the construction of $r_i$, we then have
  \[
    u \sim r_i(u) = r_j(u').
  \]
  We claim that $u' \in \alpha_j(X_j)$.
  Suppose, to the contrary, that $u' \notin \alpha_j(X_j)$.
  By the construction of $r_j$, we have
  \[
    u' = r_j(u') = r_i(u) \sim u.
  \]
  Note that $u \in \alpha_i(X_i)$, $r_i(u) = u'$
  and the construction of $r_i$
  imply that $u' \in D$.
  Pick $p \in \set{1,\dotsc,k}$ and $x \in X_p$ with $\alpha_p(x) = u'$.
  By $u \in \Nulls$, $r_i(u) = u'$ and
  \ref{prop:XGCWA-answers/universal/composition/join/r/r},
  we have $u' \in \Nulls$.
  Moreover, since $u' \in \nulls(T_j)$, $u' = \alpha_p(X_p) \in \nulls(T_p)$,
  and $\nulls(T_j) \intersection \nulls(T_p) = \emptyset$ for $j \neq p$,
  we have $p = j$.
  This, however, implies that $u' \in \alpha_j(X_j)$,
  which is a contradiction to our assumption that $u' \notin \alpha_j(X_j)$.
  Hence, $u' \in \alpha_j(X_j)$.
  By the construction of $r_j$, we have
  \(
    u' \sim r_j(u') \sim u.
  \)
  In particular,
  $u \in \alpha_i(X_i)$, $u' \in \alpha_j(X_j)$ and $u \sim u'$,
  as desired.

  Finally, let $\sim^*$ be another equivalence relation on $D$
  that satisfies conditions~\ref{def:compatible/cons}--\ref{def:compatible/eq}
  of Definition~\ref{def:compatible}.
  We show that
  \begin{align*}
    r_i(u) = r_j(u')\ \Longrightarrow\
    \text{$u = u'$ or:
      $u \in \alpha_i(X_i)$, $u' \in \alpha_j(X_j)$ and $u \sim^* u'$}.
  \end{align*}
  Let $r_i(u) = r_j(u')$.
  By \eqref{eq:XGCWA-answers/universal/composition/join/eq},
  we have $u = u'$,
  or: $u \in \alpha_i(X_i)$, $u' \in \alpha_j(X_j)$ and $u \sim u'$.
  By minimality of $\sim$, $u \sim u'$ implies $u \sim^* u'$,
  so that $u = u'$,
  or: $u \in \alpha_i(X_i)$, $u' \in \alpha_j(X_j)$ and $u \sim^* u'$,
  as desired.

  \ad{\ref{prop:XGCWA-answers/universal/composition/join/r/inj}}
  Let $u,u' \in \dom(T_i)$ be such that $r_i(u) = r_i(u')$.
  We have to show that $u = u'$.
  By~\ref{prop:XGCWA-answers/universal/composition/join/r/rel2},
  we have $u = u'$, or: $u,u' \in \alpha_i(X_i)$ and $u \sim u'$.
  If $u = u'$, we are done.
  So assume that $u,u' \in \alpha_i(X_i)$ and $u \sim u'$.
  Let $x,x' \in X_i$ be such that $\alpha_i(x) = u$ and $\alpha_i(x') = u'$.
  Then $\alpha_i(x) \sim \alpha_i(x')$,
  and by condition~\ref{def:compatible/eq} of Definition~\ref{def:compatible},
  we have $u = \alpha_i(x) = \alpha_i(x') = u'$,
  as desired.

  \ad{\ref{prop:XGCWA-answers/universal/composition/join/tuples}}
  This follows immediately from the construction of $\tilde{T}$,
  $\tilde{\alpha}$,
  and \ref{prop:XGCWA-answers/universal/composition/join/r/r}.
\end{proof}

We can now give the algorithm for $\CoreEval_{\tau,\bs}$:

\begin{algo}[Main algorithm]
  \label{alg:XGCWA-answers/universal/composition}\mbox{}\\
  \textit{Input:} \hfill
  \parbox[t]{0.9\linewidth}{%
    an instance $T$ over $\tau$ that is a core
    and each atom block of $T$ is packed and
    contains at most $\bs$ nulls;
    a tuple $\tup{t} \in \Const^{\length{\tup{x}}}$
  }

  \smallskip

  \noindent
  \textit{Output:} \hfill
  \parbox[t]{0.9\linewidth}{%
    ``yes'' if there is a nonempty finite set $\CT$ of minimal
    instances in $\rep(T)$ such that $\bigunion \CT \models q(\tup{t})$;
    otherwise ``no''
  }

  \smallskip

  \begin{enumerate}[(1)]
  \item
    Compute $\tilde{q} = \exists \tup{y}\, \psi(\tup{y})$
    and choose $\rho_1,\dotsc,\rho_s$.
    (Recall that each $\rho_i$ is an injective mapping
    from $\dom(T) \union C$ to $\Dom$
    that is the identity on constants, maps nulls to nulls,
    and that $\nulls(\rho_i(T)) \intersection \nulls(\rho_j(T)) = \emptyset$
    for distinct $i,j$.)
  \item
    Compute the sets $\CX_1,\dotsc,\CX_k$ according to \eqref{eq:Xi-def}.
  \item
    For all $(T_1,\alpha_1) \in \CX_1,\dotsc,(T_k,\alpha_k) \in \CX_k$:
    \begin{enumerate}[(a)]
    \item
      Check whether $(T_1,\alpha_1),\dotsc,(T_k,\alpha_k)$ are compatible;
      \\
      if not, continue with next $(T_1,\alpha_1),\dotsc,(T_k,\alpha_k)$.
    \item
      Let $(\tilde{T},\tilde{\alpha})$ be the join of
      $(T_1,\alpha_1),\dotsc,(T_k,\alpha_k)$.
    \item
      If $\tilde{T} \union \bigunion_{i=k+1}^s \rho_i(T)$
      satisfies $\tilde{q}$,
      output ``yes''.
    \end{enumerate}
  \item
    Output ``no''.
  \end{enumerate}
\end{algo}

\noindent Let us now show that the algorithm decides $\CoreEval_{\tau,\bs}$
in polynomial time.
For a more precise upper bound on the algorithm's running time,
see \cite[Lemma~5.40]{Hernich:Diss10}.

\begin{lem}
  \label{lemma:XGCWA-answers/universal/composition/correctness}
  Algorithm~\ref{alg:XGCWA-answers/universal/composition}
  runs in time polynomial in the size of $T$.
  Furthermore, the following two statements are equivalent:
  \begin{enumerate}[\em(1)]
  \item\label{lemma:XGCWA-answers/universal/composition/correctness/2}
    There is a nonempty finite set $\CT$ of minimal instances
    in $\rep(T)$ such that $\bigunion \CT \models \tilde{q}$.
  \item\label{lemma:XGCWA-answers/universal/composition/correctness/1}
    Algorithm~\ref{alg:XGCWA-answers/universal/composition} outputs ``yes''
    on input $T$ and $\tup{t}$.
  \end{enumerate}
\end{lem}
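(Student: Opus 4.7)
The plan is to handle the polynomial runtime first and then treat each direction of the equivalence in turn. For runtime, every step is polynomial under the assumption that $q$, $\bs$, $k$, and $s$ are fixed: Proposition~\ref{prop:minrep-computation} lists all instances of the form $\minrep_C(T,B)$ in polynomial time, each $\CX_i$ has polynomial size because $X_i$ has fixed cardinality, and the outer loop enumerates polynomially many $k$-tuples. Propositions~\ref{prop:XGCWA-answers/universal/composition/compat} and~\ref{prop:XGCWA-answers/universal/composition/join} yield linear-time compatibility checks and joins, and evaluating the fixed existential query $\tilde{q}$ on the polynomially large instance $\tilde{T}\cup\bigunion_{j>k}\rho_j(T)$ is polynomial as well.

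For the direction \ref{lemma:XGCWA-answers/universal/composition/correctness/2}~$\Rightarrow$~\ref{lemma:XGCWA-answers/universal/composition/correctness/1}, I would start with a witnessing $\CT$ and an assignment $\beta$ of $\tup{y}$ into $\dom(\bigunion\CT)$ with $\bigunion\CT\models\psi(\beta)$. For each $i\in\set{1,\dotsc,k}$ the ground atom $R_i(\beta(\tup{x}_i))$ lies in some $T^{(i)}\in\CT$. Applying Lemma~\ref{lemma:universal-queries/basis} to $T^{(i)}$ produces an atom block $B_i$ of $T$, an instance $T_B^{(i)}\in\minrep_C(T,B_i)$, an atom $A'_i\in T_B^{(i)}$ isomorphic to $R_i(\beta(\tup{x}_i))$, and a surjective homomorphism $h_i\colon T_B^{(i)}\to T^{(i)}$ sending $A'_i$ to $R_i(\beta(\tup{x}_i))$. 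I then set $T_i\isdef\rho_i(T_B^{(i)})$ and read off $\alpha_i$ from the isomorphism $A'_i\isomorphic R_i(\beta(\tup{x}_i))$, so $(T_i,\alpha_i)\in\CX_i$. Compatibility is certified by the equivalence $\alpha_i(x)\sim\alpha_j(x')\iff\beta(x)=\beta(x')$: condition~(1) of Definition~\ref{def:compatible} holds because $\beta$ is a function on $\tup{y}$, condition~(2) because $\rho_i$ fixes constants and the isomorphism sends constants to themselves, and condition~(3) because the isomorphism is faithful on equalities. Proposition~\ref{prop:XGCWA-answers/universal/composition/join}(\ref{prop:XGCWA-answers/universal/composition/join/tuples}) then yields the positive conjuncts of $\tilde{q}$ in the join, and I extend $\tilde\alpha$ to $\tup{y}$ by drawing one fresh null from a dedicated copy $\rho_j(T)$ for each position of every $\tup{w}_i$ and two fresh nulls for each inequality; the constant $s=k+\sum_i\length{\tup{w}_i}+2m$ is sized exactly for this. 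Verifying that negative atoms stay unsatisfied and inequalities stay strict amounts to building a homomorphism from $\tilde{T}\cup\bigunion_{j>k}\rho_j(T)$ to $\bigunion\CT$ that is $h_i\circ\rho_i^{-1}\circ r_i^{-1}$ on the joined part (well-defined by Proposition~\ref{prop:XGCWA-answers/universal/composition/join}(\ref{prop:XGCWA-answers/universal/composition/join/r/rel2})) and arbitrary on the dedicated copies, and pulling violations back to $\bigunion\CT$.

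For the converse direction \ref{lemma:XGCWA-answers/universal/composition/correctness/1}~$\Rightarrow$~\ref{lemma:XGCWA-answers/universal/composition/correctness/2}, I take the compatible pairs $(T_i,\alpha_i)=(\rho_i(T_B^{(i)}),\alpha_i)$ returned by the algorithm and the assignment $\beta'$ of $\tup{y}$ with $\tilde{T}\cup\bigunion_{j>k}\rho_j(T)\models\psi(\beta')$. By Lemma~\ref{lemma:minval}, every $T_B^{(i)}$ belongs to $\minrep_C(T)$, so Proposition~\ref{prop:minrep-props}(\ref{prop:minrep-props/capture}) supplies an injective valuation $v^{(i)}$ of $T_B^{(i)}$ whose image is a minimal instance of $\rep(T)$ and which fixes constants in $C\cap\dom(v^{(i)}(T_B^{(i)}))$; each padding copy $\rho_j(T)$ with $j>k$ can likewise be collapsed to a minimal instance of $\rep(T)$ via some valuation of $T$ composed with $\rho_j^{-1}$. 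I assemble a single global valuation $V$ on the nulls of $\tilde{T}\cup\bigunion_{j>k}\rho_j(T)$ by gluing these local valuations through the mappings $r_i$ and $\rho_i$, using Proposition~\ref{prop:XGCWA-answers/universal/composition/join}(\ref{prop:XGCWA-answers/universal/composition/join/r/inj}) (injectivity of $r_i$) and the disjointness of $\nulls(\rho_i(T))$ to guarantee well-definedness after renaming the identified nulls to fresh constants. The resulting family $\CT$ of images is a nonempty finite set of minimal instances in $\rep(T)$, and $V\circ\beta'$ witnesses $\bigunion\CT\models\tilde{q}$.

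The hardest part will be the forward direction, specifically the bookkeeping that simultaneously (i) reads off compatibility from $\beta$ through the isomorphisms produced by Lemma~\ref{lemma:universal-queries/basis}, and (ii) proves that negative conjuncts survive the passage from the large $\bigunion\CT$ to the finitely generated join $\tilde{T}\cup\bigunion_{j>k}\rho_j(T)$. The inequalities are the most delicate, since a careless choice of fresh nulls for padded variables could be collapsed by $r_i$ to an already-used value; the reserved two copies per inequality in $s$ are what make this avoidable, and checking it requires the fine control on $r_i$ given by Proposition~\ref{prop:XGCWA-answers/universal/composition/join}(\ref{prop:XGCWA-answers/universal/composition/join/r/rel2}).
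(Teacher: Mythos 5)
Your overall strategy matches the paper's: the same two directions, the same use of Proposition~\ref{prop:minrep-props}, Lemma~\ref{lemma:minval} and Lemma~\ref{lemma:universal-queries/basis}, the same compatibility relation $\alpha_i(x)\sim\alpha_j(x')\iff\beta(x)=\beta(x')$, and the same role for the $s-k$ padding copies. There is, however, a concrete gap in the completeness direction (\ref{lemma:XGCWA-answers/universal/composition/correctness/2}~$\Rightarrow$~\ref{lemma:XGCWA-answers/universal/composition/correctness/1}). You apply Lemma~\ref{lemma:universal-queries/basis} directly to $T^{(i)}\in\CT$, but that lemma is stated only for instances in $\minrep_C(T)$, and a ground minimal instance of $\rep(T)$ is in general not in $\minrep_C(T)$ (its domain need not lie in $\dom(T)\union C$). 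Worse, the conclusion you extract --- an atom $A'_i\in T_B^{(i)}$ with $A'_i\isomorphic R_i(\beta(\tup{x}_i))$ --- cannot hold as stated: atom isomorphism preserves constants, so an atom isomorphic to the \emph{ground} atom $R_i(\beta(\tup{x}_i))$ would have to equal it, whereas the atoms of $T_B^{(i)}$ generally contain nulls of $T$. The missing step is Proposition~\ref{prop:minrep-props}(\ref{prop:minrep-props/capture}): first write $T^{(i)}=v_i(\tilde{T}_i)$ for some $\tilde{T}_i\in\minrep_C(T)$ and an injective valuation $v_i$ that fixes $C$, pull the atom back to $A_i\isdef R_i(v_i^{-1}(\beta(\tup{x}_i)))\in\tilde{T}_i$, and only then invoke Lemma~\ref{lemma:universal-queries/basis} on $\tilde{T}_i$ and $A_i$. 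This is not merely cosmetic: the valuations $v_i$ are what tie $\alpha_i$ to $\beta$ (via $v_i(h_i(\alpha_i(\tup{x}_i)))=\beta(\tup{x}_i)$), and without them your justification of conditions~(2) and~(3) of Definition~\ref{def:compatible} ``from the isomorphism'' does not go through.

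A second, smaller issue in the same direction: extending $\tilde{\alpha}$ to the padded variables by ``fresh nulls from dedicated copies'' is not compatible with your own correctness mechanism of pulling violations back through a homomorphism $h$ into $\bigunion\CT$; that argument needs $h(\alpha(y))=\beta(y)$ for \emph{every} $y$, including the padded ones. The paper instead chooses $\hat{T}_{k+1},\dotsc,\hat{T}_s\in\CT$ so that every value of $\beta(\tup{y})\setminus C$ lies in some $\dom(\hat{T}_j)$, maps each padding copy $\rho_j(T)$ onto $\hat{T}_j$, and assigns to each padded $y$ an $h$-preimage of $\beta(y)$. With arbitrary fresh nulls, a negated unary atom, or a negated atom mixing a fresh null with already-assigned values from the same copy, could in principle be realized in $T^*$, and nothing in your sketch rules this out. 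The soundness direction and the runtime analysis are fine and agree with the paper.
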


\begin{proof}
  It is not hard to see that the algorithm runs in time polynomial
  in the size of $T$.
  Indeed, the transformation from $q$ to $\tilde{q}$
  can be accomplished in constant time (since $q$ is fixed),
  and the mappings $\rho_1,\dotsc,\rho_s$ can be generated
  in polynomial time.
  It is also not hard to compute the sets $\CX_1,\dotsc,\CX_k$
  in polynomial time:
  All we need to do in order to compute $\CX_i$
  for $i \in \set{1,\dotsc,k}$
  is to iterate through all $T'_0 \in \minrep_C(T,B)$,
  where $B$ is an atom block of $T$,
  and all assignments $\alpha\colon X_i \to \dom(\rho(T'_0))$,
  and to check whether $R_i(\alpha(\tup{x}_i)) \in \rho(T'_0)$.
  By Proposition~\ref{prop:minrep-computation},
  and since $X_i$ is fixed,
  this can be done in polynomial time.
  Since $k$ is constant,
  all the sets $\CX_1,\dotsc,\CX_k$ can thus be computed in polynomial time.
  In particular, since each of these sets has polynomial size,
  there are at most a polynomial number of iterations
  of the algorithm's main loop.
  Propositions~\ref{prop:XGCWA-answers/universal/composition/compat}
  and \ref{prop:XGCWA-answers/universal/composition/join}
  imply that steps~3(a) and 3(b) of the main loop run in polynomial time.
  Finally, step~3(c) clearly takes only a polynomial number of steps.
  Altogether, the algorithm runs in polynomial time.

  It remains to show that the two statements
  \ref{lemma:XGCWA-answers/universal/composition/correctness/2} and
  \ref{lemma:XGCWA-answers/universal/composition/correctness/1}
  are equivalent.

  \fromto{\ref{lemma:XGCWA-answers/universal/composition/correctness/1}}
         {\ref{lemma:XGCWA-answers/universal/composition/correctness/2}}
  Assume that Algorithm~\ref{alg:XGCWA-answers/universal/composition}
  outputs ``yes'' on input $T$ and $\tup{t}$.
  Then there are compatible
  $(T_1,\alpha_1) \in \CX_1,\dotsc,(T_k,\alpha_k) \in \CX_k$
  such that the join $(\tilde{T},\tilde{\alpha})$
  of $(T_1,\alpha_1),\dotsc,(T_k,\alpha_k)$
  has the following property:
  the instance $T^* \isdef \tilde{T} \union \bigunion_{i=k+1}^s \rho_i(T)$
  satisfies $\tilde{q}$.
  We construct a nonempty finite set $\CT$
  of minimal instances in $\rep(T)$
  with $\bigunion \CT \models \tilde{q}$.

  By Proposition~\ref{prop:XGCWA-answers/universal/composition/join},
  we have
  \(
    \tilde{T} = \bigunion_{i=1}^k r_i(T_i),
  \)
  where each $r_i$ is an injective mapping from $\dom(T_i)$ to $\Dom$
  with $r_i(c) = c$ for each $c \in \const(T_i)$,
  and $r_i(\bot) \in \Nulls$ for each $\bot \in \nulls(T_i)$.
  For each $i \in \set{k+1,\dotsc,s}$,
  let $T_i \isdef \rho_i(T)$,
  and let $r_i$ be the identity mapping on $\dom(T_i)$.
  Then,
  \begin{align}
    \label{eq:XGCWA-answers/universal/composition/1/T*}
    T^* \,=\, \bigunion_{i=1}^s r_i(T_i).
  \end{align}
  Note that each $T_i$ is isomorphic to an instance
  $\hat{T}_i \in \minrep_C(T)$.
  For $i \in \set{1,\dotsc,k}$,
  this follows from Lemma~\ref{lemma:minval}
  and the fact that $T_i$ is isomorphic to an instance in $\minrep_C(T,B)$
  for some atom block $B$ of $T$.
  For $i \in \set{k+1,\dotsc,s}$,
  this follows from the fact that $T_i = \rho_i(T)$, that $T$ is a core, and
  Proposition~\ref{prop:minrep-props}(\ref{prop:minrep-props/core-is-minimal}).
  For each $i \in \set{1,\dotsc,s}$,
  let $f_i$ be an isomorphism from $\hat{T}_i$ to $T_i$.

  Let
  \(
    v\colon \dom(T^*) \to \const(T^*) \union (\Const \setminus C)
  \)
  be an injective valuation of $T^*$,
  and for every $i \in \set{1,\dotsc,s}$
  let
  \[
    v_i \,\isdef\, \compose{f_i}{\compose{r_i}{v}}.
  \]
  Note that $v_i$ is an injective valuation of $\hat{T}_i$.
  To see this,
  note that $f_i$ is an injective mapping from $\dom(\hat{T}_i)$ to $\dom(T_i)$
  that is legal for $\hat{T}_i$,
  that $r_i$ is an injective mapping from $\dom(T_i)$ to $\dom(T^*)$
  that is legal for $T_i$,
  and that $v$ is an injective valuation of $T^*$.
  Furthermore, for each $\bot \in \nulls(\hat{T}_i)$
  we have $v_i(\bot) \notin C$,
  since both $f_i$ and $r_i$ map nulls to nulls,
  and $v$ maps nulls to constants in $\Const \setminus C$.
  In summary, $\hat{T}_i \in \minrep_C(T)$,
  $v_i$ is an injective valuation of $\hat{T}_i$,
  and $v_i^{-1}(c) = c$ for all $c \in \dom(v_i(\hat{T}_i)) \intersection C$.
  Together with
  Proposition~\ref{prop:minrep-props}(\ref{prop:minrep-props/capture}),
  this implies that
  $v_i(\hat{T}_i)$ is a minimal instance in $\rep(T)$.

  So,
  \[
    \CT \,\isdef\, \set{v_i(\hat{T}_i) \mid 1 \leq i \leq s}
  \]
  is a finite nonempty set of minimal instances in $\rep(T)$,
  and
  \[
    \bigunion \CT
    \,=\,
    \bigunion_{i=1}^s v_i(\hat{T}_i)
    \,=\,
    \bigunion_{i=1}^s v(r_i(T_i))
    \,=\,
    v\left(\bigunion_{i=1}^s r_i(T_i)\right)
    \,\stackrel{\eqref{eq:XGCWA-answers/universal/composition/1/T*}}{=}\,
    v(T^*).
  \]
  Since $T^* \models \tilde{q}$, $v$ is injective,
  and $v$ maps nulls in $T^*$ to constants that do not occur in $\tilde{q}$,
  we conclude that
  \(
    \bigunion \CT \models \tilde{q}.
  \)

  \fromto{\ref{lemma:XGCWA-answers/universal/composition/correctness/2}}
         {\ref{lemma:XGCWA-answers/universal/composition/correctness/1}}
  Assume that there is a nonempty finite set $\CT$
  of minimal instances in $\rep(T)$
  such that $\bigunion \CT \models \tilde{q}$.
  We show that Algorithm~\ref{alg:XGCWA-answers/universal/composition}
  outputs ``yes'' on input $T$ and $\tup{t}$.

  Since $\bigunion \CT \models \tilde{q}$,
  there is an assignment
  $\beta\colon \tup{y} \to \dom(\bigunion \CT) \union C$
  with
  \begin{align*}
    \bigunion \CT \,\models\, \psi(\beta).
  \end{align*}
  In particular, we can pick for each $i \in \set{1,\dotsc,k}$
  an instance $\hat{T}_i \in \CT$
  such that
  \[
    \beta(\tup{x}_i) \in R_i^{\hat{T}_i}.
  \]
  Note that there are at most $s-k$ values in $\beta(\tup{y}) \setminus C$
  that do not occur in $\beta(\tup{x}_i)$ for some $i \in \set{1,\dotsc,k}$.
  Thus, we can fix instances $\hat{T}_{k+1},\dotsc,\hat{T}_s \in \CT$
  such that each of the values in $\beta(\tup{y}) \setminus C$
  that does not occur in $\beta(\tup{x}_i)$ for some $i \in \set{1,\dotsc,k}$
  belongs to $\dom(\hat{T}_j)$ for some $j \in \set{k+1,\dotsc,s}$.
  Now $\beta$ is an assignment for $\psi$ with range in
  $\dom(\hat{T}_1 \union \dotsb \union \hat{T}_s) \union C$,
  and we have:
  \begin{align}
    \label{eq:XGCWA-answers/universal/composition/correctness/beta2}
    \bigunion_{i=1}^s \hat{T}_i \,\models\, \psi(\beta).
  \end{align}

  Let $i \in \set{1,\dotsc,k}$.
  By Proposition~\ref{prop:minrep-props}(\ref{prop:minrep-props/capture}),
  there is an instance $\tilde{T}_i \in \minrep_C(T)$
  and an injective valuation $v_i$ of $\tilde{T}_i$
  such that $v_i(\tilde{T}_i) = \hat{T}_i$, and
  $v_i^{-1}(c) = c$ for all $c \in \dom(\hat{T}_i) \intersection C$.
  In particular,
  \[
    A_i \,\isdef\, R_i\bigl(v_i^{-1}(\beta(\tup{x}_i))\bigr)
    \,\in\, \tilde{T}_i.
  \]
  By Lemma~\ref{lemma:universal-queries/basis},
  there is an atom block $B_i$ of $T$, an instance $T_i' \in \minrep_C(T,B_i)$,
  an atom $A''_i \in T_i'$ with $A''_i \isomorphic A_i$,
  and a homomorphism $h_i'$ from $T_i'$ to $\tilde{T}_i$
  such that $h_i'(T_i') = \tilde{T}_i$ and $h_i'(A''_i) = A_i$.
  In particular, $h_i \isdef \compose{\rho_i^{-1}}{h_i'}$
  is a homomorphism from $T_i \isdef \rho_i(T_i')$ to $\tilde{T}_i$ with
  \[
    h_i(T_i) \,=\, \tilde{T}_i
    \qquad \text{and} \qquad
    h_i(A'_i) \,=\, A_i,
  \]
  where $A'_i \isdef \rho_i(A_i'')\ \isomorphic\ A_i$.
  Let $\alpha_i$ be an
  assignment for $X_i$ such that
  \[
    A'_i \,=\, R_i(\alpha_i(\tup{x}_i)).
  \]
  Note that $(T_i,\alpha_i) \in \CX_i$.

  In the following,
  we show that $(T_1,\alpha_1),\dotsc,(T_k,\alpha_k)$ are compatible,
  and if $(\tilde{T},\tilde{\alpha})$ is the join of these pairs,
  then $\tilde{T} \union \bigunion_{i=k+1}^s \rho_i(T)$ satisfies $\tilde{q}$.
  In particular,
  Algorithm~\ref{alg:XGCWA-answers/universal/composition}
  outputs ``yes'' on input $T$ and $\tup{t}$.

  The following properties of the
  assignments $\alpha_i$
  are crucial for showing this:

  \begin{sclaim}
    \label{claim:basics}
    Let $i,j \in \set{1,\dotsc,k}$, $x,x' \in X_i$ and $x'' \in X_j$.
    Then,
    \begin{enumerate}[(1)]
    \item\label{claim:basics/proj}
      $v_i(h_i(\alpha_i(\tup{x}_i))) = \beta(\tup{x}_i)$.
      In particular, $v_i(h_i(\alpha_i(x))) = \beta(x)$.
    \item\label{claim:basics/const}
      If $\beta(x) \in \const(T) \union C$,
      then $\alpha_i(x) = \beta(x)$.
    \item\label{claim:basics/eq}
      $\alpha_i(x) = \alpha_i(x')$ if and only if
      $\beta(x) = \beta(x')$.
    \item\label{claim:basics/beta-eq}
      $\alpha_i(x) = \alpha_j(x'')$ implies $\beta(x) = \beta(x'')$.
    \end{enumerate}
  \end{sclaim}

  \begin{proofofclaim}
    \textit{Ad \ref{claim:basics/proj}:}
    Recall that $h_i(A'_i) = A_i \in \tilde{T}_i$,
    and that $v_i$ is injective on $\dom(\tilde{T}_i)$.
    In particular,
    we have $h_i(\alpha_i(\tup{x}_i)) = v_i^{-1}(\beta(\tup{x}_i))$.
    Applying $v_i$ to both sides yields
    $v_i(h_i(\alpha_i(\tup{x}_i))) = \beta(\tup{x}_i)$.

    \ad{\ref{claim:basics/const}}
    Let $\beta(x) \in \const(T) \union C$.
    By \ref{claim:basics/proj}, we have
    \begin{align}
      \label{claim:basics/const/1}
      v_i(h_i(\alpha_i(x))) \,=\, \beta(x),
    \end{align}
    which implies
    \begin{align}
      \label{claim:basics/const/2}
      h_i(\alpha_i(x)) \,=\, \beta(x).
    \end{align}
    Indeed, if $\beta(x) \in \const(T)$,
    \eqref{claim:basics/const/2} follows immediately
    from \eqref{claim:basics/const/1},
    $\const(T) \subseteq \const(\tilde{T}_i)$,
    and the fact that $v_i$ is an injective mapping from $\dom(\tilde{T}_i)$
    that is the identity on constants.
    On the other hand, if $\beta(x) \in C$,
    then \eqref{claim:basics/const/2} follows immediately
    from \eqref{claim:basics/const/1},
    $\beta(x) \in \dom(\hat{T}_i)$,
    and the fact that $v_i^{-1}(c) = c$
    for all $c \in \dom(\hat{T}_i) \intersection C$.

    Now \eqref{claim:basics/const/2} and $h_i(A'_i) \isomorphic A'_i$
    imply that $\alpha_i(x)$ is a constant,
    and since $h_i$ is the identity on constants,
    we have $\alpha_i(x) = \beta(x)$.

    \ad{\ref{claim:basics/eq}}
    By \ref{claim:basics/proj},
    we have $v_i(h_i(\alpha_i(\tup{x}_i))) = \beta(\tup{x}_i)$.
    Recall also that $v_i$ is injective,
    and that $h_i(A_i') = A_i \isomorphic A_i'$,
    which implies that $h_i$ is injective on $\alpha_i(X_i)$.
    Altogether, $f_i \isdef \compose{h_i}{v_i}$
    is a bijection from $\alpha_i(X_i)$ to $\beta(X_i)$.
    This implies that $\alpha_i(x) = \alpha_i(x')$ if and only if
    $\beta(x) = \beta(x')$.

    \ad{\ref{claim:basics/beta-eq}}
    Let $\alpha_i(x) = \alpha_j(x'')$.
    If $i = j$, then $\beta(x) = \beta(x'')$ follows immediately
    from \ref{claim:basics/eq}.
    So assume that $i \neq j$.
    Since $\alpha_i(x) \in \dom(T_i)$, $\alpha_j(x'') \in \dom(T_j)$
    and $\nulls(T_i) \intersection \nulls(T_j) = \emptyset$,
    $\alpha_i(x)$ and $\alpha_j(x'')$ must be constants.
    By \ref{claim:basics/proj} and the fact that the homomorphisms $h_i,h_j$
    as well as the valuations $v_i,v_j$ are the identity on constants,
    we conclude that $\beta(x) = \alpha_i(x) = \alpha_j(x'') = \beta(x'')$.
    \varqed
  \end{proofofclaim}

  We now show that $(T_1,\alpha_1),\dotsc,(T_k,\alpha_k)$ are compatible.
  To this end, we consider the relation
  \[
    \sim\ \, \isdef\,
    \Set{
      (\alpha_i(x),\alpha_j(x'))
      \mid
      i,j \in \set{1,\dotsc,k},\, x \in X_i,\, x' \in X_j,\,
      \beta(x) = \beta(x')
    }
  \]
  on $D \isdef \bigunion_{i=1}^k \alpha_i(X_i)$.

  \begin{sclaim}
    \label{claim:compatible}\hfill
    \begin{enumerate}[(1)]
    \item\label{claim:compatible/sim}
      For all $i,j \in \set{1,\dotsc,k}$,
      $x \in X_i$ and $x' \in X_j$, we have
      \[
        \alpha_i(x) \sim \alpha_j(x') \iff \beta(x) = \beta(x').
      \]
    \item\label{claim:compatible/comp}
      The relation $\sim$ is an equivalence relation on $D$
      that satisfies conditions~\ref{def:compatible/cons}--%
      \ref{def:compatible/eq} of Definition~\ref{def:compatible}.
    \end{enumerate}
  \end{sclaim}

  \begin{proofofclaim}\mbox{}
    \textit{Ad \ref{claim:compatible/sim}:}
    Let $i,j \in \set{1,\dotsc,k}$, $x \in X_i$ and $x' \in X_j$.
    If $\beta(x) = \beta(x')$,
    then the definition of $\sim$ immediately yields
    $\alpha_i(x) \sim \alpha_j(x')$.

    On the other hand, let $\alpha_i(x) \sim \alpha_j(x')$.
    Then there are $i',j' \in \set{1,\dotsc,k}$,
    $y \in X_{i'}$ and $y' \in X_{j'}$
    such that
    \begin{align}
      \label{claim:compatible/sim/alpha}
      \alpha_{i'}(y) = \alpha_i(x)
      \quad \text{and} \quad
      \alpha_{j'}(y') = \alpha_j(x'),
    \end{align}
    and
    \begin{align}
      \label{claim:compatible/sim/beta}
      \beta(y) = \beta(y').
    \end{align}
    By \eqref{claim:compatible/sim/alpha}
    and Claim~\ref{claim:basics}(\ref{claim:basics/beta-eq}),
    we have $\beta(y) = \beta(x)$ and $\beta(y') = \beta(x')$,
    which by \eqref{claim:compatible/sim/beta} yields $\beta(x) = \beta(x')$,
    as desired.

    \ad{\ref{claim:compatible/comp}}
    It is easy to verify that $\sim$ is an equivalence relation on $D$.
    Reflexivity and symmetry are clear,
    and transitivity is easy to show using \ref{claim:compatible/sim}.

    It follows easily from \ref{claim:compatible/sim}
    that $\sim$ satisfies condition~\ref{def:compatible/cons}
    of Definition~\ref{def:compatible}:
    Let $i,j \in \set{1,\dotsc,k}$ and $x \in X_i \intersection X_j$.
    Since $\beta(x) = \beta(x)$,
    \ref{claim:compatible/sim} yields $\alpha_i(x) \sim \alpha_j(x)$.

    For proving that $\sim$ satisfies condition~\ref{def:compatible/const}
    of Definition~\ref{def:compatible},
    let $u,u' \in D$ be such that $u \sim u'$ and $u \in \Const$.
    Since $u \sim u'$,
    there are $i,j \in \set{1,\dotsc,k}$, $x \in X_i$ and $x' \in X_j$
    such that $\alpha_i(x) = u$, $\alpha_j(x') = u'$, and
    \begin{align}
      \label{claim:compatible/comp/beta}
      \beta(x) = \beta(x').
    \end{align}
    By Claim~\ref{claim:basics}(\ref{claim:basics/proj}),
    we have $v_i(h_i(\alpha_i(x))) = \beta(x)$.
    Since $\alpha_i(x)$ is a constant
    and $h_i,v_i$ are the identity on constants,
    this implies that $\alpha_i(x) = \beta(x)$.
    In particular,
    \begin{align}
      \label{claim:compatible/comp/beta2}
      \beta(x')
      \stackrel{\eqref{claim:compatible/comp/beta}}{=}
      \beta(x)
      =
      \alpha_i(x)
      \,\in\,
      \const(T_i)
      \,\subseteq\,
      \const(T) \union C.
    \end{align}
    By Claim~\ref{claim:basics}(\ref{claim:basics/const}),
    this yields $\beta(x') = \alpha_j(x')$,
    and therefore,
    \[
      u
      =
      \alpha_i(x)
      \stackrel{\eqref{claim:compatible/comp/beta2}}{=}
      \beta(x')
      =
      \alpha_j(x')
      = u',
    \]
    as desired.

    Finally, for proving that $\sim$ satisfies condition~\ref{def:compatible/eq}
    of Definition~\ref{def:compatible},
    let $i \in \set{1,\dotsc,k}$ and $x,x' \in X_i$.
    Then,
    \begin{align*}
      \alpha_i(x) = \alpha_i(x')
      \ \stackrel{%
        \text{Claim~\ref{claim:basics}(\ref{claim:basics/eq})}}{\iff}\
      \beta(x) = \beta(x')
      \ \stackrel{%
        \text{Claim~\ref{claim:compatible}(\ref{claim:compatible/sim})}}{\iff}\
      \alpha_i(x) \sim \alpha_i(x'),
    \end{align*}
    as desired.
    \varqed
  \end{proofofclaim}

  By Claim~\ref{claim:compatible},
  $(T_1,\alpha_1),\dotsc,(T_k,\alpha_k)$ are compatible.
  Let $(T_0,\alpha_0)$ be their join.
  We show that
  \[
    T^* \,\isdef\, T_0 \union \bigunion_{i=k+1}^s T_i
  \]
  satisfies $\tilde{q}$,
  where $T_i \isdef \rho_i(T)$ for each $i \in \set{k+1,\dotsc,s}$.
  To this end, we construct an assignment $\alpha$ for $\psi$
  such that $T^* \models \psi(\alpha)$.

  \begin{sclaim}
    \label{claim:h0}
    There is a homomorphism $h_0$ from $T_0$ to
    $\hat{T}_0 \isdef \bigunion_{i=1}^k \hat{T}_i$
    with $h_0(T_0) = \hat{T}_0$,
    and $h_0(\alpha_0(\tup{x}_i)) = \beta(\tup{x}_i)$
    for each $i \in \set{1,\dotsc,k}$.
  \end{sclaim}

  \begin{proofofclaim}
    Let $r_1,\dotsc,r_k$ be the mappings used to construct
    the join $(\tilde{T},\tilde{\alpha})$.
    Then,
    \begin{align}
      \label{claim:h0/T_0}
      T_0 \,=\, \bigunion_{i=1}^k r_i(T_i),
    \end{align}
    and for all $i \in \set{1,\dotsc,k}$ and $x \in X_i$,
    \begin{align}
      \label{claim:h0/alpha_0}
      \alpha_0(x) = r_i(\alpha_i(x)).
    \end{align}
    By Proposition~\ref{prop:XGCWA-answers/universal/composition/join},
    each $r_i$ is injective;
    furthermore, for all $i,j \in \set{1,\dotsc,k}$,
    $u \in \dom(T_i)$ and $u' \in \dom(T_j)$,
    \begin{align}
      \label{claim:h0/r-eq}
      r_i(u) = r_j(u')\ \Longrightarrow\
      \text{$u = u'$, or:
        $u \in \alpha_i(X_i)$, $u' \in \alpha_j(X_j)$ and $u \sim u'$}.
    \end{align}

    Define $h_0\colon \dom(T_0) \to \dom(\hat{T}_0)$
    such that for all $i \in \set{1,\dotsc,k}$
    and $u \in \dom(r_i(T_i))$,
    \begin{align}
      \label{claim:h0/def}
      h_0(u) = v_i(h_i(r_i^{-1}(u))).
    \end{align}
    We claim that $h_0$ is a homomorphism from $T_0$ to $\hat{T}_0$
    with $h_0(T_0) = \hat{T}_0$,
    and that for each $i \in \set{1,\dotsc,k}$
    we have $h_0(\alpha_0(\tup{x}_i)) = \beta(\tup{x}_i)$.

    \step{1}{$h_0$ is well-defined.}
    Let $u \in \dom(r_i(T_i)) \intersection \dom(r_j(T_j))$,
    where $i,j \in \set{1,\dotsc,k}$ are distinct.
    Let $u_i \isdef r_i^{-1}(u) \in \dom(T_i)$
    and $u_j \isdef r_j^{-1}(u) \in \dom(T_j)$.
    We must show that
    \[
      v_i(h_i(u_i)) = v_j(h_j(u_j)).
    \]

    Since $r_i(u_i) = u = r_j(u_j)$, \eqref{claim:h0/r-eq} implies that
    $u_i = u_j$, or:
    $u_i \in \alpha_i(X_i)$, $u_j \in \alpha_j(X_j)$ and $u_i \sim u_j$.
    If $u_i = u_j$,
    then both $u_i$ and $u_j$ are constants,
    since $\nulls(T_i) \intersection \nulls(T_j) = \emptyset$ for $i \neq j$;
    therefore,
    \[
      v_i(h_i(u_i)) = u_i = u_j = v_j(h_j(u_j)),
    \]
    as desired.
    On the other hand, let $x_i \in X_i$ and $x_j \in X_j$
    such that $u_i = \alpha_i(x_i)$, $u_j = \alpha_j(x_j)$ and
    $\alpha_i(x_i) \sim \alpha_j(x_j)$.
    Then Claim~\ref{claim:compatible}(\ref{claim:compatible/sim})
    implies $\beta(x_i) = \beta(x_j)$.
    By Claim~\ref{claim:basics}(\ref{claim:basics/proj}),
    \[
      v_i(h_i(u_i))
      =
      v_i(h_i(\alpha_i(x_i)))
      =
      \beta(x_i)
      =
      \beta(x_j)
      =
      v_j(h_j(\alpha_j(x_j)))
      =
      v_j(h_j(u_j)),
    \]
    as desired.
    Altogether, this shows that $h_0$ is well-defined.

    \step{2}{$h_0$ is a homomorphism from $T_0$ to $\hat{T}_0$
      with $h_0(T_0) = \hat{T}_0$.}
    First note that for each $i \in \set{1,\dotsc,k}$, we have
    \begin{align*}
      h_0(r_i(T_i))
      \stackrel{\eqref{claim:h0/def}}{=}
      v_i(h_i(T_i))
      =
      \hat{T}_i.
    \end{align*}
    Hence,
    \begin{align*}
      h_0(T_0)
      \,\stackrel{\eqref{claim:h0/T_0}}{=}\,
      h_0\left(\bigunion_{i=1}^k r_i(T_i)\right)
      \,=\,
      \bigunion_{i=1}^k h_0(r_i(T_i))
      \,=\,
      \bigunion_{i=1}^k \hat{T}_i
      \,=\,
      \hat{T}_0.
    \end{align*}

    \step{3}{For each $i \in \set{1,\dotsc,k}$,
      we have $h_0(\alpha_0(\tup{x}_i)) = \beta(\tup{x}_i)$.}
    We have
    \begin{align*}
      h_0(\alpha_0(\tup{x}_i))
      \stackrel{\eqref{claim:h0/alpha_0}}{=}
      h_0(r_i(\alpha_i(\tup{x}_i)))
      \stackrel{\eqref{claim:h0/def}}{=}
      v_i(h_i(\alpha_i(\tup{x}_i)))
      \stackrel{\text{Claim~\ref{claim:basics}(\ref{claim:basics/proj})}}{=}
      \beta(\tup{x}_i).
      \varqedeq
    \end{align*}
  \end{proofofclaim}

  Let $h_0$ be a homomorphism as in Claim~\ref{claim:h0}.
  It is easy to extend $h_0$ to a mapping $h$ on $\dom(T^*) \union C$
  with the following properties:
  \begin{enumerate}[(1)]
  \item
    $h(T_0) = h_0(T_0) = \bigunion_{i=1}^k \hat{T}_i$,
  \item
    $h(T_i) = \hat{T}_i$ for each $i \in \set{k+1,\dotsc,s}$, and
  \item
    $h(c) = c$ for each $c \in C$.
  \end{enumerate}
  Note that the second condition can be satisfied,
  since for all distinct $i \in \set{k+1,\dotsc,s}$
  and $j \in \set{1,\dotsc,s}$,
  we have $\nulls(T_i) \intersection \nulls(T_j) = \emptyset$,
  $T_i \isomorphic T$ and $\hat{T}_i \in \rep(T)$.
  Note also that
  \begin{align}
    \label{eq:XGCWA-answers/universal/composition/correctness/h-T*}
    h(T^*) \,=\, \bigunion_{i=1}^s \hat{T}_i.
  \end{align}

  Furthermore, extend $\alpha_0$ to an assignment $\alpha$ for $\tup{y}$
  such that
  \begin{align}
    \label{eq:XGCWA-answers/universal/composition/correctness/h-alpha}
    h(\alpha(y)) = \beta(y)
    \quad
    \text{for each $y \in \tup{y}$}.
  \end{align}
  Note that \eqref{eq:XGCWA-answers/universal/composition/correctness/h-alpha}
  holds for all variables $y$
  that occur in $\tup{x}_i$ for some $i \in \set{1,\dotsc,k}$,
  because $h$ is an extension of $h_0$,
  and $\alpha$ is an extension of $\alpha_0$.
  For each variable $y \in \tup{y}$
  that does not occur in $\tup{x}_i$ for some $i \in \set{1,\dotsc,k}$,
  we pick an arbitrary value $u \in \dom(T^*) \union C$
  with $h(u) = \beta(y)$
  and define $\alpha(y) \isdef u$.
  Note that such a value $u$ always exists.
  First recall that the range of $\beta$ is in
  $\dom(\bigunion_{i=1}^s \hat{T}_i) \union C$.
  If $\beta(y) \in \dom(\bigunion_{i=1}^s \hat{T}_i)$,
  then by \eqref{eq:XGCWA-answers/universal/composition/correctness/h-T*}
  there is some $u \in \dom(T^*)$ with $h(u) = \beta(y)$.
  On the other hand, if $\beta(y) \in C$,
  then $h(\beta(y)) = \beta(y)$, because $h$ is the identity on constants,
  so that we can choose $u = \beta(y)$.

  We are finally ready to show that $T^* \models \psi(\alpha)$.
  First note that
  by Proposition~\ref{prop:XGCWA-answers/universal/composition/join}%
  (\ref{prop:XGCWA-answers/universal/composition/join/tuples}),
  we have $\alpha_0(\tup{x}_i) \in R_i^{T_0}$ for each $i \in \set{1,\dotsc,k}$;
  since $T_0 \subseteq T^*$ and $\alpha$ extends $\alpha_0$, this implies
  \begin{align}
    \label{eq:XGCWA-answers/universal/composition/correctness/pos-rel-atoms}
    \alpha(\tup{x}_i) \in R_i^{T^*}
    \quad
    \text{for each $i \in \set{1,\dotsc,k}$}.
  \end{align}
  Furthermore, we have
  \begin{align}
    \label{eq:XGCWA-answers/universal/composition/correctness/neg-rel-atoms}
    \alpha(\tup{w}_i) \notin Q_i^{T^*}
    \quad
    \text{for each $i \in \set{1,\dotsc,l}$}.
  \end{align}
  Otherwise, if there is some $i \in \set{1,\dotsc,l}$
  with $\alpha(\tup{w}_i) \in Q_i^{T^*}$,
  then by \eqref{eq:XGCWA-answers/universal/composition/correctness/h-T*}
  and \eqref{eq:XGCWA-answers/universal/composition/correctness/h-alpha},
  we have
  \[
    \beta(\tup{w}_i) \in Q_i^{\bigunion_{i=1}^s \hat{T}_i},
  \]
  which is impossible by
  \eqref{eq:XGCWA-answers/universal/composition/correctness/beta2}.
  Finally, we have
  \begin{align}
    \label{eq:XGCWA-answers/universal/composition/correctness/neg-eq}
    \alpha(v_i) \neq \alpha(v_i')
    \quad
    \text{for each $i \in \set{1,\dotsc,m}$}.
  \end{align}
  Indeed, let $i \in \set{1,\dotsc,m}$.
  By \eqref{eq:XGCWA-answers/universal/composition/correctness/h-alpha},
  we have $h(\alpha(v_i)) = \beta(v_i)$ and $h(\alpha(v'_i)) = \beta(v'_i)$.
  On the other hand,
  \eqref{eq:XGCWA-answers/universal/composition/correctness/beta2}
  implies that $\beta(v_i) \neq \beta(v'_i)$,
  so that $\alpha(v_i)$ and $\alpha(v'_i)$ must be distinct.

  Altogether,
  \eqref{eq:XGCWA-answers/universal/composition/correctness/pos-rel-atoms}--%
  \eqref{eq:XGCWA-answers/universal/composition/correctness/neg-eq}
  imply that $T^* \models \psi(\alpha)$.
  In particular,
  Algorithm~\ref{alg:XGCWA-answers/universal/composition}
  outputs ``yes'' on input $T$ and $\tup{t}$.
\end{proof}

\subsubsection{Proof of Proposition~\ref{prop:universal-queries-and-st-tgds}}
\label{sec:XGCWA-answers/universal/coNP}

We conclude this section by proving
Proposition~\ref{prop:universal-queries-and-st-tgds}.
Let $M = \Des$ be a schema mapping, where $\Sigma$ consists of st-tgds,
and let $q$ be a universal query over $\tau$.
As in Section~\ref{sec:XGCWA-answers/universal/proof},
we can assume that $\lnot q$ is logically equivalent to a query $\bar{q}$
of the form
\begin{align*}
  \bar{q}(\tup{x})\ =\ \biglor_{i=1}^m  q_i(\tup{x}),
\end{align*}
where each $q_i$ is an existential query of the form
\begin{align*}
  q_i(\tup{x})\ =\ \exists \tup{y}_i \bigland_{j=1}^{n_i} \phi_{i,j},
\end{align*}
and each $\phi_{i,j}$ is an atomic FO formula
or the negation of an atomic FO formula.

Let $S$ be a source instance for $M$,
and let $\tup{t} \in \Const^{\length{\tup{x}}}$.
As shown in Section~\ref{sec:XGCWA-answers/universal/core},
we have $\tup{t} \notin \certXGCWA{M}{S}{q}$ if and only if
there is a nonempty finite set $\CT$ of minimal instances
in $\rep(\core(M,S))$ with $\bigunion \CT \models \lnot q(\tup{t})$.
Hence, on input $S$ and $\tup{t}$,
a nondeterministic Turing machine can decide whether
$\tup{t} \notin \certXGCWA{M}{S}{q}$
by computing $\core(M,S)$,
and by deciding for each $i \in \set{1,\dotsc,m}$
whether there is a nonempty finite set $\CT$ of minimal instances
in $\rep(\core(M,S))$ with $\bigunion \CT \models q_i(\tup{t})$.
If so, it accepts the input, and otherwise, it rejects it.

By Theorem~\ref{thm:core-algorithm},
$\core(M,S)$ can be computed in time polynomial in the size of $S$
(for fixed $M$).

In order to check whether there is a nonempty finite set $\CT$
of minimal instances in $\rep(\core(M,S))$
with $\bigunion \CT \models q_i(\tup{t})$,
it suffices to ``guess'' a set $\CT$ of at most
\[
  s \isdef n_i \cdot \max\, \set{\arity(R) \mid R \in \tau}
\]
instances in $\rep(\core(M,S))$,
and to check whether $\bigunion \CT \models q_i(\tup{t})$.
Indeed, let $\CT$ be a set of minimal instances
in $\rep(\core(M,S))$ with $\bigunion \CT \models q_i(\tup{t})$.
Then there is an assignment $\alpha$ for the variables in $\tup{x}$
and $\tup{y}_i$ such that $\alpha(\tup{x}) = \tup{t}$ and
$\bigunion \CT \models \phi_{i,j}(\alpha)$ for each $j \in \set{1,\dotsc,n_i}$.
Without loss of generality,
assume that $\phi_{i,1},\dotsc,\phi_{i,k}$ (for $0 \leq k \leq n_i$)
are all the relational atomic FO formulas in $q_i$.
For each $j \in \set{1,\dotsc,k}$,
there is an instance $T_j \in \CT$ with $T_j \models \phi_{i,j}(\alpha)$.
Let $\CT_0' \isdef \set{T_1,\dotsc,T_k} \subseteq \CT$.
Then $\bigunion \CT_0' \models \phi_{i,j}(\alpha)$
for each $j \in \set{1,\dotsc,k}$.
To obtain a set $\CT_0 \subseteq \CT$
that satisfies $\bigunion \CT_0 \models q_i(\tup{t})$,
we extend $\CT_0'$ as follows.
Let $j \in \set{k+1,\dotsc,n_i}$.
Then there are at most $\max\, \set{\arity(R) \mid R \in \tau}$
values that occur in $\phi_{i,j}(\alpha)$.
In particular,
we can pick $\max\, \set{\arity(R) \mid R \in \tau}$ instances from $\CT$
that contain all these values.
Add those instances to $\CT_0'$.
The resulting set $\CT_0$ is a subset of $\CT$,
and satisfies $\bigunion \CT_0 \models q_i(\tup{t})$,
since $\CT_0 \models \phi_{i,j}(\alpha)$ for each $j \in \set{1,\dotsc,k}$.
Furthermore, $\CT_0$ contains at most $s$ instances.

Note also that to find a nonempty finite set $\CT$
of minimal instances in $\rep(\core(M,S))$
with $\card{\CT} \leq s$ and $\bigunion \CT \models q_i(\tup{t})$,
it suffices to consider valuations $v$ of $\core(M,S)$
with range in $C$,
where $C$ contains all constants in $\core(M,S)$,
all constants in $q_i$,
all constants in $\tup{t}$,
and all constants in $\set{c_1,\dotsc,c_{s \cdot k}}$,
where $k$ is the number of nulls in $\core(M,S)$,
and $c_1,\dotsc,c_{s \cdot k}$ is a sequence of pairwise distinct constants
that do not occur in $\core(M,S)$, $q_i$ and $\tup{t}$.

Finally, it is easy for a Turing machine to check whether a given
$T \in \rep(\core(M,S))$ is minimal.
For each atom $A \in T$, it just has to check that the instance
$T \setminus \set{A}$ is not a solution for $S$ under $M$.

Altogether, given a source instance $S$ for $M$,
and a tuple $\tup{t} \in \Const^{\length{\tup{x}}}$,
a nondeterministic Turing machine can check whether
$\tup{t} \notin \certXGCWA{M}{S}{q}$.
This proves $\Eval(M,q) \in \co\NP$,
and in particular, Proposition~\ref{prop:universal-queries-and-st-tgds}.

\section{Conclusion}
\label{sec:conclusion}

A new semantics, called \emph{\XGCWA-semantics},
for answering non-monotonic queries in relational data exchange
has been proposed.
The \XGCWA-semantics is inspired by non-monotonic query answering semantics
from the area of deductive databases,
where the problem of answering non-monotonic queries
has been studied extensively since the late seventies.
In contrast to non-monotonic query answering semantics
proposed earlier in the data exchange literature,
the \XGCWA-semantics can be applied to a broader class of schema mappings
(not just schema mappings defined by tgds and egds),
and possesses the following natural properties:
(1) it is invariant under logically equivalent schema mappings, and
(2) it interprets existential quantifiers ``inclusively''
    as explained in Section~\ref{sec:problems}.
Furthermore, under schema mappings defined by st-tgds and egds
(and even more general schema mappings like schema mappings
 defined by right-monotonic $\LInf$-st-tgds),
the answers to a query under the \XGCWA-semantics
can be defined as the certain answers to the query
with respect to all ground solutions that are unions of minimal solutions.

However, the \XGCWA-semantics is not meant to be a replacement
for earlier semantics proposed in the data exchange literature.
Each of the earlier semantics is interesting in its own right.
In fact, I think that there is no ultimate semantics for answering
non-monotonic queries in relational data exchange.
Depending on the concrete application, and the user's expectations,
one or the other of the proposed semantics may be appropriate.
Nevertheless, query answers under the \XGCWA-semantics
seem to be very natural --
especially due to the two properties mentioned above.

We have shown that the problem of answering non-monotonic queries
under the \XGCWA-semantics can be hard, or even undecidable,
in considerably simple settings.
Unfortunately, this is true not only for the \XGCWA-semantics,
but also for earlier semantics.
This seems to be the price that one has to pay for automatically inferring
``negative data''.
Nevertheless, we were able to show (Theorem~\ref{thm:universal-queries})
that for schema mappings $M$ defined by packed st-tgds,
and for universal queries $q$,
there is a polynomial time algorithm that,
given the core solution for some source instance $S$ for $M$ as input,
outputs the set of answers to $q$ with respect to $M$ and $S$
under the \XGCWA-semantics.

Quite a number of interesting research problems remain open.
First, I believe that the techniques used for proving
Theorem~\ref{thm:universal-queries}
can be extended to prove the analogous result for the more general case
of schema mappings defined by st-tgds.
In fact, it seems that all that has to be done is to provide
a proof of Lemma~\ref{lemma:universal-queries/basis}
for the case that the blocks of $T$ are not packed.
Second, a lot of more work has to be done for understanding the complexity
of answering non-monotonic queries not only under the \XGCWA-semantics,
but also under the semantics proposed earlier.
The fact that for some schema mappings $M$ defined by st-tgds,
and for some existential queries $q$
the data complexity of computing the \XGCWA-answers to $q$ under $M$ is hard
does not imply that it could not be in polynomial time
for other schema mappings defined by st-tgds and other existential queries.
Third, we only considered the data complexity of evaluating queries --
we did not consider the combined complexity,
where the schema mapping and the query to be answered belong to the input.
Finally, instead of answering queries under a non-monotonic semantics,
it could be an interesting task to study the problem
of answering queries using the OWA-semantics,
but allow more expressive constraints
to explicitly exclude ``unwanted'' tuples from solutions
(rather than implicitly by a variant of the CWA).
For instance, instead of using the st-tgd $\theta$ in Example~\ref{exa:copying},
we could have used
$\forall x \forall y\, \bigl(R(x,y) \leftrightarrow R'(x,y)\bigr)$.
Then, under the OWA-semantics, the answer to a query would be as desired.
However, this approach requires schema mappings to be fully specified.

\section*{Acknowledgment}

I am grateful to Nicole Schweikardt
for many helpful discussions on the subject and comments
on the proceedings version of this paper.
Also, I thank the referees of this paper and the referees
of the conference version
for their comments and suggestions regarding the presentation of this paper's
results.

\bibliographystyle{abbrv}
\bibliography{bibliography}

\end{document}